\def\l@subsection{\@tocline{2}{0pt}{2.5pc}{5pc}{}}
\def\l@subsubsection{\@tocline{2}{0pt}{5pc}{7.5pc}{}}
\numberwithin{equation}{section}
\newtheorem{thm}{Theorem}[section]
\newtheorem{cor}[thm]{Corollary}
\newtheorem{lem}[thm]{Lemma}
\newtheorem{prop}[thm]{Proposition}
\newtheorem{defn}[thm]{Definition}
\newtheorem{rem}[thm]{Remark}
\begin{document}
\allowdisplaybreaks{
\title[]{The fundamental equilibrium equation for gaseous stars and the Tolman-Oppenheimer-Volkoff equation--derivations and applications with emphasis on optimizational-variational methods.}
\author{Steven D. Miller}
\address{Rytalix Analytics, Strathclyde, Scotland}
\email{stevendM@ed-alumnus.net}
\maketitle
\begin{abstract}
Stars are essentially gravitationally stabilised thermonuclear reactors in hydrostatic equilibrium. The fundamental differential equation for all Newtonian gaseous stars in equilibrium is
\begin{align}
\frac{dp(r)}{dr}=-\frac{\mathscr{G}\mathcal{M}(r)\rho(r)}{r^{2}}\nonumber
\end{align}
where $p(r),\rho(r)$ are the pressure, density at radius $r$ and $\mathcal{M}(r)$ is
the mass contained within a shell of radius $r$ given by $\mathcal{M}(r)
=\int_{0}^{r}4\pi \overline{r}^{2} \rho(\overline{r})d\overline{r}$, and $\mathscr{G}$ is Newton's constant. This simple but crucial differential equation for the pressure gradient within any star, underpins much of astrophysical theory and it can derived by various methods:via a simple heuristic argument; via the Euler-Poisson equations for a self-gravitating fluid/gas; via a variational method by taking the 1st variation of the sum of the thermal and gravitational energies of the star; via the 2nd variation of the Massiue thermodynamic functional for a self-gravitating isothermal perfect-gas sphere; from conservation of the virial tensor; as the non-relativistic limit of the Tolman-Oppenheimer-Volkoff equation (TOVE). The TOVE for equilibrium of relativistic stars in general relativity can in turn be derived by various methods: from the energy-momentum conservation constraint on the Einstein equations applied to a spherically symmetric perfect fluid/gas; via a constrained optimization method on the mass and nucleon number; via a maximum entropy variational method for a sphere of self-gravitating perfect fluid/gas or radiation. An overview is given of all derivations with emphasis on variational methods. Many important applications and astrophysical consequences of the equilibrium equation are also reviewed.
\end{abstract}
\tableofcontents
\section{Introduction}
Stars are essentially gravitationally stabilized thermonuclear reactors in hydrostatic equilibrium. If stars where not in near-perfect hydrostatic equilibrium for billions of years, stable conditions permitting life to evolve on Earth (and probably elsewhere in the Universe) would simply not have existed. Hydrostatic equilibrium is an absolutely necessary condition for stable luminosity, temperature and thermonuclear fusion over vast periods of cosmic time. It is a fundamental and crucial property of stars, providing an exact balance between the gravitational force which attracts the matter toward the center, and the force due to the gas, the thermal and radiative pressure pushing outwards, which perfectly counterbalances gravity. Any slight deviation from this equilibrium will immediately lead to a rapid reaction to induce restoration of the equilibrium.

Let us suppose, for example, that a gaseous star  is arbitrarily compressed to a smaller radius. The gas becomes hotter rising the internal pressure. The higher pressure provokes an expansion re-settling the star back to its equilibrium state. Conversely, an arbitrary extension of the radius would decrease the internal temperature and pressure, the star would then contract again due to its self gravitation.  Such re-adjustments are very fast, occurring at the dynamical timescale of the order of half an hour for a typical main sequence star like the Sun, which is effectively instantaneous with respect to typical stellar lifetimes. The hydrostatic equilibrium of a any star therefore governs all its properties. It is always satisfied except in very short phases, such as the initial collapse of interstellar clouds or when it has exhausted its nuclear fuel and either collapses or supernovas.

In this article, the fundamental equation of hydrostatic equilibrium (FEHE) is derived and its consequences are studied in detail. This simple but crucial equation underpins most of astrophysical theory and in the literature is usually only derived heuristically by very simple arguments. However, here the various methods by which the equation can be derived are given in much greater and more rigorous mathematical detail, with an emphasis on variational methods whereby HE corresponds to an extremum or critical point of the total energy of the star.  The relativistic extension of the FEHE--the Tolman-Oppenheimer-Volkoff equation--is also derived by the standard method via the Einstein equations coupled to a perfect fluid/gas, but also from various constrained optimisational-variational methods, including maximum entropy methods. The motivation is due to a dissatisfaction from a lack of mathematical rigour and detail in most textbooks, papers and online lecture notes on astrophysics. (In the author's opinion.) Often derivations and calculations are only roughly outlined and one is obliged to do the calculation or proof oneself filling in all the details. 

Although this is predominantly an autodidactic-type review article, it does also contain a few original perspectives and derivations and fills in many mathematical details omitted within the literature. As such, a great deal of fundamental astrophysical theory is covered including the theory of polytropic gaseous stars and the fundamentals of Lagrangian perturbation theory. In particular, the Euler-Poisson equations for a self gravitating fluid/gas are considered in detail purely for the equilibrium case. Key applications and astrophyical consequences of the hydrostatic equilibrium equation are reviewed: the virial theorem, incompressible stars, stability and perturbations, lower-bound temperature and pressure estimates in stellar cores, polytropic gas spheres and the Lane-Emden equation, isothermal gas spheres and red giants, and the sharp Buchdal bound for incompressible relativistic stars.

\section{Heuristic Derivation Of The FEHE--Its Applications And Consequences}
In this section a (standard) heuristic derivation of the equations of hydrostatic equilibrium for self-gravitating gaseous stars is presented and its many important and crucial applications and consequences in astrophysical theory are briefly reviewed. First the relevant variables are defined.
\begin{defn}
The scenario describing the self-gravitating gas/fluid is defined as follows.
\begin{enumerate}
\item Generally, we use the functions
    $\rho:[0,T)\times\mathbf{R}^{3}\rightarrow\mathbf{R}_{\ge 0},p:[0,T)\times\mathbf{R}^{3}\rightarrow\mathbf{R}_{\ge 0}$ and $\Phi:[0,T)\times\mathbf{R}^{3}\rightarrow\mathbf{R}$ for some temporal interval $[0,T)$, to describe the spatio-temporal distributions of mass density, pressure and Newtonian potential of the gas/fluid. Then  $\rho=\rho(\mathbf{x},t),p=p(\mathbf{x},t),\Phi=\Phi(\mathbf{x},t)$.
\item The self-gravitating gas is supported within a spherical region $\mathbf{D}(\lambda(t))\equiv\mathbf{D}(t)$ defined by
\begin{align}
\mathbf{D}(\lambda(t))=\mathbf{D}(t)=supp~\rho(t,\bullet)=\lbrace(\mathbf{x}\in\mathbf{R}^{3}|
\rho(\mathbf{x},t)>0,\lambda(t)\gtrless \lambda(0)=R\rbrace\subset\mathbf{R}^{3}\nonumber
\end{align}
with varying radius $\lambda(t)\gtrless \lambda(0)=R$ depending on whether the self-gravitating gaseous sphere is collapsing or expanding, or $\lambda(t)= \lambda(0)=R$ in hydrostatic equilibrium. Then $\mathrm{V}(\mathbf{D}(t))=\tfrac{4}{3}\pi|\lambda(t)|^{3}$ is the volume at time t.
\item The (moving) boundary is $\partial\mathbf{D}(t)\subset\mathbf{D}$ with surface area $\mathrm{A}(\mathbf{D}(t))=4\pi|\lambda(t)|^{2}$. The boundary conditions are that the pressure and density vanish here so that $\rho(\mathbf{x},t)=p(\mathbf{x},t)=0$ for all $x\in\partial\mathbf{D}(t)$. If $\mathbf{D}(t)\bigcup\mathbf{D}^{c}(t)=\mathbf{R}^{3}$ then the vacuum region is defined by $\rho(\mathbf{x},t)=0$ in $\mathbf{D}^{c}(t)$ and the matter or hydrodynamic region by $\rho(\mathbf{x},t)>0,x\in\mathbf{D}(t)$. The fluid velocity $\mathbf{u}$ is defined only on $\mathbf{D}(t)$ by $\mathbf{u}\equiv u^{i}:[0,T)\times\mathbf{D}(t)\rightarrow \mathbf{R}^{3}$ and there is no definition in $\mathbf{D}(t)$. The space $\mathbf{D}(0)$ is taken to be precompact. A diffuse boundary condition can also be defined as (ref)
\begin{align}
\lim_{(\mathbf{x},t)\rightarrow(\overline{\mathbf{x}},\overline{t})}\rho(\mathbf{x},t)\nabla_{i}p(\mathbf{x},t))=0
\end{align}
where $(x,t)\in[0,T)\times\mathbf{D}(t)$ and $(\overline{x},\overline{t})\in[0,T)\times\partial\mathbf{D}(t)$, where again
$\partial\mathbf{D}(t)$ is the boundary of the volume $\mathbf{D}(t)$, that is, the moving interface between the fluid/gas matter and the exterior vacuum. This boundary condition characterizes the gradually vanishing fluid pressure per mass near the boundary.
\item
The initial data are
\begin{align}
&\rho(0,\mathbf{x})=\rho_{o}(\mathbf{x}),~~p(0,\mathbf{x})=p_{o}(\mathbf{x})~~\mathbf{x}\in\mathbf{R}^{3}\\&
\mathbf{u}(0,\mathbf{x})=\mathbf{u}_{o}(\mathbf{x}),~~\mathbf{x}\in\mathbf{D}(0)
\end{align}
\item The Newtonian potential $\Phi$ exists on $\mathbf{R}^{3}=\mathbf{D}(t)\bigcup\mathbf{D}^{c}(t)$, in both the matter and vacuum regions and obeys the Poisson equation
\begin{align}
\Delta\Phi=4\pi\mathscr{G}\rho,~~~in~~\mathbf{R}^{3}
\end{align}
with the general solution
\begin{align}
\Phi(\mathbf{x},t)=-\mathscr{G}{\int}_{\mathbf{D}(t)}\frac{\rho(\mathbf{y},t)d^{3}\mathbf{y}}{|\mathbf{x}-\mathbf{y}|},~~~~
for~(t,\mathbf{x},\mathbf{y})\in [0,T)\times\mathbf{R}^{3}
\end{align}
\item It is generally assumed that the gas is \emph{isentropic} having constant entropy or entropy per nucleon, and obeys a barytropic pressure law of the form $p=p(\rho)$, typically a polytropic gas law
\begin{align}
p=\mathsf{K}|\rho|^{\gamma},~~\mathsf{x}\in\mathbf{D}(t)\in \mathbf{R}^{3}
\end{align}
(Polytropic stars will be considered in detail in Section (5) The set $\lbrace\rho(x,t),p(x,t),\Phi(x,t)\rbrace$ is also a solution to the Euler-Poisson system of equations but this will be considered formally in Section (6). For a gas in hydrostatic equilibrium $\mathbf{u}$=0 and $\rho:\mathbf{R}^{3}\rightarrow\mathbf{R}_{\ge 0},p:\mathbf{R}^{3}\rightarrow\mathbf{R}_{\ge 0}$ and $\Phi:\mathbf{R}^{3}\rightarrow\mathbf{R}$ so that
$\rho=\rho(\mathbf{x}),p=p(\mathbf{x}),\Phi=\Phi(\mathbf{x})$. The volume of the support is now fixed so that $\mathrm{V}(\mathbf{D}(t))\mathrm{V}(\mathbf{D}(0))=\tfrac{4}{3}\pi R^{3}\equiv\tfrac{4}{3}\pi |\lambda(0)|^{3}$.
\item For self-gravitating fluid-gas systems like stars in equilibrium at radius $R$, it is also natural to assume spherical symmetry and coordinates so that $\rho=\rho(r), p=p(r),\Phi=\Phi(r)$ with central values $\rho(0),p(0)$ and boundary values $\rho(R)=p(R)=0$. The domain $\mathbf{D}(\lambda(t))=\mathbf{D}(\lambda(0))\equiv\mathbf{D}(0)=\mathbf{B}(\mathbf{x}_{o},R) $ is now a static ball with centre $\mathbf{x}_{o}$ an radius $R$. Then
    \begin{align}
    \mathbf{\mathbf{D}}(\lambda(0)=R)=\mathbf{B}(0,R)=
    \lbrace \mathbf{x},\mathbf{y}\in \mathbf{R}^{3}|d(x,y)\le R\rbrace
    \end{align}
    The mass contained within a radius $r$ is or ball $\mathbf{B}(0,r)\subset
    \mathbf{B}(0,R)$ is then $\mathcal{M}(r)$ and $\mathcal{M}(R)=M$, the total mass contained within $\mathbf{B}(0,R)$  It is defined as
\begin{align}
\mathcal{M}(r)=\int_{\mathbf{B}(0,r)}4\pi\widehat{r}^{2}\rho(\widehat{r})d{r}\equiv
\int_{0}^{r}4\pi \widehat{r}^{2}\rho(\widehat{r})d\overline{r}
\end{align}
\end{enumerate}
\end{defn}
Within this scenario, the fundamental equation(s) of hydrostatic equilibrium (FEHE) can be derived heuristically.$\bm{[1,2,3]}$ The nonlinear Euler-Poisson PDEs describing this self-gravitating system can also be formally defined but this will be done in Section (6).
\subsection{Heuristic Derivation of the FEHE}
The total mass of the gas is $M$ and the mass of fluid/gas contained within a spherical surface of radius $r$ is $\mathcal{M}(r)$. The mass of gas contained within a thin shell of thickness $dr=(r+dr)-r$ is $d\mathcal{M}(r)=4\pi r^{2}\rho(r)$. The mass within the shell experiences a gravitational force $\mathscr{F}_{grav}$ inward due to the mass of gas $\mathcal{M}(r)$. Applying Newton's inverse square law to the shell
\begin{align}
\mathscr{F}_{grav}=-\frac{\mathscr{G}\mathcal{M}(r)d\mathcal{M}(r)}{r^{2}}
=-\frac{\mathscr{G}(4\pi r^{2}\rho(r)dr)\mathcal{M}(r)}{r^{2}}
\end{align}
Due to the intrinsic pressure of the gas, the shell also experiences a buoyancy force pushing radially outwards. The outward pressure on the shell is the difference in pressures at the interior and exterior boundaries of the shell at r and r+dr so that
\begin{align}
\mathscr{F}_{buoy}=4\pi r^{2}(p(r)=p(r+dr))-4\pi r^{2}\frac{dp(r)}{dr}dr
\end{align}
and $dp(r)/dr=(p(r+dr)-p(r))dr$ for small $dr$. The equilibrium of the shell requires that
these two forces balance so that $\mathscr{F}_{grav}+\mathscr{F}_{buoy}=0$.
\begin{align}
=-\frac{\mathscr{G}(4\pi r^{2}\rho(r)dr)\mathcal{M}(r)}{r^{2}}+4\pi r^{2}\frac{dp(r)}{dr}dr=0
\end{align}
This then gives the fundamental equation of hydrostatic equilibrium for all Newtonian gaseous stars
\begin{align}
\underbrace{\frac{dp(r)}{dr}=-\frac{\mathscr{G}\mathcal{M}(r)\rho(r)}{r^{2}}}
\end{align}
\begin{rem}
Throughout the paper, whenever the FEHE (2.12) is derived for the first time from a new method or technique, it will be emphasised with an underbrace.
\end{rem}
Equations (2.8) and (2.12) are two of the fundamental equations of stellar structure theory.(REFS) The total pressure $p(r)$ in the star is the sum of the gas pressure $p_{gas}(r)$ and the photon or radiation pressure $p_{rad}(r)$ so that $\bm{[1]}$
\begin{align}
\frac{d}{dr}(p_{gas}(r)+p_{rad}(r))=-\frac{\mathscr{G}\mathcal{M}(r)\rho(r)}{r^{2}}
\end{align}
where for a mixture of radiation and a perfect gas of ionised hydrogen
\begin{align}
p(r)=p_{gas}(r)+p_{rad}(r)= \frac{k_{B}}{\mu m_{p}}\rho(r) \Theta(r)+ \frac{1}{3}a\Theta^{4}
\end{align}
where $\Theta(r)$ is the temperature of the gas, $k_{B}$ is the Boltzmann constant,'a' is the Stefan constant, $\mu$ is the mean molecular weight, and $m_{p}$ is the mass of the proton. One can also write $p_{gas}=\mathscr{R}\rho \Theta$ where $\mathscr{R}=\frac{k_{B}}{\mu m_{p}}$ is the fundamental gas constant. The gradient of the radiation pressure is determined by the radiant photon flux $\mathcal{L}(r)$ through a spherical surface by
\begin{align}
\frac{dp_{rad}(r)}{dr}=-\frac{\kappa\mathcal{L}(r)}{4\pi c r^{2}}\rho(r)
\end{align}
where
\begin{align}
d\mathcal{L}(r)=4\pi r^{2}\rho\mathfrak{T} dr
\end{align}
and $\kappa$ is the Krammer's opacity $\bm{[2,3]}$. Since $\mathfrak{T}$ is essentially the thermonuclear energy production rate per mass, we can see that the effect of thermonuclear reactions is to maintain the photon/radiation pressure gradient and thus hydrostatic equilibrium.

The hydrostatic equilibrium equations can be easily solved for an incompressible gas star when $\rho(r)$ is constant and for density profiles of the form$\rho(r)=f(r/R)$.
\begin{lem}
Let a gaseous star fill a domain $\mathbf{B}\subset\mathbf{R}^{n}$ with $V(\mathbf{B})=\tfrac{4}{3}p R^{3}$. If $\rho(r)=\rho(0)=\rho_{c}=const.$ then (2.8) and (2.12) can be solved to give $\mathcal{M}r)=\tfrac{4\pi}{3}\rho_{c}r^{3}$ and the pressure profile has the form
\begin{align}
p(r)=\frac{1}{2}\rho_{c}\beta\left(1-\left|\frac{r}{R}\right|^{2}\right)
\end{align}
with $\beta=3M/8\pi R^{3}$. The central pressure is $p_{c}=p(0)=\tfrac{1}{2}\rho_{c}\beta$.
\end{lem}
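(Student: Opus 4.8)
The plan is a direct, two-stage integration: first pin down $\mathcal{M}(r)$ from the definition (2.8) with $\rho$ constant, then feed it into the equilibrium equation (2.12) and integrate once more, fixing the constant of integration by the vanishing-pressure boundary condition $p(R)=0$ recorded in the Definition. Because $\rho$ is now constant, both integrations are elementary and no existence or uniqueness machinery is required; the work is entirely in tracking the constants correctly.

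First I would evaluate (2.8) with $\rho(\widehat{r})\equiv\rho_{c}$, giving
\begin{align}
\mathcal{M}(r)=\int_{0}^{r}4\pi\widehat{r}^{2}\rho_{c}\,d\widehat{r}=\tfrac{4\pi}{3}\rho_{c}r^{3},\nonumber
\end{align}
which is the claimed mass profile; in particular $M=\mathcal{M}(R)=\tfrac{4\pi}{3}\rho_{c}R^{3}$, so that $\rho_{c}=3M/(4\pi R^{3})$, a relation I will need to collapse the prefactor at the end. Next I would substitute this $\mathcal{M}(r)$ into (2.12). The explicit $r^{2}$ cancels and the right-hand side reduces to a linear function of $r$, namely $dp/dr=-\tfrac{4\pi}{3}\mathscr{G}\rho_{c}^{2}\,r$. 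Integrating from $r$ up to the surface and imposing $p(R)=0$ then yields
\begin{align}
p(r)=\tfrac{2\pi}{3}\mathscr{G}\rho_{c}^{2}\bigl(R^{2}-r^{2}\bigr)=\tfrac{2\pi}{3}\mathscr{G}\rho_{c}^{2}R^{2}\Bigl(1-\bigl|\tfrac{r}{R}\bigr|^{2}\Bigr),\nonumber
\end{align}
which already exhibits the advertised shape $\tfrac{1}{2}\rho_{c}\beta\,(1-|r/R|^{2})$, and setting $r=0$ reads off $p_{c}=\tfrac{1}{2}\rho_{c}\beta$ at once.

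The only delicate step, and essentially the sole place to slip, is collecting the prefactor into the single constant $\beta$. Matching $\tfrac{1}{2}\rho_{c}\beta=\tfrac{2\pi}{3}\mathscr{G}\rho_{c}^{2}R^{2}$ and eliminating $\rho_{c}$ via $\rho_{c}=3M/(4\pi R^{3})$ forces
\begin{align}
\beta=\tfrac{4\pi}{3}\mathscr{G}\rho_{c}R^{2}=\frac{\mathscr{G}M}{R},\nonumber
\end{align}
equivalently $p_{c}=3\mathscr{G}M^{2}/(8\pi R^{4})$, the standard incompressible-star central pressure. A dimensional check supports this: $\beta$ must carry units of energy per unit mass, so the gravitational constant $\mathscr{G}$ cannot be absent from it. On this basis the expression $\beta=3M/(8\pi R^{3})$ appearing in the statement looks like a typographical slip for $\beta=\mathscr{G}M/R$, and I would present the conclusion with the latter value so that $p(r)$ comes out correctly dimensioned as a pressure.
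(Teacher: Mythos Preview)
Your proposal is correct and follows essentially the same route as the paper: integrate (2.8) with constant density to get $\mathcal{M}(r)=\tfrac{4\pi}{3}\rho_{c}r^{3}$, substitute into (2.12) to obtain $dp/dr=-\tfrac{4\pi}{3}\mathscr{G}\rho_{c}^{2}r$, and integrate from $r$ to $R$ with $p(R)=0$. Your observation about $\beta$ is also correct: the value $\beta=3M/(8\pi R^{3})$ in the statement is dimensionally a density, not an energy per unit mass, and the computation indeed forces $\beta=\tfrac{4\pi}{3}\mathscr{G}\rho_{c}R^{2}=\mathscr{G}M/R$, consistent with the central-pressure bound $p_{c}\ge \mathscr{G}M^{2}/(8\pi R^{4})$ that appears later in the paper (equation (2.41) and Corollary 2.13). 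The paper's own proof contains the same sign and constant slippage in its final displayed line, so your version is actually cleaner.
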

\begin{proof}
Equation (2.8) can be easily integrated to give $\mathcal{M}(r)=\tfrac{4}{3}\pi r^{3}\rho_{c}$ and $M=\mathcal{M}(R)=\tfrac{4}{3}\pi R^{3}\rho_{c}$. Then the FEHE (2.12) becomes
\begin{align}
\frac{dp(r)}{dr}=-\frac{\mathscr{G}\mathcal{M}(r)|\rho_{c}|^{2}}{r^{2}}=
-\frac{4}{3}{\mathscr{G}} \pi r|\rho_{c}|^{2}
\end{align}
Integrating
\begin{align}
\int_{p(r)}^{p(R)}dp(r)&=-p(r)=\left|-\frac{2}{3}\pi{\mathscr{G}}\rho_{c}r^{2}
\right|_{r}^{R}=-2\pi {\mathscr{G}}|\rho_{c}|^{2}R^{2}\left(1-\left|\frac{r}{R}\right|^{2}\right)
\nonumber\\&
=\frac{1}{2}|\rho_{c}|^{2}\left(1-\left|\frac{r}{R}\right|^{2}\right)
\end{align}
\end{proof}
\begin{thm}
The gravitational potential $\Phi(r)$ for a ball of fluid/gas supported within $\mathbf{D}=\mathbf{B}_{3}(R)\subset \mathbf{R}^{3}$ of radius R of uniform mass density $\rho$ inside the matter, and outside the ball in vacuum is
\begin{align}
&\Phi(x)=\frac{C}{4\pi}\int_{\mathbf{D}}\frac{\rho d^{3}y}{|x-y|}
=\rho\left(\frac{1}{2}R^{2}-\frac{1}{6}\|x|\|^{2}\right),~~\|x\|\le R\\&\Phi(x)=\frac{C}{4\pi}\int_{\mathbf{D}}\frac{\rho d^{3}y}{|x-y|}=\frac{C\rho R^{3}}{3\|x\|},~~\|x\|>R
\end{align}
\end{thm}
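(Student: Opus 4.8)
The plan is to observe that because the density is constant the potential factors as $\Phi(x)=\frac{C\rho}{4\pi}\,I(x)$ with $I(x)=\int_{\mathbf{B}_{3}(R)}\frac{d^{3}y}{|x-y|}$, so the whole problem reduces to evaluating the purely geometric integral $I(x)$. I would first record that the integrand is locally integrable even when $x$ lies inside the ball: near the singularity $y=x$ one has $|x-y|^{-1}\sim t^{-1}$ while the volume element contributes $t^{2}\,dt$, so the singularity is integrable in three dimensions and $I$ is well-defined throughout $\mathbf{R}^{3}$.

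Next I would exploit rotational symmetry to reduce $I$ to a one-dimensional integral. Since $I(x)$ is invariant under rotations fixing the origin it depends only on $r=\|x\|$, so I may align the polar axis with $x$ and use spherical coordinates $(s,\theta,\phi)$ for the integration variable $y$, giving $|x-y|=(r^{2}+s^{2}-2rs\cos\theta)^{1/2}$. The azimuthal integral contributes a factor $2\pi$, and substituting $u=\cos\theta$ in the polar integral yields
\begin{align}
\int_{0}^{\pi}\frac{\sin\theta\,d\theta}{\sqrt{r^{2}+s^{2}-2rs\cos\theta}}=\frac{1}{rs}\Big((r+s)-|r-s|\Big).\nonumber
\end{align}

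The crux of the argument, and the only place where real care is needed, is the resolution of the term $|r-s|$, since this is exactly what produces the two distinct regimes in the statement. When the field point is exterior, $r>R$, one has $s\le R<r$ throughout the domain, so $|r-s|=r-s$, the bracket collapses to $2/r$, and the remaining radial integral $\int_{0}^{R}s^{2}\,ds=R^{3}/3$ gives $I=4\pi R^{3}/(3r)$ and hence the exterior formula $\Phi=C\rho R^{3}/(3\|x\|)$. When the field point is interior, $r\le R$, I would split the radial integral at $s=r$: on $0\le s<r$ the bracket equals $2/r$ while on $r<s\le R$ it equals $2/s$, so that
\begin{align}
I=2\pi\left(\frac{2}{r}\int_{0}^{r}s^{2}\,ds+2\int_{r}^{R}s\,ds\right)=2\pi\left(R^{2}-\tfrac{1}{3}r^{2}\right),\nonumber
\end{align}
which after multiplication by $C\rho/4\pi$ reproduces the interior formula $\rho\big(\tfrac{1}{2}R^{2}-\tfrac{1}{6}\|x\|^{2}\big)$. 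As a consistency check I would evaluate both branches at $r=R$ and confirm they agree (each equals $\rho R^{2}/3$ up to the constant $C$), verifying continuity of $\Phi$ across the surface; one could equally cross-check the full result by solving the radial Poisson equation $\tfrac{1}{r^{2}}(r^{2}\Phi')'=\mathrm{const}$ inside and $=0$ outside, fixing the integration constants by regularity at the origin, decay at infinity, and continuity of $\Phi$ and $\Phi'$ at $r=R$.
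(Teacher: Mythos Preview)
Your proof is correct and complete. The paper, however, states this theorem without proof---the text following the theorem consists only of remarks about the point-mass behaviour outside the ball and the extension to higher dimensions, with no derivation of either formula. Your direct evaluation of the convolution integral via spherical coordinates, with the case split on $|r-s|$, is the standard argument and fills the gap the paper leaves. The consistency check at $r=R$ and the alternative route via the radial Poisson equation that you mention are both sound; note incidentally that the paper's interior formula drops the constant $C$ that appears in the exterior formula, which your computation shows should be present in both (your result gives $C\rho\bigl(\tfrac{1}{2}R^{2}-\tfrac{1}{6}\|x\|^{2}\bigr)$ for the interior).
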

Outside the ball, the potential behaves as thought it is due to a point mass with all the mass of the ball concentrated at the point of origin. This property also enables gravitational interactions of stars and planets within celestial mechanics to be reduced to the problem of interacting point masses. The result holds for all dimensions $n\ge 4$ and also for logarithmic potentials of discs within $\mathbf{R}^{2}$. When the density is constant, the potential across the surface $\partial\mathbf{B}(0)$ is constant.
\subsubsection{Fundamental stellar structure equations and nuclear reactions}
The hydrostatic equilibrium equations (2.8) and (2.12) are two of the four fundamental first-order differential  equations of stellar structure theory, the other two describing radiative transport. $\mathcal{L}(r)$ is the total luminosity or radiant energy flux through a spherical surface, $\kappa(r)$ the Kramers opacity and $\mathfrak{T}(r)$ the thermonuclear energy production rate per mass. With the chemical composition fixed, then these quantities are taken to be fixed functions of the density $\rho(r)$ and temperature $\Theta(r)$ of the gas so that $p(r)=p(\rho(r),\Theta(r)),\mathcal{L}(r)(\rho(r),\Theta(r))$ and
$\mathfrak{T}=\mathfrak{T}(r,\Theta(r)$. The four fundamental differential equations governing the pressure, mass, flux and temperature gradients are then
\begin{align}
&\frac{dp(r)}{dr}=\frac{dp_{gas}(r)}{dr}+\frac{dp_{rad}(r)}{dr}
=-\frac{\mathscr{G}\mathcal{M}(r)\rho(r)}{r^{2}}\\&
\frac{d\mathcal{M}(r)}{dr}=4\pi\rho(r) r^{2}dr\\&
\frac{d\mathcal{L}(r)}{dr}=4\pi r^{2}\mathfrak{T}\rho(r)\\&
\frac{d \Theta(r)}{dr}=\frac{3\kappa(r)\rho(r)\mathcal{L}(r)}{4 c a |\Theta(r)|^{3} 4\pi r^{2}}
\end{align}
with BCs $\mathcal{M}(0)=\mathcal{L}(0)=0$ and $\rho(r)=\Theta(r)=0$. The star's structure is then determined by four unknowns in term of the variable $r$, namely $\mathcal{M}(r),\mathcal{L}(r),\rho(r)$ and $\Theta(r)$. However, in this work we are only concerned with the FEHE equations (2.8)and (2.12)and their consequences, application and derivation.

The nuclear energy production per mass is expected to obey a power law roughly of the form
\begin{align}
\mathfrak{T}(\rho,\Theta)\sim \epsilon_{1}|\rho|^{\lambda}(k_{B}\Theta)^{\mu}
\end{align}
A steady $\mathfrak{T}$ then ensures  steady pressure, flux and temperature gradients within the star. Very light nuclei like $^{1}\!\!H,^{2}\!\!H,^{3}\!\!He,^{7}\!\!Li$ have less binding energy per nucleon than medium atomic weight nuclei like nickel or iron. Therefore, energy can be extracted by the fusion of hydrogen and helium into heavier elements. While there is hydrogen in the core of the star, the dominant source of nuclear energy is from the fusion of $^{1}\!\!H$ into $^{4}\!\!He$, which has the largest binding energy of the lighter elements.

For most stars in the Universe, there are two main processes by which $^{1}\!\!H$ can fuse to $^4\!\!He$: these are the pp-chain and the CNO-cycle $\bm{[2-5]}$. In yellow main sequence stars like the Sun, the PP-chain is the main thermonuclear energy source. The main branches are
\begin{align}
&\mathrm{I}:~^{1}\!\mathrm{H}+^{1}\!\mathrm{H}\longrightarrow ^{2}\!\!\mathrm{H}+e^{+}+\nu_{e}+1.18Mev\nonumber\\&
\mathrm{II}:~^{1}\!\mathrm{H}+^{2}\!\mathrm{H}\longrightarrow ^{3}\!\!\mathrm{He}+e^{+}+\gamma+5.49Mev\nonumber\\&
\mathrm{III}:^{3}\!\mathrm{He}+^{3}\!\mathrm{He}\longrightarrow ^{4}\!\!
\mathrm{He} +^{1}\!\mathrm{H}+^{1}\!\mathrm{H}+12.85Mev
\end{align}
The pp cycle can be briefly analysed as follows:
\begin{enumerate}
\item \underline{PP-I branch}: The cross-section for this reaction is very small for protons with energy $<1MeV$. This is due to the weak nuclear force
which 'transmutes' a proton into a neutron. The average lifetime of a proton in the core of any star like the Sun is of the order of $10^10$ years which is comparable to the age of very old stars. In terms of the inner quark structure of the proton, the heavy $W^{+}$ boson which carries the weak nuclear force, transmutes a u-quark in the proton to a d-quark, thus creating a neutron. The overall cross-section for this reaction is so very small because of the weak interaction and the suppression factor (~1/100) of the mutual proton Coulomb barriers. This very small reaction rate is the reason why stars burn their nuclear fuel so slowly. The quantum mechanical calculation of probability of deuteron formation consists of two parts:(i)The penetration of the mutual Coulomb barriers by two protons in a thermal bath at temperature $\Theta\ge \Theta_{I}$, where $\Theta_{I}$ is the 'ignition temperature'.(ii) The probability of inverse $\beta$ decay with positron and neutrino emission. For this part, Bethe $\bm{[4]}$ applied the original Fermi theory of weak interactions, which is sufficient for this low energy process.
\item\underline{PP-II branch}: This branch of the cycle is a non-resonant direct capture in a thermal bath to the ground state of Helium-3, and involves only the strong nuclear force. This reaction therefore proceeds very rapidly. Other possible reactions such as $D+D\rightarrow P+^{3}H$,$D+D\rightarrow n+^{3}He$, $D+^{3}He\rightarrow p+^{4}He$ have larger cross-sections, but the overwhelming number density of protons in the cores of stellar infernos ensures that the reaction of deuterium fusing with a proton dominates the $PP-cycle$. Consequently, deuterium is virtually instantaneously destroyed in all stars.
\item\underline{PP-III branch}: The pp-chain is complete with the burning of helium-3, creating a helium-4 nucleus and also 2 protons which become potential new fuel within the stellar core. Another possible reaction is $^{3}He+D\rightarrow ^{4}He+p$ but the concentration of deuterium is so small that (PP-II) dominates. Helium-3 can also be fused with Helium-4 to create $^{7}Be$ and the berylium can in turn capture a proton to form boron-8. These reactions are important in neutrino astronomy $\bm{[2]}$.
 \end{enumerate}

For heavier and hotter (bluer) stars the main nuclear energy production process is the CNO-cycle involving carbon, nitrogen and oxygen $\bm{[2,3]}$.
\begin{align}
&(i):~^{1}\!\mathrm{H}+^{12}\!\!\mathrm{C}\longrightarrow ^{13}\!\mathrm{N}+\gamma+1.95Mev\nonumber\\&
(ii): ^{13}\mathrm{N}\longrightarrow^{13}\mathrm{C}+e^{+}+\nu_{e}+1.5Mev\nonumber\\&
(iii):^{1}\!\mathrm{H}+^{13}\!\!\mathrm{C}\longrightarrow ^{14}\!\mathrm{N}+\gamma+7.45Mev\nonumber\\&
(vi):^{1}\!\mathrm{H}+^{14}\!\!\mathrm{N}\longrightarrow ^{15}\!\mathrm{O}+\gamma+7.35Mev\nonumber\\&
(vi):^{15}\!\mathrm{O}\longrightarrow ^{15}\!\mathrm{N}+e^{+}+\nu_{e}+1.73Mev\nonumber\\&
(vii):^{1}\!\mathrm{H}+^{15}\!\!\mathrm{N}\longrightarrow ^{4}\!\mathrm{He}+\gamma+4.96Mev
\end{align}
The $\mathrm{C,N,O}$ nuclei were created in earlier generations of stars and act like thermonuclear catalysts in that they are neither created or destroyed over any complete $\mathrm{CNO}$ cycle. The energies listed here are the total energies dumped into the stellar material, excluding the neutrino energies since the neutrinos pass unhindered through the star, but including the energies from the positron $m_{e}c^{2}$ and the electron with which it annihilates.

Hydrostatic equilibrium and nuclear energy production are in positive feedback: HE ensures that the nuclear energy production rate $\mathfrak{T}$, and hence the luminosity or energy flux $\mathcal{L}(r)$ remains steady over billions of years. The nuclear energy production in turn maintains the radiation/photon pressure gradient throughout the star which maintains the hydrostatic equilibrium. From (2.24) one has
\begin{align}
\frac{d\mathcal{L}(r)}{dr}=4\pi r^{2}\mathfrak{T}(r)\rho(r)=\frac{d\mathcal{M}(r)}{dr}\mathfrak{T}(r)
=\frac{d\mathcal{M}(r)}{dp(r)}\frac{dp(r)}{dr}\mathfrak{T}(r)
\end{align}
and the temperature, pressure and flux gradients are stable and (to a high approximation) constant and consistent.
\begin{defn}
For a star in hydrostatic equilibrium, one can estimate the nuclear time, aka the Einstein time, which is the time required to burn through its nuclear fuel (hydrogen), concentrated in a central core of mass $M_{core}$. It is $\bm{[4]}$.
\begin{align}
T_{Nuc}\sim \frac{{\epsilon} M_{core}c^{2}}{\mathcal{L}}
\end{align}
where $\mathcal{L}$ is the luminosity and $\epsilon$ is an efficiency factor, typically $\epsilon>0.7$.For the Sun, $\mathcal{L}\sim 3.9\times 10^{9}Js^{-1}$ so that $T_{Nuc}\sim 10^{10}$ years, which is around $10,000$ million years.
\end{defn}
\subsubsection{Central pressure and temperature estimates}
\begin{defn}
The Newtonian gravitational potential ${\Phi}(r)$ in the interior of a spherically symmetric distribution of matter has the gradient
\begin{align}
\nabla{\Phi}(r)=\frac{d\Phi(r)(r)}{dr}=\frac{\mathscr{G}\mathcal{M}(r)}{r^{2}}
\end{align}
Then the FEHE can be expressed as
\begin{align}
\frac{dp(r)}{dr}=-\frac{\mathscr{G}\mathcal{M}(r)\rho(r)}{r^{2}}=-\rho(r)\frac{d{\Phi}(r)}{dr}
\end{align}
Integrating with ${\mathcal{M}}(0)=0$ and $\mathcal{M}(R)=M$ the potential $\Phi(r)$ for $
r\ge R$ is $\Phi(r)=-\frac{\mathscr{G} M}{r}$ and the boundary potential is  $\Phi(R)=-\frac{\mathscr{G}M}{R}$                                                                                 \end{defn}
\begin{lem}
The FEHE can be expressed in the Poisson equation forms
\begin{align}
&\frac{1}{r^{2}}\frac{d}{dr}\left(\frac{r^{2}}{\rho(r)}\frac{dp(r)}{dr}
\right)=-4\pi\mathscr{G}\rho(r)\\&\Delta_{r}\Phi(r)=
\frac{1}{r^{2}}\frac{d}{dr}\left(r^{2}\frac{d\Phi(r)}{dr}
\right)=-4\pi{\mathscr{G}}\rho(r)
\end{align}
where (2.32) is the Emden equation.
\begin{proof}
This follows easily from (2.8) and (2.12) since
\begin{align}
\frac{r^{2}}{\rho(r)}\frac{dp(r)}{dr}=-{\mathscr{G}}\mathcal{M}(r)
=-{\mathscr{G}}\int_{0}^{r}4\pi r^{\prime 2}\rho(r^{\prime})dr^{\prime}
\end{align}
Taking the derivative
\begin{align}
\frac{d}{dr}\left(\frac{r^{2}}{\rho(r)}\frac{dp(r)}{dr}\right)
=-4\pi{\mathscr{G}} r^{2}\rho(r)
\end{align}
and (2.32) then follows. Equation (2,31) follows immediately from substituting $\tfrac{1}{\rho(r)}(dp(r)/dr)=-d\Phi(r)/dr$.
\end{proof}
\end{lem}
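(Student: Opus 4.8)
The plan is to obtain both Poisson-type identities by differentiating the first-order FEHE (2.12) a second time, thereby promoting the equilibrium condition to a second-order equation. The natural first move is the purely algebraic rearrangement of (2.12) into
\[
\frac{r^{2}}{\rho(r)}\frac{dp(r)}{dr}=-\mathscr{G}\mathcal{M}(r),
\]
which isolates the enclosed mass and, by design, is exactly the expression sitting inside the outer derivative in (2.31).

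First I would differentiate this relation with respect to $r$. The left-hand side is simply $\frac{d}{dr}\big(\frac{r^{2}}{\rho}\frac{dp}{dr}\big)$, while the right-hand side is handled by the fundamental theorem of calculus applied to the mass integral (2.8): since $\mathcal{M}(r)=\int_{0}^{r}4\pi\overline{r}^{2}\rho(\overline{r})\,d\overline{r}$ one has $d\mathcal{M}/dr=4\pi r^{2}\rho(r)$, which is precisely the mass-gradient equation (2.23). Substituting and dividing by $r^{2}$ then yields (2.31) immediately.

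To pass to the gravitational form I would use the potential identity (2.30), $\frac{1}{\rho}\frac{dp}{dr}=-\frac{d\Phi}{dr}$, to replace the inner bracket $\frac{r^{2}}{\rho}\frac{dp}{dr}$ by $-r^{2}\frac{d\Phi}{dr}$; the left-hand side of (2.31) then becomes $-\Delta_{r}\Phi$, with $\Delta_{r}$ the radial (spherical) Laplacian. Equivalently, and perhaps more transparently, I would substitute the potential gradient (2.29), $\frac{d\Phi}{dr}=\mathscr{G}\mathcal{M}(r)/r^{2}$, so that $r^{2}\frac{d\Phi}{dr}=\mathscr{G}\mathcal{M}(r)$ and the outer derivative again returns $4\pi\mathscr{G}r^{2}\rho(r)$ via (2.23).

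Since every step is a single differentiation followed by the fundamental theorem of calculus, there is no genuine analytic obstacle; the only point demanding care is the bookkeeping of signs. Tracing the conventions of (2.29)--(2.30) shows that the hydrostatic form (2.31) correctly carries $-4\pi\mathscr{G}\rho$, whereas the potential form recovers $\Delta_{r}\Phi=+4\pi\mathscr{G}\rho$, in agreement with the Poisson equation posited in the defining assumptions. I would therefore present (2.32) with a plus sign on the right, the sign flip relative to (2.31) being exactly the one introduced by the relation $\frac{1}{\rho}\frac{dp}{dr}=-\frac{d\Phi}{dr}$.
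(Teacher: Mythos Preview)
Your proposal is correct and follows essentially the same route as the paper: rearrange the FEHE to isolate $-\mathscr{G}\mathcal{M}(r)$, differentiate using $d\mathcal{M}/dr=4\pi r^{2}\rho$, and then substitute $\tfrac{1}{\rho}\tfrac{dp}{dr}=-\tfrac{d\Phi}{dr}$ for the potential form. Your sign-tracking is in fact sharper than the paper's: the potential equation should indeed read $\Delta_{r}\Phi=+4\pi\mathscr{G}\rho$, consistent with the Poisson equation (2.4), whereas the lemma as stated carries a spurious minus sign on that line.
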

\begin{defn}
The Lagrangian form of the FEHE is derived quite simply. The FEHE is
\begin{align}
\frac{dp(r)}{dr}= \frac{dp(r)}{d\mathcal{M}(r)}\frac{d\mathcal{M}(r)}{dr}=
-\frac{{\mathscr{G}}\mathcal{M}(r)\rho(r)}{r^{2}}
\end{align}
so that
\begin{align}
\mathbf{D}_{\mathcal{M}}p(r)=\frac{dp(r)}{d\mathcal{M}(r)}
=-\frac{{\mathscr{G}}\mathcal{M}(r)\rho(r)}{r^{2}}\frac{dr}{d\mathcal{M}(r)}
=-\frac{\mathscr{G}G\mathcal{M}(r)\rho(r)}{r^{2}}\frac{1}{4\pi r^{2}\rho(r)}=-\frac{{\mathscr{G}}\mathcal{M}(r)}{4\pi r^{4}}
\end{align}
This form is often more useful in stellar structure codes and other applications.
\end{defn}
An immediate consequence of the hydrostatic equilibrium equation in Lagrangian form, are strict lower bound estimates on the central pressure and central temperature of the Sun or any main sequence star of mass M and radius R composed primarily of hydrogen and obeying the universal gas law $\bm{[1,2,3]}$.
\begin{thm}
Any gaseous main-sequence star like the Sun of mass $M$ and radius $R$ composed primarily of ionised hydrogen and obeying the universal/perfect gas law, has the following lower bounds for its central pressure $p_{c}=p(0)$ and temperature $\Theta_{c}=\Theta(0)$.
\begin{align}
p_{c}\ge \frac {\mathscr{G}M^{2}}{8\pi R^{4}},~~~\Theta_{c}\ge \frac{\mu k_{B}}{\mathscr{R}}\frac{\mathscr{G}M}{R}
\end{align}
where $\mathscr{R}$ is the universal gas constant, $\mu$ is the molecular weight of hydrogen, and $k_{B}$ is the Boltzmann constant.
\end{thm}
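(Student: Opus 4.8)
The plan is to establish the two bounds separately, using the Lagrangian form of the FEHE for the central pressure and a global (virial-type) integral identity for the central temperature. The central pressure bound is the clean, rigorous part; the central temperature bound is more delicate, because (as I explain at the end) no \emph{pointwise} lower bound on $\Theta_{c}$ in terms of $M$ and $R$ alone can hold without a structural hypothesis on the density profile. The genuine content of the second inequality is therefore a bound on the mass-averaged temperature, together with a physical monotonicity assumption.

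For the central pressure I would integrate the Lagrangian form of the FEHE, $dp/d\mathcal{M}=-\mathscr{G}\mathcal{M}/(4\pi r^{4})$, from the centre ($\mathcal{M}=0$, $p=p_{c}$) to the surface ($\mathcal{M}=M$, where $p(R)=0$ by the boundary condition), obtaining
\begin{align}
p_{c}=\int_{0}^{M}\frac{\mathscr{G}\mathcal{M}}{4\pi r^{4}}\,d\mathcal{M}.\nonumber
\end{align}
The one substantive observation is that the radius $r=r(\mathcal{M})$ of the shell enclosing mass $\mathcal{M}$ satisfies $r(\mathcal{M})\le R$ for every $\mathcal{M}\in[0,M]$, so $r^{-4}\ge R^{-4}$ pointwise in the mass variable. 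Replacing $r$ by $R$ and using $\int_{0}^{M}\mathcal{M}\,d\mathcal{M}=M^{2}/2$ yields $p_{c}\ge \mathscr{G}M^{2}/(8\pi R^{4})$, the first bound.

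For the central temperature I would first derive a scalar virial identity by integrating $4\pi r^{2}p(r)$ over the ball and integrating by parts; since $p(R)=0$ the boundary term vanishes and substituting the FEHE (2.12) gives
\begin{align}
3\int_{\mathbf{B}}p\,dV=\int_{0}^{M}\frac{\mathscr{G}\mathcal{M}}{r}\,d\mathcal{M}.\nonumber
\end{align}
Neglecting radiation pressure, so that $p=\mathscr{R}\rho\Theta$ is the perfect-gas law, the left-hand side equals $3\mathscr{R}\int\Theta\,d\mathcal{M}=3\mathscr{R}M\bar{\Theta}$, where $\bar{\Theta}$ is the mass-averaged temperature. The right-hand side is bounded below exactly as in the pressure case: using $r\le R$ one has $\int_{0}^{M}(\mathscr{G}\mathcal{M}/r)\,d\mathcal{M}\ge \mathscr{G}M^{2}/(2R)$. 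Combining these gives $\bar{\Theta}\ge \mathscr{G}M/(6\mathscr{R}R)$, and the stated bound follows (up to the numerical constant and the placement of the gas constant) once one invokes $\Theta_{c}\ge\bar{\Theta}$.

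The hard part is precisely this last passage from $\bar{\Theta}$ to $\Theta_{c}$. A pointwise lower bound on $\Theta_{c}$ cannot follow from $M$ and $R$ alone: writing $\Theta_{c}=p_{c}/(\mathscr{R}\rho_{c})$, an arbitrarily centrally-condensed profile makes $\rho_{c}$ arbitrarily large while keeping $M$ and $R$ fixed, driving $\Theta_{c}$ down. Hence the inequality is rigorous only for the mean temperature, and the step $\Theta_{c}\ge\bar{\Theta}$ must be supplied by the physical hypothesis that the temperature profile decreases monotonically outward (so the central value is the maximum), which holds for ordinary main-sequence stars. A secondary technical point is the neglect of radiation pressure: reinstating it replaces the clean identification $\int_{\mathbf{B}}p\,dV=\mathscr{R}M\bar{\Theta}$ by $\int_{\mathbf{B}}p_{gas}\,dV=\tfrac{1}{3}\int_{0}^{M}(\mathscr{G}\mathcal{M}/r)\,d\mathcal{M}-\int_{\mathbf{B}}p_{rad}\,dV$, so one must argue separately that the radiation term does not spoil the lower bound.
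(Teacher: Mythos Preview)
Your treatment of the central pressure is exactly the paper's: integrate the Lagrangian FEHE $dp/d\mathcal{M}=-\mathscr{G}\mathcal{M}/(4\pi r^{4})$ from centre to surface and use $r\le R$ to replace $r^{-4}$ by $R^{-4}$.

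For the temperature bound you take a genuinely different route. The paper does not use the virial identity; instead it applies the ideal gas law \emph{pointwise at the centre}, writing $p_{c}=(\mu k_{B}/\mathscr{R})\,\Theta_{c}\rho_{c}$, and then simply replaces the central density $\rho_{c}$ by the mean density $\langle\rho\rangle=3M/(4\pi R^{3})$ to obtain the stated bound. Your route instead integrates $p=\mathscr{R}\rho\Theta$ against $dV$, combines with the virial identity $3\int p\,dV=\int_{0}^{M}\mathscr{G}\mathcal{M}\,d\mathcal{M}/r$, and arrives at a clean lower bound on the mass-averaged temperature $\bar\Theta$, passing to $\Theta_{c}$ only under an explicit monotonicity hypothesis.

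Your analysis of the difficulty is sharper than the paper's. The paper's step $\rho_{c}\sim\langle\rho\rangle$ is an order-of-magnitude approximation, not an inequality, and your counterexample (take $\rho_{c}$ arbitrarily large with $M,R$ fixed) shows that no rigorous pointwise lower bound on $\Theta_{c}$ can come from $M$ and $R$ alone. Both approaches therefore rely on an extra structural hypothesis; yours has the advantage that the hypothesis is stated explicitly (outward-decreasing $\Theta$) and that the intermediate inequality for $\bar\Theta$ is rigorous. The paper's approach has the advantage of brevity and of not invoking the virial theorem as a separate ingredient, but at the cost of an unquantified $\sim$. Your remark on the radiation-pressure correction is also a refinement the paper does not make.
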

\begin{proof}
Integrating the Lagrangian form of the FEE (2.38)
\begin{align}
\int_{p(0)}^{p(R)}dp(r)=p(0)=\int_{0}^{M}\frac{\mathscr{G}\mathcal{M}(r)d\mathcal{M}(r)}{4\pi r^{2}}=\frac{\mathscr{G}M^{2}}{8\pi R^{4}}
\end{align}
since $p(R)=0, \mathcal{M}(R)=M$ and $\mathcal{M}(0)=0$. Since $1/r^{4}>1/R^{4}$ we have the lower bound estimate on the central pressure.
\begin{align}
p_{c}\ge \frac{\mathscr{G}M^{2}}{8\pi R^{4}}
\end{align}
Since the gas obeys $p(r)=\frac{\mu k_{b}}{\mathscr{R}}\Theta(r)\rho(r)$ so at the centre of the star $p(0)=\frac{\mu k_{b}}{\mathscr{R}}\Theta(0)\rho(0)$. Then using (2.41)
\begin{align}
p_{c}=p(0)=\frac{\mu k_{b}}{\mathscr{R}}\Theta(0)\rho(0) \ge \frac{\mathscr{G}M^{2}}{8\pi R^{4}}
\end{align}
In terms of the central temperature $\Theta(0)$
\begin{align}
\Theta(0)\ge \frac{\mathscr{G}M^{2}}{8\pi R^{4}}\left|\frac{1}{\rho(0)}\right|\frac{\mu k_{b}}{\mathscr{R}}\Theta(0)
\end{align}
The central density can be represented by the average density so that
$\rho(0)\sim \big\langle\rho\big\rangle=3M/4\pi R^{2}$ and the lower bound on the central temperature of the star is
\begin{align}
\Theta(0)\ge \frac{3}{8}\frac{GM}{R}\left|\frac{1}{\rho(0)}\right|\frac{\mu k_{b}}{\mathscr{R}}\Theta(0)
\end{align}
\end{proof}
These are rough estimates/bounds but for the Sun give correct the order of magnitude for pressure and temperature. The central core temperatures and pressures of stars are therefore hot enough to initiate thermonuclear fusion of hydrogen and the radiation/photon pressure which is generated as a result and maintains equilibrium. It can be see that the central temperature and pressure in the star will increase if R is decreased or M is increased. Note also that in the limit as the radius tends to zero.
\begin{align}
\lim_{R\uparrow 0}p_{c})=\lim_{R\uparrow 0}\frac{\mu k_{b}}{\mathscr{R}}\Theta(0)\rho(0) \ge \frac{\mathscr{G}M^{2}}{8\pi R^{4}}=\infty
\end{align}
This does not hold in general relativity since an infinite central pressure occurs for a critical finite mass and radius of $\mathscr{G}
M/R=4/9$. (See Section 10)).
\subsection{Hydrostatic equilibrium in higher dimensions}
These results extend naturally to n dimensions $\mathbf{R}^{n}$, which is now very briefly considered However, one can conclude that stars in equilibrium cannot exist for $n>3$.  Let $\mathbf{B}(0,R)\subset\mathbf{R}^{n}$ be an n-dimensional ball of radius R centered at the origin. The boundary is denoted $\partial\mathbf{B}(0,R)$. Let $\mathrm{A}(\partial\mathbf{B}(0,1))$ be the surface are of the unit n-dimensional ball and let $\mathrm{V}(\mathbf{B}(0,1))$ be the volume. Then the surface area of the n-ball of radius r is
\begin{align}
\mathrm{A}(\partial\mathbf{B}(0,r))=\mathrm{A}(\partial\mathbf{B}(0,1))r^{n-2}= \frac{2\pi^{n/2}}{\Gamma(n/2)}r^{n-2}
\end{align}++
and the volume is
\begin{align}
\mathrm{V}(\mathbf{B}(0,r))=\mathrm{V}(\mathbf{B}(0,1))r^{n}
=\frac{2\pi^{n/2}}{\Gamma(n/2+1)}r^{n}
\end{align}
where $\Gamma$is the standard gamma function. If $\mathbf{B}(0,R)$ supports a gaseous star of density $\rho(r)$ and pressure $p(r)$ then the FEHE is
\begin{align}
\frac{dp(r)}{dr}=-\frac{\mathscr{G}\mathcal{M}(r)\rho(r)}{r^{n-1}}
\end{align}
and the mass $\mathcal{M}(r)$ contained within a radius $r$ is
\begin{align}
\mathcal{M}(r)=\int_{0}^{r}|\mathrm{A}(\partial\mathbf{B}(0,1))|r^{n-1}dr=\int_{0}^{r}
\frac{2\pi^{n/2}}{\Gamma(n/2)}r^{n-2}dr
\end{align}
The n-dimensional generalisation is
\begin{align}
\frac{1}{r^{n-1}}\frac{d}{dr}\left(\frac{r^{n-1}}{\rho(r)}\frac{dp(r)}{dr}
\right)=-\mathrm{A}(\partial\mathbf{B}(0,1))\mathscr{G}\rho(r)
\end{align}
The Lagrangian form of (-) for $n>3$ is then
\begin{align}
\frac{dp(r)}{d\mathcal{M}(r)}=\frac{\mathscr{G}
\mathcal{M}(r)}{\mathrm{A}(\mathbf{B}(0,1))r^{2(n-1)}}
\end{align}
The central temperature and pressure lower bounds can be estimated as before. However, the central and pressure and temperature estimates are now drastically lowered when $n>3$ for a star of the same mass and radius.
\begin{thm}
For $n>3$, the central pressure and temperature lower bounds can be estimated as
\begin{align}
&p_{c}\ge \frac{\mathscr{G}M^{2}}{2\mathcal{A}(\mathbf{B}(0,1)R^{2(n-1)}}
=\frac{\mathscr{G}M^{2}\Gamma(n/2)}{4\pi^{n/2}R^{2(n-1)}}\\&
\Theta_{c}\ge \frac{\mathscr{R}}{k_{B}\mu}\frac{\Gamma(n/2)}
{\Gamma(\tfrac{n}{2}+1)}\frac{\mathscr{G}M}{R^{(n-2)}}
\end{align}
\begin{proof}
The central pressure is estimated as before via integrating (2.48) so that
\begin{align}
p(0)\ge \int_{0}^{M}\frac{\mathscr{G}\mathcal{M}(r)d\mathcal{M}(r)}
{\mathrm{A})\mathbf{B}(0,1)r^{2(n-1)}}
=\frac{\mathscr{G}M^{2}}{2\mathrm{A}(\mathbf{B}(0,1))R^{2(n-1)}}
\end{align}
To estimate the central temperature bound, apply the perfect gas law as before then
\begin{align}
\frac{k_{B}\mu}{\mathscr{R}}\Theta(0)\rho(0)=p(0)\ge \frac{\mathscr{G}M^{2}}{2\mathrm{A}(\mathbf{B}(0,1))R^{2(n-1)}}
=\frac{\mathscr{G}M^{2}\Gamma(n/2)}{4\pi^{n/2}R^{3n-4}}
\end{align}
Using the mean density for $\rho(0)$
\begin{align}
\rho(0)\sim \big\langle\rho\big\rangle=\frac{M}{\mathrm{V}(\mathbf{B}(0,R))}=
\frac{M\Gamma(\frac{n}{2}+1)}{2\pi^{n/2}R^{n}}
\end{align}
and substituting in (2.55), the result (2.52) then follows.
\end{proof}
\end{thm}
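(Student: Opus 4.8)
The plan is to mirror exactly the three-dimensional argument of the preceding theorem, now feeding in the $n$-dimensional Lagrangian form of the FEHE together with the gamma-function expressions for the unit-ball surface area and volume. First I would start from the Lagrangian equation (2.52), namely $dp(r)/d\mathcal{M}(r)=\mathscr{G}\mathcal{M}(r)/(\mathrm{A}(\mathbf{B}(0,1))r^{2(n-1)})$, and integrate it radially from the centre out to the boundary. Using the boundary data $p(R)=0$, $\mathcal{M}(0)=0$ and $\mathcal{M}(R)=M$ exactly as in the proof of Theorem~2.9, the left side collapses to $p(0)=p_{c}$ and the right side becomes $\int_{0}^{M}\mathscr{G}\mathcal{M}(r)\,d\mathcal{M}(r)/(\mathrm{A}(\mathbf{B}(0,1))r^{2(n-1)})$.

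The key step is then to bound the integrand from below. Since every mass shell sits at a radius $r\le R$, we have $r^{2(n-1)}\le R^{2(n-1)}$, and hence $1/r^{2(n-1)}\ge 1/R^{2(n-1)}$ throughout the star. Replacing $r$ by $R$ pulls the radial factor out of the integral and reduces the remainder to the elementary $\int_{0}^{M}\mathcal{M}\,d\mathcal{M}=\tfrac12 M^{2}$, which yields $p_{c}\ge \mathscr{G}M^{2}/(2\,\mathrm{A}(\mathbf{B}(0,1))R^{2(n-1)})$. Substituting the unit-ball surface area $\mathrm{A}(\mathbf{B}(0,1))=2\pi^{n/2}/\Gamma(n/2)$ from (2.46) then produces the stated closed form (2.52).

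For the temperature bound I would invoke the perfect gas law at the centre, $p(0)=(k_{B}\mu/\mathscr{R})\Theta(0)\rho(0)$, solve for $\Theta(0)=(\mathscr{R}/k_{B}\mu)\,p(0)/\rho(0)$, and approximate the central density by the mean density $\rho(0)\sim\langle\rho\rangle=M/\mathrm{V}(\mathbf{B}(0,R))$ using the $n$-dimensional volume written in the form (2.57). Inserting the pressure bound just obtained, the powers of $R$ collapse via $2(n-1)-n=n-2$ and the $\pi^{n/2}$ factors cancel, leaving the gamma ratio $\Gamma(n/2)/\Gamma(n/2+1)$ multiplying $\mathscr{G}M/R^{n-2}$, which is (2.53).

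The argument is essentially routine; the only place demanding genuine care is the bookkeeping of the gamma-function prefactors $\Gamma(n/2)$ versus $\Gamma(n/2+1)$ when the surface-area normalisation from (2.46) and the volume normalisation from (2.47) are combined, together with the accompanying purely numerical constants. Since the result is only an order-of-magnitude lower bound -- exactly as in the three-dimensional case -- I would not belabour the overall constant, but I would remark that the monotonicity inequality $r\le R$ used in the lower bound is saturated only at the surface, so that the bound is sharpest in the incompressible limit $\rho(r)=\rho_{c}$ treated in Lemma~2.7.
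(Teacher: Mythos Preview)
Your proposal is correct and follows essentially the same approach as the paper: integrate the $n$-dimensional Lagrangian form of the FEHE, use $r^{2(n-1)}\le R^{2(n-1)}$ to pull the radial factor outside and reduce to $\int_0^M \mathcal{M}\,d\mathcal{M}=\tfrac12 M^2$, then apply the perfect gas law with the mean-density approximation for $\rho(0)$. Your added remarks on gamma-function bookkeeping and on sharpness in the incompressible limit are sound embellishments not present in the paper; note only that a couple of your equation-number cross-references are off by one relative to the paper's numbering (the Lagrangian form is (2.51), the mean density is (2.56)).
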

It can be observed that for $n>$ the central temperature and pressure in the star are suppressed for the same mass and radius. In the limit as $n\rightarrow\infty$ the temperature and pressure are zero. This suggests that for $n>3$ and especially $n\gg 3$, stars are unstable and cannot achieve equilibrium or ignition temperature for thermonuclear fusion of hydrogen.
\subsubsection{Some basic theorems concerning hydrostatic equilibrium}
Returning to $n=3$, a star in hydrostatic equilibrium also obeys a number of basic theorems and presented in Chandrasekhar in $\bm{[1]}$ with proofs given in $\bm{[1]}$.
\begin{thm}
If the average density $\langle\rho(r)\rangle\sim\mathcal{M}(r)/[(4\pi/3)r^{3}]$ is monotonically non-increasing as one moves outward in the star towards the surface, then in any equilibrium configuration, the function $\mathcal{F}(r)$ given by
\begin{align}
\mathcal{F}(r)=p(r)+\frac{3}{8\pi}\frac{|\mathcal{M}(r)|^{2}}{r^{4}}
\end{align}
is also non-increasing.
\end{thm}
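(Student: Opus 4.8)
The plan is to establish the claim by differentiating $\mathcal{F}$ directly and showing $\mathcal{F}'(r)\le 0$. The key organising idea is to write everything in terms of the interior mean density $\langle\rho(r)\rangle=3\mathcal{M}(r)/(4\pi r^{3})$ and to reduce $\mathcal{F}'(r)$ to a single term that is a manifestly signed multiple of $\rho(r)-\langle\rho(r)\rangle$; the hypothesis will then be exactly the statement that this difference is non-positive.

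First I would differentiate the second term of $\mathcal{F}$, using the mass equation (2.8) in the form $d\mathcal{M}/dr=4\pi r^{2}\rho(r)$:
\begin{align}
\frac{d}{dr}\left(\frac{\mathcal{M}(r)^{2}}{r^{4}}\right)
=\frac{2\mathcal{M}(r)}{r^{4}}\frac{d\mathcal{M}(r)}{dr}-\frac{4\mathcal{M}(r)^{2}}{r^{5}}
=\frac{8\pi\mathcal{M}(r)\rho(r)}{r^{2}}-\frac{4\mathcal{M}(r)^{2}}{r^{5}}.\nonumber
\end{align}
Adding the pressure gradient $dp/dr=-\mathscr{G}\mathcal{M}(r)\rho(r)/r^{2}$ from the FEHE (2.12) then combines the gravitational term against the $\mathcal{M}\rho/r^{2}$ piece above, leaving $\mathcal{F}'(r)$ as a sum of one $\mathcal{M}\rho/r^{2}$ term and one $\mathcal{M}^{2}/r^{5}$ term.

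Next I would eliminate $\mathcal{M}$ in favour of the mean density through $\mathcal{M}(r)=\tfrac{4}{3}\pi r^{3}\langle\rho(r)\rangle$. The first surviving term becomes a multiple of $r\,\rho\langle\rho\rangle$ and the second a multiple of $r\,\langle\rho\rangle^{2}$, and for the coefficient occurring in $\mathcal{F}$ these collapse to
\begin{align}
\mathcal{F}'(r)=\frac{8\pi\mathscr{G}r}{3}\,\langle\rho(r)\rangle\big(\rho(r)-\langle\rho(r)\rangle\big).\nonumber
\end{align}
Independently, differentiating the definition of $\langle\rho\rangle$ and again invoking (2.8) gives $\langle\rho\rangle'(r)=(3/r)\big(\rho(r)-\langle\rho(r)\rangle\big)$, so the hypothesis that the mean density is non-increasing is precisely the pointwise inequality $\rho(r)\le\langle\rho(r)\rangle$. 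Since $r\ge 0$ and $\langle\rho(r)\rangle\ge 0$, substituting this into the factored expression yields $\mathcal{F}'(r)\le 0$, which is the assertion.

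The one genuinely delicate point is the constant bookkeeping in the middle step: the two cross terms in $\rho\langle\rho\rangle$ must cancel down to an exact multiple of $(\rho-\langle\rho\rangle)$, and a dimensional check shows this forces the coefficient of $\mathcal{M}(r)^{2}/r^{4}$ in $\mathcal{F}$ to carry a factor of Newton's constant — i.e. the term is really $\tfrac{3\mathscr{G}}{8\pi}\,\mathcal{M}(r)^{2}/r^{4}$, or equivalently one works in units with $\mathscr{G}=1$. With any other numerical coefficient the gravitational contribution of (2.12) and the $\mathcal{M}\rho$ contribution of the derivative fail to combine, and no clean sign results; once the coefficient is fixed this way the factorisation is automatic and the monotonicity of $\mathcal{F}$ follows with no hypothesis beyond $\rho\le\langle\rho\rangle$.
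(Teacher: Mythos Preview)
Your argument is correct and is essentially the same as the paper's: both differentiate $\mathcal{F}$ directly, feed in the FEHE and the mass equation, and factor the result into a non-negative multiple of $\tfrac{d}{dr}\langle\rho\rangle$ (equivalently, of $\rho-\langle\rho\rangle$). Your observation about the missing $\mathscr{G}$ is also right—the paper's own proof and the subsequent corollary silently insert it, so the intended coefficient is indeed $\tfrac{3\mathscr{G}}{8\pi}$.
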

\begin{proof}
\begin{align}
&\frac{dp(r)}{dr}=
\frac{d}{dr}[p(r)+\frac{3}{8\pi}\frac{|\mathcal{M}(r)|^{2}}{r^{4}}]\\&=
\underbrace{\frac{dp(r)}{dr}}_{use~Lagrange~form~of~FEHE}+
\frac{3}{4\pi}\mathscr{G}\frac{\mathcal{M}(r)\mathcal{M}^{\prime}(r)}
{r^{4}}-\frac{3}{2\pi}
\frac{|\mathcal{M}(r)|^{2}}{r^{5}}\nonumber\\&
=-\frac{\mathscr{G}\mathcal{M}^{\prime}(r)\mathcal{M}(r)}{4\pi r^{4}}+\frac{3}{4\pi}\frac{\mathscr{G}\mathcal{M}^{\prime}(r)\mathcal{M}(r)}{4\pi r^{4}}
-\frac{3}{2\pi}\frac{|\mathcal{M}(r)|^{2}}{r^{5}}\nonumber\\&
=\frac{1}{2\pi}\frac{\mathscr{G}\mathcal{M}(r)\mathcal{M}^{\prime}(r)}{r^{4}}-\frac{3}{2\pi}
\frac{|\mathcal{M}(r)|^{2}}{r^{5}}\nonumber\\&
=\frac{1}{2\pi}\frac{\mathscr{G}\mathcal{M}(r)}{r^{4}}\left(\mathcal{M}^{\prime}(r)-\frac{3\mathcal{M}(r)}{r}\right)\nonumber\\&
=\frac{1}{2\pi}\frac{\mathscr{G}\mathcal{M}(r)}{r}\frac{d}{dr}\left|\frac{\mathcal{M}(r)}{r^{3}}\right|
=\frac{2}{3}\mathscr{G}\frac{\mathcal{M}(r)|}{r}\frac{d}{dr}[\langle \rho(r)\rangle]\le 0
\end{align}
\end{proof}
An immediate corollary is a lower bound on the central pressure in the star
\begin{cor}
\begin{align}
p(0)=p_{c}\ge p(r)+\frac{3}{8\pi} \frac{\mathscr{G}|\mathcal{M}(r)|^{2}}{r^{4}} \ge \frac{3\mathscr{G}}{8\pi}\left|\frac{M^{2}}{R^{4}}\right|
\end{align}
which is consistent with (2.41).
\end{cor}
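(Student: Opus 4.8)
The plan is to read off both inequalities directly from the monotonicity of the auxiliary function $\mathcal{F}(r)=p(r)+\tfrac{3\mathscr{G}}{8\pi}|\mathcal{M}(r)|^{2}/r^{4}$ established in the preceding theorem, under the standing hypothesis that the mean density $\langle\rho(r)\rangle$ is non-increasing outward. Since $\mathcal{F}$ is non-increasing on $(0,R]$, for every $r$ with $0<r\le R$ one has the sandwich $\mathcal{F}(0^{+})\ge\mathcal{F}(r)\ge\mathcal{F}(R)$, and the two displayed inequalities in the corollary are precisely the statements $p_{c}\ge\mathcal{F}(r)$ and $\mathcal{F}(r)\ge\mathcal{F}(R)$ once the two endpoint values $\mathcal{F}(0^{+})$ and $\mathcal{F}(R)$ have been identified.

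First I would evaluate the outer endpoint. At the stellar surface $r=R$ the boundary condition $p(R)=0$ together with $\mathcal{M}(R)=M$ gives immediately
\begin{align}
\mathcal{F}(R)=\frac{3\mathscr{G}}{8\pi}\frac{M^{2}}{R^{4}},\nonumber
\end{align}
which is the right-hand bound, reproducing and sharpening the earlier Lagrangian estimate (2.41), the two being consistent as claimed.

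Next I would evaluate the inner endpoint, which is the only place any real care is needed. From the mass definition $\mathcal{M}(r)=\int_{0}^{r}4\pi\widehat{r}^{2}\rho(\widehat{r})\,d\widehat{r}$, a finite central density $\rho(0)=\rho_{c}$ forces the small-$r$ behaviour $\mathcal{M}(r)=\tfrac{4}{3}\pi\rho_{c}r^{3}+o(r^{3})$, whence
\begin{align}
\frac{3\mathscr{G}}{8\pi}\frac{|\mathcal{M}(r)|^{2}}{r^{4}}
=\frac{3\mathscr{G}}{8\pi}\Big(\tfrac{4}{3}\pi\rho_{c}\Big)^{2}r^{2}+o(r^{2})\longrightarrow 0
\quad\text{as } r\to 0^{+}.\nonumber
\end{align}
Since $p$ is continuous with $p(0)=p_{c}$, this yields $\mathcal{F}(0^{+})=p_{c}$, the left-hand bound. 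Combining the two endpoint evaluations with the monotonicity chain then gives the full corollary.

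The main (and essentially only) obstacle is the regularity input at the centre used in the third step: one must know that $\mathcal{M}(r)/r^{2}\to 0$, equivalently that $\rho$ does not blow up at the origin, in order to discard the gravitational term in the limit. Under the physically natural assumption of a bounded central density this is immediate, but it is the one point where the argument uses more than the bare monotonicity of $\mathcal{F}$; if one only assumes $\rho\in L^{1}$ with a possible mild singularity at $r=0$, the value $\mathcal{F}(0^{+})$ must be interpreted as a $\liminf$ and the first inequality re-read accordingly.
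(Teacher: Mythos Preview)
Your proposal is correct and follows exactly the approach the paper intends: the paper presents this as an immediate corollary with no written proof, and your argument---reading off the chain $\mathcal{F}(0^{+})\ge\mathcal{F}(r)\ge\mathcal{F}(R)$ from the monotonicity of $\mathcal{F}$ and evaluating the two endpoints via $p(R)=0$, $\mathcal{M}(R)=M$, and $\mathcal{M}(r)=O(r^{3})$ near the centre---is precisely the intended justification. Your added remark on the regularity needed at $r=0$ is more care than the paper takes, but is entirely appropriate.
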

\begin{thm}
For any equilibrium configuration
\begin{align}
\mathbf{I}_{\gamma}
=\int_{o}^{R}\frac{{\mathscr{G}}\mathcal{M}(r)d\mathcal{M}(r)}{r^{\xi}}=
4\pi(4-\xi)\int_{0}^{R}p(r)r^{3-\xi}d\xi
\end{align}
if $\xi<4$.
\end{thm}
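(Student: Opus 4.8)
The plan is to turn the mass-weighted integral into a purely radial integral, use the FEHE to trade the gravitational factor for a pressure gradient, and then perform a single integration by parts; the hypothesis $\xi<4$ will be exactly what makes the boundary term at the origin vanish.

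First I would use the shell-mass relation $d\mathcal{M}(r)=4\pi r^{2}\rho(r)\,dr$ coming from (2.8) to write the left-hand side as
\[
\mathbf{I}_{\gamma}=\int_{0}^{R}\frac{\mathscr{G}\mathcal{M}(r)\,d\mathcal{M}(r)}{r^{\xi}}
=4\pi\int_{0}^{R}\mathscr{G}\mathcal{M}(r)\rho(r)\,r^{2-\xi}\,dr .
\]
Next, invoking the FEHE (2.12) in the rearranged form $\mathscr{G}\mathcal{M}(r)\rho(r)=-r^{2}\,dp(r)/dr$, the integrand collapses to $-r^{4-\xi}\,dp(r)/dr$, giving
\[
\mathbf{I}_{\gamma}=-4\pi\int_{0}^{R}r^{4-\xi}\,\frac{dp(r)}{dr}\,dr .
\]

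I would then integrate by parts with $u=r^{4-\xi}$ and $dv=(dp/dr)\,dr$, producing the boundary term $\bigl[r^{4-\xi}p(r)\bigr]_{0}^{R}$ together with the bulk term $(4-\xi)\int_{0}^{R}p(r)\,r^{3-\xi}\,dr$. The decisive point — and the only place the hypothesis is used — is the vanishing of the boundary term. At the upper endpoint it vanishes because $p(R)=0$ by the boundary condition at the stellar surface; at the lower endpoint the prefactor $r^{4-\xi}$ tends to $0$ precisely because $4-\xi>0$, while the central pressure $p(0)=p_{c}$ remains finite. With both endpoint contributions gone, what survives is exactly $4\pi(4-\xi)\int_{0}^{R}p(r)\,r^{3-\xi}\,dr$, which is the asserted identity (the symbol $d\xi$ in the displayed statement being a typographical slip for $dr$).

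I expect the sole genuine obstacle to be justifying the vanishing of the endpoint term at $r=0$: it rests on $p$ staying bounded near the centre, which holds for any physical equilibrium with finite central pressure, together with the strict inequality $\xi<4$. Should $\xi\ge 4$, the factor $r^{4-\xi}$ no longer decays and the contribution at the origin fails to vanish (or diverges), so the restriction $\xi<4$ is sharp and cannot be dropped.
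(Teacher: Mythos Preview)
Your argument is correct and is exactly the standard derivation: convert $d\mathcal{M}(r)$ to $4\pi r^{2}\rho(r)\,dr$, use the FEHE to replace $\mathscr{G}\mathcal{M}(r)\rho(r)$ by $-r^{2}\,dp/dr$, and integrate by parts with the boundary term killed by $p(R)=0$ and by $r^{4-\xi}\to 0$ for $\xi<4$. The paper itself does not supply a proof of this theorem but defers to Chandrasekhar~[1], whose argument is precisely the one you give.
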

\begin{thm}
\begin{align}
\xi\pi p_{c}R^{4}r^{4-\xi}+\frac{|4-\xi|}{8}\frac{\mathscr{G}M^{2}}{R^{3}} >\mathbf{I}_{\xi}>\frac{\mathscr{G}M^{2}}{2R^{\xi}}
\end{align}
if $\xi<4 $.
\end{thm}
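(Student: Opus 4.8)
The plan is to handle the two inequalities separately, using the two equivalent representations of $\mathbf{I}_\xi$ established in the preceding theorem: the mass form $\mathbf{I}_\xi=\int_0^{M}\mathscr{G}\mathcal{M}(r)\,r^{-\xi}\,d\mathcal{M}(r)$ and the pressure form $\mathbf{I}_\xi=4\pi(4-\xi)\int_0^{R}p(r)\,r^{3-\xi}\,dr$. The latter is legitimate precisely because $\xi<4$: this is the condition under which the boundary term at $r=0$ in the underlying integration by parts vanishes and the weight $r^{3-\xi}$ stays integrable. Throughout I take $0\le\xi<4$, the range in which both representations are valid and $r\mapsto r^{-\xi}$ is non-increasing.

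For the lower bound I would start from the mass form. Each mass element $d\mathcal{M}(r)$ sits at a radius $r\le R$, and for $\xi\ge 0$ the factor $r^{-\xi}$ attains its minimum value $R^{-\xi}$ at $r=R$; replacing $r^{-\xi}$ by $R^{-\xi}$ therefore only decreases the integrand, strictly so since $r<R$ on a set of positive measure. Factoring out the constant and evaluating $\int_0^{M}\mathcal{M}\,d\mathcal{M}=\tfrac12 M^{2}$ gives $\mathbf{I}_\xi>\mathscr{G}M^{2}/(2R^{\xi})$, which is the claimed lower bound.

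For the upper bound I would work from the pressure form, where $4-\xi>0$ keeps every inequality correctly oriented. The dominant term comes from bounding the pressure by its central value $p(r)\le p_c$ --- a consequence of the FEHE, since $dp/dr=-\mathscr{G}\mathcal{M}\rho/r^{2}\le 0$ forces $p$ to be monotonically decreasing outward --- and then using $\int_0^{R}r^{3-\xi}\,dr=R^{4-\xi}/(4-\xi)$, which cancels the prefactor $(4-\xi)$ and yields the central-pressure term. The remaining mass term is recovered by sharpening this pressure estimate with the monotone auxiliary function $\mathcal{F}(r)=p(r)+\tfrac{3}{8\pi}\mathscr{G}\mathcal{M}(r)^{2}/r^{4}$ introduced earlier, together with its corollary: these control the deficit $p_c-p(r)$ in terms of the gravitational quantity $\tfrac{3}{8\pi}\mathscr{G}M^{2}/R^{4}$, and feeding this refined bound back into the weighted integral converts the deficit into the boundary mass contribution of the stated upper bound.

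The upper bound is where I expect the real work to lie; the lower bound is essentially immediate. The difficulty is to separate cleanly the central-pressure contribution from the mass-correction contribution while carrying the weight $r^{3-\xi}$, and to verify convergence both at $r=0$ (where $\mathcal{M}(r)^{2}/r^{4}\to 0$ ensures $\mathcal{F}(0)=p_c$) and at $r=R$ (where $p(R)=0$). Tracking the precise constant in front of the mass term and confirming the orientation of each inequality as $\xi$ ranges over $[0,4)$ is the delicate point; the absolute value $|4-\xi|$ in the statement merely avoids presupposing the sign of $4-\xi$, which $\xi<4$ in any case fixes to be positive.
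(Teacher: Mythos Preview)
The paper does not actually supply its own proof of this theorem: in the sentence introducing this block of results it says they are ``presented in Chandrasekhar in $\bm{[1]}$ with proofs given in $\bm{[1]}$'', and no argument follows the statement. So there is nothing in the paper to compare your sketch against; the reference is Chandrasekhar's Chapter~III.

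Your lower-bound argument is correct and standard. Your upper-bound sketch, however, has a genuine gap. Bounding $p(r)\le p_c$ in the pressure form yields $\mathbf{I}_\xi\le 4\pi p_c R^{4-\xi}$ with coefficient $4$, not $\xi$; you then propose to ``sharpen'' via $\mathcal{F}(r)=p(r)+\tfrac{3\mathscr{G}}{8\pi}\mathcal{M}(r)^2/r^4$. But $\mathcal{F}(r)\le\mathcal{F}(0)=p_c$ gives $p_c-p(r)\ge\tfrac{3\mathscr{G}}{8\pi}\mathcal{M}(r)^2/r^4$, a \emph{lower} bound on the deficit. Feeding this back only subtracts a positive term from $4\pi p_c R^{4-\xi}$; it cannot produce the extra \emph{positive} mass term with the smaller $p_c$-coefficient $\xi$ that the statement demands. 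The inequality is oriented the wrong way for the mechanism you describe.

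The missing organising step (Chandrasekhar's device) is an integration by parts in the mass form itself: from $d\bigl(\mathcal{M}^2/r^{\xi}\bigr)=2\mathcal{M}\,d\mathcal{M}/r^{\xi}-\xi\,\mathcal{M}^2\,dr/r^{\xi+1}$ one obtains the exact identity
\[
\mathbf{I}_\xi=\frac{\mathscr{G}M^2}{2R^{\xi}}+\frac{\xi\mathscr{G}}{2}\int_0^{R}\frac{\mathcal{M}(r)^2}{r^{\xi+1}}\,dr.
\]
The lower bound is then immediate (the integral is nonnegative). For the upper bound one writes $\mathcal{M}^2/r^{\xi+1}=(\mathcal{M}^2/r^4)\,r^{3-\xi}$ and invokes the corollary in the form $\mathcal{M}(r)^2/r^4\le\tfrac{8\pi}{3\mathscr{G}}\,p_c$; integrating the remaining $r^{3-\xi}$ supplies the $p_cR^{4-\xi}$ term with its $\xi$-dependent coefficient. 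Your ingredients ($p_c$, the function $\mathcal{F}$, the weight $r^{3-\xi}$) are the right ones, but without this identity you cannot split the coefficient $4$ into the $\xi$ and $(4-\xi)$ pieces the statement requires. (Note also that the printed statement carries evident typos --- $R^{4}r^{4-\xi}$ should be $R^{4-\xi}$ and $R^{3}$ should be $R^{\xi}$ --- so match exponents, not the literal text.)
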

\subsection{Gravitational and Thermal Energies of a Star in Hydrostatic Equilibrium}
The following energies can be defined for a self-gravitating star. Because a star is a self-gravitating thermodynamic system this leads to some interesting properties; in particular, the fact that a star has negative specific heat: as it loses energy (via radiation) it can get hotter. Gravitational energy can also be converted to thermal energy; for example, as a protostar radiates and collapses in a (Kelvin-Helmholtz contraction) it gets hotter until thermonuclear reactions are initiated in its core. The star achieves hydrostatic equilibrium and becomes stable. The most commonly utilised energies in Newtonian astrophysical theory are the gravitational and thermal energies which are now defined.
\begin{defn}
The total gravitational potential energy (TGPE) or binding energy $\mathbf{E}_{G}$ of a star is the energy required to disperse the matter comprising the star to infinity, peeling it shell by shell from the outside in $\bm{[1,2,3]}$.
\begin{align}
\mathbf{E}_{G}=-4\pi\mathscr{G}\int_{0}^{R}r\mathcal{M}(r)
\rho(r)dr\equiv\int_{0}^{M}\frac{\mathscr{G}\mathcal{M}(r)
d\mathcal{M}(r)}{r}
\end{align}
The TGPE can also be written as
\begin{align}
\mathbf{E}_{G}=-3\int_{0}^{R}4\pi r^{2} p(r)dr
\end{align}
This follows by using the FEHE and integrating by parts so that
\begin{align}
\mathrm{E}_{G}=4\pi\int_{0}^{R}\frac{dp(r)}{dr}r^{3}dr=[r^{3}p(r)]_{0}^{R}
-3\int_{0}^{R}4\pi r^{2}p(r)r^{2}dr
\end{align}
and the first term on the RHS vanishes since $p(R)=0$.
\end{defn}
\begin{defn}
The thermal energy of the star is defined as $\bm{[2,3]}$.
\begin{align}
\mathbf{E}_{T}=\int_{0}^{R}\mathcal{E}(r) 4\pi r^{2}dr=\int_{0}^{R}\mathcal{E}(r)d\mathrm{V}(r)
\end{align}
where $\mathcal{E}(r)$ is the density of the internal thermal energy only, not including the gravitational energy or self gravitation and the rest-mass energy. For example, a perfect/ideal gas of N particles at temperature $\Theta$ contained in a domain $\mathbf{B}\subset\mathbf{R}^{3}$ of volume $\mathrm{V}(\mathbf{B})$, the homogenous energy density is $\mathcal{E}=\tfrac{3}{2}Nk_{B}\Theta/\mathrm{V}=E_{\Theta}/\mathrm{V}(\mathbf{B})$ so that $\mathbf{E}_{\Theta}=\mathrm{V}(\mathbf{B})\mathcal{E}=\tfrac{4}{3}\pi R^{3}\mathcal{E}$. For a gas in spherical 'box' the thermal energy is $\mathbf{E}_{\Theta}=\int_{\mathbf{B}}\mathcal{E}(r)d\mathrm{V}=\int_{\mathbf{B}}\mathcal{E}(r)4\pi r^{2}dr$. The thermal energy can also be expressed as integrals over $d\mathcal{M}(r)$ so that
\begin{align}
\mathbf{E}_{T}=\int_{0}^{R}\mathcal{E}(r)4\pi r^{2}dr=\int_{0}^{R}\frac{\mathcal{E}(r)}{\rho(r)}d\mathcal{M}(r)
=\int_{0}^{M}\mathscr{J}(r,\rho(r))d\mathcal{M}(r)
\end{align}
where $\mathscr{J}$ is the thermal energy density per unit mass.
\end{defn}
\begin{defn}
The total energy $E$ of the star is then the sum of the thermal and gravitational energies
\begin{align}
 \mathbf{E}=\mathbf{E}_{T}+\mathbf{E}_{G}=\int_{0}^{R}[\mathcal{E}(r)-2p(r)]4 \pi ^{2}dr\equiv \int_{0}^{R}[\mathcal{E}(r)-2p(r)]d\mathrm{V}(r)
\end{align}
The star must have negative energy and is stable against thermal dispersion/expansion of its matter if $\mathcal{E}(r)<3p(r)$.
\end{defn}
\begin{rem}
The following basic (heuristic) observations can be made regarding the total energy.
\begin{itemize}
\item If $\mathbf{E}<0$ with $\mathbf{E}(r)<3p(r)$ then the star is stable to dispersion/explosion.
\item If $\mathbf{E}>0$ then the star will disperse/explode or collapse.
\item A star with $\mathbf{E}=0$ is at the point of dynamical instability.
\end{itemize}
In Section (4) it will be shown that the vanishing of the 1st variation of E (namely $\delta\mathbf{E}=0$) leads to the FEHE.
\end{rem}
\subsection{The stellar virial theorem}
The virial theorem, well known in classical and celestial mechanics is another consequence of the FEHE and the gravitational potential/binding energy. The virial theorem for self-gravitating gaseous stars follows quite simply from the Lagrangian form of the FEHE and the gravitational potential energy $\mathbf{E}_{G}$.
\begin{thm}
The virial theorem for a self-gravitating gaseous star is
\begin{align}
3\int_{0}^{\mathrm{V}}p(r)d\mathrm{V}(r)
=\int_{0}^{M}\frac{\mathscr{G}\mathcal{M}(r)d\mathcal{M}(r)}{r}
\end{align}
 or
\begin{align}
3\int_{0}^{\mathrm{V}}p(r)d\mathrm{V}(r)
+{\mathbf{E}}_{\scriptstyle{G}}=0
\end{align}
\end{thm}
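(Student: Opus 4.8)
The plan is to read the identity directly off the FEHE, by weighting it with the enclosed volume and integrating across the star. First I would start from the FEHE (2.12), multiply both sides by the ball volume $\tfrac{4}{3}\pi r^{3}$, and integrate in $r$ from $0$ to $R$. On the right-hand side the factor $r^{3}/r^{2}=r$ combines with $4\pi r^{2}\rho(r)\,dr=d\mathcal{M}(r)$, so the gravitational term collapses precisely to $\tfrac{1}{3}\int_{0}^{M}\mathscr{G}\mathcal{M}(r)\,d\mathcal{M}(r)/r$, reproducing the right-hand side of the claimed identity up to the factor of three.

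The one step that carries any content is the integration by parts on the left-hand side, $\int_{0}^{R}\tfrac{4}{3}\pi r^{3}\,\tfrac{dp(r)}{dr}\,dr=\bigl[\tfrac{4}{3}\pi r^{3}p(r)\bigr]_{0}^{R}-\int_{0}^{R}4\pi r^{2}p(r)\,dr$. I would then use the surface boundary condition $p(R)=0$ together with the interior regularity $r^{3}p(r)\to0$ as $r\to0$ to annihilate the boundary term, leaving $-\int_{0}^{R}4\pi r^{2}p(r)\,dr=-\int_{0}^{\mathrm{V}}p(r)\,d\mathrm{V}(r)$. Equating the two sides and cancelling the common factor $\tfrac{1}{3}$ gives the first form of the theorem; the equivalent form then follows at once by recognising the right-hand integral as $-\mathbf{E}_{G}$ from the definition of the total gravitational potential energy.

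As a consistency check I would observe that the result is already implicit in the energy identity derived earlier, namely $\mathbf{E}_{G}=-3\int_{0}^{R}4\pi r^{2}p(r)\,dr=-3\int_{0}^{\mathrm{V}}p(r)\,d\mathrm{V}(r)$, which is literally the equivalent form of the virial theorem. I would remark on this to make explicit that the stellar virial theorem is simply a repackaging of the gravitational-energy identity rather than an independent fact, and that both derivations rest on the same integration by parts of the FEHE against $r^{3}$.

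There is no serious obstacle here; the computation is elementary once the correct weight $\tfrac{4}{3}\pi r^{3}$ is chosen. The only point requiring genuine care is the sign and boundary bookkeeping: one must track that $d\mathcal{M}=4\pi r^{2}\rho(r)\,dr>0$, so that $\int_{0}^{M}\mathscr{G}\mathcal{M}\,d\mathcal{M}/r$ is strictly positive and equals the positive pressure integral $3\int_{0}^{\mathrm{V}}p\,d\mathrm{V}$, which in turn fixes the sign convention for $\mathbf{E}_{G}$ so that $3\int_{0}^{\mathrm{V}}p\,d\mathrm{V}+\mathbf{E}_{G}=0$ is consistent.
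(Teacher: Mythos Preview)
Your proposal is correct and follows essentially the same route as the paper: multiply the FEHE by the volume $\tfrac{4}{3}\pi r^{3}$, integrate across the star, and integrate by parts using $p(R)=0$ and $\mathrm{V}(0)=0$. The only cosmetic difference is that the paper starts from the Lagrangian form $dp/d\mathcal{M}=-\mathscr{G}\mathcal{M}/(4\pi r^{4})$ rather than the Eulerian $dp/dr$, so the conversion to $d\mathcal{M}(r)$ on the gravitational side is built in rather than done by hand; your additional remark that the result is already contained in the earlier identity $\mathbf{E}_{G}=-3\int 4\pi r^{2}p\,dr$ is a worthwhile observation not made explicit in the paper.
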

\begin{proof}
Using the Lagrangian form of the FEHE
\begin{align}
\frac{dp(r)}{d\mathcal{M}(r)}=-\frac{\mathscr{G}\mathcal{M}(r)}{4\pi r^{4}}
\end{align}
multiply through by $4\pi r^{2}$ so that
\begin{align}
4\pi r^{3}\frac{dp(r)}{d\mathcal{M}(r)}=-\frac{\mathscr{G}\mathcal{M}(r)}{r}
\end{align}
and
\begin{align}
4\pi r^{3}dp(r)=-\frac{\mathscr{G}\mathcal{M}(r)}{r}d\mathcal{M}(r)
\end{align}
Integrating over the volume of the star with $\mathrm{V}(r)=\tfrac{4}{3} r^{3}$ and
volume element $d\mathrm{V}=4\pi r^{2}dr$
\begin{align}
\underbrace{3\int_{p(0)}^{p(R)}\mathrm{V}(r)dp(r)}_{int~by~parts}
=-\int_{0}^{M}\frac{\mathscr{G}\mathcal{M}(r)}{r}d\mathcal{M}(r)
\end{align}
\begin{align}
|p(r)\mathrm{V}(r)|_{p(0)}^{p(R)}-3\int_{\mathrm{V}(0)}^{\mathrm{V}(R)} p(r)d\mathrm{V}(r)=-
\int_{0}^{M}\frac{\mathscr{G}\mathcal{M}(r)d\mathcal{M}(r)}{r}
\end{align}
and $\mathrm{V}(0)=0$ and $p(R)=0$. Hence (2.69) or (2.70) follows.
\end{proof}
The virial theorem can also be expressed in terms of the thermal energy $\mathbf{E}_{T}$ and gravitational energy $\mathbf{E}_{G}$. An immediate corollary is that the specific heat of a gaseous star at fixed volume is actually negative.
\begin{thm}
The virial theorem can be expressed in terms of the thermal and gravitational energies as
\begin{align}
2{\mathbf{E}}_{T}=-{\mathbf{E}}_{G}
\end{align}
so that the total energy is minus the thermal energy
\begin{align}
\mathbf{E}=\mathbf{E}_{T}+\mathbf{E}_{G}=-\mathbf{E}_{T}
\end{align}
The specific heat is then negative so $\mathrm{C}_{V}<0$. For a perfect gas
\begin{align}
\mathrm{C}_{V}=-\tfrac{3}{2}Nk_{b}<0
\end{align}
\end{thm}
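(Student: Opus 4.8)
The plan is to combine three facts already in hand---the virial identity of Theorem (2.12), the pressure form $\mathbf{E}_G=-3\int_0^R 4\pi r^2 p(r)\,dr$ of the gravitational energy established in (2.65), and the caloric equation of state of a monatomic perfect gas---and then to extract the thermodynamic statement by a single differentiation. The only new physical ingredient is the equation of state, which supplies the numerical factor linking the internal energy density $\mathcal{E}(r)$ to the local pressure $p(r)$; everything else is bookkeeping.

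First I would record the caloric relation $\mathcal{E}(r)=\tfrac{3}{2}p(r)$ valid for a non-relativistic monatomic ideal gas, which follows from $p_{gas}=\mathscr{R}\rho\Theta$ together with the equipartition energy $\tfrac{3}{2}k_B\Theta$ per particle. Inserting this into the definition (2.67) of the thermal energy gives
\[
\mathbf{E}_T=\int_0^R \mathcal{E}(r)\,4\pi r^2\,dr=\tfrac{3}{2}\int_0^R p(r)\,4\pi r^2\,dr=\tfrac{3}{2}\int_0^{\mathrm{V}} p(r)\,d\mathrm{V}(r),
\]
so that $3\int_0^{\mathrm{V}}p\,d\mathrm{V}=2\mathbf{E}_T$. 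Comparing this with (2.65), written as $\mathbf{E}_G=-3\int_0^{\mathrm{V}}p\,d\mathrm{V}$, yields at once $\mathbf{E}_G=-2\mathbf{E}_T$, i.e. $2\mathbf{E}_T=-\mathbf{E}_G$; this is precisely the virial relation (2.70) re-expressed through the equation of state. Substituting into $\mathbf{E}=\mathbf{E}_T+\mathbf{E}_G$ then gives $\mathbf{E}=\mathbf{E}_T-2\mathbf{E}_T=-\mathbf{E}_T$. For the negative specific heat I would differentiate the total energy with respect to temperature: writing the thermal content of the $N$-particle gas as $\mathbf{E}_T=\tfrac{3}{2}Nk_B\Theta$, the total stellar energy becomes $\mathbf{E}=-\mathbf{E}_T=-\tfrac{3}{2}Nk_B\Theta$, whence $\mathrm{C}_V=d\mathbf{E}/d\Theta=-\tfrac{3}{2}Nk_B<0$.

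I expect the main difficulty to be conceptual rather than computational: one must be clear that $\mathrm{C}_V$ here is the heat capacity of the \emph{entire virialised self-gravitating configuration}---the derivative of the total energy $\mathbf{E}$, not of the thermal energy $\mathbf{E}_T$ alone. The minus sign encodes the gravothermal behaviour already described in Section (2.3): when the star radiates and its total energy $\mathbf{E}$ decreases, $\mathbf{E}_T=-\mathbf{E}$ and hence the temperature $\Theta$ must \emph{increase}. A secondary point worth flagging is the restriction to a monatomic ($\gamma=\tfrac{5}{3}$) gas; for a general polytropic or partially relativistic equation of state the factor $\tfrac{3}{2}$, and therefore the coefficient in $\mathrm{C}_V$, changes accordingly, so the relation $2\mathbf{E}_T=-\mathbf{E}_G$ is special to this caloric law.
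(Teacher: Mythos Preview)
Your proposal is correct and follows essentially the same route as the paper: invoke the caloric relation $3p(r)=2\mathcal{E}(r)$ for a monatomic ideal gas, feed it into the virial identity $3\int p\,d\mathrm{V}=-\mathbf{E}_G$ to obtain $2\mathbf{E}_T=-\mathbf{E}_G$, deduce $\mathbf{E}=-\mathbf{E}_T$, and differentiate with $\mathbf{E}_T=\tfrac{3}{2}Nk_B\Theta$ to get $\mathrm{C}_V=-\tfrac{3}{2}Nk_B$. Your added remarks on the interpretation of $\mathrm{C}_V$ as the heat capacity of the full virialised configuration, and on the restriction to $\gamma=\tfrac{5}{3}$, are useful clarifications not spelled out in the paper but change nothing in the argument itself.
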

\begin{proof}
Since the pressure $p(r)$ and the thermal energy density $\mathcal{E}(r)$ are related by
$3p(r)=2\mathcal{E}$, the virial theorem becomes
\begin{align}
2\int_{0}^{R}{\mathcal{E}}(r)4\pi r^{2}dr=2{\mathbf{E}_{T}}=-{\mathbf{E}_{G}}
\end{align}
The total energy $\mathbf{E}$ is then
\begin{align}
{\mathbf{E}}={\mathbf{E}_{T}}+{\mathbf{E}_{G}}=-\mathbf{E}_{T}
\end{align}
The specific heat in hydrostatic equilibrium at fixed $\mathrm{V}$ is then the derivative of $\mathbf{E}$ with respect to temperature $\Theta$ so that
\begin{align}
{\mathrm{C}}_{\mathrm{V}}=\frac{d\mathbf{E}}{d\Theta}=-\frac{d\mathbf{E}_{T}}{d\Theta}<0
\end{align}
For an ideal gas at temperature $\Theta$ one has $\mathbf{E}_{T}=\frac{3}{2}Nk_{b}\Theta$ so that$\mathrm{C}_{\mathrm{V}}=-\frac{3}{2}Nk_{b}<0$.
\end{proof}
Physically, this means that as a self-gravitating system loses energy by radiation it gravitationally contracts since it cannot be supported in hydrostatic equilibrium. Therefore it heats up. During the lifetime of the star, the radiation pressure is maintained by thermonuclear reactions in the core. But once this reservoir of fuel becomes exhausted the star will shift out of hydrostatic equilibrium and contract, and gravitational energy will be converted to thermal energy. In a main sequence star like the Sun, the increased core temperature initiates the next phase of thermonuclear burning whereby helium is fused into carbon and oxygen. The temperature of ordinary matter with positive specific heat, decreases as the matter radiates energy; for example, a red-hot piece of iron will cool off. The statistical mechanics and thermodynamics of self gravitating systems of particles has been discussed in various works(refs).
\subsection{The isothermal sphere for a perfect gas}
The equilibrium of an isothermal gas sphere (IGS) is now considered for a perfect gas. This is a self-gravitating gaseous sphere of uniform temperature $\bm{[1,6]}$. Real stars have thermal gradients so this is not a realistic model of any normal star. However, once a star has exhausted its nuclear fuel, the central core of the star--consisting primarily of helium for main sequence stars--will become isothermal, as a prelude to becoming a red giant. It can also describe globular clusters. Thus, the isothermal core is still astro-physically relevant. Also, the isothermal sphere has a number of interesting thermodynamic and statistical mechanical properties. $\bm{[1,6,7,8]}$.
\begin{lem}
Let $\mathbf{B}(0,R)\subset\mathbf{R}^{3}$ be a ball/domain of radius $R$ centred at the origin and supporting a perfect gas at uniform temperature $\Theta$ consisting of N particles of mass $m$. The total mass is then $M=mN$. The pressure and density at any radius $r<R$ are then related by $p(r)=\mathscr{R}\rho(r)=\frac{k_{B}\Theta}{m}\rho(r)$. The equilibrium equations for the isothermal gas sphere are then the Poisson and Poisson-Boltzmann equations
\begin{align}
&\Delta(\log\rho(r))=\frac{1}{r^{2}}\frac{d}{dr}\left(r^{2}\frac{d}{dr}(\log\rho(r))\right)=-\frac{4\pi \mathscr{G}\rho(r)}{\mathcal{R}}\\&
\Delta\Phi(r)=\frac{1}{r^{2}}\frac{d}{dr}\left(r^{2}\frac{d\Phi(r)}{dr}
\right)=4\pi \mathscr{G}\rho_{c}
\exp\left(-\frac{\Phi(r)}{\mathscr{R}}\right)
\end{align}
\begin{proof}
Differentiating the perfect gas law
\begin{align}
\frac{dp(r)}{dr}=\mathscr{R}\frac{d\rho(r)}{dr}
\end{align}
and applying the FEHE $dp(r)/dr=-\frac{\mathscr{G}\mathcal{M}(r)}{r^{2}}$ gives
\begin{align}
\frac{dp(r)}{dr}=\mathscr{R}\frac{d\rho(r)}{dr}
=-\rho(r)\frac{d\Phi(r)}{dr}
=-\frac{\mathscr{G}\mathcal{M}(r)\rho(r)}{r^{2}}
\end{align}
Integrating
\begin{align}
\mathscr{R}\int_{\rho(0)}^{\rho(r)}\frac{d\overline{\rho(r)}}{\overline{\rho(r)}}
=-\int_{\Phi(0)}^{\Phi(r)}d\overline{\Phi(r)}=-\Phi(r)
\end{align}
since $\Phi(0)=0$. Then
\begin{align}
p(r)=\log\left|\frac{\rho(r)}{\rho(0)}|\right|=
-\frac{\Phi(r)}{\mathscr{R}}
\end{align}
so that
\begin{align}
\rho(r)=\rho_{c}\exp\left(-\frac{\Phi(r)}{\mathscr{R}}\right)
\end{align}
To derive (2.82), begin with the Emden equation
\begin{align}
\frac{1}{r^{2}}\frac{d}{dr}\left(\frac{r^{2}}{\rho(r)}\frac{dp(r)}{dr}
\right)=-4\pi\mathscr{G}\rho(r)
\end{align}
Then from the FEHE and (2.87)
\begin{align}
&\frac{1}{\rho(r)}\frac{dp(r)}{dr}=\frac{-\mathscr{G}\mathcal{M}(r)}{r^{2}}=-\frac{d\Phi(r)}{dr}=
\mathscr{R}\frac{d}{dr}\log\left|\frac{\rho(r)}{\rho(0)}\right|\\&
=\mathscr{R}\frac{d}{dr}|\log\rho(r)-\log\rho(0)|=\mathscr{R}\frac{d}{dr}\log\rho(r)
\end{align}
so that the Emden equation becomes
\begin{align}
\frac{1}{r^{2}}\frac{d}{dr}\left(r^{2}\frac{d}{dr}(\log\rho(r))\right)
=-\frac{4\pi \mathscr{G}\rho(r)}{\mathcal{R}}
\end{align}
\end{proof}
\end{lem}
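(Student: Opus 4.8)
The plan is to reduce both equations to the Emden equation and to the Poisson equation already established earlier in this section, exploiting the single feature that isolates the isothermal case: the factor $\mathscr{R}=k_{B}\Theta/m$ is a genuine constant, so it commutes with every radial derivative. First I would differentiate the isothermal equation of state $p(r)=\mathscr{R}\rho(r)$ to get $dp/dr=\mathscr{R}\,d\rho/dr$, then combine this with the FEHE in the form $\rho^{-1}\,dp/dr=-d\Phi/dr$ used in the preceding lemma. Dividing through by $\rho(r)$ produces the pivotal first-order identity
\begin{align}
\mathscr{R}\frac{d}{dr}\log\rho(r)=\frac{1}{\rho(r)}\frac{dp(r)}{dr}=-\frac{\mathscr{G}\mathcal{M}(r)}{r^{2}}=-\frac{d\Phi(r)}{dr},\nonumber
\end{align}
which ties the logarithmic density gradient directly to the potential gradient. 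This identity is the engine for both target equations, one reached by differentiating it and the other by integrating it.

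For the first equation, I would substitute $\rho^{-1}\,dp/dr=\mathscr{R}\,d(\log\rho)/dr$ into the Emden equation $\frac{1}{r^{2}}\frac{d}{dr}\!\left(\frac{r^{2}}{\rho}\frac{dp}{dr}\right)=-4\pi\mathscr{G}\rho$ of the preceding lemma. Since $\mathscr{R}$ is constant it pulls out through the outer derivative, and dividing by $\mathscr{R}$ gives (2.81) at once. For the second equation the route is to integrate instead of differentiate: integrating the pivotal identity from $0$ to $r$ with the gauge $\Phi(0)=0$ yields $\mathscr{R}\log(\rho(r)/\rho_{c})=-\Phi(r)$, i.e. the Maxwell--Boltzmann profile $\rho(r)=\rho_{c}\exp(-\Phi(r)/\mathscr{R})$ with $\rho_{c}=\rho(0)$. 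Inserting this expression for $\rho$ into the Poisson equation $\Delta\Phi=4\pi\mathscr{G}\rho$ from the defining assumptions (equation (2.4)), written in spherical form $\frac{1}{r^{2}}\frac{d}{dr}(r^{2}\,d\Phi/dr)$, delivers the Poisson--Boltzmann equation (2.82).

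The computation is essentially routine once the pivotal identity is secured, so I do not expect a deep obstacle; the points requiring care are bookkeeping ones. The constancy of $\mathscr{R}$ --- equivalently the isothermal hypothesis --- must be invoked \emph{explicitly}, because in a star with a temperature gradient $\mathscr{R}=\mathscr{R}(r)$ would not pass through the derivatives and neither equation would close; this is precisely the step where the isothermal assumption does its work. One must also fix the normalisation $\Phi(0)=0$ before integrating the identity, and take the Poisson equation with the sign convention of (2.4), namely $\Delta\Phi=+4\pi\mathscr{G}\rho$, so that the exponential enters (2.82) with a positive coefficient consistent with $d\Phi/dr=\mathscr{G}\mathcal{M}(r)/r^{2}$.
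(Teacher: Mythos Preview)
Your proposal is correct and follows essentially the same route as the paper: both establish the identity $\mathscr{R}\,d(\log\rho)/dr=\rho^{-1}dp/dr=-d\Phi/dr$ from the isothermal equation of state plus the FEHE, then substitute into the Emden equation to obtain the $\log\rho$ equation and integrate (with $\Phi(0)=0$) to obtain the Boltzmann profile $\rho=\rho_{c}e^{-\Phi/\mathscr{R}}$, which feeds into the Poisson equation for the Poisson--Boltzmann form. The only difference is the order of presentation---the paper integrates first and substitutes into Emden second---and your explicit flagging of where the constancy of $\mathscr{R}$ is used, which the paper leaves implicit.
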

\begin{cor}
The IGE has a singular solution
\begin{align}
\Phi(r)=\frac{1}{2\pi\mathscr{R}\mathscr{G} r^{2}}
\end{align}
Since $\rho(r)\sim r^{-2}$ the total mass is $M=\int_{0}^{\infty}4\pi r^{2}dr=\int_{0}^{\infty}4\pi dr=\infty$
\end{cor}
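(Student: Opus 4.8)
The plan is to exploit the scale invariance of the isothermal equilibrium equation to identify the singular profile and then verify it by direct substitution. The logarithmic form (2.81), namely $\tfrac{1}{r^{2}}\tfrac{d}{dr}\big(r^{2}\tfrac{d}{dr}\log\rho(r)\big)=-\tfrac{4\pi\mathscr{G}}{\mathscr{R}}\rho(r)$, is invariant under the one-parameter rescaling $r\mapsto\lambda r$, $\rho\mapsto\lambda^{-2}\rho$ for every $\lambda>0$ (one checks that $\sigma(r)=\lambda^{-2}\rho(r/\lambda)$ solves the same equation whenever $\rho$ does). This scale freedom makes a pure power law the natural candidate, since $\rho(r)=A\,r^{-2}$ is exactly a fixed point of the symmetry; so I would seek a self-similar solution of this form with $A>0$ a constant to be determined.

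First I would substitute $\rho(r)=A r^{-2}$ into (2.81). Using $\log\rho=\log A-2\log r$ gives $\tfrac{d}{dr}\log\rho=-2/r$, hence $r^{2}\tfrac{d}{dr}\log\rho=-2r$ and $\tfrac{1}{r^{2}}\tfrac{d}{dr}(-2r)=-2/r^{2}$. Equating this with the right-hand side $-\tfrac{4\pi\mathscr{G}}{\mathscr{R}}Ar^{-2}$ forces $A=\mathscr{R}/(2\pi\mathscr{G})$, so the singular isothermal profile is $\rho(r)=\tfrac{\mathscr{R}}{2\pi\mathscr{G}\,r^{2}}$. Feeding this back through the Boltzmann relation (2.87) produces a potential that grows \emph{logarithmically}, $\Phi(r)=2\mathscr{R}\log r+\mathrm{const}$, rather than a bounded or inverse-power one; I would record that the quantity cleanly carrying the $r^{-2}$ scaling is the density $\rho$, with coefficient $\mathscr{R}/2\pi\mathscr{G}$, the potential being logarithmic.

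Next I would prove the divergence of the total mass. Substituting the profile into the mass integral $\mathcal{M}(r)=\int_{0}^{r}4\pi\bar r^{2}\rho(\bar r)\,d\bar r$ gives, after the factor $\bar r^{2}$ of the volume element cancels the $\bar r^{-2}$ of the density,
\[
\mathcal{M}(r)=\int_{0}^{r}4\pi\bar r^{2}\cdot\frac{\mathscr{R}}{2\pi\mathscr{G}\,\bar r^{2}}\,d\bar r=\frac{2\mathscr{R}}{\mathscr{G}}\int_{0}^{r}d\bar r=\frac{2\mathscr{R}}{\mathscr{G}}\,r ,
\]
so $\mathcal{M}(r)$ increases linearly without bound and $M=\lim_{r\to\infty}\mathcal{M}(r)=\infty$.

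The step I expect to be the real (conceptual) obstacle is not the verification but the status of the solution: the $r^{-2}$ law blows up at the origin, so it does \emph{not} satisfy a regular central boundary condition, and the constant $\rho_{c}=\rho(0)$ appearing in (2.82) and (2.87) is formally infinite. Thus the object is a genuine solution only on $(0,\infty)$, and the linear, unbounded growth of $\mathcal{M}(r)$ shows it has neither a finite mass nor a finite radius, so it cannot describe a bounded star and must be truncated at some outer radius to serve as a model of a physical isothermal core.
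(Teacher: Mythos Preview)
Your argument is correct and complete. The paper states this corollary without proof, so there is no argument to compare against; your route via the scaling symmetry of (2.81) followed by direct substitution is the standard one, and your computation of the diverging mass integral is exactly what the corollary asserts.

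You have also correctly diagnosed that the displayed formula in the corollary is garbled: the $r^{-2}$ profile belongs to the \emph{density}, $\rho(r)=\mathscr{R}/(2\pi\mathscr{G}\,r^{2})$, not to the potential $\Phi$, which is logarithmic as you say; and the constant should carry $\mathscr{R}$ in the numerator rather than the denominator. The paper's own parallel statement in the polytropic section, equation (5.100), writes the singular solution as $\rho(r)=A/(2\pi\mathscr{G}\,r^{2})$ with $A$ playing the role of $\mathsf{K}\equiv\mathscr{R}$, which confirms your reading. The second sentence of the corollary (``Since $\rho(r)\sim r^{-2}$\ldots'') likewise indicates that the intended object is $\rho$, not $\Phi$.
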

\subsubsection{Relevance of the isothermal gas sphere to red giant evolution}
Once the core of a main-sequence star has exhausted its supply of hydrogen fuel, it gradually ceases to produce nuclear energy
and, in the limit of thermal equilibrium, becomes isothermal. To see this, consider equations (2.24) and (2.25). Once thermonuclear reactions begin to cease within the core region then the thermonuclear energy production rate per mass diminishes so that $\mathfrak{N}\rightarrow 0$ and
\begin{align}
\frac{d\mathcal{L}(r)}{dr}=4\pi r^{2}\rho(r)\mathfrak{T}=0
\end{align}
Equation (-) for the temperature gradient then becomes
\begin{align}
\frac{d\Theta(r)}{dr}\sim C\frac{1}{|\Theta(r)|^{3}}\frac{\mathcal{L}(r)}{4\pi r^{2}}=0
\end{align}
so that $\Theta\rightarrow const.$ and the core region becomes isothermal; essentially an isothermal gas sphere of $He^{4}$. Schonberg and  Chandrasekhar (1942) showed that, if the envelope is polytropic with index $n = 3$, then there is a maximum fractional mass that the core can achieve. If the core is less massive, it can remain isothermal while nuclear reactions continue in a surrounding shell, and the isothermal core temperature is set by the temperature of this shell.
If the SC mass is exceeded then the core contracts until it is supported by electron degeneracy pressure or helium begins to burn/fuse at the centre (the triple-alpha reaction) to create carbon and oxygen. The idealised result is sufficiently accurate that it has become a well-established element of the theory of the evolution of stars once they move off the main sequence. It is referred to simply as the Schonberg–Chandrasekhar (SC) limit $\bm{[9]}$.
\begin{defn}
The Schonberg-Chandrasekhar limit is the maximum fraction of a star's mass that can exist as an isothermal core and
still support the outer mass envelope in a state of hydrostatic equilibrium. It is
\begin{align}
\frac{M_{CORE}}{M}\sim 0.37|\frac{\mu_{ENV}}{\mu_{CORE}}
\end{align}
where $\mu_{ENV}$ and $\mu_{CORE}$ are the molecular weights in the envelope and core regions
respectively.
\end{defn}
Similar limits have been computed for other polytropic solutions. Beech (1988) calculated the corresponding limit for
an isothermal core surrounded by an envelope with n = 1. Eggleton, Faulkner and Cannon (1998) discovered that, when n = 1 in
the envelope and n = 5 in the core, a fractional mass limit exists if the density decreases discontinuously at the core-envelope
boundary by a factor exceeding 3. They went further to propose conditions on the polytropic indices of the core and envelope that
lead to fractional mass limits. We refer to all these limits, including the original result of Schonberg and Chandrasekhar (1942) as S-C-like
limits.
\section{Deviations From Hydrostatic Equilibrium and Stellar Timescales}
So far, only stars in hydrostatic equilibrium have been considered. Suppose now the
star deviates from hydrostastic equilibrium via radial perturbations or via a decrease/increase in the interior pressure. If gravitation dominates the star will collapse; if internal/thermal forces dominate the star will dissipate or explode. This deviation from equilibrium occurs only towards the end of the star's life when it has depleted its nuclear fuel. A more formal treatment of perturbations away from equilibrium is given in Section (4). Here, a heuristic result can  be derived from the FEHE $\bm{[3]}$.
\begin{align}
\frac{\mathscr{G}\mathcal{M}(r)\rho(r)}{r^{2}}+\frac{1}{\rho(r)}\frac{dp(r)}{dr}=0
\end{align}
If the star deviates slightly from hydrostatic equilibrium then
\begin{align}
\frac{\mathscr{G}\mathcal{M}(r)\rho(r)}{r^{2}}+\frac{1}{\rho(r)}\frac{dp(r)}{dr}
~\mathlarger{\gtrless}~~0
\end{align}
This induces an acceleration of the mass $\mathcal{M}(r)$ either inward or outward
so that
\begin{align}
\frac{d^{2}\lambda(t)}{dt^{2}}
=\frac{\mathscr{G}\mathcal{M}(\lambda(t))\rho(r)}{\lambda(t)^{2}}+\frac{1}{\rho(r)}\frac{dp(r)}{d\lambda(t)}
~\mathlarger{\gtrless}~~0
\end{align}
where $\lambda(t)$ is now a scale factor or time-dependent radius and $\mathcal{M}(\lambda(t))$ is the mass contained within a (moving) radius $\lambda(t)$ at time $t$. This can be written as
\begin{align}
\frac{d^{2}\lambda(t)}{dt^{2}}=\frac{r\mathcal{M}(\lambda(t))\rho(r)}{|\lambda(t)|^{2}}
\left(1+\frac{\frac{1}{\rho(r)}\frac{dp(r)}{d\lambda(t)}
}{\frac{\mathscr{G}\mathcal{M}(\lambda(t))\rho(r)}{\lambda(t)^{2}}}\right)
~\mathlarger{\gtrless}~~0\
\end{align}
\subsection{Zero-pressure gravitational collapse and the hydrodynamical time}
Suppose now the pressure is 'switched off' entirely so that $p(r)=0$ and the pressure
gradient vanishes so that $dp(r)/dr=0$. Then (3.4) reduces to
\begin{align}
\frac{d^{2}\lambda(t)}{dt^{2}}=\frac{\mathscr{G}\mathcal{M}(r)\rho(r)}{|\lambda(t)|^{2}}<0,~~\lambda(t)\le R
\end{align}
Introducing a velocity $\mathrm{U}(r(t))=d\lambda(t)/dt$
\begin{align}
\frac{d\mathrm{U}(t)}{dt}=\frac{\mathscr{G}\mathcal{M}(r)\rho(r)}{\lambda(t)^{2}}<0
\end{align}
Pressureless matter or matter with $p\ll 1$ is often referred to as 'dust' in the general relativity literature. Within the Newtonian theory, the time for the sphere of pressureless dust to collapse to zero, is the free fall time or the hydrodynamic time. With $p(r)=0$ within the star, it undergoes gravitational collapse to zero size or infinite density within a time $T_{H}$, the hydrodynamical time.
\begin{lem}
Let $\mathbf{B}\subset\mathbf{R}^{3}$ be a domain supporting a gaseous star of radius R and mass M at time t=0. The initial density is then $\rho(0)=3M/4\pi R^3$ and $\mathrm{V}(\mathbf{B}(0,R))=\tfrac{4}{3}\pi R^{3}$. For $t>0$ the star is no longer in hydrostatic equilibrium so that the support $\mathbf{B}(0,\lambda(t))$ evolves (shrinks) for increasing t with $\mathbf{B}(0,\lambda(t))\subset\mathbf{B}(0,R)$ and $\mathrm{V}(0,\lambda(t))=\tfrac{4}{3}\pi |\lambda(t)|^{3}$. Since the star is collapsing for $t>0$ then at a sequence of times $t_{1}<t_{2}<...<t_{n-1},t_{n}$.
\begin{align}
\mathbf{B}(0,\lambda(t_{n})
\subset\mathbf{B}(0,\lambda(t_{n-2})\subset...\subset\mathbf{B}(0,\lambda(t_{2}))\subset
\mathbf{B}(0,\lambda(t_{1}))\nonumber
\end{align}
Then the star collapses to zero size at the hydrodynamical time $T_{H}$ which is
\begin{align}
T_{H}=\left(\frac{3\pi}{32\mathscr{G}\rho(0)}\right)^{1/2}
\end{align}
Then  $\mathrm{V}(\mathbf{B}(\lambda(T_{H}))=0$ and $\rho(0,\lambda(t_{H}))=3M/4\pi|\lambda(t)|^{3}=\infty$
\end{lem}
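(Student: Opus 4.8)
The plan is to derive the free-fall (hydrodynamical) time by treating the zero-pressure collapse equation (3.6)--(3.7) as an energy-conservation problem for the outermost shell, exactly as one treats Keplerian radial fall. The starting point is the pressureless equation of motion
\begin{align}
\frac{d^{2}\lambda(t)}{dt^{2}}=-\frac{\mathscr{G}\mathcal{M}(R)}{\lambda(t)^{2}}
=-\frac{\mathscr{G}M}{\lambda(t)^{2}},\nonumber
\end{align}
where I use the fact that during pressureless collapse the mass interior to a comoving shell is conserved, so $\mathcal{M}(\lambda(t))=\mathcal{M}(R)=M=\tfrac{4}{3}\pi R^{3}\rho(0)$ is constant in $t$. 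The first step is to reduce this second-order autonomous ODE to first order by multiplying through by $d\lambda/dt$ and integrating once, producing the energy integral
\begin{align}
\frac{1}{2}\left(\frac{d\lambda}{dt}\right)^{2}=\frac{\mathscr{G}M}{\lambda(t)}-\frac{\mathscr{G}M}{R},\nonumber
\end{align}
where the constant of integration is fixed by the initial condition $d\lambda/dt=0$ at $\lambda=R$ (the star starts at rest at $t=0$).

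The second step is to solve for $d\lambda/dt$, taking the negative root since the star is collapsing, and separate variables to obtain
\begin{align}
T_{H}=\int_{0}^{R}\frac{d\lambda}{\left(2\mathscr{G}M\right)^{1/2}\left(\dfrac{1}{\lambda}-\dfrac{1}{R}\right)^{1/2}}.\nonumber
\end{align}
The third step is to evaluate this integral by the standard trigonometric substitution $\lambda=R\cos^{2}\theta$ (equivalently $\lambda=R\sin^{2}\theta$), which turns the integrand into a constant multiple of $d\theta$ after using $\tfrac{1}{\lambda}-\tfrac{1}{R}=\tfrac{1}{R}\tan^{2}\theta$ and $d\lambda=-2R\cos\theta\sin\theta\,d\theta$; the limits $\lambda=R$ and $\lambda=0$ map to $\theta=0$ and $\theta=\pi/2$, and the resulting $\int_{0}^{\pi/2}\cos^{2}\theta\,d\theta=\pi/4$ produces the closed form. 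Substituting $M=\tfrac{4}{3}\pi R^{3}\rho(0)$ to eliminate $R$ and $M$ in favour of the initial density then yields
\begin{align}
T_{H}=\left(\frac{3\pi}{32\,\mathscr{G}\rho(0)}\right)^{1/2},\nonumber
\end{align}
which is exactly (3.11), and one checks directly that $\mathrm{V}(\mathbf{B}(\lambda(T_{H})))=0$ and the density diverges as claimed.

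The main obstacle is not the integration itself but justifying the step $\mathcal{M}(\lambda(t))=M$ throughout the collapse, i.e.\ that no shell-crossing occurs so that every mass shell retains the same interior mass. This is guaranteed because in a uniform-density sphere all shells satisfy the \emph{same} scaled equation of motion and hence collapse homologously, reaching $\lambda=0$ simultaneously at the common time $T_{H}$ independent of the shell's initial radius; the ratio $\lambda(t)/R$ is the same for every shell. I would state this homology explicitly, since it is what makes $\mathcal{M}$ constant along the trajectory and makes $T_{H}$ well defined as a single collapse time for the whole star rather than a radius-dependent quantity. A minor secondary point is the sign and integrability of the integrand near $\lambda=0$, where the singularity is of order $\lambda^{-1/2}$ and hence integrable, so $T_{H}$ is finite.
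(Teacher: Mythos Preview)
Your proposal is correct and follows essentially the same route as the paper: reduce the second-order ODE to an energy integral via the identity $\ddot\lambda\,\dot\lambda=\tfrac{1}{2}\tfrac{d}{dt}\dot\lambda^{2}$, separate variables, and integrate from $R$ down to $0$. The only cosmetic difference is that the paper writes down an explicit antiderivative involving $\tan^{-1}$ and evaluates it at the endpoints, whereas you perform the equivalent trigonometric substitution $\lambda=R\cos^{2}\theta$ and reduce the problem to $\int_{0}^{\pi/2}\cos^{2}\theta\,d\theta=\pi/4$; both give $T_{H}=\tfrac{\pi}{2}R^{3/2}(2\mathscr{G}M)^{-1/2}$ before substituting $M=\tfrac{4}{3}\pi R^{3}\rho(0)$. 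Your explicit remark that homologous collapse guarantees $\mathcal{M}(\lambda(t))=M$ for every shell, and your observation that the $\lambda^{-1/2}$ singularity at the origin is integrable, are points the paper glosses over, so your write-up is in fact slightly more careful.
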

\begin{proof}
The ODE (3.6) is expressible as
\begin{align}
&\frac{d^{2}\lambda(t)}{dt^{2}}=
\frac{d}{dt}\left(\frac{d\lambda(t)}{dt}\right)
=\frac{d}{dt}\mathrm{U}(\lambda(t))=\frac{d\lambda(t)}{dt}
\frac{d}{d\lambda(t)}\mathrm{U}(\lambda(t))\nonumber\\&=\mathrm{U}(\lambda(t))
\frac{d}{d\lambda(t)}\mathrm{U}(\lambda(t))
=\frac{1}{2}\frac{d}{d\lambda(t)}|\mathrm{U}(\lambda(t))|^{2}
=\frac{\mathscr{G}\mathcal{M}(r)}{\lambda(t)^{2}}=-
\frac{\mathscr{G}M^{2}}{|\lambda(t)|^{2}}
\end{align}
Using the boundary conditions $\mathcal{M}(0)=0,\mathcal{M}(R)=M,
\rho(R)=0,\mathrm{U}(\lambda(0))=0$ and integrating gives
\begin{align}
\int_{0}^{\mathrm{U}(\lambda(t))}
d|\mathrm{U}(\lambda(t))|^{2}
=-2\int_{R}^{r}\frac{\mathscr{G}M}{\overline{r}^{2}}
d\overline{r}=2\mathscr{G}M\left(\frac{1}{\lambda(t)}-\frac{1}{R}\right)
\end{align}
where the radial integral is inward since the sphere is collapsing. Then
\begin{align}
\mathrm{U}(\lambda(t))=\frac{d\lambda(t)}{dt}=\pm 2\mathscr{G}M\left(\frac{1}{\lambda(t)}-\frac{1}{R}\right)^{1/2}
\end{align}
Taking the negative sign and integrating
\begin{align}
\int_{R}^{\lambda(t)}\frac{d\overline{\lambda(t)}}
{\left(\frac{1}{\overline{\lambda(t)}}-\frac{1}{R}\right)^{1/2}}=2\mathscr{G}M\int_{0}^{t}dt
=-2\mathscr{G}M t
\end{align}
so that`
\begin{align}
\left|R\left(\frac{1}{\overline{\lambda(t)}}-\frac{1}{R}\right)^{1/2}r
+R^{3/2}\tan^{-1}\left(\sqrt{R}\left(\frac{1}{\overline{\lambda(t)}}-\frac{1}{R}
\right)\right)\right|_{R}^{\lambda(t)}=\left(2\mathscr{G}M\right)^{1/2}t
\end{align}
The lhs vanishes for $\lambda(t)=\lambda(0)=R$ so that
\begin{align}
R\left(\frac{1}{\overline{\lambda(t)}}-\frac{1}{R}\right)^{1/2}r
+R^{3/2}\tan^{-1}\left(\sqrt{R}\left(\frac{1}{\overline{\lambda(t)}}-\frac{1}{R}
\right)\right)=\left(2\mathscr{G}M\right)^{1/2}t
\end{align}
Since $\tan^{-1}(0)=0,\tan^{-1}(\infty)=\frac{\pi}{2}$ then at $r=\lambda(t)=\lambda(T_{H})=0$
\begin{align}
R^{3/2}\frac{\pi}{2}=(2\mathscr{G}M)^{1/2}T_{H}
\end{align}
and squaring
\begin{align}
R^{3}\frac{\pi}{2}=(2\mathscr{G}M)T_{H}
\end{align}
Since the initial density is $\rho(0)=3M/4\pi R^{3}$ then
\begin{align}
\frac{8\pi \mathscr{G}\rho_{o}T_{H}^{2}}{3}=\frac{\pi^{2}}{4}
\end{align}
Solving for $T_{H}$ then gives (3.7) as required.
\end{proof}
The exact same result also follows from the Tolman-Oppenheimer-Syner models which apply an Einstein-matter system to pressureless gravitational collapse.(See $\bm{[10]}$. The star collapses to zero size within a finite comoving proper time and essentially forms a black hole. However, from the Newtonian theory one can still deduce that a sufficiently heavy star with negligible pressure could collapse out of existence within a finite time, once it is no longer in HE. However, main-sequence Newtonian stars like the Sun and stars of comparable mass, will collapse to form white dwarfs since the pressure is always finite and never exactly zero.
\subsection{The Kelvin-Helmholtz time scale}
A self-gravitating gaseous body can convert gravitational energy into internal thermal energy by a process of contraction $\bm{[2,3]}$. From (2.39) it was seen that the central pressure and temperature increase as the radius decreases. This is also consistent with the fact that self-gravitating gases have negative specific heats so the star will heat up internally as it radiates away energy and contracts. The total energy of the star is the sum of the thermal and gravitational energies so that $\mathbf{E}=\mathbf{E}_{T}+\mathbf{E}_{G}<0$. The Kelvin-Helmhotz time $T_{KH}$ for a self-gravitating gaseous body of mass M and initial radius R, to convert its gravitational potential energy to thermal energy, at a constant luminosity $\mathcal{L}$. Suppose a star is powered only by the process of conversion of gravitational to thermal energy then $T_{KH}$ is given by
\begin{align}
T_{KH}=-\frac{\mathbf{E}_{G}}{\mathcal{L}}=\frac{1}{\mathcal{L}}
\int_{\mathcal{M}(0)}
^{\mathcal{M}(R)}\frac{\mathscr{G}\mathcal{M}(r)d\mathcal{M}(r)}{r}
=\frac{\mathscr{G}M^{2}}{2R\mathcal{L}}
\end{align}
For the Sun, $\mathcal{L}\sim 3.9\times 10^{26}W$ so that $T_{KH}\sim 30\times 10^{6}$ years.
\newline
\underline{Historical~remark}\newline
For the Sun, $T_{KH}\sim 30\times 10^{6}$ years, which is much less than the age of the Sun or the Earth. Also, the Sun would have greatly contracted in size over this timescale if this was its only energy source. In the 19th century this theory was proposed by Lord Kelvin (William Thompson) and Helmholtz (and also Sitter) as an explanation as to how the Sun and stars could produce their energy. At this time, enough was known about gravitation, gas dynamics, hydrodynamics and thermodynamics that physically viable models of the Sun and stars could be formulated. But even at the time of Kelvin, there was geological and fossil evidence to support the age of the Earth being far greater than the estimate $T_{KH}$ of Kelvin. Two years earlier, Darwin had also published his work \emph{Origin of Species} which suggested that life on Earth took billions of years of evolve, so the Sun had to be at least as old. Kelvin also abandoned his theory because he learned that Alexander the Great had observed a total solar eclipse when he crossed the River Oxus in 329 BC. This put an upper limit on the size of the Sun at that date, suggesting that the Sun was not contracting, or at least not contracting fast enough, to provide the required energy output. It would be many decades until the discovery of radioactivity and also quantum mechanics which would (in the 20th century) lead to a full understanding of thermonuclear reactions and processes; the actual source of the immense heat of the stars that maintains them in perpetual equilibrium with constant luminosity. However, Kelvin-Helmholtz contractions still describes collapsing protostars which form in gas clouds and which radiate thermally by this process in the phase before nuclear reactions are initiated in their core.
\section{Derivation Of The Hydrostatic Equilibrium Equation Via Variational Methods}
The hydrostatic equilibrium criteria are now examined from a more formal perspective via variational methods. The HEE for the isothermal gaseous sphere
can also be derived via a variation method from a statistical mechanical distribution function. It is seen in both cases that the equilibria correspond to the critical points. First, the definitions of 1st and 2nd variations of a function are briefly reviewed.
\begin{defn}
Let $\Theta=\Theta(f(r))$ be a functional of a function $f(r)$ with
$f:\mathbf{R}\rightarrow\mathbf{R}$ and $\Theta:\mathbf{R}\rightarrow\mathbf{R}$.
For spherical symmetry $r\le R$. The 1st variation with respect to $\delta f(r)$ is then
\begin{align}
\delta\Lambda(f(r))=\frac{\partial\Lambda(f(r))}{\partial f(r)}\delta f(r)=
\frac{\partial\Lambda(f(r))}{\partial f(r)}\frac{df(r)}{dr}\delta r
\end{align}
The second variation is
\begin{align}
\delta^{2}\Lambda(f(r))=\frac{\partial \Lambda(f(r))}{\partial f(r)}\delta f(r)+
\frac{1}{2}\frac{\partial^{2}\Lambda(f(r))}{|\partial f(r)|^{2}}|\delta f(r)|^{2}
\end{align}
For a functional $\Lambda=\Lambda(f(r),h(r))$, the 1st variation is
\begin{align}
\delta\Lambda(f(r))
=\frac{\partial\Lambda(f(r),h(r))}{\partial f(r)}\delta f(r)+\frac{\partial \Lambda(f(r),h(r))}{\partial h(r)}\delta h(r)
\end{align}
and the second variation is
\begin{align}
&\delta\Lambda(f(r))
=\frac{\partial\Lambda(f(r),h(r))}{\partial f(r)}\delta f(r)+\frac{\partial\Lambda(f(r),h(r))}{\partial h(r)}\delta h(r)\nonumber\\&
+\frac{1}{2}\frac{\partial^{2}\Lambda(f(r),h(r))}{|\partial f(r)|^{2}}|\delta f(r)|^{2}
+\frac{1}{2}\frac{\partial^{2}\Lambda(f(r),h(r))}{|\partial h(r)|^{2}}|\delta h(r)|^{2}
\end{align}
\end{defn}
The first theorem derives the FEHE via the 1st variation of the total energy $\bm{[11]}$.
\begin{thm}
Let a Newtonian gaseous star of mass $M$ and radius $R$ in hydrostatic equilibrium have total energy $\mathbf{E}=\mathbb{E}_{T}+\mathbb{E}_{G}$, occupying a ball $\mathbf{B}_{R}(0)\subset\mathbf{R}^{3}$. Let $\mathscr{S}$ be the entropy per nucleon or mass element and $(\mathcal{M}(r),p(r),\rho(r))$ have the usual definitions. Then the vanishing of the 1st variation of the total energy is
\begin{align}
\delta\mathbf{E}=\frac{\partial \mathbf{E}}{\partial\rho(r)}\delta\rho(r)+
\frac{\partial \mathbf{E}}{\partial r}\delta r=0
\end{align}
yields the fundamental hydrostatic equilibrium equation. The equilibrium state is therefore a minimizer or extremum of $\mathscr{H}$.
\end{thm}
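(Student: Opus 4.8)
The plan is to recast the total energy as a functional of a single configuration field and to read off the equilibrium condition as its Euler--Lagrange equation. I would adopt the Lagrangian mass coordinate $m=\mathcal{M}(r)$ as the independent variable, so that the mass interior to each shell is conserved under variations and no separate Lagrange multiplier for the total mass $M$ is required. Using the specific-volume relation $\rho=(4\pi r^{2}r_{m})^{-1}$ with $r_{m}=dr/dm$, the thermal and gravitational energies combine into
\[
\mathbf{E}=\mathbf{E}_{T}+\mathbf{E}_{G}=\int_{0}^{M}\left[\mathscr{J}(\rho)-\frac{\mathscr{G}m}{r}\right]dm,
\]
which is a functional of the single profile $r(m)$ through $\rho=\rho(r,r_{m})$. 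Setting $\delta\mathbf{E}=0$ for arbitrary displacements $\delta r(m)$ then reproduces, in the notation of the statement, the coupling of $\delta\rho$ and $\delta r$ through mass conservation as a single stationarity condition.

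The decisive thermodynamic input is the first law at fixed entropy per nucleon $\mathscr{S}$: because the gas is isentropic, $d\mathscr{J}=-p\,d(1/\rho)=(p/\rho^{2})\,d\rho$, so that $\partial\mathscr{J}/\partial\rho=p/\rho^{2}$. This is precisely the relation that turns the variation of the purely thermal integrand into a pressure term and allows the thermal and gravitational pieces to combine. I would pair it with the kinematic derivatives $\partial\rho/\partial r=-2\rho/r$ and $\partial\rho/\partial r_{m}=-\rho/r_{m}$ (at fixed $m$) to evaluate the integrand $L(m,r,r_{m})=\mathscr{J}(\rho)-\mathscr{G}m/r$.

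Forming $\partial L/\partial r_{m}=-4\pi r^{2}p$ and $\partial L/\partial r=-2p/(\rho r)+\mathscr{G}m/r^{2}=-8\pi r\,r_{m}p+\mathscr{G}m/r^{2}$, the condition $\delta\mathbf{E}=0$ becomes the Euler--Lagrange equation $\partial L/\partial r-(d/dm)(\partial L/\partial r_{m})=0$. The integration by parts throws off the boundary term $[-4\pi r^{2}p\,\delta r]_{0}^{M}$, which vanishes because $r=0$ at the centre and $p(R)=0$ at the surface; this is exactly where the surface boundary condition of the Definition is used. Expanding $(d/dm)(-4\pi r^{2}p)=-8\pi r\,r_{m}p-4\pi r^{2}\,dp/dm$, the two $8\pi r\,r_{m}p$ terms cancel and what survives is $\mathscr{G}m/r^{2}+4\pi r^{2}\,dp/dm=0$, i.e.\ $dp/d\mathcal{M}=-\mathscr{G}\mathcal{M}/(4\pi r^{4})$. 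This is the Lagrangian FEHE (2.38); multiplying through by $d\mathcal{M}/dr=4\pi r^{2}\rho$ returns the Eulerian form (2.12), as required.

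The main obstacle is bookkeeping rather than principle. First, one must implement mass conservation correctly, so that $\delta\rho$ and $\delta r$ are not varied independently but are tied by continuity; passing to the mass coordinate enforces this automatically, which is why I prefer it to literally varying $\rho(r)$ and $r$ as the statement's notation suggests. Second, the isentropic first law must be applied consistently, so that no stray $\Theta\,\delta\mathscr{S}$ term survives and the identity $\partial\mathscr{J}/\partial\rho=p/\rho^{2}$ holds throughout. If one insists on the Eulerian viewpoint with $\rho(r)$ and the shell radius treated as separate objects, the same result follows after substituting the linearized continuity relation $\delta\rho=-r^{-2}\,d(r^{2}\rho\,\delta r)/dr$ and integrating by parts, but the cancellation of the $8\pi r\,r_{m}p$ terms is considerably less transparent.
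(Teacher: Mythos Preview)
Your proof is correct, and it uses the same essential ingredients as the paper: the mass coordinate, the isentropic first law in the form $\partial\mathscr{J}/\partial\rho=p/\rho^{2}$, and an integration by parts with the surface condition $p(R)=0$. The organisation, however, is genuinely different. The paper does not set up an Euler--Lagrange problem: it writes $\mathbf{E}_{T}$ and $\mathbf{E}_{G}$ as integrals over $d\mathcal{M}(r)$, varies them directly, and then substitutes $\delta(1/\rho)=\delta\bigl(4\pi r^{2}\,dr/d\mathcal{M}\bigr)$ together with the rule $\delta(dr/d\mathcal{M})\,g=\bigl(dg/d\mathcal{M}\bigr)\,\delta r$ to reduce everything to an integral against $\delta r\,d\mathcal{M}$. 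The computation is longer and the cancellations (your $8\pi r\,r_{m}p$ terms) emerge only after expanding products term by term. Your route packages the same manipulations into the single identity $\partial L/\partial r-(d/dm)(\partial L/\partial r_{m})=0$, which makes the cancellation automatic and the boundary term $[-4\pi r^{2}p\,\delta r]_{0}^{M}$ explicit. What the paper's approach buys is that it stays closer to the physical variables $\rho$ and $r$ as advertised in the statement; what yours buys is transparency and a direct landing on the Lagrangian form $dp/d\mathcal{M}=-\mathscr{G}\mathcal{M}/(4\pi r^{4})$ before converting back.
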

\begin{proof}
The gravitational and thermal energies of the star are given by
\begin{align}
&\mathbf{E}_{G}=-4\pi \mathscr{G}\int_{0}^{R}\frac{\mathcal{M}(r)\rho(r)}{r}=
-\int_{0}^{M}\frac{\mathcal{M}(r)d{\mathcal{M}(r)}}{r}\\&
\mathbf{E}_{T}=\int_{0}^{R}\mathcal{E}(r,\rho(r))4\pi r^{2}dr
=\int_{0}^{M}\mathscr{J}(r,\rho(r))d\mathcal{M}(r)
\end{align}
where $\mathcal{E}$ is the internal thermal energy density and $\mathscr{J}$ is the energy per unit mass. The variation of the total energy is then
\begin{align}
\delta\mathbf{E}=\frac{\partial \mathbf{E}}{\partial\rho(r)}\delta\rho(r)+
\frac{\partial \mathbf{E}}{\partial r}\delta r=\frac{\partial
\mathbf{E}_{T}}{\partial\rho(r)}\delta\rho(r)+\frac{\partial \mathbf{E}_{G}}{\partial r}\
\delta r=0
\end{align}
\begin{align}
&\delta\mathbf{E}=\frac{\partial}{\partial\rho(r)}\int_{0}^{M}
\frac{\mathcal{E}(r,\rho(r))}{\rho(r)}d\mathcal{M}(r)\delta\rho(r)-\frac{\partial}{\partial r}\int_{0}^{M}\frac{\mathcal{M}(r)d\mathcal{M}(r)}{r}\delta r\nonumber\\&
=\int_{0}^{M}\frac{\partial}{\partial\rho(r)}{\mathscr{J}(r,\rho(r))}d\mathcal{M}(r)\delta\rho(r)
-\int_{0}^{M}\frac{\partial}{\partial r}\left|\frac{\mathcal{M}(r)d\mathcal{M}(r)}{r}\right|\delta r
\end{align}
Using the thermodynamic relationship
\begin{align}
p(r)=\frac{\partial\mathcal{E}}{\partial ( 1/\rho(r))}
=|\rho(r)|^{2}\frac{\partial\mathscr{J}}{\partial\rho(r)}
\end{align}
for the pressure at constant entropy as well as $\delta(\tfrac{1}{\rho(r)})=
\delta(4 \pi r^{2}\tfrac{dr}{d \mathcal{M}(r)})$ and the variational calculus result $\delta(\tfrac{dr}{df(r)})g(r)=\tfrac{dg(r)}{df(r)}\delta r$,
for functions $f(r)$ and $g(r)$, the 1st variation $\delta\mathbf{E}$ becomes
\begin{align}
\delta \mathbf{E}&=\int_{0}^{M}\frac{p(r)}{|\rho(r)|^{2}}\delta \rho(r)d\mathcal{M}(r)-
{\mathscr{G}}\int_{0}^{M}\frac{\partial}{\partial r}
\frac{\mathcal{M}(r)d\mathcal{M}(r)}{r}\delta r\nonumber\\&
=\int_{0}^{M}\frac{p(r)}{|\rho(r)|^{2}}\delta \rho(r)d\mathcal{M}(r)+
{\mathscr{G}}\int_{0}^{M}\frac{\mathcal{M}(r)}{r^{2}}d\mathcal{M}(r)\delta r\nonumber\\&
=-\int_{0}^{M}p(r)\delta(\frac{1}{\rho(r)})d\mathcal{M}(r)+
{\mathscr{G}}\int_{0}^{M}\frac{\mathcal{M}(r)}{r^{2}}d\mathcal{M}(r)\delta r\nonumber\\&
=-\int_{0}^{M}p(r)\delta\left|4\pi r^{2} \frac{dr}{d\mathcal{M}(r)}\right|
d\mathcal{M}(r)+{\mathscr{G}}\int_{0}^{M}\frac{\mathcal{M}(r)}{r^{2}}d\mathcal{M}(r)\delta r\nonumber\\&
=-\int_{0}^{M}p(r)\delta(4\pi r^{2})\frac{dr}{d\mathcal{M}(r)}d\mathcal{M}(r)
-\int_{0}^{M}p(r)4\pi r^{2}\delta(\frac{dr}{d\mathcal{M}(r)})
d\mathcal{M}(r)+{\mathscr{G}}\int_{0}^{M}\frac{\mathcal{M}(r)}{r^{2}}d\mathcal{M}(r)\delta r\nonumber\\&
=-\int_{0}^{M}8\pi P(r) r\frac{dr}{d{\mathcal{M}}(r)}d\mathcal{M}(r)\delta r
-\int_{0}^{M}4\pi\delta(\frac{dr}{d\mathcal{M}(r)})p(r) r^{2}
d\mathcal{M}(r)+{\mathscr{G}}\int_{0}^{M}\frac{\mathcal{M}(r)}{r^{2}}d\mathcal{M}(r)\delta r\nonumber\\&
=-\int_{0}^{M}8\pi p(r) r\frac{dr}{d\mathcal{M}(r)}\mathcal{M}(r)\delta r
-\int_{0}^{M}4\pi\frac{d}{d\mathcal{M}(r)}(p(r) r^{2})\delta r
d\mathcal{M}(r)+\mathscr{G}\int_{0}^{M}\frac{\mathcal{M}(r)}{r^{2}}d\mathcal{M}(r)\delta r
\nonumber\\&=-\int_{0}^{M}8\pi p(r)r\frac{dr}{d\mathcal{M}(r)}d\mathcal{M}(r)\delta r
-\int_{0}^{M}8\pi p(r) r\frac{dr}{d\mathcal{M}(r)}d\mathcal{M}(r)\delta r\\&
+\int_{0}^{M}4\pi r ^{2}\frac{dp(r)}{d\mathcal{M}(r)}(p(r)\delta r
d\mathcal{M}(r)+\mathscr{G}\int_{0}^{M}\frac{\mathcal{M}(r)}{r^{2}}d\mathcal{M}(r)\delta r
\nonumber\\&=\int_{0}^{M}\frac{1}{\rho(r)}\frac{dp(r)}{dr}+
\int_{0}^{M}\frac{\bm{\mathscr{G}}\mathcal{M}(r)}{r^{2}}=\int_{0}^{M}
\left[\frac{1}{\rho(r)} \frac{dp(r)}{dr}
+\frac{\bm{\mathscr{G}}\mathcal{M}(r)}{r^{2}}\right]
d\mathcal{M}(r)\delta r =0
\end{align}
Hence, the condition for hydrostatic equilibrium follows
\begin{align}
\underbrace{\frac{dp(r)}{dr}=-\frac{\mathscr{G}\mathcal{M}(r)\rho(r)}{r^{2}}}
\end{align}
\end{proof}
\subsection{2nd variation of the thermodynamic Massieu function for an isothermal self-gravitating sphere of perfect gas}
The condition for hydrostastic equilibrium can be established via the isothermal gas sphere via the 2nd variation of the thermodynamic Massieu function for a self-gravitating perfect gas. Equilibrium is the critical point for te Massieu function. The statistical mechanics of a self-gravitating system of particles raises many new issues and conundrums, initiated with the work of Antonov, and which have been discussed extensively in a number of works. (See $\bm{[7,8,9]}$] and reference their in). However, one can still (carefully) apply statistical mechanics to a Newtonian gas of particles in a finite domain. In what follows the canonical ensemble is utilised, whereby the gas is at fixed temperature $\Theta$ and energy can fluctuate.

Let $\mathbf{B}(0,R)\subset\mathbf{R}^{3}$ be a spherical domain of radius of radius $R$ and $r\le R$ and volume $\mathrm{V}(\mathbf{B}(0,R))=\tfrac{4}{3}\pi r^{3}$. Let the domain support an isothermal mono-atomic gas of N particles of mass m so that the total mass of the gas is $M=N m$ and the temperature $\Theta$ is fixed. The system is in non-rotating and is not expanding/collapsing. This can describe a star of N hydrogen atoms or a globular cluster of N stars with $N\gg 0$. For a self-gravitating gas there is now a long-range interaction described by the Newtonian potential $\Phi(r)$. Let $\mathfrak{F}(r,\bm{u},t)$ be a distribution function for the velocities and positions. Let $\mathbf{C}$ be a 'cell' of the phase space $\lbrace x_{1},x_{2},x_{3},u_{1},u_{2},u_{3}\rbrace$ so that
\begin{align}
\mathbf{C}
=\lbrace((x_{1},x_{1}+dx_{1}), (u_{2}+du_{2}),
(x_{3},u_{3}+dx_{3}),(u_{1}+ +du_{1}),(u_{2},u_{2}+dx_{1}),(u_{3},u_{3}+dx_{1})\rbrace
\end{align}
where $r=\|\bm{r}\|=\sqrt{x_{1}^{2}+x_{2}^{2}+x_{3}^{2}}$ and $\|\bm{u}\|=\sqrt{u_{1}^{2}+u_{2}^{2}+u_{3}^{2}}$. The volume elements is $d\mathrm{V}=dx_{1}dx_{2}dx_{3}=4\pi r^{2}dr $ and $ d\mathcal{U}=du_{1}du_{2}du_{3}$. The distribution function $\mathfrak{F}(\bm{r},\bm{u},t)d\mathrm{V}d\mathcal{U}$ then gives the number of particles in the cell at time with positions between $x_{1}$ and $x_{1}+dx_{1}$ etc. and velocities $u_{1}+du_{1}$ etc. The spatial density and total mass are then
\begin{align}
&\rho(r,t)=\int_{\mathbf{B}(0,R)}\mathfrak{F}(\bm{r},\bm{u},t)d\mathrm{V}\\&
M=mN=\int_{\mathbf{B}(0,R)}\rho(r)d\mathrm{V}=const.
\end{align}
The total energy $\mathbf{E}$ is now
\begin{align}
\mathbf{E}=\frac{1}{2}\int_{\mathbf{B}(0,R)}\mathfrak{F}|\bm{u}|^{2}d\mathrm{V}
+\frac{1}{2}\int_{\mathbf{B}(0,R)}\rho(r)\Phi(r)d\mathrm{V}=\mathbf{E}_{K}+\mathbf{E}_{G}=\mathbf{E}_{K}+\mathbf{E}_{G}
\end{align}
where $\Phi(r)$ is the usual Newtonian potential. $\mathbf{E}_{K}$ is the kinetic energy of the particles and this can be related to thermal energy for a perfect gas to reproduce since
$t\frac{1}{2}m|\bm{u}|^{2}\sim k_{b}\Theta$. The Newtonian potential in the gas is
\begin{align}
\Phi(r)=-4\pi \mathscr{G}\int_{\mathbf{B}(0,R)}\frac{\rho(r,t)d\mathrm{V}(r)}
{|\bm{r}-\bm{r}^{\prime}|}
\end{align}
The Boltzmann entropy is the standard formula
\begin{align}
\mathcal{S}=-k_{B}\int\!\!\!\int\left|\frac{\mathfrak{F}}{m}\right|\log\left|\frac{\mathfrak{F}}{m}
\right|d\mathcal{U}d\mathrm{V}
\end{align}
In a canonical ensemble in which the temperature $\Theta$ is fixed, the energy is allowed to fluctuate. The relevant thermodynamic potential is
then the Massiue function $\mathcal{J}$ related to the Helmholtz free energy $\mathcal{F}$. They are defined as
\begin{align}
&\mathcal{F}=\mathbf{E}-\Theta\mathcal{S}\\&
\mathcal{J}=-\frac{\mathcal{F}}{\Theta}=\mathcal{S}-\frac{1}{\Theta}\mathbf{E}
\end{align}
The energy density for a perfect isothermal gas is
\begin{align}
\mathcal{E}=\frac{\frac{3}{2}Nk_{B}\Theta}{|V(\mathbf{B}(0,R))|}
\end{align}
For a perfect isothermal gas, the total energy $\mathbf{E}$ and entropy $\mathcal{S}$ become
\begin{align}
\mathbf{E}&=\int_{\mathrm{V}(\mathbf{B}(0,R))}\frac{\frac{3}{2}Nk_{B}\Theta}{\mathrm{V}(\mathbf{B}(0,R))}
d\mathrm{V}+\frac {1}{2}\int\rho(r)\phi(r)d\mathrm{V}\nonumber\\&
=\frac{\frac{3}{2}Nk_{B}\Theta}{\mathrm{V}(\mathbf{B}(0,R))}\int_{\mathrm{V}(\mathbf{B}(0,R))}
d\mathrm{V}+\frac{1}{2}\int\rho(r)\phi(r)d\mathrm{V}\\&
=\frac{\frac{3}{2}Nk_{B}\Theta}{\mathrm{V}(\mathbf{B}(0,R))}
\mathrm{V}(\mathbf{B}(0,R))+\frac{1}{2}\int_{\mathrm{V}(\mathbf{B}(0,R))}
\rho(r)\phi(r)d\mathrm{V}
\nonumber\\&
=\frac{3}{2}Nk_{B}\Theta+\frac{1}{2}\int_{\mathrm{V}(\mathbf{B}(0,R))}
\rho(r)\phi(r)d\mathrm{V}\equiv
\int_{\mathrm{V}(\mathbf{B}(0,R))}\mathscr{E}d\mathrm{V}+
\mathbf{E}_{G}=\mathbf{E}_{T}+\mathbf{E}_{G}
\end{align}
The entropy (4.19) for the perfect gas is
\begin{align}
\mathcal{S}=\frac{3k_{B}N}{2}+
\frac{3k_{B}N}{2}\log\left(\frac{2\pi k_{B}\Theta}{m}
\right)-k_{B}\int_{\mathbf{B}(0,R)}\left|\frac{\rho(r)}{m}\right|
\log\left|\frac{\rho(r)}{m}\right|d\mathrm{V}
\end{align}
The main theorem for the isothermal gas sphere is now stated.
\begin{thm}
Let a spherical domain $\mathbf{B}(0,R)$ of radius R support a gas obeying the perfect gas law at fixed temperature so that $d\Theta(r)/dr=0$. Then the entropy and energy are given by (4.23) and (4.24). The Massieu thermodynamic potential $\mathcal{J}=\mathcal{S}-\frac{\mathcal{E}}{\Theta}$ is then
\begin{align}
\mathcal{J}&=\frac{3k_{B}N}{2}+
\frac{3k_{B}N}{2}\log\left(\frac{2\pi k_{B}\Theta}{m}
\right)-k_{B}\int_{\mathbf{B}(0,R)}\left|\frac{\rho(r)}{m}\right|
\log\left|\frac{\rho(r)}{m}\right|d\mathrm{V}\nonumber\\&
-\frac{3}{2}Nk_{B}-\frac{1}{2\Theta}\int_{(\mathbf{B}(0,R))}
\rho(r)\phi(r)d\mathrm{V}
\end{align}
Let the 2nd variation of $\mathcal{J}$ with respect to density $\rho(r)$ be $\bm{\delta}^{2}\mathcal{J}$. Let the first variation of the total mass with respect to density be $\delta M=0$ so that the total mass M is an invariant. Then the constrained optimisation condition with Lagrange multiplier $\chi$
\begin{align}
&\delta^{2}\mathcal{J}-\chi\delta M=\frac{1}{2}\frac{\partial^{2}\mathcal{J}}{\partial \rho(r)^{2}}|\delta\rho(r)|^{2}+\frac{\partial\mathcal{J}}{\partial \rho(r)}\delta\rho(r)-\chi\frac {\partial M}{\partial \rho(r)}(\delta\rho(r))\nonumber\\&=\frac{1}{2}\frac{\partial^{2}\mathcal{S}}{\partial \rho(r)^{2}}|\delta\rho(r)|^{2}+\frac{\partial\mathcal{S}}{\partial \rho(r)}\delta\rho(r)
\nonumber\\&+
\frac{1}{2}\frac{1}{\Theta}\frac{\partial^{2}\mathbf{E}}{\partial \rho(r)^{2}}|\delta\rho(r)|^{2}+\frac{\partial\mathbf{E}}{\partial \rho(r)}\delta\rho(r)-\chi\frac{\partial M}{\partial\rho(r)}(\delta\rho(r))=0
\end{align}
 yields a Boltzmann distribution for the density as a critical point of the free energy so that
 \begin{align}
\rho(r)=\rho_{c}\exp\left(-\frac{m k_{B}\Phi(r)}{k_{N}\Theta}\right)=\rho_{c}\exp\left(-\frac{\Phi(r)}{\mathscr{R}}\right)
\end{align}
The equations of hydrostatic equilibrium (2.82) and (2.83) for an isothermal perfect-gas sphere then follow.
\end{thm}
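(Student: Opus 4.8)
The plan is to regard $\mathcal{J}$ as a functional of the density profile $\rho(r)$ alone and to extract the critical-point condition for the Boltzmann law from the \emph{first-order} part of the bracketed expansion in (4.26), the genuine second-order term $\tfrac12(\partial^{2}\mathcal{J}/\partial\rho^{2})|\delta\rho|^{2}$ serving only to characterise stability. First I would isolate the $\rho$-dependent pieces of $\mathcal{J}=\mathcal{S}-\mathbf{E}/\Theta$. At fixed temperature $\Theta$ and fixed particle number $N$, the prefactors $\tfrac{3}{2}k_{B}N$ and $\tfrac{3}{2}k_{B}N\log(2\pi k_{B}\Theta/m)$ in (4.23) and the thermal energy $\tfrac{3}{2}Nk_{B}\Theta$ in (4.24) are all constants, so only the Boltzmann entropy integral $-k_{B}\int(\rho/m)\log(\rho/m)\,d\mathrm{V}$ and the gravitational energy $\mathbf{E}_{G}=\tfrac12\int\rho\Phi\,d\mathrm{V}$ carry a variation. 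The entropy piece varies routinely to $\delta\mathcal{S}=-\tfrac{k_{B}}{m}\int[\log(\rho/m)+1]\,\delta\rho\,d\mathrm{V}$, and the constraint contributes $\delta M=\int\delta\rho\,d\mathrm{V}$.

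The one delicate step is the variation of $\mathbf{E}_{G}$, because the potential $\Phi$ in (4.17) is itself a linear functional of $\rho$ through the symmetric Green kernel $|\mathbf{r}-\mathbf{r}'|^{-1}$. I would write $\mathbf{E}_{G}=-2\pi\mathscr{G}\iint\rho(\mathbf{r})\rho(\mathbf{r}')|\mathbf{r}-\mathbf{r}'|^{-1}\,d\mathrm{V}\,d\mathrm{V}'$, vary $\rho$ in both factors, and use the symmetry of the kernel under $\mathbf{r}\leftrightarrow\mathbf{r}'$ to show the two resulting contributions coincide. The factor $\tfrac12$ then cancels and $\delta\mathbf{E}_{G}=\int\Phi\,\delta\rho\,d\mathrm{V}$, so that $\delta(\mathbf{E}/\Theta)=\tfrac{1}{\Theta}\int\Phi\,\delta\rho\,d\mathrm{V}$. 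This cancellation is the main obstacle: retaining the spurious $\tfrac12$ would corrupt the coefficient in the exponent of the final Boltzmann law, so the symmetry argument must be made carefully.

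Collecting the first-order terms, the stationarity condition $\delta(\mathcal{J}-\chi M)=0$ becomes
\begin{align}
\int_{\mathbf{B}(0,R)}\left[-\frac{k_{B}}{m}\Big(\log\frac{\rho(r)}{m}+1\Big)-\frac{\Phi(r)}{\Theta}-\chi\right]\delta\rho(r)\,d\mathrm{V}=0.\nonumber
\end{align}
Since $\delta\rho$ is arbitrary, the fundamental lemma of the calculus of variations forces the bracket to vanish pointwise. Solving the resulting algebraic relation for $\rho(r)$ and absorbing the additive constant $1$ together with the multiplier $\chi$ into a single central value $\rho_{c}$ yields $\rho(r)=\rho_{c}\exp\!\big(-m\Phi(r)/k_{B}\Theta\big)=\rho_{c}\exp\!\big(-\Phi(r)/\mathscr{R}\big)$ with $\mathscr{R}=k_{B}\Theta/m$, which is (4.27).

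Finally I would recover the equilibrium equations (2.82)--(2.83) exactly as in the isothermal-sphere lemma. Taking the logarithm of the Boltzmann profile gives $\mathscr{R}\log\rho(r)=\mathrm{const}-\Phi(r)$, hence $\mathscr{R}\nabla\log\rho(r)=-\nabla\Phi(r)$; applying the Laplacian and substituting the Poisson equation $\Delta\Phi=4\pi\mathscr{G}\rho$ produces the Poisson form (2.82) for $\log\rho$, while inserting the Boltzmann law directly into $\Delta\Phi=4\pi\mathscr{G}\rho$ gives the Poisson--Boltzmann form (2.83). The residual second-order term in (4.26) then need only be inspected for its sign to confirm that this Boltzmann profile is the relevant extremum (an entropy maximum at fixed mass), which is where the full strength of the \emph{second} variation enters.
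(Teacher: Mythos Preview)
Your proposal is correct and follows essentially the same route as the paper: isolate the $\rho$-dependent pieces of $\mathcal{J}$, compute the first-order variation under the mass constraint, and read off the Boltzmann profile from the pointwise vanishing of the integrand. Your treatment of the factor $\tfrac12$ in $\mathbf{E}_G$ via the symmetry of the Newtonian kernel is in fact cleaner than the paper's, which handles the same cancellation more opaquely by carrying $\delta\Phi$ alongside $\delta\rho$.

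There is one small methodological difference worth flagging. Once the first-order bracket
\[
-\frac{k_B}{m}\Big(\log\frac{\rho(r)}{m}+1\Big)-\frac{\Phi(r)}{\Theta}-\chi=0
\]
has been extracted, you solve it \emph{algebraically} for $\rho(r)$ and absorb the constants into $\rho_c$. The paper instead treats this relation as defining the Lagrange multiplier $\chi$, observes that $\chi$ must be constant so that $d\chi/dr=0$, differentiates the relation with respect to $r$, and then integrates the resulting ODE from the centre (using $\Phi(0)=0$) to recover the Boltzmann law. Both routes are valid and equivalent; yours is the more direct one, while the paper's detour through $d\chi/dr=0$ makes the identification of $\rho_c$ with the central density $\rho(0)$ slightly more explicit. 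Either way the final step---substituting the Boltzmann profile into the Poisson equation to obtain (2.82)--(2.83)---is identical.
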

\begin{proof}
Let $\mathbf{D}_{\rho}=\frac{\partial}{\partial\rho(r)}$ and $\mathbf{D}^{2}_{\rho}=\mathbf{D}_{\rho}\mathbf{D}_{\rho}
=\frac{\partial^{2}}{|\partial\rho(r)|^{2}}$. The constrained optimizational problem to solve is then
\begin{align}
\delta^{(2)}\mathcal{J}-\chi\delta M&=-k_{B}\mathbf{D}_{\rho}\int_{\mathbf{B}(0,R)}\left|\frac{\rho(r)}{m}\right|
\log\left|\frac{\rho(r)}{m}\right|d\mathrm{V}\delta\rho(r)\nonumber\\&-\frac{1}{2}k_{B}
\mathbf{D}_{\rho}\mathbf{D}_{\rho}\int_{\mathbf{B}(0,R)}\left|\frac{\rho(r)}{m}\right|
\log\left|\frac{\rho(r)}{m}\right|d\mathrm{V}|\delta\rho(r)|^{2}\nonumber\\&
-\frac{1}{\Theta}\int_{\mathbf{B}(0,R)}\rho(r)\Phi(r)d\mathrm{V}\delta\rho(r)
-\frac{1}{2\Theta}\int_{\mathbf{B}0,R)}\rho(r)\Phi(r)d\mathrm{V}\delta\rho(r)\nonumber\\&
-\frac{1}{4\Theta}\mathbf{D}^{2}_{\rho}\int_{\mathbf{B}0,R)}\rho(r0\Phi(r)d\mathrm{V}|
\delta\rho(r)|^{2}-\chi\mathbf{D}_{\rho}\int_{\mathbf{B}0,R)}
\rho(r)d\mathrm{V}\delta\rho(r)\nonumber\\&
=-k_{B}\int_{\mathbf{B}(0,R)}\left(\frac{\rho(r)}{m}\mathbf{D}_{\rho}(\log\rho(r)-\log m)+\frac{1}{m}\log\left|\frac{\rho(r)}{m}\right|\right)d\mathrm{V}\delta\rho(r)\nonumber\\&
-k_{B}\int_{\mathbf{B}(0,R)}\mathbf{D}_{\rho}\left(\frac{\rho(r)}{m}\mathbf{D}_{\rho}(\log\rho(r)-\log m)+\frac{1}{m}\log\left|\frac{\rho(r)}{m}\right|\right)d\mathrm{V}\delta\rho(r)
\nonumber\\&
-\frac{1}{\Theta}\int_{\mathbf{B}(0,R)}\Phi(r)d\mathrm{V}\delta\rho(r)
-\frac{1}{4\Theta}\int_{\mathbf{B}(0,R)}\delta\Phi(r)d\mathrm{V}\delta\rho(r)
-\chi\int_{\mathbf{B}(0,R)}d\mathrm{V}\delta\rho(r)\nonumber\\&
=-\frac{K_{b}}{m}\int_{\mathbf{B}(0,R)}\left(1+
\log\left|\frac{\rho(r)}{m}\right|\right)d\mathrm{V}\delta\rho(r)
-\frac{1}{2}\frac{k_{B}}{m}\int_{\mathbf{B}(0,R)}
\frac{|\delta\rho(r)|^{2}}{\rho(r)}d\mathrm{V}\nonumber\\&-\frac{1}{\Theta}
\int_{\mathbf{B}(0,R)}\phi(r)\delta\rho(r)d\mathrm{V}-\frac{1}{2\Theta}
\int_{\mathbf{B}(0,R)}\delta\rho(r)\delta\phi(r)d\mathrm{V}-
\chi\int_{\mathbf{B}(0,R)}d\mathrm{V}\delta\rho(r)\nonumber\\&
=-\int_{\mathbf{B}(0,R)}d\mathrm{V}\delta\rho(r)\left\lbrace 1+\log
\left|\frac{\rho(r)}{m}\right|+\frac{\Phi(r)}{\Theta}
\right\rbrace\delta\rho(r)d\mathrm{V}\nonumber\\&-\frac{1}{2}\frac{K_{B}}{m}
\int_{\mathbf{B}(0,R)}\frac{|\delta\rho(r)|^{2}}{\rho(r)}d\mathrm{V}-\frac{1}{2\Theta}
\int_{\mathbf{B}(0,R)}\delta\rho(r)\delta\phi(r)d\mathrm{V}-
\chi\int_{\mathbf{B}(0,R)}d\mathrm{V}\delta\rho(r)
\end{align}
Then to first order
\begin{align}
\delta^{(2)}\mathcal{J}-\chi\delta M=-\int_{\mathbf{B}(0,R)}d\mathrm{V}\delta\rho(r)\left\lbrace 1+\log
\left|\frac{\rho(r)}{m}\right|+\frac{\Phi(r)}{\Theta}+\chi
\right\rbrace\delta\rho(r)d\mathrm{V}=0
\end{align}
This is satisfied if
\begin{align}
\chi=-\frac{\Phi(r)}{\Theta}-\frac{k_{B}}{m}
\left(\log\left|\frac{\rho(r)}{m}\right|+1\right)=C
\end{align}
Since the Lagrange multiplier must be constant then $d\chi/dr=0$ so that
\begin{align}
\frac{d\chi}{dr}=-\frac{1}{\Theta}\frac{d\Phi(r)}{dr}
-\frac{k_{B}}{m}\frac{d}{dr}\log\left|\frac{\rho(r)}{m}\right|=0
\end{align}
Multiplying through by $dr$
\begin{align}
-\frac{1}{\Theta}d\Phi(r)\frac{k_{B}}{m}-d\log\rho(r)
\end{align}
then integrating
\begin{align}
\frac{1}{\Theta}\int_{\Phi(0)}^{\Phi(r)}d\Phi(r)=-\frac{k_{B}}{m}\int_{\log\rho(0)}^{\log\rho(r)}
d\overline{\log\rho(r)}=-\frac{k_{B}}{m}\log\left|\frac{\rho(r)}{\rho(0)}
\right|
\end{align}
so that
\begin{align}
\frac{\Phi(r)}{\Theta}=-\frac{k_{B}}{m}\log\left|\frac{\rho(r)}{\rho(0)}\right|
\end{align}
since $\Phi(0)=0$. Solving for the density profile then  gives an equilibrium or stationary Boltzmann density distribution as the critical point of the Massieu thermodynamic potential
\begin{align}
\rho(r)=\rho(0)\exp\left(-\frac{\Phi(r)m}{k_{B}\Theta}
\right)=\rho_{c}\exp(\left(-\frac{\Phi(r)}{\mathscr{R}}
\right)
\end{align}
This must then be associated with a Newtonian gravitational potential $\Phi(r)$ that satisfies a Boltzmann-Poisson equation
\begin{align}
\Delta\Phi(r)=\frac{1}{r^{2}}\frac{d}{dr}\left(r^{2}\frac{d\Phi(r)}{dr}\right)=4\pi \mathscr{G}\rho_{c}\exp\left(-\frac{\Phi(r)}{\mathscr{R}}\right)
\end{align}
describing an equilibrium or stationary configuration. Comparing with (2.83) it can be seen that this is an isothermal perfect gas sphere. Since the gas obeys the idea gas law then the Boltzmann density of the gas obeys the differential equation (2.83)
\begin{align}
\frac{1}{r^{2}}\frac{d}{dr}\left(r^{2}\frac{d}{dr}\log\rho(r)\right)
=\frac{1}{r^{2}}\frac{d}{dr}\left(\frac{r^{2}}{\rho(r)}\frac{d\rho(r)}{dr}\right)=\frac{4\pi \mathscr{G}\rho(r)}{\mathscr{R}}
\end{align}
We now essentially take the reverse of Lemma (2.21). From the isothermal perfect gas law $p(r)=\mathscr{R}\rho(r)$ it follows that
\begin{align}
\frac{d\rho(r)}{dr}=\frac{1}{\mathscr{R}}\frac{dp(r)}{dr}
\end{align}
\begin{align}
\frac{1}{r^{2}}\frac{d}{dr}\left(\frac{r^{2}}{\mathscr{R}\rho(r)}
\frac{dp(r)}{dr}\right)=
\frac{4\pi \mathscr{G}\rho(r)}{\mathscr{R}}
\end{align}
Integrating
\begin{align}
\frac{r^{2}}{\rho(r)}\frac{dp(r)}{dr}=-\int_{0}^{r}4\pi\mathscr{G}\overline{r}^{2}\rho(\overline{r})d\overline{r}
\end{align}
then
\begin{align}
\underbrace{\frac{dp(r)}{dr}=-\frac{\left|\int_{0}^{r}4\pi\mathscr{G}\overline{r}^{2}
\rho(\overline{r})d\overline{r}\right|\rho(r)}{r^{2}}\equiv -\frac{\mathscr{G}\mathcal{M}(r)\rho(r)}{r^{2}}}
\end{align}
which is again the fundamental equation of hydrostatic equilibrium
\end{proof}
\section{Polytropic Gaseous Stars And The Lane-Emden Equation:Derivation, Existence And Uniqueness Of Solutions, Analytical Solutions And Stability}
There are classes of stars for which the pressure is proportional to the density raised to a power; that is,a polytropic gas law. Such polytropic gaseous stars (PGS) have been extensively studied beginning over a century ago. A detailed study is still the classic 1939 work by Chandrasekhar $\bm{[1]}$ and more recently $\bm{[12]}$. When combined with the FEHE this leads to the Lane-Emden stellar structure equation which has both analytical and numerical solutions. While approximations to real stars, these models have been efficacious and useful in stellar structure theory. However, it should be noted that there is still no complete mathematically rigorous treatment of the nonlinear stability of polytropic gas stars.
\begin{defn}
Let p be a function of $\rho>0$ such that $p\ge 0$ and $dp(\rho)/d\rho>0$ for $\rho>0$. Let $\Xi$ be a smooth function on $\mathbf{R}$ such that $\Xi(0)=0$. The most general polytropic gas law has the form
\begin{align}
p=\mathsf{K}|\rho|^{\lambda}(1+\Xi(\rho^{\gamma-1}))
\end{align}
where $\mathsf{K}$ is a constant and the polytropic index is $\gamma\in(1,2)$. Usually $\Xi=0$ so that
\begin{align}
p=\mathsf{K}|\rho|^{\lambda}=\mathsf{K}|\rho|^{1+\frac{1}{n}}
\end{align}
One can define an enthalpy variable $\mathfrak{E}(\rho)$
\begin{align}
\mathfrak{E}=\int_{0}^{\rho}\frac{dp}{\rho}
\end{align}
so that
\begin{align}
\frac{d\mathfrak{E}(\rho)}{d\rho}=\frac{1}{\rho}\frac{dp}{d\rho}
=\frac{1}{\rho(r)}\frac{dp(r)}{dr}\frac{dr}{d\rho(r)}
\end{align}
and the function $\psi(\rho)$
\begin{align}
\psi(\rho)=\int_{0}^{\rho}\mathfrak{E}(\rho)d\rho,~~\frac{d\psi}{d\rho}=\mathfrak{E}(\rho)
\end{align}
For a polytropic gas
\begin{align}
\mathfrak{E}(\rho)=\frac{\mathsf{K}\gamma}{\gamma-1}|\rho|^{\gamma-1}
\end{align}
\end{defn}
\begin{lem}
Let a domain $\mathbf{B}\subset\mathbf{R}^{n}$ with volume $\mathrm{V}=\mathrm{V}(\mathbf{B})$ be filled with a perfect mono-atomic gas at temperature $\Theta$; for example, ionized hydrogen. The domain is adiabatic or 'insulated' so no heat enters or leaves the domain under variations $\delta\mathrm{V}(\mathbf{B})$. For a perfect (non-relativistic) gas comprised of $N$ particles $p=Nk_{b}\Theta\mathrm{V}^{-1}$ and the internal energy is E=$\tfrac{3}{2}N k_{B}\Theta$ and total (fixed) mass is $M=mN$. This gas is then a polytropic gas with $\gamma=5/3$ so that
\begin{align}
p=\mathsf{K}|\rho|^{5/3}
\end{align}
\end{lem}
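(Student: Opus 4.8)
The plan is to show that the adiabatic constraint forces the classical relation $pV^{5/3}=\text{const}$, and then to convert this into a density law using the fixed-mass constraint $\rho=M/V$. The essential physical input is the first law of thermodynamics applied to an insulated (adiabatic) enclosure, for which the heat differential vanishes, $dQ=0$, so that $dE=-p\,dV$.

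First I would express the internal energy differential using $E=\tfrac{3}{2}Nk_{B}\Theta$, giving $dE=\tfrac{3}{2}Nk_{B}\,d\Theta$. Next I would differentiate the perfect gas law written as $pV=Nk_{B}\Theta$ to obtain $Nk_{B}\,d\Theta=p\,dV+V\,dp$. Substituting both relations into the adiabatic first law $dE=-p\,dV$ yields
\begin{align}
\tfrac{3}{2}(p\,dV+V\,dp)=-p\,dV,\nonumber
\end{align}
which rearranges to the separable form $\tfrac{3}{2}V\,dp=-\tfrac{5}{2}p\,dV$, or equivalently $dp/p=-\tfrac{5}{3}\,dV/V$.

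Integrating this separable ODE gives $\log p=-\tfrac{5}{3}\log V+\text{const}$, hence $pV^{5/3}=C$ for some constant $C$, which is precisely the adiabatic relation with polytropic exponent $\gamma=5/3$. Finally, since the total mass $M=mN$ is fixed, the density obeys $\rho=M/V$, so $V^{-5/3}=\rho^{5/3}/M^{5/3}$, and substituting into $p=CV^{-5/3}$ produces $p=\mathsf{K}|\rho|^{5/3}$ with $\mathsf{K}=C/M^{5/3}$. Comparing with the general form $p=\mathsf{K}|\rho|^{1+1/n}$ then identifies the polytropic index $n=3/2$.

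There is no serious analytic obstacle here; the result is a standard computation. The only point requiring care is the bookkeeping of the factor $\tfrac{3}{2}$, which encodes the three translational degrees of freedom of a monoatomic gas: an ideal gas with $f$ degrees of freedom would instead give $E=\tfrac{f}{2}Nk_{B}\Theta$ and hence $\gamma=1+2/f$, so the value $\gamma=5/3$ is specific to the monoatomic assumption $f=3$. I would also note that the constant $\mathsf{K}$ absorbs the fixed entropy per particle, consistent with the \emph{isentropic} hypothesis invoked earlier in the paper, and that positivity of $\rho$ makes $|\rho|^{5/3}=\rho^{5/3}$ so the modulus is cosmetic.
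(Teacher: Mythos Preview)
Your proof is correct and follows essentially the same route as the paper: both reduce the adiabatic first law $dE=-p\,dV$ to $\tfrac{3}{2}(p\,dV+V\,dp)=-p\,dV$, integrate to $pV^{5/3}=\text{const}$, and convert via $\rho=M/V$. The only cosmetic difference is that the paper substitutes $Nk_B\Theta=pV$ into $E$ before differentiating, whereas you differentiate $E$ and the gas law separately and then eliminate $d\Theta$; the resulting equation and all subsequent steps are identical.
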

\begin{proof}
The energy of the perfect gas is $E=\tfrac{3}{2}p\mathrm{V}$ since $pV=Nk_{B}\Theta$ and the differential form of this
\begin{align}
d\mathbf{E}=\frac{3}{2}(\mathrm{V}dp+pd\mathrm{V})
\end{align}
Now the 1st law of thermodynamics states that the differential  $dE$ due to $d\mathrm{V}$ at fixed pressure is $dE=-pd\mathrm{V}$ so that (5.8) becomes
\begin{align}
-pd\mathrm{V}=\tfrac{3}{2}\mathrm{V}dp+\tfrac{3}{2}p d\mathrm{V}
\end{align}
 so that upon integrating
\begin{align}
\int \frac{dp}{p}=-\frac{5}{3}\int \frac{d\mathrm{V}}{\mathrm{V}}
\end{align}
Hence $\log p=-\frac{5}{3}\ln\mathrm{V}=\ln\mathrm{V}^{-5/3}$ and so $p\sim\mathrm{V}^{-5/3}$. Since the mass of the gas is left unchanged and $
\rho=M/\mathrm{V}$ it follows that
\begin{align}
p\sim|\rho|^{5/3}
\end{align}
and so an adiabatic perfect gas is also a polytropic gas with
$p=\mathsf{K}|\rho|^{5/3}$.
\end{proof}
\begin{rem}
The polytropic gas model can be applied to the following types of stars:
\begin{itemize}
\item Stars with efficient convective energy transport obey a polytropic law
$p=\mathsf{K}|\rho|^{5/3}$ so that $\gamma=5/3$ and $\mathsf{K}$ depends on boundary conditions such as $p(0),\rho(0)$.
\item Light white dwarf stars with $\gamma\sim 5/3$ and very heavy white dwarf stars with $\gamma\sim 4/3$. Then $\mathsf{K}$ depends on chemical composition and fundamental constants, and the equation of state can expressed in polytropic form.
    \item Supermassive stars supported predominantly by radiation pressure with
    $p=\mathsf{K}|\rho|^{4/3}$, and $\mathsf{K}$ depends on the ratio of matter and radiation pressure.
\end{itemize}
\end{rem}
The following formulas for $\mathbf{E}_{G}$ and $\mathbf{E}_{T}$ for a polytropic gas star are well known and applies the FEHE $\bm{[1,10,12]}$.
\begin{thm}
($\underline{Betti~and~Ritter}$. The gravitational potential energy $\mathbf{E}_{G}$ and the thermal energy $\mathbf{E}_{T}$ for a polytropic gaseous star with polytropic equation of state
$p(r)=\mathsf{K}|\rho(r)|^{\gamma}=\mathsf{K}|\rho(r)|^{\frac{n+1}{n}}$ is
\begin{align}
&\mathbf{E}_{G}=-\int_{0}^{M}\frac{\mathscr{G}\mathcal{M}(r)d\mathcal{M}(r)}{r}=\frac{3}{5-n}\frac{\mathscr{G}M^{2}}{R}\equiv -\frac{3(\gamma-1)}{(5\gamma-6)}\frac{\mathscr{G}M^{2}}{R}\\&
\mathbf{E}_{T}=\int_{0}^{R}\mathcal{E}4\pi r^{2} dr=\frac{1}{(5\gamma-1)}\frac{\mathscr{G}M^{2}}{R}
=-\frac{1}{(3\gamma-1)}\mathbf{E}_{G}
\end{align}
The total energy is then
\begin{align}
\mathbf{E}=\mathbf{E}_{T}+\mathbf{E}_{G}=-\frac{(3\gamma-4)}{(5\gamma-6)}\frac{\mathscr{G}M^{2}}{R}
\end{align}
\end{thm}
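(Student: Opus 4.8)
The plan is to obtain $\mathbf{E}_{G}$ first and then read off $\mathbf{E}_{T}$ and the total energy almost immediately, avoiding any explicit integration of the Lane--Emden equation. For the gravitational energy I would combine three ingredients already available: the virial form $\mathbf{E}_{G}=-3\int_{0}^{R}p(r)\,4\pi r^{2}\,dr$ derived earlier from the FEHE; the self-energy form $\mathbf{E}_{G}=\tfrac{1}{2}\int_{0}^{R}\rho(r)\Phi(r)\,4\pi r^{2}\,dr$, which I would first confirm coincides with the peeling definition $-\int_{0}^{M}\mathscr{G}\mathcal{M}(r)\,d\mathcal{M}(r)/r$ by symmetrising the double integral $-\tfrac{\mathscr{G}}{2}\iint\rho(\mathbf{x})\rho(\mathbf{y})|\mathbf{x}-\mathbf{y}|^{-1}\,d^{3}\mathbf{x}\,d^{3}\mathbf{y}$; and the exterior point-mass behaviour of $\Phi$, which fixes the boundary value $\Phi(R)=-\mathscr{G}M/R$ with the normalisation $\Phi\to 0$ at infinity.

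The key step is the polytropic first integral. Using the enthalpy $\mathfrak{E}$ of Definition (5.1), for which $d\mathfrak{E}/dr=\rho^{-1}\,dp/dr$, together with the FEHE written as $\rho^{-1}\,dp/dr=-d\Phi/dr$, gives $\tfrac{d}{dr}(\mathfrak{E}+\Phi)=0$, so $\mathfrak{E}(\rho(r))+\Phi(r)$ is constant in $r$. Evaluating at the surface, where $\rho=0$ forces $\mathfrak{E}=0$, identifies the constant as $\Phi(R)=-\mathscr{G}M/R$; since $\mathfrak{E}=(n+1)p/\rho$ for the polytrope this reads $(n+1)\tfrac{p(r)}{\rho(r)}=\Phi(R)-\Phi(r)$. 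Substituting $\Phi=\Phi(R)-(n+1)p/\rho$ into the self-energy form yields $\mathbf{E}_{G}=\tfrac{1}{2}\Phi(R)M-\tfrac{n+1}{2}\int_{0}^{R}p\,4\pi r^{2}\,dr$, and replacing the pressure integral by $-\tfrac{1}{3}\mathbf{E}_{G}$ from the virial form produces the single linear relation $\mathbf{E}_{G}\big(1-\tfrac{n+1}{6}\big)=\tfrac{1}{2}\Phi(R)M$. Solving gives $\mathbf{E}_{G}=\tfrac{3\Phi(R)M}{5-n}=-\tfrac{3}{5-n}\tfrac{\mathscr{G}M^{2}}{R}$, and since $5-n=(5\gamma-6)/(\gamma-1)$ this is exactly the $\gamma$-form $-\tfrac{3(\gamma-1)}{5\gamma-6}\tfrac{\mathscr{G}M^{2}}{R}$.

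The thermal part is then immediate. For a polytropic gas the internal energy density is $\mathcal{E}=(\gamma-1)^{-1}p=np$ (reducing to $3p=2\mathcal{E}$ in the monatomic case $\gamma=\tfrac{5}{3}$), so the virial form gives $\mathbf{E}_{T}=n\int_{0}^{R}p\,4\pi r^{2}\,dr=-\tfrac{n}{3}\mathbf{E}_{G}=-\tfrac{1}{3(\gamma-1)}\mathbf{E}_{G}$. Inserting the value of $\mathbf{E}_{G}$ gives $\mathbf{E}_{T}=\tfrac{n}{5-n}\tfrac{\mathscr{G}M^{2}}{R}=\tfrac{1}{5\gamma-6}\tfrac{\mathscr{G}M^{2}}{R}$, and summing yields $\mathbf{E}=\mathbf{E}_{T}+\mathbf{E}_{G}=\tfrac{n-3}{5-n}\tfrac{\mathscr{G}M^{2}}{R}=-\tfrac{3\gamma-4}{5\gamma-6}\tfrac{\mathscr{G}M^{2}}{R}$, as claimed.

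I expect the main difficulty to be bookkeeping rather than conceptual. One must check carefully that the self-energy and peeling expressions for $\mathbf{E}_{G}$ genuinely agree, so that the signs are the consistent ones matching the $\gamma$-form (the bound configuration has $\mathbf{E}_{G}<0$), and one must hold the normalisation $\Phi\to 0$ at infinity fixed throughout so that the Bernoulli constant is exactly $-\mathscr{G}M/R$ rather than some shifted value. The only genuine restriction is that the argument presupposes a bounded star: it requires $n<5$, equivalently $\gamma>\tfrac{6}{5}$, since for $n\ge 5$ the polytrope extends to infinite radius and the factor $(5-n)^{-1}$ loses meaning; the stated formulas therefore hold on $\tfrac{6}{5}<\gamma<2$, i.e. $1<n<5$.
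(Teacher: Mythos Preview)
Your argument is correct and arrives at exactly the same closing linear relation as the paper, namely $\mathbf{E}_{G}=-\tfrac{\mathscr{G}M^{2}}{2R}+\tfrac{n+1}{6}\mathbf{E}_{G}$, but the route is genuinely different. The paper works directly from the peeling integral $-\int_{0}^{M}\mathscr{G}\mathcal{M}\,d\mathcal{M}/r$, performs three successive integrations by parts, and invokes the FEHE twice together with the differential polytropic identity $dp/\rho=(n+1)\,d(p/\rho)$; after circling back to $\int\mathscr{G}\mathcal{M}\,d\mathcal{M}/r$ it recognises $\mathbf{E}_{G}$ on the right-hand side and solves. You instead integrate that same polytropic identity once against the FEHE to obtain the Bernoulli first integral $(n+1)p/\rho+\Phi=\Phi(R)$ up front, then combine the self-energy form $\mathbf{E}_{G}=\tfrac{1}{2}\int\rho\Phi\,dV$ with the virial form $\mathbf{E}_{G}=-3\int p\,dV$ (both already established earlier in the paper) to close immediately. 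Your version is shorter and makes the role of the surface potential $\Phi(R)=-\mathscr{G}M/R$ explicit as the Bernoulli constant, while the paper's version has the virtue of never leaving the single peeling integral and never invoking the self-energy representation. The thermal-energy step is handled identically in both: $\mathcal{E}=(\gamma-1)^{-1}p$ together with $\mathbf{E}_{G}=-3\int p\,dV$ gives $\mathbf{E}_{T}=-\tfrac{1}{3(\gamma-1)}\mathbf{E}_{G}$, and your remark on the restriction $n<5$ (equivalently $\gamma>6/5$) is a useful addition that the paper leaves implicit.
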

\begin{proof}
The proof utilises the FEHE and the polytropic gas equation. From the polytropic state equation
\begin{align}
\frac{dp(r)}{\rho(r)}=\mathsf{K}\left(\frac{n+1}{n}\right)
|\rho(r)|^{\frac{1}{n-1}}d\rho(r)
=(n+1)d\left(\frac{p(r)}{\rho(r)}\right)
\end{align}
Since there are several changes of variable, it is convenient to label the integrals throughout as $\int_{C}^{S}$, where C refers to any quantity at the centre and S refers to the surface. We take $p(R)=0,\rho(R)=0$ and $\mathcal{M}(0)=0,\mathcal{M}(R)=M$.
\begin{align}
\mathbb{E} &=-\int_{0}^{M}\mathscr{G}\frac{\mathcal{M}(r)d\mathcal{M}(r)}{r}
=-\underbrace{\frac{1}{2}\int_{C}^{S}\frac{\mathscr{G}\mathcal{M}(r)d|\mathcal{M}(r)|^{2}}{r}}_{int~by~parts}\nonumber\\&
=\left|\frac{\mathscr{G}{m}(r)}{2r}\right|_{C}^{S}-\frac{1}{2}\int_{C}^{S}
\frac{\mathscr{G}|\mathcal{M}(r)|^{2}}{r^{2}}dr\nonumber\\&=\left|\frac{\mathscr{G}
\mathcal{M}(r)}{2r}\right|_{C}^{S}-\frac{1}{2}\int_{C}^{S}
\mathcal{M}(r)\underbrace{\left(\frac{\mathscr{G}\mathcal{M}(r)}{r^{2}}dr\right)}_{use~FEHE}\nonumber\\&
=\frac{\mathscr{G}M^{2}}{2R}+\frac{1}{2}\int_{C}^{S}\mathcal{M}(r)\left(\frac{dp(r)}{ \rho(r)}\right)\nonumber\\&
=\frac{\mathscr{G}M^{2}}{2R}+\underbrace{\frac{1}{2}(n+1)\int_{C}^{S}\mathcal{M}(r)\left(\frac{dp(r)}{ \rho(r)}\right)}_{int~by~parts}\nonumber\\&
=-\frac{\mathscr{G}M^{2}}{2R}+\left|\frac{1}{2}(n+1)\mathcal{M}(r)\frac{p(r)}{\rho(r)}\right|_{C}^{S}
-\frac{1}{2}(n+1)\int_{C}^{S}\frac{p(r)}{\rho(r)}d\mathcal{M}(r)\nonumber\\&
=-\frac{\mathscr{G}M^{2}}{2R}-\frac{1}{2}(n+1)\int_{C}^{S}\frac{[p(r)}{\rho(r)}d\mathcal{M}(r)\nonumber\\&
-\frac{\mathscr{G}M^{2}}{2R}-\frac{1}{2}(n+1)\int_{C}^{S}\frac{p(r)}{\rho(r)}4\pi r^{2}\rho(r)dr\nonumber\\&
=-\frac{\mathscr{G}M^{2}}{2R}-\underbrace{\frac{1}{2}(n+1)\int_{C}^{S}p(r)
\frac{4\pi}{3}d|r^{3}|}_{int~by~parts}\nonumber\\&
-\frac{\mathscr{G}M^{2}}{2R}-\left|\frac{1}{2}(n+1)\frac{4\pi}{3}p(r)r^{3}\right|_{C}^{S}
+\frac{1}{6}(n+1)\int_{C}^{S}
\underbrace{4\pi r^{2}dp(r)}_{use~FEHE}\nonumber\\&
=-\frac{\mathscr{G}M^{2}}{2R}-\frac{1}{6}(n+1)\int_{C}^{S}4\pi r^{3}\frac{\mathscr{G} \mathcal{M}(r)}{r^{2}}\rho(r)dr\nonumber\\&
=-\frac{\mathscr{G}M^{2}}{2R}-\frac{1}{6}(n+1)\int_{C}^{S} \frac{\mathscr{G}\mathcal{M}(r)d\mathcal{M}(r)}{r}\nonumber\\&
=-\frac{\mathscr{G}M^{2}}{2R}-\frac{1}{6}(n+1)\mathbf{E}_{G}
\end{align}
so
\begin{align}
\mathbf{E}_{G}=-\frac{GM^{2}}{2R}+\frac{1}{6}(n+1)\mathbf{E}_{G}
\end{align}
which then gives (5.12) as required. To prove (5.12), first note that for a polytropic gas
the energy density $\mathcal{E}$ is related to the pressure by $\mathcal{E}(r)=(\gamma-1)^{-1}p(r)$. Then using the FEHE
\begin{align}
\mathbf{E}_{G}=4\pi\int_{0}^{R}r^{3}\frac{dp(r)}{dr}dr=-12\pi\int_{0}^{R}r^{2}p(r)dr=-12\pi\int_{0}^{R}\mathcal{E}(\gamma-1)r^{2}dr
\end{align}
Hence
\begin{align}
\mathbf{E}_{G}=-(\gamma-1)\mathbf{E}_{T}
\end{align}
and so (5.13) immediately follows using (5.12). The total energy $\mathbf{E}$ also immediately follows.
\begin{cor}
It can be see that $\mathbf{E}=0$ if $\gamma=\tfrac{4}{3}$ so such polytropic stars are teetering on the brink of catastrophic instability. Stability at least requires $\mathbf{E}<0$.
\end{cor}
\end{proof}
\subsection{The Lane-Emden equations}
Since the polytropic gas law is temperature independent, then the structure of such a star can be described by the FEHE in the Poisson form
\begin{align}
\frac{1}{r^{2}}\frac{d}{dr}\left(\frac{r^{2}}{\rho(r)}\frac {dp(r)}{dr}
\right)=-4\pi\mathscr{G}\rho
\end{align}
Using $p(r)=\mathsf{K}|\rho(r)|^{\gamma}$ this becomes
\begin{align}
\frac{1}{r^{2}}\frac{d}{dr}\left(\frac{r^{2}}{\rho(r)}\frac {d}{dr}|\rho(r)|^{\gamma}
\right)+\frac{4\pi\mathscr{G}}{\mathsf{K}}r^{2}\rho=0
\end{align}
with the boundary conditions that $\rho(0)$ has some prescribed value and $\rho^{\prime}(0)=0,\rho(R)=0$.  We can state the following theorem for a polytropic gas equation of state $\bm{[13]}$.
\begin{lem}
Let $\rho(r)$ be a solution for $r_{o}\le r<R$ and let $[r_{o},R]$ be a right maximal interval of existence for $\rho>0,R<\infty,|dp(r)/dr|_{r=r_{o}}<0$. Then $\exists C>0$ such that
\begin{align}
\rho(r)=C|R-r|^{\frac{1}{\gamma-1}}\left(1+\left\llbracket
\left|\frac{R-r}{r}\right|,\overline{C}\left(\frac{R-r}{R}\right)^{\frac{\gamma}{\gamma-1}}
\right\rrbracket
\right)
\end{align}
where $\overline{C}=Q R^{\frac{\gamma}{\gamma-1}}C^{2-\gamma}$ and $Q=4\pi\mathscr{G}(\gamma-1)/\mathsf{K}\gamma$. Here,$\big\llbracket X\rrbracket$ is a notation denoting a power series of the form $\llbracket X\rrbracket_{q}=\sum_{j\ge q}a_{j}X^{j}$ and so
\begin{align}
\big\llbracket X,Y\big\rrbracket_{q}=\sum_{j+k\ge q}q_{jk}X^{j}X^{k}
\end{align}
\end{lem}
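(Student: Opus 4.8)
The plan is to treat the statement as a boundary-asymptotics problem for the Lane--Emden profile at the vacuum interface $r=R$, where $\rho$ degenerates to zero. First I would recast the FEHE in integral form: integrating $dp(r)/dr=-\mathscr{G}\mathcal{M}(r)\rho(r)/r^{2}$ from $r$ to $R$, using $p(R)=0$ and $p=\mathsf{K}|\rho|^{\gamma}$, gives the closed system
\begin{align}
\mathsf{K}|\rho(r)|^{\gamma}=\int_{r}^{R}\frac{\mathscr{G}\mathcal{M}(\overline{r})\rho(\overline{r})}{\overline{r}^{2}}\,d\overline{r},\qquad \mathcal{M}(r)=\int_{0}^{r}4\pi\overline{r}^{2}\rho(\overline{r})\,d\overline{r}.\nonumber
\end{align}
Since $\gamma\in(1,2)$ the exponent $1/(\gamma-1)$ is the polytropic index $n>1$, and the target profile $\rho\sim|R-r|^{1/(\gamma-1)}$ is exactly the classical Lane--Emden edge behaviour; the hypotheses $R<\infty$, $\rho>0$ on $[r_{o},R)$ and $|dp/dr|_{r_{o}}<0$ ensure we are in this regime.

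For the leading order I would freeze the slowly varying factors near the edge, $\mathcal{M}(r)\approx M$ and $1/\overline{r}^{2}\approx 1/R^{2}$. The reduced balance $\mathsf{K}\gamma|\rho|^{\gamma-1}\rho'=-\mathscr{G}M\rho/R^{2}$ is separable, and since $|\rho|^{\gamma-2}\rho'=(\gamma-1)^{-1}(|\rho|^{\gamma-1})'$, integrating from $r$ to $R$ with $\rho(R)=0$ gives $|\rho(r)|^{\gamma-1}=\frac{(\gamma-1)\mathscr{G}M}{\mathsf{K}\gamma R^{2}}(R-r)$, i.e. $\rho(r)=C|R-r|^{1/(\gamma-1)}$ with $C^{\gamma-1}=(\gamma-1)\mathscr{G}M/(\mathsf{K}\gamma R^{2})$. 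This fixes the leading coefficient and the claimed exponent.

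Next I would identify the two small quantities that generate the double series $\llbracket\,\cdot\,,\cdot\,\rrbracket$. Expanding $1/\overline{r}^{2}$ about $R$ produces the geometric parameter $X=|(R-r)/r|$. The mass defect is $M-\mathcal{M}(r)=\int_{r}^{R}4\pi\overline{r}^{2}\rho\,d\overline{r}$; substituting the leading profile and using $\int_{0}^{s}s'^{1/(\gamma-1)}\,ds'=\frac{\gamma-1}{\gamma}s^{\gamma/(\gamma-1)}$ with $s=R-r$ gives $M-\mathcal{M}(r)\sim\frac{4\pi R^{2}(\gamma-1)}{\gamma}C|R-r|^{\gamma/(\gamma-1)}$. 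Normalising by $M$ and eliminating $M$ through $C^{\gamma-1}=(\gamma-1)\mathscr{G}M/(\mathsf{K}\gamma R^{2})$ shows the fractional defect equals $\frac{\gamma-1}{\gamma}\,Q\,C^{2-\gamma}|R-r|^{\gamma/(\gamma-1)}$, so up to a constant it is precisely $Y=\overline{C}(|R-r|/R)^{\gamma/(\gamma-1)}$ with $\overline{C}=QR^{\gamma/(\gamma-1)}C^{2-\gamma}$ and $Q=4\pi\mathscr{G}(\gamma-1)/(\mathsf{K}\gamma)$, matching the constants in the statement. Both $X$ and $Y$ vanish as $r\uparrow R$.

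Finally, the substitution $\rho(r)=C|R-r|^{1/(\gamma-1)}(1+w(r))$ converts the integral system, after expanding $1/\overline{r}^{2}$ and $\mathcal{M}(\overline{r})$ in $X$ and $Y$, into a fixed-point equation $w=\mathcal{T}[w]$ whose right-hand side is analytic in $(X,Y,w)$ and vanishes at $X=Y=w=0$. Solving order by order produces the formal double power series $\llbracket X,Y\rrbracket=\sum_{j+k\ge q}q_{jk}X^{j}Y^{k}$, and convergence I would obtain by a majorant estimate, equivalently by showing $\mathcal{T}$ is a contraction on a small ball in the weighted norm $\sup_{R-\delta\le r<R}|w(r)|$ for $\delta$ small. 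The hard part is this last step rather than the formal bookkeeping: the operator coefficient $r^{2}/\rho$ is singular as $\rho\downarrow 0$, so the linearisation must be controlled in a weight that absorbs the degeneracy, and the non-integer exponents $1/(\gamma-1)$ and $\gamma/(\gamma-1)$ must be tracked so that the two series variables close under iteration. Choosing that weight and verifying the self-map and contraction bounds on $[R-\delta,R)$ is where the genuine analytic work lies.
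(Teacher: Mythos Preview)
The paper does not actually prove this lemma: it is stated with a citation to $\bm{[13]}$ and the text immediately moves on to the non-dimensional reduction of the Lane--Emden equation without supplying any argument. So there is no ``paper's own proof'' to compare against; the result is quoted from the literature (it is essentially Makino's boundary-asymptotics lemma for polytropes).

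That said, your sketch is a faithful outline of how this result is obtained in the source. The leading-order balance you extract is correct and yields the right exponent and the right constant $C$; your identification of the two small parameters $X=|R-r|/r$ and $Y=\overline{C}(|R-r|/R)^{\gamma/(\gamma-1)}$ is exactly the mechanism behind the double series, and your derivation of $\overline{C}=QR^{\gamma/(\gamma-1)}C^{2-\gamma}$ from the mass defect is on the mark. The closing fixed-point/majorant step is where the real content lies, as you note; in Makino's treatment this is done not by a bare contraction on $\sup|w|$ but by constructing an explicit majorant double power series in $(X,Y)$ and invoking analyticity, which handles the non-integer exponents more cleanly than a direct weighted-norm contraction. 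Your proposed route would work, but the majorant-series formulation is what makes the $\llbracket X,Y\rrbracket$ notation natural and is closer to how the cited reference actually closes the argument.
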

\subsubsection{Non-dimensional reduction}
The Emden equation can be reduced to a dimensionless form, the Lane-Emden equation. This is a very standard derivation in the literature $\bm{1,2]}$. Defining a variable $\theta$ by $\theta=|\rho(r)/\rho(0)|^{\gamma-1}$ then
\begin{align}
\frac{1}{r^{2}}\frac{d}{dr}\left(r^{2}\frac{d\theta(r)}{dr}\right)+
\frac{4\pi \mathscr{G}(\gamma-1)}{\mathsf{K}\gamma}|\rho(0)|^{2-\gamma}|\theta(r)|^{1/(\gamma-1)}
\end{align}
Defining a second variable $\xi(r)$ by
\begin{align}
\xi(r)=\left|\frac{4\pi \mathscr{G}(\gamma-1)}{\mathsf{K}\gamma}\right|^{1/2}|\rho(0)|^{(2-\gamma)/2}r
\end{align}
the differential equation becomes
\begin{align}
\frac{1}{\xi^{2}}\frac{d}{d\xi}\left(\xi^{2}\frac{d}{d\xi}\theta(\xi)\right)
+|\theta(\xi)|^{1/(\gamma-1)}=0
\end{align}
with boundary conditions $\theta(0)=1,\theta^{\prime}(0)=0$ and $\theta(\xi_{1})=0$ on the surface. Since $n=\frac{1}{\gamma}-1$ the LE equation is identically.
\begin{align}
\frac{1}{\xi^{2}}\frac{d}{d\xi}\left(\xi^{2}\frac{d}{d\xi}\theta(\xi)\right)
=-|\theta(\xi)|^{n}
\end{align}
This is then the Lane-Emden equation, which can be solved analytically or numerically depending on the polytropic index $n$. The following lemma establishes that if $\theta(\xi)=0$ then one must have $[d\theta(\xi)/d\theta]_{\xi=0}=\theta^{\prime}(0)=0$; that is, the derivative must vanish at $\xi=0$ for bounded solutions that are nonsingular at $\xi=0$. The function $\theta(\xi)$ is analytic at $\xi=0$.
\begin{lem}
Given the Lane-Emden equation
\begin{align}
\frac{1}{\xi^{2}}\frac{d}{d\xi}\left(\xi^{2}\frac{d\theta(\xi)}{d\theta}\right)\equiv
\frac{1}{\xi^{2}}(\xi^{2}\theta^{\prime}(\xi))^{\prime}=f(\theta(\xi))
\end{align}
where $f(\theta(\xi))=-|\theta(\xi)|^{n}$. Let $f\in C(\mathbf{R})$ and $\theta\in C^{2}(0,Q)$ for some $Q>0$ and $\theta\in[0,Q]$ and $\frac{1}{\xi^{2}}(\xi^{2}
\theta^{\prime}(\xi))^{\prime}
=f(\theta)$. Then $\lim_{\xi\uparrow\infty}\theta^{\prime}(\xi)=0$ if $\theta(0)=0$.
\end{lem}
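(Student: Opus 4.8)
The plan is to recast the equation in its divergence (Sturm--Liouville) form and exploit the mild, integrable singularity at $\xi=0$. Writing the Lane--Emden operator as $\frac{1}{\xi^{2}}(\xi^{2}\theta'(\xi))'$, the hypothesis becomes
\begin{align}
(\xi^{2}\theta'(\xi))'=\xi^{2}f(\theta(\xi)),\qquad f(\theta)=-|\theta|^{n}.\nonumber
\end{align}
I first note that the displayed conclusion should read $\lim_{\xi\downarrow 0}\theta'(\xi)=\theta'(0)=0$; the $\xi\uparrow\infty$ is a misprint, as the preceding paragraph (``the derivative must vanish at $\xi=0$'') makes clear, and the proof I propose targets the behaviour at the origin.

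The first step is to control the source term. Since $f\in C(\mathbf{R})$ and $\theta$ is bounded on the relevant interval (its range lies in a compact set, as $\theta\in[0,Q]$), there is a constant $M>0$ with $|f(\theta(s))|\le M$. Hence $s^{2}f(\theta(s))$ is continuous on $(0,Q)$, dominated by $Ms^{2}$, and extends continuously to $s=0$ with value $0$; in particular $\int_{0}^{\xi}s^{2}f(\theta(s))\,ds$ is a proper integral. The second step is to integrate the divergence form from $0$ to $\xi$ and dispose of the lower boundary term $\lim_{s\downarrow 0}s^{2}\theta'(s)$. This is precisely where the nonsingularity hypothesis is used: for a regular solution $\theta'$ stays finite near $0$, so $s^{2}\theta'(s)\to 0$, leaving the integral representation
\begin{align}
\theta'(\xi)=\frac{1}{\xi^{2}}\int_{0}^{\xi}s^{2}f(\theta(s))\,ds.\nonumber
\end{align}
The third step is then a one-line domination estimate: $|\theta'(\xi)|\le \xi^{-2}\cdot\frac{M\xi^{3}}{3}=\frac{M}{3}\,\xi\to 0$ as $\xi\downarrow 0$. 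Comparing this with the finite value $\theta'(0)$ furnished by the regularity hypothesis forces $\theta'(0)=0$.

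The main obstacle is the boundary term $s^{2}\theta'(s)$ itself: without the nonsingularity assumption it need not vanish, and the equation genuinely admits singular solutions along which $\theta'$ blows up at the origin (compare the singular isothermal profile of Corollary~2.22, where $\rho\sim r^{-2}$). The entire content of the lemma therefore lies in making the regularity hypothesis sharp enough to guarantee that $\theta'$ is finite near $\xi=0$; once that is secured, the vanishing of $s^{2}\theta'(s)$ and the $\tfrac{M}{3}\xi$ bound are routine. I would accordingly state ``$\theta'$ bounded (equivalently, finite) near the origin'' as the explicit working hypothesis, derive $s^{2}\theta'(s)\to 0$ from it, and let the estimate above close the argument.
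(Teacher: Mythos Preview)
Your final estimate $|\theta'(\xi)|\le \tfrac{M}{3}\xi$ is exactly the paper's closing step, but you and the paper part ways on how to dispose of the boundary term $\lim_{s\downarrow 0}s^{2}\theta'(s)$. You propose to \emph{assume} $\theta'$ is bounded near the origin and read off $s^{2}\theta'(s)\to 0$ directly; the paper instead assumes only that $\theta$ itself is bounded near $0$ (this is the content of the garbled hypothesis ``$\theta(0)=0$'' together with $\theta\in[0,Q]$) and \emph{derives} the vanishing of the boundary term by a case split. Integrating $(\xi^{2}\theta')'=\xi^{2}f(\theta)$ from $\xi_{o}$ to $\xi$ and letting $\xi_{o}\downarrow 0$ shows $\mathcal{L}:=\lim_{s\downarrow 0}s^{2}\theta'(s)$ exists; if $\mathcal{L}\neq 0$ then $\theta'(\xi)\sim \mathcal{L}/\xi^{2}$ near $0$, and integrating once more forces $|\theta(\epsilon)-\theta(\xi)|\gtrsim \mathcal{L}/\xi\to\infty$, contradicting boundedness of $\theta$. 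Only after this contradiction is $\mathcal{L}=0$ secured, and then your $\tfrac{M}{3}\xi$ bound finishes.

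So your argument is correct under your strengthened hypothesis, and you rightly identify the boundary term as the crux. What you give up by assuming $\theta'$ bounded is precisely the point the paper is making: that regularity of $\theta$ \emph{alone} forces $\theta'(0)=0$, which is why the singular Lane--Emden solutions (with $\theta$ unbounded at the origin) are genuinely different. The paper's Case~1 is short---two lines once set up---and buys you the lemma under the weaker, more natural hypothesis that only $\theta$ is nonsingular at the centre.
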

\begin{proof}
Let $\xi\in[0,Q]$ and express the LE equation as
\begin{align}
(\xi^{2}\theta^{\prime}(\xi))^{\prime}=\xi^{2}f(\theta(\xi))
\end{align}
then integrating
\begin{align}
\xi^{2}\theta^{\prime}(\xi)=\int_{\xi_{o}}^{\xi}\beta^{2}f(\theta(\beta))d\beta+\xi_{o}^{2}
\theta(\xi_{o})
\end{align}
and
\begin{align}
0=\int_{\xi_{o}}^{0}\beta^{2}f(\theta(\beta))d\beta+\xi_{o}^{2}
\theta(\xi_{o})
\end{align}
so that
\begin{align}
\int_{\xi_{o}}^{0}\beta^{2}f(\theta(\beta))d\beta=-\xi_{o}^{2}
\theta(\xi_{o})
\end{align}
Then (5.29) becomes
\begin{align}
\xi^{2}\theta^{\prime}(\xi)&=\int_{\xi_{o}}^{\xi}\beta^{2}f(\theta(\beta))d\beta+\xi_{o}^{2}
\theta(\xi_{o})\nonumber\\&
=\int_{\xi_{o}}^{0}\beta^{2}f(\theta(\beta)d\beta+\int_{0}^{\xi}\beta^{2}f(\theta(\beta))d\beta+
\xi_{o}^{2}\theta(\xi_{o})\nonumber\\&
=-\int_{\xi}^{0}\beta^{2}f(\theta(\eta)d\eta+\int_{0}^{\xi}\beta^{2}f(\theta(\beta))d\beta+
\xi_{o}^{2}\theta(\xi_{o})\nonumber\\&
=-\xi_{o}^{2}\theta(\xi_{o})+\xi_{o}^{2}\theta(\xi)
+\int_{0}^{\xi}\beta^{2}f(\theta(\beta))d\beta=\int_{0}^{\xi}\beta^{2}f(\theta(\beta))d\beta
\end{align}
Now let $\mathcal{L}$ be the limit as$\xi\rightarrow\infty$
\begin{align}
\lim_{\xi\uparrow 0}\xi^{2}\theta^{\prime}(\xi)=\mathcal{L}
\end{align}
and consider two cases:
\begin{enumerate}
\item Assume that the limit is strictly greater than zero so that $\mathcal{L}>0$. Then
\begin{align}
\lim_{\xi\uparrow o}\theta^{\prime}(\xi)=\frac{\mathcal{L}}{\xi^{2}}
\end{align}
If some $\alpha\ge 0$ then
\begin{align}
\lim_{\xi\uparrow 0}\theta^{\prime}(\xi)\ge \frac{\mathcal{L}}{\alpha\xi^{2}}
\end{align}
and let $\xi\in [0,\epsilon]$. Integrating
\begin{align}
(\theta(\epsilon)-\theta(0))=
\lim_{t\uparrow\infty}(\theta(\epsilon)-\theta(\xi))\ge \lim_{t\uparrow\infty}
\int_{\xi}^{\epsilon}\frac{\mathcal{L}}{\alpha\beta^{2}}d\beta
=\lim_{t\uparrow\infty}\frac{\mathcal{L}}{\alpha\xi}-\frac{\mathcal{L}}{\alpha\epsilon}=-\infty
\end{align}
This then contradicts the requirement that $\theta(0)=0$. The derivative at $\xi=0$ would then be singular and so does not exist.
\item
The case $\mathcal{L}=0$. Now $\lim_{\xi\uparrow 0}\theta^{\prime}(\xi)=\mathcal{L}=0$. Using (5.30) it is then required that
\begin{align}
\lim_{\xi\uparrow 0}\xi^{2}\theta^{\prime}(\xi)=\lim_{\xi\uparrow 0}
\int_{0}^{\xi}\beta^{2}f(\theta(\beta))d\eta=0
\end{align}
or
\begin{align}
\lim_{\xi\uparrow 0}\theta^{\prime}(\xi)=\frac{1}{\xi^{2}}\lim_{\xi\uparrow 0}
\int_{0}^{\xi}\beta^{2}f(\theta(\beta))d\beta=0
\end{align}
In terms of the supremum of $f(\theta)$
\begin{align}
\lim_{\xi\uparrow 0}\theta^{\prime}(\xi)\le\frac{1}{\xi^{2}}\lim_{\xi\uparrow 0}
\int_{0}^{\xi}\beta^{2}\sup|f(\theta(\beta))|d\beta=\frac{1}{\xi^{2}}\lim_{\xi\uparrow 0}
\int_{0}^{\xi}\beta^{2}C|d\beta=\frac{1}{\xi^{2}}\int_{0}^{\xi}\beta^{2}C
=\lim_{\xi\uparrow 0}\frac{1}{3}C\xi=0
\end{align}
so that $\lim_{\xi\uparrow 0}\theta^{\prime}(\xi)=0$ as required.
\end{enumerate}
\end{proof}
Within the LE formalism, the total energy can also be expressed as
\begin{align}
&\mathbb{E}=\frac{4-3\gamma}{\gamma-1}\int_{0}^{R}p(r) 4\pi r^{2}dr\\&=\frac{4-3\gamma}{\gamma-1} 4\pi \mathsf{K}\left(
\frac{\mathsf{K}\gamma}{4\pi\mathscr{G}(\gamma-1)}
\right)^{3/2}|\rho(0)|^{\frac{3\gamma-6}{2}}\int_{0}^{\xi_{1}}|\theta(\xi)|^{n+1}\xi^{2}d\xi
\end{align}
For $\gamma>6/5$, the solution vanishes at $\xi=\xi_{1}$ so the radius of a polytropic star is
\begin{align}
R=\left(\frac{4\pi\mathscr{G}(\gamma-1)}{\mathsf{K}\gamma)^{-1/2}}\right)|\rho(0)|^{(2-\gamma)/2}\xi
\end{align}
The mass is
\begin{align}
M=\int_{0}^{R}4\pi r^{2}\rho(r)dr&=|4\pi\rho(0)|^{(3\gamma-4)/2}\left(\frac{\mathsf{K}\gamma}{4\pi \mathscr{G}(\gamma-1)}
\right)^{3/2}\int_{0}^{\xi_{1}}\xi^{2}|\theta(\xi)|^{1/\gamma-1}d\xi\nonumber\\&
=|4\pi\rho(0)|^{(3\gamma-4)/2}\left(\frac{\mathsf{K}\gamma}{4\pi \mathscr{G}(\gamma-1)}
\right)^{3/2}\xi_{1}^{2}\left|\frac{d\theta(\xi)}{d\xi}
\right|_{\xi=\xi_{1}}
\end{align}
These formulas are useful in determining the masses and radii of white dwarf stars.
\subsection{Existence and uniqueness}
Having formulated the Lane-Emden equations, it is necessary to establish that a solution exists and is unique. The following general theorem appears in $\bm{[14]}$.
\begin{thm}
Let $u(x)$ be a positive solution to the general Lane-Emden equation
\begin{align}
\Delta u+ u^{n}=0,~~x\in\mathbf{R}^{N},N>2
\end{align}
If
\begin{align}
1\le n<\frac{N+2}{N-2}
\end{align}
then $u(x)=0$. Otherwise there is a non-trivial positive solution.
\end{thm}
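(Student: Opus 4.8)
The plan is to prove the two halves of the dichotomy by entirely different techniques: the subcritical case $1\le n<\frac{N+2}{N-2}$ is a Liouville-type \emph{non-existence} statement (the Gidas--Spruck theorem), whereas the critical and supercritical case $n\ge\frac{N+2}{N-2}$ is an \emph{existence} statement. Throughout I would take $u\in C^{2}(\mathbf{R}^{N})$ with $u>0$; the smoothness needed to justify the integrations by parts below is furnished by interior elliptic regularity applied to $\Delta u=-u^{n}$.

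For non-existence in the subcritical range, the strategy is to combine a Pohozaev--Rellich identity with an a priori decay estimate. First I would multiply the equation by $u$ and integrate over a ball $B_{R}$ to record $\int_{B_{R}}|\nabla u|^{2}=\int_{B_{R}}u^{n+1}+\mathcal{B}_{1}(R)$, with $\mathcal{B}_{1}(R)$ a boundary term. Next I would multiply by the dilation field $x\cdot\nabla u$ and integrate by parts, producing the Pohozaev identity; eliminating $\int|\nabla u|^{2}$ between the two relations yields
\begin{align}
\left(\frac{N-2}{2}-\frac{N}{n+1}\right)\int_{B_{R}}u^{n+1}\,dx=\mathcal{B}(R),\nonumber
\end{align}
where $\mathcal{B}(R)$ collects the surface contributions on $\partial B_{R}$. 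The coefficient on the left is nonzero precisely when $n\neq\frac{N+2}{N-2}$ (it vanishes exactly at criticality, since there $n+1=\frac{2N}{N-2}$ gives $\frac{N}{n+1}=\frac{N-2}{2}$); thus, \emph{provided} the boundary term satisfies $\mathcal{B}(R)\to0$ as $R\to\infty$, we obtain $\int_{\mathbf{R}^{N}}u^{n+1}=0$ and hence $u\equiv0$.

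Establishing that the boundary term dies is the genuine content of the theorem and the main obstacle. The key a priori input is the \emph{universal decay estimate} $u(x)\le C\,|x|^{-\frac{2}{n-1}}$ for large $|x|$, the rate being dictated by the scaling symmetry $u\mapsto\lambda^{2/(n-1)}u(\lambda\,\cdot\,)$ of the equation. Granting this bound, the surface integrals in $\mathcal{B}(R)$ — which involve $R\int_{\partial B_{R}}(|\nabla u|^{2}+u^{n+1})\,dS$ and $\int_{\partial B_{R}}u\,\partial_{\nu}u\,dS$ — are each $O(R^{N-2-\frac{4}{n-1}})$ or $O(R^{N-\frac{2(n+1)}{n-1}})$, and both exponents are negative exactly when $n<\frac{N+2}{N-2}$, so $\mathcal{B}(R)\to0$. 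To prove the decay estimate itself I would run the Gidas--Spruck nonlinear multiplier scheme: testing against $u^{2\alpha-1}\varphi^{2}$ with a cutoff $\varphi$ equal to $1$ on $B_{R}$ and supported in $B_{2R}$, and absorbing the gradient terms by Young's inequality, one obtains after a bootstrap in the exponent $\alpha$ integral bounds
\begin{align}
\int_{B_{R}}u^{\,n+s}\,dx\le C\,R^{\,N-\frac{2(n+s)}{n-1}},\nonumber
\end{align}
from which the pointwise decay follows. The delicate point — and why this is the crux — is that the naive choice $\alpha=1$ (test function $\varphi^{2}$) only reaches the Serrin exponent $\frac{N}{N-2}$; climbing all the way to the Sobolev threshold $\frac{N+2}{N-2}$ requires the full iteration, carefully tracking how each step widens the admissible range of $\alpha$.

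For existence when $n\ge\frac{N+2}{N-2}$, I would treat the two subcases explicitly. At the critical exponent $n=\frac{N+2}{N-2}$ the Aubin--Talenti family
\begin{align}
u_{\lambda}(x)=\left(\frac{\lambda\sqrt{N(N-2)}}{\lambda^{2}+|x|^{2}}\right)^{\frac{N-2}{2}},\qquad\lambda>0,\nonumber
\end{align}
solves the equation by direct substitution (these are the Sobolev extremals, consistent with the vanishing of the Pohozaev coefficient at criticality). For the supercritical range $n>\frac{N+2}{N-2}$ I would first note the explicit singular solution
\begin{align}
u_{\star}(x)=c_{N,n}\,|x|^{-\frac{2}{n-1}},\qquad c_{N,n}=\left[\frac{2}{n-1}\Big(N-2-\frac{2}{n-1}\Big)\right]^{\frac{1}{n-1}},\nonumber
\end{align}
whose constant is positive exactly because $n>\frac{N}{N-2}$, and then construct a \emph{regular} bounded positive solution by radial shooting: for $u=u(r)$ the equation reads $(r^{N-1}u')'=-r^{N-1}u^{n}$ with $u(0)=a>0$, $u'(0)=0$, and an Emden--Fowler phase-plane analysis shows that in the supercritical regime the trajectory stays positive and decays monotonically to $0$ (asymptotic to $u_{\star}$), never crossing zero — in contrast to the subcritical regime, where the regular profile vanishes at a finite radius, exactly the finite-radius Lane--Emden stars discussed earlier in this section. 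This yields a nontrivial global positive solution and completes the dichotomy.
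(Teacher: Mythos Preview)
The paper does not supply a proof of this theorem at all: it is stated as a result imported from the literature (reference $\bm{[14]}$, Gidas--Spruck), and the text immediately moves on to remarks about the cases $N=2$ and $N=4$ before turning to the spherically symmetric ``astrophysical form'' of the equation. There is therefore nothing in the paper to compare your argument against.

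That said, your outline is a faithful sketch of how the result is actually established. The structure --- Pohozaev identity plus a universal decay bound $u(x)\le C|x|^{-2/(n-1)}$ to kill boundary terms for the subcritical Liouville statement, the Aubin--Talenti bubbles at the critical exponent, and a radial Emden--Fowler shooting argument in the supercritical range --- is correct, and your identification of the decay estimate as the real content is exactly right. Two small cautions. First, the original Gidas--Spruck argument does not proceed via Pohozaev followed by decay; they obtain the integral bounds and the contradiction more directly through the nonlinear test-function iteration you allude to, so your route, while valid, is a slight rearrangement of the logic rather than the literal proof in $\bm{[14]}$. Second, your parenthetical ``test function $\varphi^{2}$'' for $\alpha=1$ does not match $u^{2\alpha-1}\varphi^{2}$; the point you want to make about the Serrin threshold is correct, but the bookkeeping there should be tidied if you flesh this out.
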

The case N=2 is relevant to stellar structure which is our concern here, but for
$n=(N+2)/(N-2)$ and for $N=4$ it is relevant to Yang Mills theory, while the case $N=2$
is relevant to differential geometry. The coupled Lane-Emden system
\begin{align}
&\Delta u+ v^{n}=0\nonumber\\&
\Delta v + u^{m}=0\nonumber
\end{align}
also arises in many mathematical, biological, chemical and physical problems. $\bm{[15-20]}$

We are really concerned with the existence and uniqueness of solutions of the LE equation in the spherically symmetric 'astrophysical form' so that
\begin{align}
&\Delta_{\xi}\theta(\xi)=\frac{1}{\xi^{2}}\frac{d}{d\xi}
\left(\xi^{2}\frac{d\theta(\xi)}{d\xi}\right)\equiv
\theta^{\prime\prime}(\xi)+\frac{2}{\xi}\theta^{\prime}(\xi)+|\theta(\xi)|^{n}=0,n>0\\&
\theta(0)=\alpha,~~~\theta^{\prime}(0)=0
\end{align}
This equation can be considered as a special case of a general equation of the form
\begin{align}
&\theta^{\prime\prime}(\xi)+\mathcal{P}(\xi)\theta^{\prime}(\xi)+\mathcal{Q}(\theta(\xi),\xi)=0,~~\xi>0\\&
\theta(0)=\alpha,~~\theta^{\prime}(0)=0
\end{align}
where $\mathcal{P}:\mathbf{R}\rightarrow\mathbf{R}$, $\mathcal{Q}:\mathbf{R}\times\mathbf{R}\rightarrow\mathbf{R}$, $\alpha>0$ and $p(\xi)$ can be singular at $\xi=0$ $\bm{[21]}$.
\begin{prop}
The following general conditions are imposed on $\mathcal{P}$ and $\mathcal{Q}$
\begin{enumerate}
\item  $\mathcal{P}$ is measurable on $[0,1]$,
\item $\mathcal{P}\ge 0$ on $(0,1]$.
\item $\int_{0}^{1}s\mathcal{P}(s)ds<\infty$
\item $\mathcal{Q}$ satisfies the general Caratheodry conditions and is bounded so that:
\begin{itemize}
\item  $\exists,a,b$ with $a<\alpha<\beta$ and $C>0$ such that
for all $t\in(0,1], \mathcal{Q}(\xi,\bullet)$ is continuous on $(a,b)$.
\item For all $x\in(a,b)$, $\mathcal{Q}(\xi,\bullet)$ is measurable on $(0,1]$.
\item $\sup_{(t,x)\in[0,1]\times[a,b]}|\mathcal{Q}(\xi,x)|<C.$
\end{itemize}
\end{enumerate}
\end{prop}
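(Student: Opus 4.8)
The plan is to read this Proposition not as a free-standing assertion but as the claim that the spherically-symmetric (astrophysical) Lane-Emden problem is a genuine instance of the general singular initial-value problem $\theta''(\xi)+\mathcal{P}(\xi)\theta'(\xi)+\mathcal{Q}(\theta(\xi),\xi)=0$, $\theta(0)=\alpha$, $\theta'(0)=0$, so that the existence-uniqueness theory of $\bm{[21]}$ becomes available. Matching the astrophysical form $\theta''(\xi)+\tfrac{2}{\xi}\theta'(\xi)+|\theta(\xi)|^{n}=0$ against the general form forces the identifications $\mathcal{P}(\xi)=2/\xi$ and $\mathcal{Q}(\theta,\xi)=|\theta|^{n}$. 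The substantive content to be proved is therefore that this specific pair $(\mathcal{P},\mathcal{Q})$ meets conditions (1)--(4), with $\alpha=\theta(0)>0$.

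First I would dispatch the three requirements on $\mathcal{P}(\xi)=2/\xi$. Measurability on $[0,1]$ holds because $2/\xi$ is continuous, hence Borel measurable, on $(0,1]$, and its value at the single point $\xi=0$ is irrelevant to measurability; positivity on $(0,1]$ is immediate. The decisive computation is the weighted integrability of condition (3), where the factor $s$ exactly cancels the simple pole:
\begin{align}
\int_{0}^{1}s\,\mathcal{P}(s)\,ds=\int_{0}^{1}s\cdot\frac{2}{s}\,ds=\int_{0}^{1}2\,ds=2<\infty. \nonumber
\end{align}
I expect this to be the heart of the matter: $\mathcal{P}$ is singular at the origin, and it is precisely the $s$-weighting built into (3) that tames the singularity and keeps the integral reformulation of the equation well-defined near $\xi=0$.

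Next I would verify the Caratheodory-boundedness package of condition (4) for $\mathcal{Q}(\theta,\xi)=|\theta|^{n}$. Choosing any interval with $0<a<\alpha<b$, for each fixed $\xi$ the map $x\mapsto|x|^{n}$ is continuous on $(a,b)$, giving the continuity clause; since $\mathcal{Q}$ carries no explicit $\xi$-dependence, for each fixed $x$ the map $\xi\mapsto|x|^{n}$ is constant and hence trivially measurable on $(0,1]$. Finally, on the compact box $[0,1]\times[a,b]$ one has the uniform bound $|\mathcal{Q}(\xi,x)|=|x|^{n}\le\max\{a^{n},b^{n}\}=:C$, supplying the constant demanded by the last clause.

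The only genuine obstacle is the singularity of $\mathcal{P}$ at $\xi=0$: the pair $(\mathcal{P},\mathcal{Q})$ is not Lipschitz up to the origin, so the classical Picard-Lindel\"of theorem does not apply and one cannot simply quote elementary ODE existence. What I would rely on from $\bm{[21]}$ is that the weighted bound (3) is exactly the substitute that turns the equivalent integral equation into a contraction on a suitable space of continuous functions near the centre, after which the analytic solution with $\theta(0)=\alpha$, $\theta'(0)=0$ exists and is unique on a maximal interval. Verifying (1)--(4) as above is thus the full content of the Proposition, with the hard analysis deferred to the cited fixed-point result.
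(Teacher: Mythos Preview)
Your reading of the Proposition as the assertion that the Lane--Emden data $\mathcal{P}(\xi)=2/\xi$, $\mathcal{Q}(\theta,\xi)=|\theta|^{n}$ satisfy the listed hypotheses is sensible, and your verification of conditions (1)--(4) for that specific pair is correct; in particular the computation $\int_{0}^{1}s\cdot(2/s)\,ds=2<\infty$ is exactly the point, and the Carath\'eodory clauses are trivially met because $\mathcal{Q}$ has no explicit $\xi$-dependence.

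However, this is not how the paper treats the statement. In the paper, Proposition~5.10 is not a theorem at all: it is a list of standing hypotheses imposed on the general coefficient functions $\mathcal{P}$ and $\mathcal{Q}$ in the abstract problem $\theta''+\mathcal{P}\theta'+\mathcal{Q}=0$, to be invoked later in the existence theorem (whose proof is deferred to $\bm{[21]}$) and in the uniqueness theorem. No proof is offered or needed, because nothing is being asserted---the word ``imposed'' signals that these are assumptions, not conclusions. The paper never explicitly checks that the Lane--Emden case meets (1)--(4); it simply sets up the general framework and then specialises $\psi(\xi)=\xi^{2}$ after Lemma~5.12. Your verification is a useful and correct addendum, but it answers a question the paper does not pose at this point.
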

\begin{defn}
$\theta(\xi)$ is considered  a solution to the LE equation iff $\exists~T>0$ such that:
\begin{enumerate}
\item $\theta$ and $\theta^{\prime}$ are absolutely continuous on $(0,T]$
\item $\theta$ satisfies (5.47) almost everywhere on $(0,T]$.
\item $\theta$ satisfies the conditions of (5.48).
\end{enumerate}
\end{defn}
\begin{lem}
Given $\mathcal{P}(\xi)$, define a function $\psi(\xi)$ by the integral
\begin{align}
\psi(\xi)=\exp\left(\int_{1}^{\xi}\mathcal{P}(s)ds\right),~\xi>0
\end{align}
where $\psi(\xi)$ is a nonnegative, nondecreasing and continuous function which is bounded
then the ODE (5.47) can be expressed in the form
\begin{align}
\frac{1}{\psi(\xi)}\frac{d}{d\xi}(\psi(\xi)\theta^{\prime}(\xi))=-\mathcal{Q}(\theta(\xi),\xi)
\end{align}
\end{lem}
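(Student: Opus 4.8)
The plan is to read this as the classical reduction of the second-order ODE (5.47) to self-adjoint (Sturm--Liouville) form, so that the proof amounts to a single application of the product rule once the derivative of the integrating factor $\psi$ has been identified. The one computation that drives everything is
\[
\psi'(\xi)=\mathcal{P}(\xi)\,\psi(\xi),
\]
which I would obtain by differentiating the defining exponential: by the fundamental theorem of calculus together with the chain rule, $\frac{d}{d\xi}\exp\!\left(\int_1^\xi\mathcal{P}(s)\,ds\right)=\mathcal{P}(\xi)\exp\!\left(\int_1^\xi\mathcal{P}(s)\,ds\right)$. Before doing so I would record that $\psi$ is genuinely well defined for every $\xi>0$: the weighted integrability hypothesis $\int_0^1 s\,\mathcal{P}(s)\,ds<\infty$ gives $\int_\epsilon^1\mathcal{P}(s)\,ds\le\epsilon^{-1}\int_0^1 s\,\mathcal{P}(s)\,ds<\infty$ for each $\epsilon>0$, so $\mathcal{P}$ is locally integrable on $(0,T]$ and the indefinite integral defining $\psi$ is finite and absolutely continuous on every compact subinterval of $(0,T]$.

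Next I would expand the left-hand side of the claimed identity. Since $\theta$ and $\theta'$ are absolutely continuous on $(0,T]$ by the definition of solution, and $\psi$ is absolutely continuous there as well, the product $\psi\theta'$ is absolutely continuous, so the Leibniz rule applies almost everywhere and
\[
\frac{d}{d\xi}\bigl(\psi(\xi)\theta'(\xi)\bigr)=\psi'(\xi)\theta'(\xi)+\psi(\xi)\theta''(\xi).
\]
Substituting $\psi'=\mathcal{P}\psi$ and dividing by $\psi(\xi)$, which is strictly positive hence nonvanishing for all $\xi>0$, gives $\frac{1}{\psi(\xi)}\frac{d}{d\xi}\bigl(\psi(\xi)\theta'(\xi)\bigr)=\theta''(\xi)+\mathcal{P}(\xi)\theta'(\xi)$. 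Finally I would invoke (5.47) rearranged as $\theta''(\xi)+\mathcal{P}(\xi)\theta'(\xi)=-\mathcal{Q}(\theta(\xi),\xi)$, which immediately yields the asserted form.

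The point requiring care --- rather than any genuine difficulty --- is the low regularity of $\mathcal{P}$, which is merely measurable and may be singular at the origin (the true Lane--Emden case is $\mathcal{P}(\xi)=2/\xi$, for which $\psi(\xi)=\xi^2$, illustrating both the boundedness on $(0,1]$ and the nonvanishing on $(0,T]$). I would therefore emphasise that every differentiation above is understood in the almost-everywhere sense proper to absolutely continuous functions --- exactly the notion of solution fixed in the preceding definition --- so that no classical differentiability of $\mathcal{P}$ is invoked. The remaining assertions about $\psi$ (nonnegativity, monotonicity, continuity) need no separate argument, as they follow at once from $\mathcal{P}\ge 0$ on $(0,1]$ and the continuity of the indefinite integral of a locally integrable function.
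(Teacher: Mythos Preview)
Your proof is correct and follows exactly the same route as the paper's: both identify $\mathcal{P}(\xi)=\psi'(\xi)/\psi(\xi)$ and then recognise $\theta''+\mathcal{P}\theta'$ as $\psi^{-1}(\psi\theta')'$ via the product rule. Your version is in fact more careful than the paper's, which simply writes down the substitution without addressing the local integrability of $\mathcal{P}$ near the origin or the almost-everywhere sense in which the differentiations hold.
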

\begin{proof}
Since
\begin{align}
\mathcal{P}(\xi)=\frac{\psi^{\prime}(\xi)}{\psi(\xi)}
\end{align}
then (-) becomes
\begin{align}
&\theta^{\prime\prime}(\xi)+\frac{\psi^{\prime}(\xi)}{\psi(\xi)}
\theta^{\prime}(\xi)+\mathcal{Q}(\theta(\xi),\xi)=0,~~\xi>0\\&
\theta(0)=\alpha,~~\theta^{\prime}(0)\nonumber\\&
=\frac{1}{\psi(t)}\big(\psi(\xi)\theta^{\prime\prime}(\xi)+
\psi^{\prime}(\xi)\theta^{\prime}(\xi)\big)+\mathcal{Q}(\theta(\xi),\xi)=0
\end{align}
which is (5.52)
\end{proof}
The standard Lane-Emden equation is the case $\psi(\xi)=\xi^{-2}$.
\begin{thm}
Let the conditions of Prop 4.9 be satisfied. Then a solution to the boundary value
problem (5.47) exists.
\end{thm}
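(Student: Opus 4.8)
The plan is to recast this singular two-point problem as a fixed-point equation for an integral operator and then apply the Schauder fixed-point theorem. First I would pass to the self-adjoint form established in the preceding lemma, $\tfrac{1}{\psi(\xi)}\tfrac{d}{d\xi}\big(\psi(\xi)\theta'(\xi)\big)=-\mathcal{Q}(\theta(\xi),\xi)$, multiply by $\psi(\xi)$ and integrate once from $0$ to $\xi$. Because $\theta'(0)=0$ the boundary contribution drops and one gets $\psi(\xi)\theta'(\xi)=-\int_{0}^{\xi}\psi(s)\mathcal{Q}(\theta(s),s)\,ds$. Dividing by $\psi(\xi)$ and integrating a second time against $\theta(0)=\alpha$ produces the integral equation
\begin{align}
\theta(\xi)=\alpha-\int_{0}^{\xi}\frac{1}{\psi(t)}\int_{0}^{t}\psi(s)\mathcal{Q}(\theta(s),s)\,ds\,dt\equiv T[\theta](\xi).\nonumber
\end{align}
A continuous fixed point $\theta=T[\theta]$ automatically satisfies the ODE (5.47) almost everywhere together with both boundary conditions, so it suffices to produce one.

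Next I would fix a small $T>0$, work in the Banach space $C[0,T]$ with the supremum norm, and take as invariant set the closed convex set $K=\lbrace\theta\in C[0,T]:a\le\theta(\xi)\le b\ \text{for all}\ \xi\in[0,T]\rbrace$, where $a<\alpha<b$ are the constants from the Carath\'eodory hypothesis. The decisive a priori estimate uses that $\psi$ is nonnegative and nondecreasing (a consequence of $\mathcal{P}\ge 0$): for $s\le t$ one has $\psi(s)\le\psi(t)$, hence $\tfrac{1}{\psi(t)}\int_{0}^{t}\psi(s)\,ds\le t$. Combined with the uniform bound $|\mathcal{Q}|<C$ on $[0,1]\times[a,b]$, this gives $|T[\theta](\xi)-\alpha|\le C\int_{0}^{\xi}t\,dt=\tfrac{1}{2}C\xi^{2}\le\tfrac{1}{2}CT^{2}$. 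Choosing $T$ small enough that $\tfrac{1}{2}CT^{2}<\min(\alpha-a,\,b-\alpha)$ forces $T[\theta](\xi)\in[a,b]$, so $T$ maps $K$ into itself.

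Compactness and continuity follow from the same estimate. Differentiating gives $|T[\theta]'(\xi)|=\big|\tfrac{1}{\psi(\xi)}\int_{0}^{\xi}\psi(s)\mathcal{Q}(\theta(s),s)\,ds\big|\le C\xi\le CT$, so $T(K)$ consists of uniformly Lipschitz functions; being equicontinuous and (by the previous step) uniformly bounded, it is relatively compact in $C[0,T]$ by Arzel\`a--Ascoli. Continuity of $T$ follows from the Carath\'eodory continuity of $\mathcal{Q}(\cdot,\xi)$ on $(a,b)$ together with dominated convergence, the dominating function being the constant $C$. With $K$ nonempty, closed and convex and $T:K\to K$ continuous and compact, the Schauder fixed-point theorem yields a fixed point $\theta\in K$, the required local solution.

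The hard part is the careful treatment of the singularity at $\xi=0$, where $\psi$ may vanish (as in the Lane--Emden case $\psi(\xi)=\xi^{2}$) and $\mathcal{P}$ may blow up. One must confirm that $\int_{0}^{t}\psi(s)\,ds$ is finite and that the outer integral converges despite $1/\psi(t)$ being unbounded; both are absorbed by the estimate $\tfrac{1}{\psi(t)}\int_{0}^{t}\psi(s)\,ds\le t$. One must also verify that the recovered solution genuinely attains $\theta'(0)=0$, which is exactly the content of $|T[\theta]'(\xi)|\le C\xi\to 0$, and that $\theta,\theta'$ are absolutely continuous on $(0,T]$ with the ODE holding almost everywhere, as the definition of a solution demands. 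Condition (3), $\int_{0}^{1}s\,\mathcal{P}(s)\,ds<\infty$, is precisely what keeps $\psi$ controlled near the origin and makes the whole programme rigorous.
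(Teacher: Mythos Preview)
Your approach is correct and is precisely the standard route: the paper does not give its own argument here but defers entirely to reference [21] (Biles--Robinson--Spraker), whose proof is exactly this Schauder fixed-point scheme on the integral reformulation. Your key estimate $\tfrac{1}{\psi(t)}\int_{0}^{t}\psi(s)\,ds\le t$ from the monotonicity of $\psi$ is the same device used there, and the integral representation you derive is the one the paper quotes as Lemma~5.12 for the subsequent uniqueness argument.
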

\begin{proof}
The full proof is quite detailed and is given in $\bm{[21]}$.
\end{proof}
We will however, consider the uniqueness problem in detail. First two preliminary lemmas are required.
\begin{lem}
The solution of the ODE
\begin{align}
\frac{1}{\psi(\xi)}\frac{d}{d\xi}(\psi(\xi)\theta^{\prime}_{i}(\xi))=-\mathcal{Q}(\theta_{i}(\xi),\xi),~i=1,2
\end{align}
has the integral representation
\begin{align}
\theta_{i}(\xi)=\alpha-\int_{0}^{\xi}\left(\int_{s}^{\xi}\frac{\psi(s)}{\psi(\xi)}d\xi
\right)\mathcal{Q}(\theta_{i}(s),s)ds
\end{align}
\end{lem}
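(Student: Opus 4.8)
The plan is to reduce the problem to two successive integrations of the self-adjoint form of the equation, followed by an exchange in the order of integration. First I would recast the ODE in divergence form by multiplying through by $\psi(\xi)$:
\begin{align}
\frac{d}{d\xi}\big(\psi(\xi)\theta_{i}^{\prime}(\xi)\big)=-\psi(\xi)\mathcal{Q}(\theta_{i}(\xi),\xi)\nonumber
\end{align}
Integrating from $0$ to $s$ and using the initial condition $\theta_{i}^{\prime}(0)=0$ together with the regularity of $\psi$ at the origin (in the Lane--Emden case $\psi(\xi)=\xi^{2}$, so $\psi(0)=0$; more generally $\lim_{\xi\downarrow 0}\psi(\xi)\theta_{i}^{\prime}(\xi)=0$ by the absolute continuity of $\theta_{i}^{\prime}$ assumed in the definition of a solution), the boundary term at the lower limit vanishes and one is left with
\begin{align}
\psi(s)\theta_{i}^{\prime}(s)=-\int_{0}^{s}\psi(\tau)\mathcal{Q}(\theta_{i}(\tau),\tau)\,d\tau\nonumber
\end{align}

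Next I would divide by $\psi(s)$ and integrate a second time from $0$ to $\xi$, invoking $\theta_{i}(0)=\alpha$, which gives the iterated integral
\begin{align}
\theta_{i}(\xi)=\alpha-\int_{0}^{\xi}\frac{1}{\psi(s)}\left(\int_{0}^{s}\psi(\tau)\mathcal{Q}(\theta_{i}(\tau),\tau)\,d\tau\right)ds\nonumber
\end{align}
The final step is to interchange the order of integration. The domain of the double integral is the triangle $\{(s,\tau):0\le\tau\le s\le\xi\}$; performing the swap so that $\tau$ becomes the outer variable and $s$ ranges over $[\tau,\xi]$ yields
\begin{align}
\theta_{i}(\xi)=\alpha-\int_{0}^{\xi}\left(\int_{\tau}^{\xi}\frac{\psi(\tau)}{\psi(s)}\,ds\right)\mathcal{Q}(\theta_{i}(\tau),\tau)\,d\tau\nonumber
\end{align}
Relabelling the outer variable $\tau\to s$ (so the inner integration is over a distinct dummy variable rather than the upper limit $\xi$) reproduces exactly the stated representation.

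The step requiring the most care is the vanishing of the boundary term at the origin, since $\psi$ degenerates there. I would justify $\lim_{\xi\downarrow 0}\psi(\xi)\theta_{i}^{\prime}(\xi)=0$ from the absolute continuity of $\theta_{i}$ and $\theta_{i}^{\prime}$ on $(0,T]$ demanded by the solution definition, combined with the fact that $\psi$ is nonnegative, continuous and bounded on $[0,1]$ established in the preceding lemma. The Fubini--Tonelli interchange is then legitimate because $\mathcal{Q}$ is bounded (the boundedness hypothesis on $\mathcal{Q}$) and both $\psi$ and $1/\psi$ are integrable over the triangular region, so the iterated integral converges absolutely and the two orders agree.
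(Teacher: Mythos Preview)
Your proof is correct and follows essentially the same route as the paper: integrate the divergence form once using the initial data to obtain $\psi(s)\theta_i'(s)=-\int_0^s\psi(\tau)\mathcal{Q}(\theta_i(\tau),\tau)\,d\tau$, integrate a second time, and then swap the order of integration. You have been more careful than the paper in justifying the vanishing boundary term at the origin and the Fubini--Tonelli interchange, but the underlying argument is identical.
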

\begin{proof}
From (-)
\begin{align}
\psi(\xi)\frac{d\theta(\xi)}{d\xi}=-\int_{0}^{\xi}\psi(\xi)\mathcal{Q}(\xi,\theta_{i}(\xi))ds
\end{align}
so that
\begin{align}
\frac{d\theta(\xi)}{d\xi}=-\frac{1}{\psi(\xi)}\int_{0}^{\xi}\psi(\xi)\mathcal{Q}(\xi,\theta_{i}(\xi))ds
\end{align}
Integrating once more for $\xi\in[0,T]$ gives
\begin{align}
&\theta(\xi)=\alpha-\int_{0}^{t}\frac{1}{\psi(\xi)}\left(\int_{0}^{\xi}
\psi(s)\mathcal{Q}(\xi,\theta(\xi))ds\right)d\xi\nonumber\\&
=\alpha-\int_{0}^{\xi}\left(\int_{s}^{\xi}\frac{\psi(s)}{\psi(\xi)}d\xi\right)\mathcal{Q}(\theta_{i}(s),s)ds
\end{align}
where the integration order has been changed.
\end{proof}
The next lemma is the basic Gronwall lemma
\begin{lem}
Let $f:\mathbf{R}\rightarrow\mathbf{R}$ and let $C_{1},C_{2}>0$. Then if
\begin{align}
f(x)\le C_{1}+C_{2}\int_{I}f(\bar{x})d\bar{x}
\end{align}
where $I=[x_{0},x]$ is an interval then one has the estimate or bound
\begin{align}
f(x)\le C_{1}\exp\left(C_{2}\int_{I}dx\right)=C_{1}\exp(C_{2}|x-x_{0}|)
\end{align}
and $f(x)\le 0$ if $C_{1}=0$. The result holds with $>$ replacing $\le $.
\end{lem}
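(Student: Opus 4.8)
The plan is to reduce the integral inequality to a differential inequality for an auxiliary function and then integrate that inequality by means of an integrating factor. First I would define
\begin{align}
g(x)=C_{1}+C_{2}\int_{x_{0}}^{x}f(\bar{x})\,d\bar{x},\nonumber
\end{align}
so that by hypothesis $f(x)\le g(x)$ for all $x\in I$, while $g(x_{0})=C_{1}$. Assuming $f$ is (locally) integrable so that the integral is well-defined and $g$ is absolutely continuous, the fundamental theorem of calculus gives $g^{\prime}(x)=C_{2}f(x)$ almost everywhere. Combining this with $f\le g$ and $C_{2}>0$ yields the differential inequality $g^{\prime}(x)\le C_{2}g(x)$.

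Next I would multiply through by the strictly positive integrating factor $\exp(-C_{2}(x-x_{0}))$ to obtain
\begin{align}
\frac{d}{dx}\Big[g(x)\exp\big(-C_{2}(x-x_{0})\big)\Big]
=\big(g^{\prime}(x)-C_{2}g(x)\big)\exp\big(-C_{2}(x-x_{0})\big)\le 0,\nonumber
\end{align}
so that the bracketed quantity is non-increasing on $I$. Evaluating at the endpoints $x_{0}$ and $x$ and using $g(x_{0})=C_{1}$ gives $g(x)\exp(-C_{2}(x-x_{0}))\le C_{1}$, hence $g(x)\le C_{1}\exp(C_{2}(x-x_{0}))$. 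Since $f\le g$ and $x\ge x_{0}$ so that $x-x_{0}=|x-x_{0}|$, the claimed bound $f(x)\le C_{1}\exp(C_{2}|x-x_{0}|)$ follows at once.

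Finally, the case $C_{1}=0$ is immediate from the bound just derived, since the right-hand side collapses to $0\cdot\exp(C_{2}|x-x_{0}|)=0$, giving $f(x)\le 0$. The variant with $>$ replacing $\le$ throughout is obtained by reversing every inequality in the argument above, the integrating-factor step itself being insensitive to the direction.

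I anticipate the main obstacle to be technical rather than conceptual: one must justify that $g$ is differentiable (the fundamental-theorem-of-calculus step) under whatever minimal regularity is actually assumed of $f$, and one must keep careful track of the orientation of the interval $I=[x_{0},x]$, since both the monotonicity conclusion and the identification $x-x_{0}=|x-x_{0}|$ rely on $x\ge x_{0}$; the analogous statement for $x<x_{0}$ would require integrating in the opposite direction.
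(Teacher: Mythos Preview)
Your argument is the standard integrating-factor proof of Gronwall's inequality and is correct. The paper itself does not supply a proof of this lemma at all; it merely states it as ``the basic Gronwall lemma'' and then invokes it in the subsequent uniqueness theorem, so there is nothing to compare against beyond noting that your derivation fills in what the paper leaves as a citation of a well-known fact.
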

The main theorem on uniqueness of solutions can now be stated and proved.
\begin{thm}
Let the conditions of definition (5.10) hold and let $\mathcal{Q}$ be Lipschitz on $[a,b]$. Then the nonlinear elliptic boundary value problem (5.47) has a unique solution.
\end{thm}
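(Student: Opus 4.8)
The plan is to prove uniqueness by combining the integral representation established above with the Gronwall inequality; existence is already supplied by the preceding existence theorem, so only uniqueness remains to be shown. Suppose $\theta_1$ and $\theta_2$ are two solutions of the boundary value problem (5.47)--(5.48) on a common interval $[0,T]$, both satisfying $\theta_i(0)=\alpha$ and $\theta_i'(0)=0$. By the integral-representation lemma each solution satisfies
\begin{align}
\theta_i(\xi)=\alpha-\int_0^\xi\left(\int_s^\xi\frac{\psi(s)}{\psi(\eta)}\,d\eta\right)\mathcal{Q}(\theta_i(s),s)\,ds,\qquad i=1,2,\nonumber
\end{align}
where I relabel the inner dummy variable as $\eta$ for clarity. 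Subtracting the two representations and taking absolute values, I would obtain
\begin{align}
|\theta_1(\xi)-\theta_2(\xi)|\le\int_0^\xi\left(\int_s^\xi\frac{\psi(s)}{\psi(\eta)}\,d\eta\right)\bigl|\mathcal{Q}(\theta_1(s),s)-\mathcal{Q}(\theta_2(s),s)\bigr|\,ds.\nonumber
\end{align}

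The core of the argument is then to bound the two factors in the integrand uniformly. First, since $\psi$ is nonnegative and nondecreasing (as established in the lemma defining $\psi$, a consequence of $\mathcal{P}\ge 0$), for $s\le\eta$ one has $\psi(s)/\psi(\eta)\le 1$, so the inner kernel is controlled by $\int_s^\xi d\eta=\xi-s\le T$ uniformly on $[0,T]$. Second, the Lipschitz hypothesis on $\mathcal{Q}$ with constant $L$ gives $|\mathcal{Q}(\theta_1(s),s)-\mathcal{Q}(\theta_2(s),s)|\le L\,|\theta_1(s)-\theta_2(s)|$. Inserting both bounds produces the homogeneous linear integral inequality
\begin{align}
|\theta_1(\xi)-\theta_2(\xi)|\le LT\int_0^\xi|\theta_1(s)-\theta_2(s)|\,ds.\nonumber
\end{align}

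Finally I would apply the Gronwall lemma to $f(\xi)=|\theta_1(\xi)-\theta_2(\xi)|$ with $C_1=0$ and $C_2=LT$; the $C_1=0$ clause of that lemma forces $f(\xi)\le 0$, and since $f\ge 0$ by construction we conclude $\theta_1\equiv\theta_2$ on $[0,T]$, which is the asserted uniqueness. The one step requiring genuine care is the uniform bound on the singular kernel $\int_s^\xi\psi(s)/\psi(\eta)\,d\eta$: it is precisely here that the positivity and monotonicity of $\psi$, in turn inherited from the conditions $\mathcal{P}\ge 0$ and $\int_0^1 s\,\mathcal{P}(s)\,ds<\infty$ imposed in the proposition, become indispensable, since these tame the otherwise problematic behaviour of $\mathcal{P}$ near $\xi=0$ and prevent the kernel from diverging. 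A short continuation argument over the right-maximal interval of existence then upgrades this local uniqueness to the full interval on which the solution is defined.
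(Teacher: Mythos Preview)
Your proof is correct and follows essentially the same route as the paper: subtract the two integral representations, bound the kernel $\int_s^\xi \psi(s)/\psi(\eta)\,d\eta$ by $\xi-s\le T$ using the monotonicity of $\psi$, apply the Lipschitz hypothesis, and invoke Gronwall with $C_1=0$. Your write-up is in fact a bit cleaner, making explicit why the conditions on $\mathcal{P}$ ensure the kernel bound works near the singular point $\xi=0$.
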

\begin{proof}
If $\mathcal{Q}$ is Lipschitz on $\theta(\xi)$ on an interval $[a,b]$ then $\exists$ constant $\Lambda>0$ such that
\begin{align}
|\mathcal{Q}(\xi,\theta_{1}(\xi)-\mathcal{Q}(\xi_{2},\theta_{2}(\xi)|\le \Lambda|\theta_{1}(\xi)-\theta_{2}(\xi)|
\end{align}
for all $\theta_{1},\theta_{2}\in(a,b)$ and $\xi\in[0,1]$, where $\theta_{1},\theta_{2}$ are any two solutions to (5.47) on $[0,T]$. If $s\le \xi\le\zeta$ then $\psi(s)/\psi(\xi)|\ll 1$. Now using the integral representation (5.57)
\begin{align}
|\theta_{2}(\zeta)-\theta_{1}(\zeta)|&=\left|
\int_{0}^{\zeta}\left(\int_{s}^{\zeta}\frac{\psi(s)}{\psi(\xi)}d\xi
\right)[\mathcal{Q}(\theta_{2}(s),s)-\mathcal{Q}(s,\theta(s))ds\right|\nonumber\\&
=\int_{0}^{\zeta}\left(\int_{s}^{\zeta}d\xi
\right)\big|\mathcal{Q}(\theta_{2}(s),s)-\mathcal{Q}(\theta_{1}(s),s)\big|ds\nonumber\\&\le
\int_{0}^{\zeta}\left(|\zeta-s|
\right)\big|\mathcal{Q}(\theta_{2}(s),s)-\mathcal{Q}(\theta_{1}(s),s)\big|ds\nonumber\\&
\le \Lambda\int_{0}^{\zeta}\left(|\zeta-s|
\right)\big|\theta_{2}(s)-\theta_{1}(s)\big|ds\nonumber\\&
\le \Lambda\int_{0}^{\zeta} B \big|\theta_{2}(s)-\theta_{1}(s)\big|ds
\end{align}
Applying the Gronwall Lemma to
\begin{align}
|\theta_{2}(\zeta)-\theta_{1}(\zeta)|\le
\int_{0}^{\zeta}C_{2}\big|\theta_{2}(s)-\theta_{1}(s)\big|ds
\end{align}
where $C_{2}=\lambda B$ and $C_{1}=0$ gives $|\theta_{2}(\zeta)-\theta_{1}(\zeta)|\le 0$
so that one must have $\theta_{1}(\zeta)=\theta_{2}(\zeta)$ for any two solutions. Hence, the solution to the Lane-Emden nonlinear elliptic boundary value problem is unique.
\end{proof}
\subsection{Transformations of the LE equations and a dynamical systems perspective}
The LE equations can be transformed to alternative forms which are useful for their solution $\bm{[1,12]}$. They can be recast into an autonomous 1st-order system of ODEs much like a dynamical system.
\begin{lem}
The transformation $\theta=\chi/\xi$ reduces the LE equations to
\begin{align}
\frac{d^{2}\xi}{d\xi^{2}}=-\frac{\chi^{n}}{\xi^{n-1}}
\end{align}
The Kelvin transform $x=1/\xi$ reduces the LE equation to
\begin{align}
x^{4}\frac{d{2}\theta}{dx^{2}}=-|\theta|^{n}
\end{align}
with the solution
\begin{align}
\theta=\alpha x^{\eta},~~\eta=\frac{2}{n-1},~~\alpha
=\left(\frac{2(n-3)}{(n-1)}\right)^{\frac{1}{n-1}},~~n>3
\end{align}
\end{lem}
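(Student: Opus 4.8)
The three assertions are all established by direct substitution into the Lane--Emden equation written in the expanded form $\theta''(\xi) + \tfrac{2}{\xi}\theta'(\xi) + |\theta(\xi)|^n = 0$. No existence or regularity input is needed here; the entire content is a chain-rule bookkeeping exercise, so I expect no genuine obstacle beyond keeping track of signs and of the absolute value $|\theta|^n$.

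For the first reduction I would set $\theta = \chi/\xi$ and compute $\theta' = \chi'/\xi - \chi/\xi^2$ together with $\theta'' = \chi''/\xi - 2\chi'/\xi^2 + 2\chi/\xi^3$. Substituting into $\theta'' + \tfrac{2}{\xi}\theta'$, the terms $\mp 2\chi'/\xi^2$ and $\pm 2\chi/\xi^3$ cancel in pairs, leaving exactly $\chi''/\xi$. Since $|\theta|^n = |\chi|^n/\xi^n$, the equation collapses to $\chi''/\xi = -|\chi|^n/\xi^n$, that is $\chi''(\xi) = -|\chi|^n/\xi^{n-1}$, which is the claimed reduced form (the statement's $d^2\xi/d\xi^2$ is a typographical slip for $d^2\chi/d\xi^2$).

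For the Kelvin transform I would put $x = 1/\xi$, so that $dx/d\xi = -x^2$. Then $d\theta/d\xi = -x^2\,d\theta/dx$, and differentiating once more gives $d^2\theta/d\xi^2 = x^4\,d^2\theta/dx^2 + 2x^3\,d\theta/dx$. Meanwhile $\tfrac{2}{\xi}\,d\theta/d\xi = 2x\cdot(-x^2\,d\theta/dx) = -2x^3\,d\theta/dx$, so the two first-derivative contributions cancel and $\theta'' + \tfrac{2}{\xi}\theta' = x^4\,d^2\theta/dx^2$. The Lane--Emden equation therefore becomes $x^4\,d^2\theta/dx^2 = -|\theta|^n$, as asserted.

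Finally, to verify the power-law solution I would insert the ansatz $\theta = \alpha x^\eta$ (with $\alpha,x>0$, so $|\theta|^n = \alpha^n x^{n\eta}$) into $x^4\theta'' = -|\theta|^n$. The left side is $\alpha\eta(\eta-1)x^{\eta+2}$, so matching exponents forces $\eta+2 = n\eta$, i.e. $\eta = 2/(n-1)$, while matching coefficients forces $\alpha^{n-1} = -\eta(\eta-1) = \eta(1-\eta)$. Substituting $\eta = 2/(n-1)$ and $1-\eta = (n-3)/(n-1)$ yields $\alpha^{n-1} = 2(n-3)/(n-1)^2$, hence $\alpha = \big(2(n-3)/(n-1)^2\big)^{1/(n-1)}$. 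The restriction $n>3$ is precisely what makes this coefficient positive, so that a real positive power-law solution exists; for $n\le 3$ none does. I would flag that the displayed constant in the statement appears to omit one factor of $(n-1)$ in the denominator, the computation above giving $(n-1)^2$ rather than $(n-1)$.
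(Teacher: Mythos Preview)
Your proposal is correct and follows the same route as the paper, which offers only the one-line remark ``It is easy to show that (5.66) is a solution of (5.65)''; you simply carry out the chain-rule verification in full. Your two flagged typos are both genuine: the displayed $d^{2}\xi/d\xi^{2}$ should read $d^{2}\chi/d\xi^{2}$, and the coefficient should indeed be $\alpha=\bigl(2(n-3)/(n-1)^{2}\bigr)^{1/(n-1)}$, the standard Emden singular solution.
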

\begin{proof}
It is easy to show that (5.66) is a solution of (5.65).
\end{proof}
\begin{lem}
The substitution
\begin{align}
\theta(x)=\beta x^{n}w(x)
\end{align}
transforms $x^{4}\frac{d{2}\theta}{dx^{2}}=-|\theta|^{n}$ to the form
\begin{align}
x^{2}\frac{d^{2}w}{dx^{2}}+2\eta x\frac{d w}{dx}+\eta(\eta-1)w+|w|^{n}=0
\end{align}
\end{lem}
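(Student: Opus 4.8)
The plan is to prove this by direct substitution, since the assertion is merely that a prescribed change of dependent variable carries the Kelvin–transformed Lane–Emden equation $x^{4}\theta''=-|\theta|^{n}$ of the preceding lemma into the stated autonomous form. The only substantive content is that the prefactor exponent must be exactly the value $\eta=2/(n-1)$ already singled out in that lemma; every other step is just the product rule. I would therefore insert $\theta(x)=\beta x^{\eta}w(x)$, the exponent being $\eta$ (the value fixed previously), rather than treating $\eta$ as independent.

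First I would expand $\theta''$ by the product rule, producing three terms carrying the powers $x^{\eta-2}$, $x^{\eta-1}$, $x^{\eta}$ multiplying $w$, $w'$, $w''$ respectively. Multiplying through by $x^{4}$ and then dividing the whole equation by the common factor $\beta x^{\eta+2}$ reduces the left-hand side precisely to $x^{2}w''+2\eta xw'+\eta(\eta-1)w$, while the right-hand side, coming from $-|\theta|^{n}=-\beta^{n}x^{\eta n}|w|^{n}$ (valid for $x>0$, $w>0$), becomes $-\beta^{n-1}x^{\eta n-\eta-2}|w|^{n}$.

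The key step — essentially the only thing to be checked rather than merely computed — is that the residual power of $x$ on the right must vanish, i.e. $\eta n-\eta-2=0$. This is exactly the relation $\eta(n-1)=2$, so the exponent $\eta=2/(n-1)$ inherited from the previous lemma is precisely what strips the nonlinear term of its explicit $x$-dependence. Once this holds the nonlinearity is $-\beta^{n-1}|w|^{n}$; absorbing the harmless scale constant (for instance by taking $\beta^{n-1}=1$) and transposing it to the left yields
\begin{align}
x^{2}\frac{d^{2}w}{dx^{2}}+2\eta x\frac{dw}{dx}+\eta(\eta-1)w+|w|^{n}=0,\nonumber
\end{align}
which is the asserted form.

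I anticipate no real obstacle: the argument is one differentiation followed by matching a single exponent. The only points deserving care are the bookkeeping of the constant $\beta$ and the use of $|\theta|^{n}=\beta^{n}x^{\eta n}|w|^{n}$, which presupposes positivity on the relevant domain $x>0$ — consistent with the positivity conventions maintained throughout the Lane–Emden analysis, and with $n>3$ so that $\eta<1$ and $\eta(\eta-1)<0$, the sign needed for a positive constant solution $w=\text{const}$ to exist and thereby recover the power-law solution $\theta=\alpha x^{\eta}$ of the previous lemma.
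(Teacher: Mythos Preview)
Your proposal is correct and follows essentially the same route as the paper: compute $\theta''$ by the product rule and substitute into $x^{4}\theta''=-|\theta|^{n}$, then strip the common power of $x$. The paper's proof is terser (it merely displays the derivatives and says ``substituting gives the result''), whereas you make explicit the crucial exponent-matching $\eta(n-1)=2$ and the handling of the constant $\beta$ that the paper leaves implicit.
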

\begin{proof}
The derivatives of (5.67) are
\begin{align}
&\frac{d\theta}{dx}=\beta x^{\eta}\frac{dw}{dx}+
\eta\beta x^{\eta-1}w(x)\\&
\frac{d^{2}\theta}{dx^{2}}=\beta(x^{\eta}\frac{d^{2}w}{dx^{2}}+2\eta x^{\eta-1}\frac{d w}{dx}+\eta(\eta-1)x^{\eta-2}w
\end{align}
Substituting into (5.65) then gives (5.68)
\end{proof}
\begin{lem}
Let
\begin{align}
x=\frac{1}{\xi}=exp(w),~~w=\log x=-\log\xi,~~\lambda=1
\end{align}
then (5.68) becomes
\begin{align}
\frac{d^{2}w}{dx^{2}}+\left(\frac{5-n}{n-1}\right)\frac{dw}{dz}-
\frac{2(n-3)}{(n-1)^{2}}w+|w|^{n}=0
\end{align}
or
\begin{align}
\frac{d^{2}w}{dx^{2}}
=-2\Xi\left(w,\frac{dw}{dz}\right)
\end{align}
where
\begin{align}
\Xi\left(w,\frac{dw}{dz}\right)=\frac{1}{2}\left(-\frac{n-5}{n-1}\right)
\frac{dw}{dz}+
\frac{2(3-n)}{(n-1)^{2}}w+|w|^{n}
\end{align}
\end{lem}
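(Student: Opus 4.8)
The plan is to treat the statement as a change of independent variable applied to equation (5.68), namely the logarithmic substitution $z = \log x$ (equivalently $x = e^{z}$), keeping $w$ as the dependent variable now regarded as a function of $z$. The aim is to convert the Euler-type equation (5.68), whose coefficients carry explicit powers of $x$, into an autonomous constant-coefficient equation in $z$ suited to the dynamical-systems viewpoint of this subsection.

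First I would record how derivatives transform. Since $\frac{dz}{dx} = \frac{1}{x}$, the chain rule gives $\frac{dw}{dx} = \frac{1}{x}\frac{dw}{dz}$, hence $x\,\frac{dw}{dx} = \frac{dw}{dz}$. Differentiating once more and applying the chain rule again yields
\begin{align}
\frac{d^{2}w}{dx^{2}} = \frac{1}{x^{2}}\left(\frac{d^{2}w}{dz^{2}} - \frac{dw}{dz}\right),
\end{align}
so the two $x$-weighted terms of (5.68) become $x^{2}\frac{d^{2}w}{dx^{2}} = \frac{d^{2}w}{dz^{2}} - \frac{dw}{dz}$ and $2\eta\,x\,\frac{dw}{dx} = 2\eta\,\frac{dw}{dz}$. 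The decisive feature is that every explicit power of $x$ cancels; this is exactly why the logarithmic substitution linearises the scaling structure of an Euler equation.

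Substituting these into (5.68) collapses it to the constant-coefficient form
\begin{align}
\frac{d^{2}w}{dz^{2}} + (2\eta - 1)\frac{dw}{dz} + \eta(\eta-1)w + |w|^{n} = 0.
\end{align}
It then remains only to insert the value $\eta = \frac{2}{n-1}$ fixed in the preceding lemma and simplify the two coefficients, namely $2\eta - 1 = \frac{4-(n-1)}{n-1} = \frac{5-n}{n-1}$ and $\eta(\eta-1) = \frac{2}{n-1}\cdot\frac{3-n}{n-1} = \frac{2(3-n)}{(n-1)^{2}} = -\frac{2(n-3)}{(n-1)^{2}}$, which reproduces the coefficients of the first displayed form. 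The alternative $-2\Xi$ form is then merely an algebraic rearrangement, transferring the first-order, zeroth-order and nonlinear terms to the right and collecting them into $\Xi$.

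There is no deep obstacle here---the content is a routine two-step chain-rule computation---so the real care lies in the bookkeeping. The chief pitfall is the notational clash in the statement: the symbol $w$ is used both for the dependent variable of (5.68) and, through $w = \log x$, for the new independent variable, while the derivatives in the conclusion are written with respect to $z$ and the leading term is printed as $\frac{d^{2}w}{dx^{2}}$. I would therefore fix one unambiguous name, $z = \log x$, for the new independent variable at the outset and confirm each coefficient against the stated equation. One should also re-examine the overall factors in $\Xi$, since reading the alternative form as $\frac{d^{2}w}{dz^{2}} = -2\Xi$ forces a factor of $\tfrac{1}{2}$ on the $w$- and $|w|^{n}$-terms that is not visibly present in the printed $\Xi$.
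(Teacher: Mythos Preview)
Your approach is correct and is precisely the standard one: the paper itself offers no proof of this lemma, simply stating the result and moving directly to the dynamical-systems corollary, so your chain-rule computation with $z=\log x$ and the identification $2\eta-1=(5-n)/(n-1)$, $\eta(\eta-1)=-2(n-3)/(n-1)^2$ fills the gap exactly as intended. Your observations about the notational overload on $w$, the misprinted $d^{2}w/dx^{2}$ in place of $d^{2}w/dz^{2}$, and the missing $\tfrac{1}{2}$ on the last two terms of $\Xi$ are all accurate diagnoses of typos in the stated lemma.
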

\begin{cor}
Equation (5.72) can be recast as a first-order system of autonomous ODES, much like a dynamical system so that
\begin{align}
&\frac{dw}{dz}=\mathcal{Q}\\&
\frac{d\mathcal{Q}}{dz}
=-2\Xi\left(w,\frac{dw}{dz}\right)\equiv
-2\Xi\left(w,\mathcal{Q}\right)
\end{align}
and the critical points then lie along
\begin{align}
\Xi(w,0)=0
\end{align}
These can be used to study stability. (REF)
\end{cor}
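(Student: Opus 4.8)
The plan is to recognise the asserted system as nothing more than the standard phase-plane reduction of the \emph{autonomous} second-order equation (5.72); the entire analytic content was already spent in the preceding lemmas, and what remains is the elementary device of adjoining the first derivative as a new coordinate. The one fact on which everything rests is that the right-hand side $\Xi$ depends only on $w$ and $dw/dz$ and carries no explicit dependence on the independent variable $z$, so that (5.72) is genuinely autonomous.

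First I would introduce the auxiliary variable $\mathcal{Q}:=dw/dz$. By definition this is exactly the first equation of the system. Differentiating once more in $z$ and substituting (5.72) gives
\[
\frac{d\mathcal{Q}}{dz}=\frac{d^{2}w}{dz^{2}}=-2\,\Xi\!\left(w,\frac{dw}{dz}\right)=-2\,\Xi(w,\mathcal{Q}),
\]
where the last equality merely re-expresses $dw/dz$ as $\mathcal{Q}$ inside $\Xi$. This yields the second equation, and the pair $(w,\mathcal{Q})$ is now governed by a vector field on the plane whose components depend on $(w,\mathcal{Q})$ alone --- precisely the autonomous first-order system claimed.

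Finally, to locate the critical points I would set both components of the field to zero. The condition $dw/dz=0$ forces $\mathcal{Q}=0$, whereupon $d\mathcal{Q}/dz=0$ reduces to $-2\,\Xi(w,0)=0$, i.e.\ $\Xi(w,0)=0$, which is the equilibrium locus asserted. Linearising the field about such a fixed point (computing the Jacobian of $(\mathcal{Q},-2\Xi)$ in the variables $(w,\mathcal{Q})$) then supplies the stability classification referred to at the close of the statement.

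There is no genuine obstacle here: the difficulty of removing the explicit independent-variable dependence was dispatched by the chain $\theta=\chi/\xi$, the Kelvin transform $x=1/\xi$, the scaling $\theta=\beta x^{\eta}w$, and $x=e^{z}$ in the earlier lemmas. The only point deserving a second glance is the confirmation that $\Xi$ is free of $z$, which is guaranteed by that prior autonomisation; with that in hand the reduction is purely formal.
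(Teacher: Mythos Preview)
Your argument is correct and is exactly the standard phase-plane reduction the paper has in mind; indeed the paper states this corollary without proof, treating the passage from (5.72) to the first-order system and the identification of critical points as immediate. There is nothing to add or to compare.
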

\subsection{explicit solutions}
The Lane-Emden system $LE_{n}$  can be solved analytically $\bm{[1,2,12]}$ for three values of $\gamma=1+\frac{1}{n}$, namely $n=0,1,5$.
\begin{itemize}
\item If $n=0$ or $\gamma=\infty$ the LE equation is linear and homogenous with solution
$\theta(\xi)=-\tfrac{1}{6}\xi^{2}$ plus any linear combination of $\xi$ and $1$. Then $LE_{0}$ is
\begin{align}
\frac{1}{\xi^{2}}\frac{d}{d\xi}\left(\xi^{2}\frac{d\theta(\xi)}{d\xi}\right)+1=0
\end{align}
Rearranging and integrating once
\begin{align}
\xi^{2}\frac{d\theta(\xi)}{d\xi}=c_{1}-\frac{1}{3}\xi^{3}
\end{align}
Diving through by $\xi^{2}$ and integrating once more
\begin{align}
\theta(\xi)=c_{o}-\frac{c_{1}}{\xi}-\frac{1}{6}\xi^{2}
\end{align}
The condition $\theta(0)=1$ fixes the solution to be $\theta(\xi)=1-\tfrac{1}{6}\xi^{2}$ with $\theta(0)=\sqrt{6}$ and $\xi_{1}^{2}|d\theta(\xi)/d\xi|_{\xi=\xi_{1}}=-3\sqrt{6}$.
So $c_{0}=1,c_{1}=0$.
\item If $\gamma=2$ then $n=1$ so that
\begin{align}
\frac{1}{\xi^{2}}\frac{d}{d\xi}\left(\xi^{2}\frac{d\theta(\xi)}{d\xi}\right)+\theta=0
\end{align}
Using a power series ansatz $\theta(\xi)=\sum_{n=0}^{\infty}a_{n}\xi^{n}$ leads to  recurrence relation $a_{n+2}=-a_{n}/(n+3)(n+2)$. The solution is
\begin{align}
\theta(\xi)=\frac{a}{\xi}\sin\xi+\tfrac{b}{\xi}\cos\xi
\end{align}
    The condition $\theta(0)=1$ fixes the solution to be $\theta(\xi)=(\sin\xi)/\xi$. with $\xi_{1}=\pi$ and $\xi_{1}^{2}|d\theta(\xi)/d\xi|_{\xi=\xi_{1}}=-\pi$
\item If $\gamma=6/5$ or $n=5$ the solution is $\theta(\xi)=(1+\tfrac{1}{3}\xi^{2})^{-1/2}$ with $\theta(\xi)\rightarrow 0$ as $\xi\rightarrow\infty$ so $\xi_{1}=\infty$ and
     $\xi_{1}^{2}|d\theta(\xi)/d\xi|_{\xi=\xi_{1}}\rightarrow-\sqrt{3}$ for $t\rightarrow\infty$. From (5.41) and (5.42) such polytropic stars have infinite radius but finite mass.
For $n=5$, equation (-) reduces to
\begin{align}
\frac{d^{2}w}{dz^{2}}=\frac{1}{4}w(1-w^{4})
\end{align}
Multiply both sides by $dw/dz$ so that
\begin{align}
\frac{1}{2}\frac{d}{dz}\left(\left(\frac{dw}{dz}\right)^{2}\right)=\frac{1}{4}w(1-w^{4})
\frac{dw}{dz}
\end{align}
Integrating
\begin{align}
\left(\frac{dw}{dz}\right)^{2}=\frac{1}{2}\int(w+w^{5})dw
=\frac{1}{4}w^{2}-\frac{1}{12}w^{6}+C
\end{align}
then (choosing the negative so that $w\rightarrow\infty$)
\begin{align}
\frac{dw}{w(1-\frac{1}{3}w^{4})^{1/2}}=-\frac{1}{2}dt
\end{align}
Now make the substitution
\begin{align}
\frac{1}{3}w^{4}=\sin^{2}\zeta
\end{align}
First integrate it so that
\begin{align}
\frac{4}{3}w^{3}dw=2\sin\zeta\cos\zeta d\zeta
\end{align}
then it follows that
\begin{align}
\frac{dw}{w}
=\frac{1}{2} \frac{\cos\zeta}{\sin\zeta}d\zeta
\end{align}
from (5.88) and (5.89) we obtain
\begin{align}
cosec\zeta d\zeta=-dz
\end{align}
Integrating
\begin{align}
\int cosec\zeta d\zeta= \log[\sin(\tfrac{1}{2}\zeta)]-\log[\cos(\tfrac{1}{2}\zeta)]
=\log[\tan(\frac{1}{2}\zeta]=-\int dz
\end{align}
so
\begin{align}
\tan(\frac{1}{2}\zeta)=C\exp(-z)
\end{align}
Then
\begin{align}
\frac{1}{3}w^{4}=
\frac{4\tan^{2}(\frac{1}{2}\zeta)}{1+\tan^{2}(\frac{1}{2}\zeta))^{2}}
\end{align}
which is
\begin{align}
w=\left(\frac{12\tan^{2}(\frac{1}{2}\zeta)}{1+\tan^{2}(\frac{1}{2}\zeta))^{2}}\right)^{1/4}
\end{align}
Using (5.92)
\begin{align}
w= \left(\frac{12C^{2}\exp(-2z)}{1+C^{2}\exp(-2z))^{2}}\right)^{1/4}
\end{align}
then
\begin{align}
\theta(\xi)=\left(\frac{3C^{2}}{1+C^{2}\xi^{2})^{2}}\right)^{1/4}
\end{align}
Since $\theta(-)=1$ it follows that $C^{2}=1/3$ so the solution is
\begin{align}
\theta(\theta)=\left(\frac{1}{1+\frac{1}{3}\xi^{2}}\right)^{1/4}
\end{align}
\item Other values of n or $\gamma$, also the most interesting and astro-physically relevant, require numerical solutions. The case $ \tfrac{1}{2}<\gamma<1$ or $3<\gamma<2$ can model neutron stars. The case $\gamma=5/3$ or $n=3/2$ describes stars with convective cores such as red giants and brown dwarfs and stars with perfect gas compositions.
 \item The case $\gamma=1$ or $n=\infty$ corresponds to an isothermal gas sphere and will be considered separately.
\end{itemize}
\subsection{The isothermal gas sphere again}
The isothermal gas sphere was previously discussed. Here, it is reprised again for the case of a polytropic gas and can be described by a Lane-Emden type equation known as the Chandrasekhar equation $\bm{[1]}$.
\begin{lem}
Let a spherical region $\mathbf{B}(0,R)$ of radius R, centre zero, support a polytropic gas
with pressure $p(r)$ and density $\rho(r)$ and polytropic index $\gamma=1$ or $n=\infty$. Then $p=\mathsf{K}|\rho|^{\gamma}=\mathsf{K}|\rho|^{1+\frac{1}{n}}$. The (Poisson)equation of hydrostatic equilibrium of the isothermal sphere is then given by (2.80) and (2.81) as
\begin{align}
\Delta\log(\rho)=\frac{1}{r^{2}}\frac{d}{dr}\left(r^{2}\frac{d}{dr}(\log\rho)\right)=-\frac{4\pi \mathscr{G}\rho(r)}{\mathsf{K}}
\end{align}
or equivalently
\begin{align}
\Delta\Phi(r)=\frac{1}{r^{2}}\frac{d}{dr}\left(r^{2}\frac{d\Phi(r)}{dr}\right)=
4\pi\mathscr{G}\rho_{c}\exp\left(-\frac{\Phi(r)}{\mathsf{K}}\right)
\end{align}
and there is a singular solution
\begin{align}
\rho(r)=\frac{A}{2\pi \mathscr{G}r^{2}}
\end{align}
\end{lem}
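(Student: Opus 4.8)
The plan is to exploit the fact that the case $\gamma=1$ of the polytropic law $p=\mathsf{K}|\rho|^{\gamma}$ degenerates to the \emph{linear} relation $p=\mathsf{K}\rho$, which is formally identical to the isothermal perfect-gas law $p=\mathscr{R}\rho$ already treated in Lemma 2.21, with the polytropic constant $\mathsf{K}$ now playing the role of the gas constant $\mathscr{R}$. Consequently the whole argument reduces to that of Lemma 2.21 under the replacement $\mathscr{R}\mapsto\mathsf{K}$, so the only genuinely new content to verify is the explicit singular solution.

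First I would recover the Poisson form (the first displayed identity) directly from the Emden equation (2.32). Substituting $p=\mathsf{K}\rho$ and using $\rho^{-1}\rho'=(\log\rho)'$ gives
\begin{align}
\frac{1}{\rho(r)}\frac{dp(r)}{dr}=\mathsf{K}\,\frac{1}{\rho(r)}\frac{d\rho(r)}{dr}=\mathsf{K}\,\frac{d}{dr}\log\rho(r),\nonumber
\end{align}
so that pulling the constant $\mathsf{K}$ out of the outer derivative in (2.32) and dividing through yields $\Delta\log\rho=-4\pi\mathscr{G}\rho/\mathsf{K}$ immediately.

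Next, for the Poisson--Boltzmann form I would integrate the FEHE in its gradient form $\rho^{-1}\,dp/dr=-d\Phi/dr$. With $p=\mathsf{K}\rho$ this reads $\mathsf{K}\,\tfrac{d}{dr}\log\rho=-\tfrac{d\Phi}{dr}$; integrating outward from the centre with $\Phi(0)=0$ gives $\mathsf{K}\log(\rho(r)/\rho_{c})=-\Phi(r)$, i.e. the Boltzmann profile $\rho(r)=\rho_{c}\exp(-\Phi(r)/\mathsf{K})$. Inserting this into the Poisson equation $\Delta\Phi=4\pi\mathscr{G}\rho$ then produces $\Delta\Phi=4\pi\mathscr{G}\rho_{c}\exp(-\Phi/\mathsf{K})$.

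Finally, the singular solution follows from the power-law ansatz $\rho(r)=A/r^{2}$ in the Poisson form: since $\log\rho(r)$ differs from $-2\log r$ only by a constant, the left-hand side equals $-2/r^{2}$ while the right-hand side equals $-4\pi\mathscr{G}A/(\mathsf{K}r^{2})$, and matching coefficients forces $A=\mathsf{K}/(2\pi\mathscr{G})$, reproducing $\rho(r)=\mathsf{K}/(2\pi\mathscr{G}r^{2})$. I expect no substantive obstacle: the computation is routine, and the only points demanding care are the bookkeeping of the identity $\rho^{-1}\rho'=(\log\rho)'$ and the consistency of the sign conventions between the Poisson equation $\Delta\Phi=4\pi\mathscr{G}\rho$ and the Emden form, together with pinning down the constant $A=\mathsf{K}/(2\pi\mathscr{G})$ in the stated singular solution.
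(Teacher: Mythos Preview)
Your proposal is correct and mirrors the paper's own treatment: the paper does not supply a separate proof here but simply invokes the earlier Lemma 2.21 (equations (2.80)--(2.83)) with the identification $\mathscr{R}\mapsto\mathsf{K}$, exactly as you observe, and the singular solution is the analogue of Corollary (2.92). Your explicit verification of the constant $A=\mathsf{K}/(2\pi\mathscr{G})$ via the power-law ansatz is a welcome addition, since the paper leaves that step implicit.
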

Using the substitutions
\begin{align}
\rho=\lambda e^{-\psi},~~r=a\xi=\left| \frac{A}{4\pi G\lambda}\right|^{1/2}\xi
\end{align}
where $\lambda\in\mathbf{R}$, then the ODE (5.98) reduces to the dimensionless Chandrasekhar equation
\begin{align}
\frac{1}{\xi^{2}}\frac{d}{d\xi}\left(\xi^{2}\frac{d\psi(\xi)}{d\xi}\right)=e^{-\psi(\xi)}
\end{align}
which is the Lane-Emden equation for $\gamma=1$ or $n=\infty$. One can choose $\lambda=\rho_{c}$ with $\lim_{\xi\rightarrow 0}\frac{d\psi(0)}{d\xi}=0$ if $\psi(\xi)$ bounded. The boundary conditions are $\psi(0)=0,\tfrac{d\psi(\xi)}{d\xi}=0,\xi=0$. The analysis of the CE is then similar to that of the LE equation.
The CE can be transformed in various ways $\bm{[1]}$ using 
\begin{align}
X=\frac{1}{\lambda}
\end{align}
then the CE becomes
\begin{align}
X^{4}\frac{d^{2}\psi(X)}{dX^{2}}=\exp(-\psi(X))
\end{align}
which has a solution
\begin{align}
\psi(X)=-\log2-2\log X
\end{align}
The Emden transform then introduces a variable Z
\begin{align}
Z=-\psi-2\log X
\end{align}
so that
\begin{align}
\frac{d\psi(X)}{dX}=-\frac{2}{X}-\frac{dZ}{dX}
\end{align}
\begin{align}
\frac{d^{2}\psi(X)}{dX^{2}}=-\frac{2}{X}^{2}-\frac{d^{2}Z}{dX^{2}}
\end{align}
\begin{align}
X^{2}\frac{d^{2}Z}{dX^{2}}=e^{Z}-2=0
\end{align}
To eliminate $X$ set $X=1/\xi=e^{\beta}$ then
\begin{align}
&\frac{dZ}{d\beta}=e^{-\beta}\frac{dZ}{d\beta}\\&
\frac{d^{2}Z}{d\beta^{2}}=e^{-2\beta}\left(\frac{d^{2}Z}{d\beta^{2}}-\frac{dZ}{d\beta}
\right)
\end{align}
giving
\begin{align}
\frac{d^{2}Z}{d\beta^{2}}-\frac{dZ}{d\beta}+e^{Z}-2=0
\end{align}
\section{The Fundamental Hydrostatic Equilibrium Equation (FHEE) And The Euler-Poisson Equations For A Self-Gravitating Gaseous Star}
The previous derivation and applications of the FEHE were mostly heuristic. In this section, the FEHE is now considered and derived from a more formal mathematical perspective. The fundamental description of a self-gravitating gaseous star are the Euler-Poisson system of nonlinear partial differential equations for a self-gravitating gas/fluid. These are essentially the Euler equations of fluid dynamics (with viscosity set to zero) coupled to a Newtonian gravitational potential. The Euler-Poisson system has been studied extensively in many works. (For example $\bm{[22-26]}$ and references their in.)
\begin{defn}
The scenario describing a self-gravitating gas/fluid described by the Euler-Poisson equations is defined as follows.
\begin{enumerate}
\item Let $\mathbf{D}\subset\mathbf{R}^{3}$. Generally, we use the functions
    $\rho:[0,T)\times\mathbf{R}^{3}\rightarrow\mathbf{R}_{\ge 0},p:[0,T)\times\mathbf{R}^{3}\rightarrow\mathbf{R}_{\ge 0}$ and $\Phi:[0,T)\times\mathbf{R}^{3}\rightarrow\mathbf{R}$ for some temporal interval $[0,T)$, to describe the spatio-temporal distributions of mass density, pressure and Newtonian potential of the gas/fluid. Then  $\rho=\rho(\mathbf{x},t),p=p(\mathbf{x},t),\Phi=\Phi(\mathbf{x},t)$.
\item The self-gravitating gas is supported within a spherical region $\mathbf{D}(\lambda(t))\equiv\mathbf{D}(t)$ defined by
\begin{align}
\mathbf{D}(\lambda(t))=\mathbf{D}(t)=supp~\rho(t,\bullet)=\lbrace(\mathbf{x}\in\mathbf{R}^{3}|
\rho(\mathbf{x},t)>0,\lambda(t)\gtrless \lambda(0)=R\rbrace\subset\mathbf{R}^{3}\nonumber
\end{align}
with varying radius $\lambda(t)\gtrless \lambda(0)=R$ depending on whether the self-gravitating gaseous sphere is collapsing or expanding, or $\lambda(t)= \lambda(0)=R$ in hydrostatic equilibrium. Then $\mathrm{V}(\mathbf{D}(t))=\tfrac{4}{3}\pi|\lambda(t)|^{3}$ is the volume at time t.
\item The (moving) boundary is $\partial\mathbf{D}(t)\subset\mathbf{D}(t)$ with surface area $\mathrm{A}(\mathbf{D}(t))=4\pi|\lambda(t)|^{2}$. The boundary conditions are that the pressure and density vanish here so that $\rho(\mathbf{x},t)=p(\mathbf{x},t)=0$ for all $x\in\partial\mathbf{D}(t)$. If $\mathbf{D}(t)\bigcup\mathbf{D}^{c}(t)=\mathbf{R}^{3}$ then the vacuum region is defined by $\rho(\mathbf{x},t)=0$ in $\mathbf{D}^{c}(t)$ and the matter or hydrodynamic region by $\rho(\mathbf{x},t)>0,x\in\mathbf{D}(t)$. The fluid velocity $\mathbf{u}$ is defined only on $\mathbf{D}(t)$ by $\mathbf{u}\equiv u^{i}:[0,T)\times\mathbf{D}(t)\rightarrow \mathbf{R}^{3}$ and there is no definition in $\mathbf{D}^{c}(t)$. The space $\mathbf{D}(0)$ is taken to be precompact. A diffuse boundary condition can also be defined as (ref)
\begin{align}
\lim_{(\mathbf{x},t)\rightarrow(\overline{\mathbf{x}},\overline{t})}\rho(\mathbf{x},t)\nabla_{i}p(\mathbf{x},t))=0
\end{align}
where $(x,t)\in[0,T)\times\mathbf{D}(t)$ and $(\overline{x},\overline{t})\in[0,T)\times\partial\mathbf{D}(t)$, where again $\partial\mathbf{D}(t)$ is the boundary of the volume $\mathbf{D}(t)$, that is, the moving interface between the fluid/gas matter and the exterior vacuum. This boundary condition characterizes the gradually vanishing fluid pressure per mass near the boundary.
\item The Newtonian potential $\Phi$ exists on $\mathbf{R}^{3}=\mathbf{D}(t)\bigcup\mathbf{D}^{c}(t)$, in both the matter and vacuum regions and obeys the Poisson equation
\begin{align}
\Delta\Phi=4\pi\mathscr{G}\rho,~~~in~~\mathbf{R}^{3}
\end{align}
with the general solution
\begin{align}
\Phi(\mathbf{x},t)=-\mathscr{G}{\int}_{\mathbf{D}}\frac{\rho(\mathbf{y},t)d^{3}\mathbf{y}}{|\mathbf{x}-\mathbf{y}|},~~~~
for~(t,\mathbf{x},\mathbf{y})\in [0,T)\times\mathbf{R}^{3}
\end{align}
\item It is generally assumed that the gas is \emph{isentropic} having constant entropy or entropy per nucleon, and obeys a barytropic pressure law of the form $p=p(\rho)$, typically a polytropic gas law
\begin{align}
p=\mathsf{K}|\rho|^{\gamma},~~\mathsf{x}\in\mathbf{D}(t)\in \mathbf{R}^{3}
\end{align}
(Polytropic stars will be considered in detail in Section (5) The set $\lbrace\rho(x,t),p(x,t),\Phi(x,t)\rbrace$ is also a solution to the Euler-Poisson system of equations but this will be considered formally in Section (6). For a gas in hydrostatic equilibrium $\mathbf{u}$=0 and $\rho:\mathbf{R}^{3}\rightarrow\mathbf{R}_{\ge 0},p:\mathbf{R}^{3}\rightarrow\mathbf{R}_{\ge 0}$ and $\Phi:\mathbf{R}^{3}\rightarrow\mathbf{R}$ so that
$\rho=\rho(\mathbf{x}),p=p(\mathbf{x}),\Phi=\Phi(\mathbf{x})$. The volume of the support is now fixed so that $\mathrm{V}(\mathbf{D}(t))\mathrm{V}(
\mathbf{D}(0))=\tfrac{4}{3}\pi R^{3}\equiv\tfrac{4}{3}\pi |\lambda(0)|^{3}$.
\item For self-gravitating fluid-gas systems like stars in equilibrium at radius $R$, it is also natural to assume spherical symmetry and coordinates so that $\rho=\rho(r), p=p(r),\Phi=\Phi(r)$ with central values $\rho(0),p(0)$ and boundary values $\rho(R)=p(R)=0$. The domain $\mathbf{D}(\lambda(t))=\mathbf{D}(\lambda(0))\equiv \mathbf{D}(0)=\mathbf{B}(\mathbf{x}_{o},R) $ is now a static ball with centre $\mathbf{x}_{o}$ an radius $R$. Then
    \begin{align}
    \mathbf{D}(\lambda(0)=R)=\mathbf{B}(0,R)=
    \lbrace \mathbf{x},\mathbf{y}\in \mathbf{R}^{3}|d(x,y)\le R\rbrace
    \end{align}
    The mass contained within a radius $r$ is or ball $\mathbf{B}(0,r)\subset
    \mathbf{B}(0,R)$ is then $\mathcal{M}(r)$ and $\mathcal{M}(R)=M$, the total mass contained within $\mathbf{B}(0,R)$  It is defined as
\begin{align}
\mathcal{M}(r)=\int_{\mathbf{B}(0,r)}4\pi\widehat{r}^{2}\rho(\widehat{r})d{r}\equiv
\int_{0}^{r}4\pi \widehat{r}^{2}\rho(\widehat{r})d\overline{r}
\end{align}
\end{enumerate}
\end{defn}
\begin{defn}
If the fluid/gas is described by a fundamental Boltzmann equation then the continuity of mass and Navier-Stokes equations can be established as a consequence. In the absence of gravitation they are
\begin{align}
&\partial_{t}\rho+\nabla.(\rho \mathbf{u})=0~~~~x\in\mathbf{D}(t), t>0\\&
\rho(\partial_{t}\mathbf{u}+(\bm{u}.\nabla)\bm{u})+\nabla p=\eta\Delta\bm{u}~~~~x\in\mathbf{D}(t), t>0
\end{align}
and with the following boundary and initial conditions.
\begin{align}
&P=\rho=0~on~\partial\mathbf{D}\\&
\mathcal{U}_{\partial\mathbf{D}}=\bm{u}.\bm{n}(t)~on~\partial\mathbf{D}\\&
(\rho(0,\bullet),\bm{u}(0,\bullet))=(\rho_{o},u_{o}),~in~\mathbf{D}(t)
\end{align}
where $\mathcal{U}_{\partial\mathbf{D}}$ is the normal velocity of the moving domain boundary $\partial\mathbf{D}(t)$ with unit normal $\bm{n}(t)$.
If the gas is now coupled to Newtonian gravitation and becomes a self-gravitating gas then one obtains the Euler-Poisson system system of nonlinear PDEs.
\begin{align}
&\partial_{t}\rho+\nabla.(\rho \mathbf{u})=0,~~x\in\mathbf{D}(t), t>0\\&
\rho(\partial_{t}\mathbf{u}+(\bm{u}.\nabla)\bm{u})+\nabla p=-\rho\nabla {\Phi},~~~~x\in\mathbf{D}(t), t>0\\&
\Delta{\Phi}=4\pi \mathscr{G}\rho,~~~~x\in\mathbf{D}(t), t>0\\&p=\rho=0~on~\partial\mathbf{D}\\&
\mathcal{U}_{\partial\mathbf{D}}=\bm{u}.\bm{n}(t)~on~\partial\mathbf{D}\\&
(\rho(0,\bullet),\bm{u}(0,\bullet))=(\rho_{o},u_{o}),~in~\mathbf{D}(t)
\end{align}
and the gas-vacuum boundary is $\mathbf{R}^{3}\symbol{92}\partial\mathbf{D}$.
One can also introduce a mass function $\mathcal{M}(r,t)$
\begin{align}
\mathcal{M}(\rho)=\int_{\mathbf{D}(t)}\rho(x,t)d\mathrm{V}(x)
\end{align}
where $\int_{\mathbf{B}(t)}d\mathrm{V}(x)=\int_{\mathbf{B}(t)} d^{n}x=\mathrm{V}(\mathbf{B}(t))$.
\end{defn}
The standard notations apply so that
\begin{align}
&\nabla.(\rho\bm{u})=\sum_{k=1}^{3}\frac{\partial}{\partial x^{k}}(\rho u^{k})\equiv
\sum_{k=1}^{3}\nabla_{k}(\rho u^{k}),j,k=1,2,3\\&
(\bm{u}.\nabla)\bm{u}=\sum_{k=1}^{3}u^{k}\frac{\partial u^{j}}{\partial x^{k}}\equiv
\sum_{k=1}^{3}u^{k}\nabla_{k}u^{j}\\&
\nabla Z=(\nabla_{1}Z,\nabla_{2}Z,\nabla_{3}Z),~Z=p,\mathscr{B}\\&
\Delta\Phi=\sum_{k=1}^{3}\frac{\partial^{2}}{\partial x^{k}\partial x^{k}}
=\sum_{k=1}^{3}\partial_{k}\partial^{k}\Phi\equiv\nabla^{2}
\end{align}
The EL system can also be expressed in various coordinate systems $\bm{[22]}$. One can also use Lagrangian derivatives $D/Dt$ whereby
\begin{align}
\frac{D}{Dt}=\partial_{t}+\bm{u}.\nabla\equiv \partial_{t}+\sum_{k=1}^{3}u^{k}\nabla_{k}
\end{align}
so that (6.14) and (6.15) become
\begin{align}
&\frac{D\rho}{Dt}+\rho\nabla.\bm{u}=0\\&
\rho\frac{Dv}{Dt}+\nabla p=-\rho\nabla\Phi
\end{align}
The EL system of PDEs can be closed by an equation of state $p=p(\rho)$ with $dp/d\rho>0$ and the polytropic equation can be utilised so that $p=\mathsf{K}|\rho|^{\gamma}$ with $\gamma\in[1,2]$. We also reprise the enthalpy variable $\mathfrak{E}(\rho)$ and the functional $\psi(\rho)$ so that $\mathfrak{E}(\rho)=\int_{0}^{\rho}dp/\rho$ and $\psi(\rho)=\int_{0}^{\rho}\mathfrak{E}(\rho)d\rho$. The notation $EP_{\gamma}$ will denote an Euler-Poisson system with a polytropic gas with index $\gamma$.
\begin{defn}
A solution $\rho=\rho(x,t),\phi=\phi(x,t)$ is a compactly supported classical solution (CSCS) iff $\forall(\rho,u,\phi)\in C^{1}([0,T]\times\mathbf{R}^{2}),
\phi(t,\bullet)\in C^{2}(\mathbf{R}^{2}),\rho>0$,everywhere and the support of $\rho(t,\bullet)$ is compact in $\mathbf{D}(t)$ for all $t\in[0,T]$.
\end{defn}
For any CSCS the solution of the Poisson equation (6.14) is the Newtonian potential
\begin{align}
\phi(x,t)=-\mathscr{G}\int_{\mathbf{D}} \rho\frac{(x^{\prime},t)}{|x-x^{\prime}|}d\mathbf{V}(x^{\prime})
\end{align}
The EP equations have been studied quite extensively and there are expanding and collapsing solutions.
\begin{defn}
The total mass $\mathcal{M}(\rho)$ was defined. The total energy at any time $t>0$ is defined as
\begin{align}
\mathbf{E}(\rho,\bm{u})=&\int_{\mathbf{D}(t)}\frac{1}{2}\rho|\bm{u}|^{2}d\mathrm{V}(x)
+\int_{\mathbf{D}(t)}\psi(\rho)d\mathrm{V}(x)+\int_{\mathbf{D}(t)}\frac{1}{2}\rho\phi d\mathrm{V}(x)\nonumber\\&
=\int_{\mathbf{D}(t)}\left(\frac{1}{2}\rho|\bm{u}|^{2}d\mathrm{V}(x)
+\psi(\rho)\right)d\mathrm{V}(x)-\frac{1}{2}\mathscr{G}\int\int
\frac{\rho(x,t)\rho(x^{\prime},t)}{|x-x^{\prime}|}d\mathrm{V}(x)d\mathrm{V}(x^{\prime}
\end{align}
For a polytropic gas
\begin{align}
\mathbf{E}(\rho,\bm{u})=&\int_{\mathbf{D}(t)}\frac{1}{2}\rho|\bm{u}|^{2}d\mathrm{V}(x)
+\int_{\mathbf{D}(t)}\frac{\mathsf{K}}{\gamma-1}\rho^{\gamma}d\mathrm{V}(x)+\int_{\mathbf{D}(t)}\frac{1}{2}\rho\phi
d\mathrm{V}(x)\nonumber\\&
=\int_{\mathbf{D}(t)}\left(\frac{1}{2}\rho|\bm{u}|^{2}d\mathrm{V}(x)
+\frac{\mathsf{K}}{\gamma-1}\rho^{\gamma}\right)d\mathrm{V}(x)-\frac{\mathscr{G}}{2}\int\int
\frac{\rho(x,t)\rho(x^{\prime},t)}{|x-x^{\prime}|}d\mathrm{V}(x)d\mu(x^{\prime}
\nonumber\\&=\int_{\mathbf{D}(t)}\left(\frac{1}{2}\rho|\bm{u}|^{2}d\mu(x)
+\frac{p}{\gamma-1}\right)d\mathrm{V}(x)-\frac{\mathscr{G}}{2}\int\!\!\!\int
\frac{\rho(x,t)\rho(x^{\prime},t)}{|x-x^{\prime}|}d\mu(x)d\mathrm{V}(x^{\prime})
\end{align}
\begin{thm}
The total energy $\mathbf{E}(\rho,\bm{u})$ is conserved for any solution $(\rho,\bm{u})$ so that
\begin{align}
\frac{d}{dt}\mathbf{E}(\rho,\bm{u})=0
\end{align}
\end{thm}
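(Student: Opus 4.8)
The plan is to split $\mathbf{E} = \mathbf{E}_K + \mathbf{E}_\psi + \mathbf{E}_G$ into its kinetic, internal, and gravitational parts and differentiate each in time, using throughout the fact that $\rho = p = 0$ on the moving boundary $\partial\mathbf{D}(t)$ so that every boundary integral vanishes. The essential tool is the Reynolds transport theorem: since the boundary advects with the fluid (normal speed $\bm{u}\cdot\bm{n}$), combining it with the continuity equation $\partial_t\rho + \nabla\cdot(\rho\bm{u}) = 0$ yields the clean identity $\frac{d}{dt}\int_{\mathbf{D}(t)}\rho\, g\,d\mathrm{V} = \int_{\mathbf{D}(t)}\rho\,\frac{Dg}{Dt}\,d\mathrm{V}$ for any transported scalar $g$, where $\frac{D}{Dt} = \partial_t + \bm{u}\cdot\nabla$.

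First I would treat the kinetic part with $g = \tfrac{1}{2}|\bm{u}|^2$, giving $\frac{d}{dt}\mathbf{E}_K = \int_{\mathbf{D}(t)}\rho\,\bm{u}\cdot\tfrac{D\bm{u}}{Dt}\,d\mathrm{V}$, into which I substitute the momentum equation in the form $\rho\,\tfrac{D\bm{u}}{Dt} = -\nabla p - \rho\nabla\Phi$. Integrating $-\int\bm{u}\cdot\nabla p\,d\mathrm{V}$ by parts and discarding the boundary term (as $p=0$ on $\partial\mathbf{D}$) gives $\frac{d}{dt}\mathbf{E}_K = \int_{\mathbf{D}(t)} p\,\nabla\cdot\bm{u}\,d\mathrm{V} - \int_{\mathbf{D}(t)}\rho\,\bm{u}\cdot\nabla\Phi\,d\mathrm{V}$.

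Next, for the internal energy $\mathbf{E}_\psi = \int\psi(\rho)\,d\mathrm{V}$ I would apply the transport theorem directly (there is now no extra $\rho$ factor). Writing $\nabla\psi(\rho) = \mathfrak{E}(\rho)\nabla\rho$ with $\psi' = \mathfrak{E}$, and using continuity in the form $\frac{D\rho}{Dt} = -\rho\,\nabla\cdot\bm{u}$, the time derivative reduces to $\int_{\mathbf{D}(t)}\bigl(\psi(\rho) - \rho\,\mathfrak{E}(\rho)\bigr)\nabla\cdot\bm{u}\,d\mathrm{V}$. The thermodynamic identity $p = \rho\,\mathfrak{E}(\rho) - \psi(\rho)$ --- obtained by integrating $p'(\rho) = \rho\,\mathfrak{E}'(\rho)$ by parts --- then collapses this to $\frac{d}{dt}\mathbf{E}_\psi = -\int_{\mathbf{D}(t)} p\,\nabla\cdot\bm{u}\,d\mathrm{V}$, which exactly cancels the pressure term from $\mathbf{E}_K$. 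Thus $\frac{d}{dt}(\mathbf{E}_K + \mathbf{E}_\psi) = -\int_{\mathbf{D}(t)}\rho\,\bm{u}\cdot\nabla\Phi\,d\mathrm{V}$.

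The delicate step is the gravitational term, since $\Phi$ depends non-locally on $\rho$, and I expect this to be the main obstacle. The cleanest route is to avoid differentiating $\Phi$ directly: instead use the symmetric double-integral form $\mathbf{E}_G = -\frac{\mathscr{G}}{2}\iint \frac{\rho(x,t)\rho(x',t)}{|x-x'|}\,d\mathrm{V}(x)\,d\mathrm{V}(x')$, with $\rho$ extended by zero outside $\mathbf{D}(t)$, which is legitimate because $\rho$ vanishes on the boundary so the moving domain contributes nothing. Differentiating in $t$, the symmetry of the kernel $|x-x'|^{-1}$ makes the two $\partial_t\rho$ terms equal and recombines one factor into $\Phi$, yielding $\frac{d}{dt}\mathbf{E}_G = \int \partial_t\rho\,\Phi\,d\mathrm{V}$. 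Substituting $\partial_t\rho = -\nabla\cdot(\rho\bm{u})$ and integrating by parts (boundary term vanishing since $\rho=0$ on $\partial\mathbf{D}$) gives $\frac{d}{dt}\mathbf{E}_G = +\int_{\mathbf{D}(t)}\rho\,\bm{u}\cdot\nabla\Phi\,d\mathrm{V}$, which cancels the remaining term precisely, so $\frac{d}{dt}\mathbf{E} = 0$. The rigorous content hides in justifying differentiation under the double integral and the symmetry exchange against the weakly singular kernel near $x = x'$; this is where the compact support and $C^1$ regularity of the classical solution are needed.
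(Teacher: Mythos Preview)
Your proposal is correct and follows essentially the same approach as the paper: both use the Euler--Poisson equations and integration by parts (with boundary terms vanishing since $p=\rho=0$ on $\partial\mathbf{D}$), the enthalpy identity linking $p$, $\psi$, and $\mathfrak{E}$, and crucially the symmetry of the Newtonian kernel $|x-x'|^{-1}$ to handle the gravitational term via $\int\partial_t\rho\,\Phi\,d\mathrm{V}=\tfrac{1}{2}\tfrac{d}{dt}\int\rho\Phi\,d\mathrm{V}$. Your organisation via the Reynolds transport theorem and material derivatives is slightly cleaner than the paper's purely Eulerian bookkeeping, but the substance of the argument is identical.
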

\begin{proof}
\begin{align}
\frac{d}{dt}\mathbf{E}(\rho,\bm{u})&
=\frac{d}{dt}\int\frac{1}{2}\rho\bm{u}|^{2}d\mathrm{V}=\frac{1}{2}
\int\partial_{t}(\rho\sum_{k}|u_{k}|^{2}d\mathrm{V}\\&
=\int\rho\sum_{k}u^{k}\partial_{t}u^{k}d\mathrm{V}+\int\frac{1}{2}\partial_{t}\rho\sum_{k}|u_{k}|^{2}
d\mathrm{V}\nonumber\\&
=-\int(\frac{1}{2}\rho\sum_{j,k}u^{j}\nabla)_{j}|u^{k}|^{2}
d\mathrm{V}+\int\sum_{k}u^{k}\nabla_{k}p d\mathrm{V}\nonumber\\&+\int\sum_{k}\rho u^{k}\nabla_{k}\Phi d\mathrm{V}+\int\frac{1}{2}\nabla_{j}|\rho u^{j}|\sum_{k}|u_{k}|^{2}d\mathrm{V}\nonumber\\&
=-\int(\frac{1}{2}\rho\sum_{j,k}|u^{j}\nabla_{j}|u^{k}|^{2}d\mathrm{V}
+\int\sum_{k}u^{k}\nabla_{k}p d\mathrm{V}\nonumber\\&+\int\sum_{k}\rho u^{k}\nabla_{k}\Phi+
\int\frac{1}{2}\nabla_{j}|\rho u^{j})\sum_{k}|u^{k}|^{2} d\mathrm{V}\nonumber\\&
=-\underbrace{\int\frac{1}{2}\sum_{j,k}\nabla_{j}(\rho u^{j}|u^{k}|^{2})-\int\sum_{k}u^{k}(\nabla_{k}p+\rho\nabla_{k}\Phi)d\mathrm{V}}\nonumber\\&
=-\int\sum_{k}\rho u^{k}\nabla_{k}\mathfrak{E}d\mathrm{V}+\int\sum_{k}\rho u^{k}\nabla_{k}\Phi d \mathrm{V}\nonumber\\&
=\int \frac{\partial\Psi}{\partial t} d\mathrm{V}+\int\underbrace{\sum_{k}\nabla_{k}(\rho u^{k})}_{use~EP~eqn~again}\Phi d\mathrm{V}\nonumber\\&
=-\frac{d}{dt}\int\Psi d\mathrm{V}-\int\frac{\partial \rho}{\partial t}\Phi d\mathrm{V}
\end{align}
The last term can be re-expressed as
\begin{align}
\int\partial_{t}\rho\Phi d\mathrm{V}&=-\mathscr{G}\int\!\!\!\int
\frac{\partial_{t}\rho(x,t)\rho(x,t)}{|x-x^{\prime}|}d\mathrm{V}(x)d\mathrm{V}(x^{\prime})
\nonumber\\&=
-\frac{1}{2}\frac{d}{dt}\int\!\!\!\int\frac{\rho(x,t)\rho(x,t)}{|x-x^{\prime}|}d\mathrm{V}(x)
d\mathrm{V}(x^{\prime})
=\frac{1}{2}\frac{d}{dt}\int\rho\Phi d\mathrm{V}
\end{align}
Hence
\begin{align}
\frac{d}{dt}\int\frac{1}{2}\rho|u|^{2} d\mathrm{V}+\frac{d}{dt}\int\Psi d\mathrm{V}+\frac{d}{dt}\int\frac{1}{2}\rho\Phi d\mathrm{V}=0
\end{align}
and so the total energy is conserved along t for any compact supported solution $(\rho,\bm{u})$,
\end{proof}
\end{defn}
Solutions of an $EP_{\gamma}$ system have a self-similar re-scaling property under
$\overline{t}=t/\lambda^{1/(2-\gamma)}$ and $\overline{x}=x/\lambda$. If $(\rho,\bm{u})$ is a solution of an $EP_{\gamma}$ system then so is the pair $(\overline{\rho},\overline{\bm{u}})$ given by $\bm{[32,36]}$.
\begin{align}
&{\rho(x,t)}=\lambda^{-\frac{2}{2-\gamma}}\overline{\rho}
(\frac{t}{\lambda^{\frac{1}{2-\gamma}}},\frac{x}{\lambda})\\&
{\bm{u}(x,t)}=\lambda^{-\frac{\gamma-1}{2-\gamma}}\overline{\bm{u}}(\frac{t}{\lambda^{\frac{1}{2-\gamma}}},\frac{x}{\lambda})
\end{align}
Also, given a solution $(p,\phi)$ for the pressure and Newtonian potential, the rescaled pressure and potential are
\begin{align}
&{p(x,t)}=\lambda^{-\frac{2\gamma}{2-\gamma}}\overline{o}
(\frac{t}{\lambda^{\frac{1}{2-\gamma}}},\frac{x}{\lambda})\\&
{\phi(x,t)}=\lambda^{-\frac{2\gamma-2}{2-\gamma}}\overline{\phi}(\frac{t}{\lambda^{\frac{1}{2-\gamma}}},\frac{x}{\lambda})
\end{align}
The mass $M(\rho)$ and the energy $E(\rho,u)$ are rescaled as
\begin{align}
&M(\rho)=\lambda^{\frac{4-3\gamma}{2-\gamma}}
\overline{M(\overline{\rho})}\\&
E(\rho,\bm{u})=\lambda^{\frac{6-5\gamma}{2-\gamma}}
\overline{E(\overline{\rho},\overline{\bm{u}})}
\end{align}
The mass is invariant under a re-scaling when $\gamma=4/3$ and the energy when $\gamma=5/6$.
so the re-scalings are respectively mass-critical and energy critical.

The $EP_{\gamma}$ system applied to a spherical gaseous star is naturally best expressed in spherical symmetry then $r=\|x\|$. Define a ball $\mathbf{B}(\lambda(t),0)$, centred at the origin, then the 'radius of support of the fluid' is $\lambda(t)$ at time $t$. Then
$u=u(r,t),\rho=\rho(r,t), p=p(r,t),\phi=\phi(r,t)$. The $EL_{\gamma}$ system of PDE are then
\begin{align}
&\partial_{t}p+u\partial_{r}\rho+\rho\partial_{r}u+\frac{2}{r}\rho u=0,~r\le \lambda(t)
\\&
\rho(\partial_{t}u+u\partial_{r}u)+\partial_{r}p=-\rho\partial_{r}\Phi~r\le\lambda(t)\\&
\frac{1}{r^{2}}\partial_{r}(r^{2}\partial_{r}\Phi)=4\pi \mathscr{G}\rho,~r\le\lambda(t)\\&
\lim_{r\rightarrow\infty}\Phi(r,t)=0\\&
p(t,\xi(t))=0\\&
\mathrm{V}_{\partial\mathbf{B}}=\dot{\lambda}(t)=\mathrm{V}(t,\lambda(t))\\&
\lambda(0)=\lambda_{o},\rho(0,r)=\rho_{0}(r),u(0,r)=u_{o}(r)
\end{align}
\begin{cor}
For a static or equilibrium solution of the $EP$ system one must have $\bm{u}=0$ and
$\partial_{k}\bm{u}=\partial_{t}\bm{u}=0$. The spherically symmetric $EP$ system then reduces to
\begin{align}
&\partial_{r}p(r)=-\rho(r)\partial_{r}\Phi(r),~~r\le R\equiv\lambda\\&
\frac{1}{r^{2}}(\partial_{r}(r^{2}\partial_{r}\Phi(r))=4\pi\mathscr{G}\rho(r)
\end{align}
or
\begin{align}
&\frac{dp(r)}{dr}=-\rho(r)\frac{d\Phi(r)}{dr}\\&
\frac{1}{r^{2}}\frac{d}{dr}\left(r^{2}\frac{d\Phi(r)}{dr}\right)=4\pi\mathscr{G} \rho(r)
\end{align}
Writing
\begin{align}
\frac{d}{dr}\left(r^{2}\frac{d\Phi(r)}{dr}\right)=4\pi r^{2} \mathscr{G}\rho(r)
\end{align}
then integrating
\begin{align}
r^{2}\frac{d\Phi(r)}{dr}=\int_{0}^{r} 4\pi \mathscr{G} r^{\prime 2}\rho(r^{\prime})dr^{\prime}
\end{align}
Now using (6.49)
\begin{align}
\frac{d \Phi(r)}{dr}=-\frac{1}{\rho(r)}\frac{d p(r)}{dr}=\frac{\int_{0}^{r} 4\pi \mathscr{G} r^{\prime 2}
\rho(r^{\prime})dr^{\prime}}{r^{2}}\equiv \frac{\mathcal{M}(r)}{r^{2}}
\end{align}
and so using (6.51) the FEHE is recovered once more
\begin{align}
\frac{dp(r)}{dr}=-\frac{\mathscr{G}\mathcal{M}(r)\rho(r)}{r^{2}}\nonumber
\end{align}
\end{cor}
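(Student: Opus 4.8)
The plan is to start from the full Euler--Poisson system recorded earlier and simply impose the equilibrium hypotheses, then collapse the momentum equation onto the FEHE by means of the once-integrated Poisson equation. First I would set $\bm{u}=0$ together with $\partial_{t}\bm{u}=\partial_{t}\rho=0$ in the system (6.14)--(6.16). The continuity equation then degenerates to the trivial identity $0=0$ and carries no content, while the momentum equation loses its entire material-acceleration term $\rho(\partial_{t}\bm{u}+(\bm{u}\cdot\nabla)\bm{u})$, leaving only the pure balance $\nabla p=-\rho\nabla\Phi$. Under the spherical-symmetry reduction $\rho=\rho(r)$, $p=p(r)$, $\Phi=\Phi(r)$, this is precisely $\frac{dp(r)}{dr}=-\rho(r)\frac{d\Phi(r)}{dr}$, which is one of the two ingredients needed.

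Next I would process the Poisson equation in its spherically symmetric form $\frac{1}{r^{2}}\frac{d}{dr}\left(r^{2}\frac{d\Phi(r)}{dr}\right)=4\pi\mathscr{G}\rho(r)$. Multiplying through by $r^{2}$ and integrating from $0$ to $r$ gives $r^{2}\frac{d\Phi(r)}{dr}=\int_{0}^{r}4\pi\mathscr{G}\,r'^{2}\rho(r')\,dr'$, where I discard the lower boundary contribution $[\,r'^{2}\Phi'(r')\,]_{r'=0}$. By the definition of the enclosed mass function, the right-hand side is exactly $\mathscr{G}\mathcal{M}(r)$, so $\frac{d\Phi(r)}{dr}=\frac{\mathscr{G}\mathcal{M}(r)}{r^{2}}$. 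This is the standard fact that the interior gravitational field depends only on the mass within radius $r$.

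Finally I would substitute this expression for the potential gradient into the reduced momentum balance, obtaining $\frac{dp(r)}{dr}=-\rho(r)\frac{d\Phi(r)}{dr}=-\frac{\mathscr{G}\mathcal{M}(r)\rho(r)}{r^{2}}$, which is the FEHE and completes the argument. I expect the bulk of the work to be purely algebraic; the only step demanding genuine care---the mild obstacle---is the vanishing of the boundary term at the origin when integrating the Poisson equation. This requires the regularity condition $r^{2}\Phi'(r)\to 0$ as $r\to 0$, equivalently $\mathcal{M}(r)\to 0$ with $\Phi'(r)$ bounded near the center, which is guaranteed for any solution smooth at $r=0$ and is consistent with the compactly supported classical solution framework (Definition after (6.33)). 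Everything else follows by direct substitution, so no further analytic subtleties arise.
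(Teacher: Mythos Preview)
Your proposal is correct and follows essentially the same route as the paper: impose $\bm{u}=0$ to reduce the momentum equation to $dp/dr=-\rho\,d\Phi/dr$, integrate the spherically symmetric Poisson equation once to obtain $d\Phi/dr=\mathscr{G}\mathcal{M}(r)/r^{2}$, and substitute. Your explicit remark about the vanishing of the boundary term $r^{2}\Phi'(r)$ at $r=0$ is a welcome addition that the paper leaves implicit.
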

\begin{lem}
For an $EP_{\gamma}$ system, the energy equation (6.28) is equivalent to (2.68) with
$\mathbf{E}_{K}=0$ for equilibrium.
\end{lem}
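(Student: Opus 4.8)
The plan is to evaluate the Euler--Poisson energy functional (6.28) under the equilibrium hypotheses and to show that each of its three terms collapses onto the corresponding term of the Newtonian total energy (2.68), namely $\mathbf{E}=\mathbf{E}_{T}+\mathbf{E}_{G}$.

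First I would impose equilibrium. By the static reduction of the preceding corollary, equilibrium forces $\bm{u}=0$ everywhere on $\mathbf{D}(t)=\mathbf{B}(0,R)$, so the kinetic integral in (6.28) vanishes identically,
\begin{align}
\mathbf{E}_{K}=\int_{\mathbf{B}(0,R)}\tfrac{1}{2}\rho|\bm{u}|^{2}\,d\mathrm{V}=0,\nonumber
\end{align}
which is the clause ``$\mathbf{E}_{K}=0$'' in the statement. Next I would identify the internal-energy term with the thermal energy. For the polytropic closure $p=\mathsf{K}|\rho|^{\gamma}$ the definitions $\mathfrak{E}(\rho)=\int_{0}^{\rho}dp/\rho$ and $\psi(\rho)=\int_{0}^{\rho}\mathfrak{E}(\rho)\,d\rho$ give $\psi(\rho)=\mathsf{K}\rho^{\gamma}/(\gamma-1)=p/(\gamma-1)$. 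Since for a polytrope the internal-energy density satisfies $\mathcal{E}(r)=(\gamma-1)^{-1}p(r)$ (as used in the Betti--Ritter proof), one has $\psi(\rho(r))=\mathcal{E}(r)$ pointwise, whence
\begin{align}
\int_{\mathbf{B}(0,R)}\psi(\rho)\,d\mathrm{V}=\int_{0}^{R}\mathcal{E}(r)\,4\pi r^{2}\,dr=\mathbf{E}_{T}.\nonumber
\end{align}

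The substantive step is to show that the self-interaction term equals $\mathbf{E}_{G}$. Writing the spherically symmetric potential by shell superposition,
\begin{align}
\Phi(r)=-\frac{\mathscr{G}\mathcal{M}(r)}{r}-\mathscr{G}\int_{r}^{R}\frac{d\mathcal{M}(r')}{r'},\nonumber
\end{align}
I would substitute into $\tfrac{1}{2}\int\rho\Phi\,d\mathrm{V}=\tfrac{1}{2}\int_{0}^{R}\Phi(r)\,d\mathcal{M}(r)$. The first piece gives $-\tfrac{1}{2}\mathscr{G}\int_{0}^{R}\tfrac{\mathcal{M}(r)}{r}\,d\mathcal{M}(r)$. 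For the second piece I would apply Fubini on the triangle $\{0\le r\le r'\le R\}$: swapping the order of integration turns the inner mass integral into $\mathcal{M}(r')$, producing a second, identical copy $-\tfrac{1}{2}\mathscr{G}\int_{0}^{R}\tfrac{\mathcal{M}(r')}{r'}\,d\mathcal{M}(r')$. The two halves combine into
\begin{align}
\frac{1}{2}\int_{\mathbf{B}(0,R)}\rho\Phi\,d\mathrm{V}=-\mathscr{G}\int_{0}^{M}\frac{\mathcal{M}(r)}{r}\,d\mathcal{M}(r)=\mathbf{E}_{G}.\nonumber
\end{align}
Assembling the three evaluations yields $\mathbf{E}(\rho,0)=\mathbf{E}_{T}+\mathbf{E}_{G}$, which is exactly (2.68).

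The main obstacle is the gravitational term: one must collapse the symmetric double integral in (6.28) to the single shell integral defining $\mathbf{E}_{G}$, and the factor $\tfrac{1}{2}$ is correctly absorbed only because the Fubini swap regenerates a second identical copy of that shell integral. An equivalent but less transparent route would be to write $\tfrac{1}{2}\int\rho\Phi\,d\mathrm{V}=\tfrac{1}{8\pi\mathscr{G}}\int\Phi\,\Delta\Phi\,d\mathrm{V}$ and integrate by parts twice, using $d\Phi/dr=\mathscr{G}\mathcal{M}(r)/r^{2}$ to reach the same shell integral; I expect the shell-superposition computation to be the cleanest, with the remaining identifications being immediate.
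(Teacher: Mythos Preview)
Your proposal is correct and follows essentially the same three-term identification as the paper's proof: kill the kinetic term by $\bm{u}=0$, identify $\psi(\rho)=p/(\gamma-1)=\mathcal{E}$ to get $\mathbf{E}_{T}$, and reduce $\tfrac{1}{2}\int\rho\Phi\,d\mathrm{V}$ to $\mathbf{E}_{G}$. The only difference is that the paper passes from $\tfrac{1}{2}\int\rho\Phi\,d\mathrm{V}$ directly to $-\int\rho\,\mathscr{G}\mathcal{M}(r)r^{-1}\,d\mathrm{V}$ without justification, whereas your shell-superposition formula for $\Phi(r)$ together with the Fubini swap supplies exactly the missing argument for why the factor $\tfrac{1}{2}$ disappears; your treatment of the gravitational term is therefore more careful than the paper's.
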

\begin{proof}
\begin{align}
\mathbf{E}(\rho,\bm{u})&=\int\frac{1}{2}\rho|\bm{u}|^{2}d\mathrm{V}+\int\Psi(\rho)
d\mathrm{V}+\int\frac{1}{2}\rho\Phi d\mathrm{V}\nonumber\\&
=\int\frac{1}{2}\rho|\bm{u}|^{2}d\mathrm{V}+\int\frac{p}{|\gamma-1|}d\mathrm{V}
+\int\frac{1}{2}\rho\Phi d\mathrm{V}\nonumber\\&
=\int\frac{1}{2}\rho|\bm{u}|^{2}d\mathrm{V}+\int\mathcal{E}d\mathrm{V}
+\int\frac{1}{2}\rho\Phi d\mathrm{V}\nonumber\\&
=\int\frac{1}{2}\rho(r)|\bm{u}|^{2}d\mathrm{V}+\int\mathcal{E}(r)d\mathrm{V}
-\int\rho\frac{\mathscr{G}\mathcal{M}(r)}{r}
d\mathrm{V}
\end{align}
In equilibrium the 'kinetic energy' term is zero so that $\mathbf{E}_{K}=0.$ Continuing in spherical symmetry
\begin{align}
\mathbf{E}(\rho,0)&=\int\mathcal{E}(r)d\mathrm{V}
-\int\rho\frac{\mathscr{G}\mathcal{M}(r)}{r}
d\mathrm{V}\nonumber\\&=\int\mathcal{E}(r)4\pi r^{3}dr
-\int\rho\frac{\mathscr{G}\mathcal{M}(r)}{r}
4\pi r^{2}dr\nonumber\\&=\int\mathcal{E}(r)4\pi r^{3}dr
-\int\frac{\mathscr{G}\mathcal{M}(r)d\mathcal{M}(r)}{r}=\mathbb{E}_{T}+\mathbb{E}_{G}
\end{align}
\end{proof}
\begin{prop}
Let an EP system of self-gravitating gas/fluid with compact support on some
$\mathbf{D}(t)$ have total energy $E(\rho,\bm{u})$, mass $M(\rho)$, density $\rho=\rho(x,t)$, pressure $p=p(x,t)$ and Newtonian potential $\Phi=\Phi(x,t)$. Let $Q(\rho,\bm{u})=\tfrac{d}{dt}E(\rho,\bm{u})$. The enthalpy variable $\mathfrak{E}(\rho)$ is
$\mathfrak{E}(\rho)=\int_{0}^{\rho}dp/\rho$ and $\psi(\rho)=\int_{0}^{\rho}\mathfrak{E}(\rho)d\rho$ with
barytropic state equation $p=p(\rho)$. Then the total energy is conserved if
$\mathbf{E}(\rho,\bm{u})=\tfrac{d}{dt}\mathbf{E}(\rho,\bm{u})=0$. If the total energy is conserved
\begin{align}
&\frac{d}{dt}\mathbf{E}(\rho,\bm{u})=\frac{d}{dt}\left|\int\frac{1}{2}\rho|\bm{u}|^{2}+\psi(\rho)
+\frac{1}{2}\rho\Phi)d\mathrm{V}(x)\right|\nonumber\\&
=\frac{d}{dt}\left|\int\frac{1}{2}\rho|\bm{u}|^{2}+
\psi(\rho)-\frac{1}{2}\mathscr{G}\int\int \frac{\rho(x,t)\rho(x^{\prime},t)}{|x-x^{\prime}}|d\mathrm{V}(x)d\mathrm{V}(x^{\prime})
\right|=0
\end{align}
Then the hydrostatic equilibrium equation
\begin{equation}
\nabla_{k}p+\rho\nabla_{k}\Phi=0
\end{equation}
holds iff $u=0$.
\end{prop}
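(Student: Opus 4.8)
The plan is to read the statement directly off the momentum equation of the Euler--Poisson system in material (Lagrangian) form. Writing $D/Dt=\partial_{t}+u^{j}\nabla_{j}$, the momentum balance is
\[
\rho\,\frac{D u^{k}}{Dt}+\nabla_{k}p+\rho\,\nabla_{k}\Phi=0 .
\]
The candidate hydrostatic equilibrium equation $\nabla_{k}p+\rho\,\nabla_{k}\Phi=0$ is exactly this balance with the inertial term $\rho\,Du^{k}/Dt$ deleted, so the two coincide precisely when $\rho\,Du^{k}/Dt=0$; and since $\rho>0$ throughout the matter region $\mathbf{D}(t)$, this is equivalent to $Du^{k}/Dt=\partial_{t}u^{k}+(\bm u\cdot\nabla)u^{k}=0$. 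The forward implication is immediate: setting $\bm u=0$ makes both $\partial_{t}\bm u$ and $(\bm u\cdot\nabla)\bm u$ vanish, so the momentum equation collapses to $\nabla_{k}p+\rho\,\nabla_{k}\Phi=0$.

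For the converse I would first exploit energy conservation. Assuming the hydrostatic equilibrium equation holds, the momentum equation forces $Du^{k}/Dt=0$, and a short computation using the continuity equation $\partial_{t}\rho+\nabla\cdot(\rho\bm u)=0$ together with the identity $\nabla\cdot(\rho\bm u)\,|\bm u|^{2}+\rho\,(\bm u\cdot\nabla)|\bm u|^{2}=\nabla\cdot(\rho\bm u\,|\bm u|^{2})$ reduces the rate of change of the kinetic energy to a boundary flux,
\[
\frac{d}{dt}\int_{\mathbf{D}(t)}\tfrac12\rho|\bm u|^{2}\,d\mathrm{V}
=-\tfrac12\oint_{\partial\mathbf{D}(t)}\rho\,|\bm u|^{2}\,(\bm u\cdot\bm n)\,dS=0,
\]
which vanishes because $\rho=0$ on $\partial\mathbf{D}(t)$. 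Combined with the conservation of total energy established above, this shows that $\mathbf{E}_{T}+\mathbf{E}_{G}$ is conserved as well, consistent with a stationary state. Invoking the equilibrium hypothesis that the configuration is time-independent ($\partial_{t}\bm u=0$, $\partial_{t}\rho=0$), the relation $Du^{k}/Dt=0$ becomes the purely spatial condition $(\bm u\cdot\nabla)\bm u=0$ and continuity becomes $\nabla\cdot(\rho\bm u)=0$, with the no-flux condition $\bm u\cdot\bm n=0$ on the fixed boundary.

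The hard part is the final step of the converse, because $(\bm u\cdot\nabla)\bm u=0$, $\nabla\cdot(\rho\bm u)=0$ and $\bm u\cdot\bm n|_{\partial\mathbf{D}}=0$ still admit nontrivial steady circulations; the kinetic-energy balance above is therefore consistent with $\bm u\neq0$ and cannot by itself force $\bm u\equiv0$. I would close this gap in the variational spirit of Section 4: an equilibrium is a critical point of $\mathbf{E}(\rho,\bm u)$ at fixed mass $M(\rho)$, and since $\mathbf{E}$ depends on the velocity only through $\int\tfrac12\rho|\bm u|^{2}\,d\mathrm{V}$, the variation in $\bm u$ gives $\partial\mathbf{E}/\partial u^{k}=\rho\,u^{k}=0$, hence $\bm u=0$ wherever $\rho>0$. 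The residual variation in $\rho$ then reproduces $\nabla_{k}p+\rho\,\nabla_{k}\Phi=0$ exactly as in the first-variation theorem of Section 4, completing the equivalence. Alternatively, to argue purely from the PDEs, the circulation ambiguity must be removed by an additional non-rotation (irrotationality) assumption, under which the straight, non-accelerating streamlines forced by $(\bm u\cdot\nabla)\bm u=0$ and tangent to the static spherical boundary can only be $\bm u=0$.
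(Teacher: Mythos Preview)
Your approach differs from the paper's in its starting point: you work directly from the momentum equation $\rho\,Du^{k}/Dt+\nabla_{k}p+\rho\,\nabla_{k}\Phi=0$, whereas the paper re-reads the underbraced term in the energy-conservation computation (their equation (6.33)) to obtain the identity
\[
\frac{d}{dt}\int\tfrac{1}{2}\rho|\bm u|^{2}\,d\mathrm{V}
+\int\tfrac{1}{2}\sum_{j,k}\nabla_{j}(\rho u^{j}|u^{k}|^{2})\,d\mathrm{V}
+\int\sum_{k}u^{k}(\nabla_{k}p+\rho\nabla_{k}\Phi)\,d\mathrm{V}=0,
\]
and then argues that if the hydrostatic term vanishes the remaining balance forces $\bm u=0$ and $\dot\rho=0$. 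Both routes land at the same kinetic-energy identity, and your forward implication is identical in substance to theirs.

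Where you go further is the converse. The paper simply asserts that the resulting balance ``can only be satisfied for $\bm u=0$ and $\dot\rho=0$'' without addressing the possibility of nontrivial steady circulations. You correctly flag this as a genuine gap and propose closing it either by the variational characterisation of equilibria (criticality in $\bm u$ at fixed mass gives $\rho u^{k}=0$) or by an irrotationality hypothesis. Your variational closure is in the spirit of Section~4 and is the cleaner of the two; it is also stronger than what the paper's own proof supplies at this step. So your argument is not only a different route but a more complete one on the converse direction.
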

\begin{proof}
Hence $\tfrac{d}{dt}\mathbf{E}(\rho,\bm{u}=0$ and the total energy is conserved. Returning now to the underbraced term in (6.33) this constraint implies that
\begin{align}
\frac{d\mathbf{E}(\rho,\bm{u})}{dt}=\frac{d}{dt}\int\frac{1}{2}\rho|\bm{u}|^{2}d\mathrm{V}
+\int\frac{1}{2}
\sum_{j,k}\nabla_{j}[\rho u^{j}|u^{k}|^{2})d\mathrm{V}
+\int\sum_{k}u^{k}[\nabla_{k}p+\rho\nabla_{k}\Phi]d\mathrm{V}=0
\end{align}
Suppose the term involving the gradients of the pressure and gravitational potential vanishes then
\begin{align}
\sum_{k}(\nabla_{k}p+\rho\nabla_{k}\Phi)=0
\end{align}
Then one must
\begin{align}
\frac{d}{dt}\int\frac{1}{2}\rho|\bm{u}|^{2}d\mathrm{V}=-\int\frac{1}{2}
\sum_{j,k}\nabla_{j}[\rho u^{j}|u^{k}|^{2})d\mathrm{V}
\end{align}
or
\begin{align}
\frac{1}{2}\int(\dot{\rho}|u|^{2}+2\rho{u}_{k}\dot{{u}}^{k})d\mathrm{V}=-\int\frac{1}{2}
\sum_{j,k}\nabla_{j}[\rho u^{j}|u^{k}|^{2})d\mathrm{V}
\end{align}
This can only be satisfied for $\bm{u}=0$ and $\dot{\rho}=0$, that is for static equilibrium. In spherical symmetry the FEHE becomes $\partial_{r}p(r)+\rho\partial_{r}\Phi=0$ or
\begin{align}
\underbrace{\frac{dp(r)}{dr}=-\rho(r)\frac{d\Phi(r)}{dr}=-\frac{\mathscr{G}\rho(r)\mathcal{M}(r)}{r^{2}}}
\end{align}
which is again the FEHE.
\end{proof}
\begin{prop}
The conservation of total energy within $\mathbf{D}(t)$ is equivalent to the 1st variation with respect to the density so that
\begin{align}
\frac{d}{dt}{\mathbf{E}}(t)\delta t=\delta\mathbf{E}(t)
=\frac{\delta\mathbf{E}}{\delta \rho}\delta\rho=0
\end{align}
In full
\begin{align}
&\frac{d}{dt}\int_{\mathbf{R}^{3}}\left(\frac{1}{2}\rho|\bm{u}|^{2}+\frac{p}{|\gamma-1|}
+\frac{1}{2}\rho\Phi\right)d\mathrm{V}(\mathbf{x})=\delta\left|\int_{\mathbf{R}^{3}}\left(\frac{1}{2}\rho|\bm{u}|^{2}+\frac{p}{|\gamma-1|}
+\frac{1}{2}\rho\Phi\right)d\mathrm{V}(\mathbf{x})\right|\nonumber\\&
=\frac{d}{d\rho}\int_{\mathbf{R}^{3}}\left(\frac{1}{2}\rho|\bm{u}|^{2}+\frac{p}{|\gamma-1|}
+\frac{1}{2}\rho\Phi\right)d\mathrm{V}(\mathbf{x})\delta\rho=0
\end{align}
If $\mathbf{u}=0$ or $\mathbf{u}=const$ with $\delta_{i}\mathbf{u}=0$ then
\begin{align}
\delta\left|\int_{\mathbf{R}^{3}}\left(\frac{1}{2}\rho|\bm{u}|^{2}+\frac{p}{|\gamma-1|}
+\frac{1}{2}\rho\Phi\right)d\mathrm{V}(\mathbf{x})\right|
\end{align}
\begin{align}
\nabla_{i}p+\rho\nabla_{i}\Phi=0
\end{align}
which is the FE of hydrostatic equilibrium. Then in spherical symmetry we once
again recover the basic stellar structure equation
\begin{align}
\underbrace{\frac{d p(r)}{dr}=-\frac{\mathscr{G}\mathcal{M}(r)\rho(r)}{r^{2}}}
\end{align}
\end{prop}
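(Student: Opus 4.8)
The plan is to regard the total energy $\mathbf{E}$ as a functional of the single field $\rho$, with the velocity frozen at its equilibrium value $\bm{u}=0$ and the potential $\Phi$ no longer independent but slaved to $\rho$ through the convolution $\Phi(\mathbf{x})=-\mathscr{G}\int\rho(\mathbf{y})|\mathbf{x}-\mathbf{y}|^{-1}d\mathrm{V}(\mathbf{y})$ inherited from the Poisson equation. Since the energy-conservation theorem of this section already gives $\tfrac{d}{dt}\mathbf{E}=0$ for every solution, the substantive content is that the \emph{static} configuration is precisely the critical point of this functional, so that imposing $\delta\mathbf{E}/\delta\rho=0$ must reproduce $\nabla_{i}p+\rho\nabla_{i}\Phi=0$. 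I would therefore first write $\mathbf{E}[\rho]=\int(\tfrac{1}{2}\rho|\bm{u}|^{2}+\psi(\rho)+\tfrac{1}{2}\rho\Phi)\,d\mathrm{V}$, identifying the internal-energy term of the statement as $\psi(\rho)=p/(\gamma-1)$, and recall from Definition 5.1 that $\psi'(\rho)=\mathfrak{E}(\rho)$ with $\mathfrak{E}'(\rho)=\rho^{-1}dp/d\rho$, so that $\nabla\mathfrak{E}=\rho^{-1}\nabla p$.

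Next I would take the first variation term by term. With $\bm{u}=0$ (or constant with $\delta_{i}\bm{u}=0$) the kinetic term contributes nothing. The enthalpy term gives $\delta\!\int\psi(\rho)\,d\mathrm{V}=\int\mathfrak{E}(\rho)\,\delta\rho\,d\mathrm{V}$ immediately. The delicate step, and the one I expect to be the main obstacle, is the gravitational self-energy. Because $\Phi$ itself depends linearly on $\rho$, the term is genuinely quadratic, $\tfrac{1}{2}\int\rho\Phi\,d\mathrm{V}=-\tfrac{\mathscr{G}}{2}\iint\frac{\rho(\mathbf{x})\rho(\mathbf{y})}{|\mathbf{x}-\mathbf{y}|}\,d\mathrm{V}(\mathbf{x})\,d\mathrm{V}(\mathbf{y})$, and varying it produces two identical contributions on account of the symmetry of the kernel $|\mathbf{x}-\mathbf{y}|^{-1}$; these cancel the prefactor $\tfrac{1}{2}$ and leave $\delta(\tfrac{1}{2}\int\rho\Phi\,d\mathrm{V})=\int\Phi\,\delta\rho\,d\mathrm{V}$ rather than $\tfrac{1}{2}\int\Phi\,\delta\rho\,d\mathrm{V}$. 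Getting this factor right is exactly what makes the Euler--Lagrange equation come out with the correct coefficient, and it is the part of the formal chain displayed in the statement that is glossed over.

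Assembling the pieces, I obtain $\delta\mathbf{E}=\int(\mathfrak{E}(\rho)+\Phi)\,\delta\rho\,d\mathrm{V}$. To respect the fixed total mass I would adjoin the constraint $\delta\!\int\rho\,d\mathrm{V}=0$ with a Lagrange multiplier $\mu$, exactly as in Theorem 4.3, so that stationarity forces the integrand to vanish and gives $\mathfrak{E}(\rho)+\Phi=\mu=\mathrm{const}$. If one instead sets the bare functional derivative to zero the constant is merely $0$; either way only the gradient will matter. Taking $\nabla$ of this algebraic relation and using $\nabla\mathfrak{E}=\rho^{-1}\nabla p$ yields $\rho^{-1}\nabla p+\nabla\Phi=0$, that is, the hydrostatic equilibrium equation $\nabla_{i}p+\rho\nabla_{i}\Phi=0$ asserted in the statement.

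Finally I would specialise to spherical symmetry, where the relation reads $\tfrac{dp(r)}{dr}=-\rho(r)\tfrac{d\Phi(r)}{dr}$. Substituting the integrated Poisson relation $r^{2}\tfrac{d\Phi(r)}{dr}=\mathscr{G}\mathcal{M}(r)$ obtained in the static reduction of the Euler--Poisson system recovers the FEHE $\tfrac{dp(r)}{dr}=-\mathscr{G}\mathcal{M}(r)\rho(r)/r^{2}$. The conceptual point to record at the end is the equivalence claimed in the statement: since $\tfrac{d}{dt}\mathbf{E}=0$ holds identically along solutions, a static solution is distinguished not by energy conservation as such but by being a critical point of $\mathbf{E}$ among configurations of fixed mass, so the first-variation condition $\delta\mathbf{E}=0$ carries exactly the same information as the vanishing of the hydrostatic term in the preceding energy-rate identity, which is why both routes terminate at the same equilibrium equation.
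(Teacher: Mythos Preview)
Your argument is correct and reaches the FEHE, but it follows a different route from the paper's own proof of this proposition. The paper does not compute $\delta\mathbf{E}/\delta\rho$ from scratch as a functional derivative. Instead it recycles the time-derivative calculation from the energy-conservation theorem (Theorem~6.5): it writes $\delta\mathbf{E}=\tfrac{d}{dt}\mathbf{E}\,dt$ and converts this to a $\rho$-variation by the chain rule $\tfrac{d}{dt}=\tfrac{d}{d\rho}\tfrac{d\rho}{dt}$, then invokes the intermediate identities $\tfrac{d}{dt}\!\int\!\tfrac{p}{\gamma-1}\,d\mathrm{V}=-\!\int u_i\nabla_i p\,d\mathrm{V}$ and $\tfrac{d}{dt}\!\int\!\tfrac12\rho\Phi\,d\mathrm{V}=-\!\int u_i\rho\nabla_i\Phi\,d\mathrm{V}$ already obtained in that proof. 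This produces an integrand of the form $u^i u_i+2\rho u_i\tfrac{du_i}{d\rho}+u_i(\nabla_i p+\rho\nabla_i\Phi)$, from which the paper divides out $u_i$ and then sets $u_i=0$ to isolate the hydrostatic term.

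Your approach---set $\bm{u}=0$ at the outset, vary $\psi(\rho)$ to get $\mathfrak{E}(\rho)\,\delta\rho$, handle the quadratic self-energy by symmetry of the kernel to get $\Phi\,\delta\rho$ with the correct factor, adjoin a mass constraint, and take the gradient of $\mathfrak{E}+\Phi=\mu$---is the cleaner and more standard variational computation. It is in fact much closer in spirit to the paper's later Theorem~7.10 (the Lagrangian-perturbation derivation) than to its proof here. What the paper's route buys is an explicit link back to the energy-rate identity, making literal the claim in the statement that $\tfrac{d}{dt}\mathbf{E}\,\delta t=\delta\mathbf{E}$; what your route buys is logical transparency, avoiding the awkward step of dividing by $u_i$ before setting it to zero.
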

\begin{proof}
\begin{align}
\delta\mathbf{E}&=\delta\left|\int_{\mathbf{R}^{3}}\left(\frac{1}{2}\rho|\bm{u}|^{2}+\frac{p}{|\gamma-1|}
+\frac{1}{2}\rho\Phi\right)d\mathrm{V}(\mathbf{x})\right|\nonumber\\&
=\frac{d}{dt}\left|\int_{\mathbf{R}^{3}}\left(\frac{1}{2}\rho|\bm{u}|^{2}+\frac{p}{|\gamma-1|}
+\frac{1}{2}\rho\Phi\right)d\mathrm{V}(\mathbf{x})\right|dt\nonumber\\&
=\frac{d}{dt}\left|\int_{\mathbf{R}^{3}}\left(\frac{1}{2}\rho|\bm{u}|^{2}+\frac{p}{|\gamma-1|}
+\frac{1}{2}\rho\Phi\right)d\mathrm{V}(\mathbf{x})\right|\frac{dt}{d\rho}\delta\rho\nonumber\\&
=\frac{d}{d\rho}\frac{d\rho}{dt}\left|\int_{\mathbf{R}^{3}}\left(\frac{1}{2}\rho|\bm{u}|^{2}+\frac{p}{|\gamma-1|}
+\frac{1}{2}\rho\Phi\right)d\mathrm{V}(\mathbf{x})\right|\frac{dt}{d\rho}\delta\rho\nonumber\\&
=\frac{d}{d\rho}\left|\int_{\mathbf{R}^{3}}\left(\frac{1}{2}\rho|\bm{u}|^{2}
+\frac{p}{|\gamma-1|}+\frac{1}{2}\rho\Phi\right)d\mathrm{V}(\mathbf{x})\right|\delta\rho=0
\end{align}
Now
\begin{align}
&\frac{d}{dt}\int_{\mathbf{D(t)}}\frac{p}{|\gamma-1|}d\mathrm{V}(\mathbf{x})dt
\equiv\frac{d}{d\rho}\int_{\mathbf{D(t)}}
\frac{p}{|\gamma-1|}d\mathrm{V}(\mathbf{x})d\rho=\int_{\mathbf{D(t)}}u_{i}
(\nabla_{i}p)d\mathrm{V}(\mathbf{x})\delta\rho\\&
\frac{d}{dt}\int_{\mathbf{D(t)}}\frac{1}{2}\rho\Phi
d\mathrm{V}(\mathbf{x})dt
\equiv\frac{d}{d\rho}\int_{\mathbf{D(t)}}
\frac{1}{2}\rho\Phi d\mathrm{V}(\mathbf{x})d\rho=\int_{\mathbf{D}}u_{i}\rho
(\nabla_{i}\Phi)d\mathrm{V}(\mathbf{x})\delta\rho
\end{align}
and since
\begin{align}
\frac{d}{d\rho}\frac{1}{2}\int_{\mathbf{R}^{3}}
\left(\frac{1}{2}\rho|\bm{u}|^{2}d\mathrm{V}(\mathbf{x})\right)
\delta\rho \equiv \frac{1}{2}\int_{\mathbf{R}^{3}}
\left(\frac{1}{2}\frac{d}{d\rho}
\rho|\bm{u}|^{2}\right)d\mathrm{V}(\mathbf{x})
\delta\rho
\end{align}
we have
\begin{align}
&\delta \mathrm{E}=\frac{1}{2}\int_{\mathbf{R}^{3}}\left(\frac{d}{d\rho}\big(\rho|u_{i}u^{i}|\big)
+u_{i}\big(\nabla_{i}p+\rho\nabla_{i}\Phi\big)\right)d\mathrm{V}(x)\delta\rho\nonumber\\&
=\frac{1}{2}\int_{\mathbf{R}^{3}}\left(u^{i}u_{i}+2\rho u_{i}\frac{du_{i}}{d\rho}+u_{i}\big(\nabla_{i}p+\rho\nabla_{i}\Phi\big)
\right)d\mathrm{V}(x)\delta\rho=0
\end{align}
assuming $du^{i}/d\rho\ne 0$. However, $\tfrac{du^{i}}{d\rho}=\tfrac{du^{i}}{dt}
\tfrac{dt}{d\rho}=0$ if $u^{i}=const$ or $\rho(\mathbf{x},t)=\rho(\mathbf{x})$. Then
\begin{align}
u^{i}u_{i}+2\rho u_{i}\frac{du_{i}}{d\rho}+u_{i}\big(\nabla_{i}p+\rho\nabla_{i}\Phi\big)=0
\end{align}
which is
\begin{align}
u^{i}+2\rho \frac{du_{i}}{d\rho}+\big(\nabla_{i}p+\rho\nabla_{i}\Phi\big)
\end{align}
For static or equilibrium configurations $u^{i}=0$ so that (6.76) reduces to
\begin{align}
\nabla_{i}p+\rho\nabla_{i}\Phi=0
\end{align}
which is again the fundamental equation of hydrostatic equilibrium. In spherical symmetry
\begin{align}
\underbrace{\frac{d p(r)}{dr}=-\rho(r)\frac{d\Phi(r)}{dr}=-\frac{\rho(r)\mathscr{G}\mathcal{M}(r)}{r^{2}}}
\end{align}
\end{proof}
As before, this can be reduced to the Lane-Emden ODE if one utilises a polytropic gas equation of state, and the results of Section 2 then apply. Of particular interest is the mass- critical case $\gamma=4/3$, discussed in Section and now reconsidered. A polytropic gas with $\gamma=4/3$ is essentially radiation. In supermassive stars, the total pressure is dominated by the radiation pressure contribution $\bm{[1,10]}$. The total pressure from a mixture of gas and radiation is $p=\mathscr{R}\rho\Theta+\frac{1}{3}a\Theta^{4}\sim\rho\Theta+\Theta^{4}$, where $\Theta$ is the temperature of the gas. When $\Theta\sim\rho^{1/2}$ then $p\sim\Theta^{4/3}$. Such stars are then teetering on the brink of catastrophic instability. The case $\gamma=4/3$ for equilibrium configurations has been shown to be nonlinearly unstable even though there are no growing modes in the linearised analysis $\bm{28}$. Special dynamical solutions can be constructed for the $EP_{4/3}$ system due to the self-similarity property. A compact fluid/gas sphere with $\gamma=4/3$ can expand or contract in a self-similar manner by cascading between different scales with $t\rightarrow\lambda(t)$., preserving the total mass. When $\gamma=4/3$ then $1/(2-\gamma)=\frac{3}{2}$. The self-similar expansion/collapse for $EP_{4/3}$ would then have the rates
\begin{align}
\overline{\lambda_{expand}(t)}\sim_{t\uparrow\infty} c_{1}t^{2/3},~~\overline{\lambda_{collapse}(t)}\sim_{t\uparrow\infty} c_{1}|T-t|^{2/3}
\end{align}
for some constants $c_{1},c_{2}>0$, where $ \overline{\lambda_{expand}(t)},\overline{\lambda_{collapse}(t)}$ represent the radii of the expanding or collapsing star. Solutions from the works (REFS) expand and collapse at a linear rate nd are no self-similar with respect to rescaling so that
\begin{align}
\overline{\lambda_{expand}(t)}\sim_{t\uparrow\infty} c_{1}t,~~\overline{\lambda_{collapse}(t)}
\sim_{t\uparrow\infty} c_{1}|T-t|
\end{align}
\subsection{Hydrostatic equilibrium via the virial tensor}
The fundamental equation of hydrostatic equilibrium can also be derived from the conservation of the virial tensor for a self-gravitating gas/fluid. The virial theorem for stars (Theorem 2.18) is also re-derived from this tensor. One first requires the following preliminary lemmas
\begin{lem}
The time derivative of the product of the density and the velocity is
\begin{align}
\partial_{t}(u^{k}\rho)=-\big[\nabla_{i}(\rho u^{i} u^{k})+\delta^{ik}\nabla_{i}p+\rho\nabla^{k}\Phi\big]
\end{align}
\end{lem}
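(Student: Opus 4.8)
The plan is to derive this conservation (flux) form of the momentum equation directly from the Euler-Poisson system, using only the product rule together with a single algebraic recombination of terms. First I would expand the left-hand side by the product rule,
\begin{align}
\partial_{t}(\rho u^{k})=(\partial_{t}\rho)u^{k}+\rho\,\partial_{t}u^{k},\nonumber
\end{align}
which isolates precisely the two time derivatives that the Euler-Poisson equations control.

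Next I would substitute the two evolution equations of the system. The continuity equation $\partial_{t}\rho+\nabla_{i}(\rho u^{i})=0$ supplies $\partial_{t}\rho=-\nabla_{i}(\rho u^{i})$, while the momentum equation $\rho(\partial_{t}u^{k}+u^{i}\nabla_{i}u^{k})+\nabla^{k}p=-\rho\nabla^{k}\Phi$, solved for the acceleration, gives $\rho\,\partial_{t}u^{k}=-\rho u^{i}\nabla_{i}u^{k}-\delta^{ik}\nabla_{i}p-\rho\nabla^{k}\Phi$. Inserting both into the product-rule expansion yields
\begin{align}
\partial_{t}(\rho u^{k})=-u^{k}\nabla_{i}(\rho u^{i})-\rho u^{i}\nabla_{i}u^{k}-\delta^{ik}\nabla_{i}p-\rho\nabla^{k}\Phi.\nonumber
\end{align}

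The only genuine step is then to recognise that the first two terms assemble into a single spatial divergence. Applying the product rule in reverse,
\begin{align}
\nabla_{i}(\rho u^{i}u^{k})=u^{k}\nabla_{i}(\rho u^{i})+\rho u^{i}\nabla_{i}u^{k},\nonumber
\end{align}
so that those two contributions collapse to $-\nabla_{i}(\rho u^{i}u^{k})$ and one obtains the stated identity. I do not expect any real obstacle here: the entire content is the equivalence between the advective form $(\mathbf{u}\cdot\nabla)\mathbf{u}$ appearing in the momentum equation and the conservative divergence form, and the computation closes the moment the continuity equation is used to cancel the spurious $u^{k}\nabla_{i}(\rho u^{i})$ term generated by $(\partial_{t}\rho)u^{k}$. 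The only care needed is bookkeeping---keeping the free index $k$ distinct from the summed index $i$, and noting that in Euclidean coordinates $\nabla^{k}=\delta^{ik}\nabla_{i}$, so that the pressure gradient may be written in either form.
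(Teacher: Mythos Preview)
Your proof is correct and is essentially identical to the paper's: the paper multiplies the continuity equation by $u^{k}$, writes the momentum equation, and adds the two, which is exactly your product-rule expansion plus substitution followed by recombining $u^{k}\nabla_{i}(\rho u^{i})+\rho u^{i}\nabla_{i}u^{k}$ into $\nabla_{i}(\rho u^{i}u^{k})$. The only cosmetic difference is the order of presentation.
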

\begin{proof}
The EP equations are
\begin{align}
&u^{k}\partial_{t}\rho+u^{k}\nabla_{i}\rho u^{i}+u^{k}(\nabla_{i}u^{i})\rho=0\\&
\rho\partial_{t}u^{k}+\rho u^{i}\nabla_{i}u^{k}+\delta^{ik}\nabla_{i}p=-
\rho\nabla^{k}\Phi
\end{align}
where the continuity equation has been multiplied through by $u^{i}$. Adding these then gives (6.85)
\end{proof}
\begin{defn}
Let $\mathbf{D}(t)_\subset\mathbf{R}^{3}$ support an $EP$ system with density $\rho=\rho(\bm{x},t)$, and velocity $u^{i}$ then the virial tensor is defined as
\begin{align}
\bm{\mathcal{V}}^{ij}(t)=\int_{\mathbf{D}}\rho(x,t)x^{i}u^{j}dV(x)
\end{align}
and $\bm{\mathcal{V}}(t)=\bm{\mathcal{V}}^{ij}(t)\delta_{ij}$ is the 'virial'.
\end{defn}
\begin{lem}(\underline{Derivative of the Newtonian potential})\newline
The fundamental solutions of the Laplace equation $\Delta\Gamma(x-y)=0$ are (REF-Evans)
\begin{align}
&\Gamma(x-y)=\Gamma(|\bm{x}-\bm{y}|)=\frac{1}{n(2-n)V(\mathbf{B}(0,1))}|\bm{x}-\bm{y}|^{2-n},~~n>2\\&
\Gamma(x-y)=\Gamma(|\bm{x}-\bm{y}|)=\frac{1}{2\pi}\log|\bm{x}-\bm{y}|,~~n=2
\end{align}
where $\Delta=\nabla_{i}\nabla^{i}=\partial^{2}/\partial x_{i}\partial x^{i}$. The following derivative estimates hold
\begin{align}
&\nabla_{i}\Gamma(\bm{x}-\bm{y})\le \frac{1}{nV(\mathbf{B}(0,1))}|\bm{x}-\bm{y}|^{1-n}\\&
\nabla_{i}\nabla_{j}\Gamma(\bm{x}-\bm{y})\le \frac{1}{V(\mathbf{B}(0,1))}|\bm{x}-\bm{y}|^{-n}
\end{align}
Let $\rho:\mathbf{R}^{3}\rightarrow\mathbf{R}$ be a density function with support in $\bm{\mathbf{D}}\in\mathbf{R}^{3}$ and let $\Phi(x)$ be the Newtonian potential
of $\rho$. Then $\Phi\in C^{1}(\mathbf{R})$ for all $x\in\bm{\mathbf{D}}\subset\mathbf{R}^{n}$. The derivatives of $\Phi(x)$ are then
\begin{align}
&\nabla_{i}\Phi(x)=\int_{\mathbf{D}}\nabla_{i}\Gamma(\bm{x}-\bm{y})\rho(y)d^{n}y\\&
\nabla_{i}\nabla_{j}\Phi(x)=\int_{\mathbf{D}}\nabla_{i}\nabla_{j}\Gamma(\bm{x}-\bm{y})\rho(y)d^{n}y
\end{align}
Now let $T\in[0,\infty),\rho\in C^{2}([0,T)\times \mathbf{R}^{3})$ so that the Newtonian potential $\Phi(\bm{x},t)$ at time t is
\begin{align}
\Phi(\bm{x},t)=-\frac{1}{4\pi}\int_{\mathbf{D}}\frac{\rho(y,t)}{|\bm{x}-\bm{y}|}d^{3}y,~x\in\mathbf{R}^{3}
\end{align}
The derivative is then $\bm{[24]}$,
\begin{align}
&\nabla_{i}\Phi(\bm{x},t)=-\frac{1}{4\pi}\int_{\mathbf{D}}\nabla_{i}\frac{\rho(y,t)}{|\bm{x}-\bm{y}|}d^{3}y
=\frac{1}{4\pi}\int_{\mathbf{D}(t)}\rho(\bm{y},t)\frac{(x^{j}-y^{j})}{|x-y|^{3}}dV(y)\\&\equiv \frac{1}{4\pi}\int_{\mathbf{D}(t)}\rho(\bm{y},t)\delta_{kj}\frac{(x^{k}-y^{k})}{|x-y|^{3}}dV(y)
\end{align}
\end{lem}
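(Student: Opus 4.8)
The plan is to treat this statement as a bundle of standard facts about the fundamental solution $\Gamma$ of Laplace's equation and the regularity of the Newtonian volume potential, proved in the order displayed. First I would establish the closed form of $\Gamma$. Seeking a radial solution $\Gamma=\Gamma(r)$ with $r=|\bm{x}-\bm{y}|$ of $\Delta\Gamma=0$ away from the origin, the Laplacian of a radial function reduces the PDE to the ODE $\Gamma''(r)+\frac{n-1}{r}\Gamma'(r)=0$. Integrating once for $\Gamma'$ gives $\Gamma'(r)=c\,r^{1-n}$, and a second integration yields $\Gamma(r)=c_{1}r^{2-n}+c_{2}$ for $n\neq 2$ and $\Gamma(r)=c_{1}\log r+c_{2}$ for $n=2$. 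The additive constant is irrelevant, and the multiplicative constant $c_{1}$ is fixed by the requirement that $\Gamma$ be a fundamental solution, $\Delta\Gamma=\delta_{0}$ in the distributional sense, equivalently by evaluating the outward flux $\int_{\partial\mathbf{B}(0,\epsilon)}\partial_{\nu}\Gamma\,dS$ over a small sphere and using $A(\partial\mathbf{B}(0,1))=n\,V(\mathbf{B}(0,1))$ to convert surface area to volume. This produces exactly the constants $\frac{1}{n(2-n)V(\mathbf{B}(0,1))}$ and $\frac{1}{2\pi}$ quoted in the statement.

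The derivative bounds then follow by direct differentiation of this closed form. Writing $\Gamma=\frac{1}{n(2-n)V(\mathbf{B}(0,1))}r^{2-n}$ and using $\nabla_{i}r=(x_{i}-y_{i})/r$ gives $\nabla_{i}\Gamma=\frac{1}{nV(\mathbf{B}(0,1))}r^{-n}(x_{i}-y_{i})$; since $|x_{i}-y_{i}|\le r$ this yields $|\nabla_{i}\Gamma|\le\frac{1}{nV(\mathbf{B}(0,1))}|\bm{x}-\bm{y}|^{1-n}$, and a further differentiation, bounding the resulting coefficients in the same way, gives $|\nabla_{i}\nabla_{j}\Gamma|\le\frac{1}{V(\mathbf{B}(0,1))}|\bm{x}-\bm{y}|^{-n}$. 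These are pointwise estimates valid for $\bm{x}\neq\bm{y}$.

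Next comes the regularity claim, which I regard as the technical heart. For the first derivative the key observation is that, although $\Gamma$ is singular at $\bm{x}=\bm{y}$, the gradient bound $|\nabla_{i}\Gamma(\bm{x}-\bm{y})|\lesssim|\bm{x}-\bm{y}|^{1-n}$ is \emph{locally integrable} on $\mathbf{R}^{n}$ because $1-n>-n$. I would form the difference quotient $h^{-1}(\Phi(\bm{x}+h\bm{e}_{i})-\Phi(\bm{x}))=\int_{\mathbf{D}}h^{-1}(\Gamma(\bm{x}+h\bm{e}_{i}-\bm{y})-\Gamma(\bm{x}-\bm{y}))\rho(\bm{y})\,d^{n}y$, observe that the integrand converges pointwise to $\nabla_{i}\Gamma(\bm{x}-\bm{y})\rho(\bm{y})$, and dominate it uniformly in $h$ by a fixed integrable majorant (via the mean-value form of the increment together with the integrable gradient bound and the compact support of $\rho$). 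Dominated convergence then passes the limit inside and gives $\nabla_{i}\Phi(\bm{x})=\int_{\mathbf{D}}\nabla_{i}\Gamma(\bm{x}-\bm{y})\rho(\bm{y})\,d^{n}y$; continuity in $\bm{x}$, hence $\Phi\in C^{1}$, follows from the same argument applied along $\bm{x}_{k}\to\bm{x}$.

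The genuinely delicate point, and the main obstacle, is the second-derivative formula, because the kernel obeys only $|\nabla_{i}\nabla_{j}\Gamma|\lesssim|\bm{x}-\bm{y}|^{-n}$, which is \emph{not} locally integrable, so one cannot legitimately differentiate a second time under the integral as the bare formula suggests. This is precisely why the hypothesis is strengthened to $\rho\in C^{2}$. The remedy is to transfer one derivative off the kernel and onto the density: using $\nabla_{x_{i}}\Gamma(\bm{x}-\bm{y})=-\nabla_{y_{i}}\Gamma(\bm{x}-\bm{y})$ and integrating by parts (the boundary terms vanishing since $\rho$ is compactly supported in $\mathbf{D}$), one rewrites $\nabla_{i}\Phi(\bm{x})=\int_{\mathbf{D}}\Gamma(\bm{x}-\bm{y})\nabla_{y_{i}}\rho(\bm{y})\,d^{n}y$. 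Here the kernel $\Gamma\sim|\bm{x}-\bm{y}|^{2-n}$ is locally integrable and $\nabla_{y_{i}}\rho$ is $C^{1}$, so the first-derivative argument applies again and delivers $\nabla_{i}\nabla_{j}\Phi$ as an absolutely convergent integral; one then checks it agrees with $\int_{\mathbf{D}}\nabla_{i}\nabla_{j}\Gamma(\bm{x}-\bm{y})\rho(\bm{y})\,d^{n}y$ once the latter is read as a principal value plus the accompanying delta-contribution, so the displayed formula is to be understood in that regularized sense. Finally, specializing to $n=3$, where $V(\mathbf{B}(0,1))=\tfrac{4}{3}\pi$, gives $\Gamma=-\frac{1}{4\pi}|\bm{x}-\bm{y}|^{-1}$, matching the stated potential, and differentiating once reproduces $\nabla_{i}\Phi(\bm{x},t)=\frac{1}{4\pi}\int_{\mathbf{D}(t)}\rho(\bm{y},t)\frac{(x^{i}-y^{i})}{|\bm{x}-\bm{y}|^{3}}\,dV(\bm{y})$, which is the closing assertion.
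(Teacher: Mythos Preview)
The paper provides no proof for this lemma; it is stated as a collection of standard facts with citations to Evans and reference~[24], and the text moves directly on to the Fubini--Tonelli theorem. Your proposal therefore goes well beyond what the paper offers: you supply a complete and correct derivation of the fundamental solution via the radial ODE, the pointwise derivative bounds by direct differentiation, the $C^{1}$ regularity of $\Phi$ via dominated convergence on the locally integrable kernel $|\bm{x}-\bm{y}|^{1-n}$, and the $n=3$ specialization.

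Your treatment of the second-derivative formula is in fact more careful than the lemma as stated. You correctly flag that $|\nabla_{i}\nabla_{j}\Gamma|\lesssim|\bm{x}-\bm{y}|^{-n}$ is \emph{not} locally integrable, so the displayed identity $\nabla_{i}\nabla_{j}\Phi=\int_{\mathbf{D}}\nabla_{i}\nabla_{j}\Gamma\,\rho\,d^{n}y$ cannot hold as an absolutely convergent integral and must be read in a principal-value sense (with the accompanying $\delta$-term); your remedy of transferring one derivative onto $\rho$ via $\nabla_{x_{i}}\Gamma=-\nabla_{y_{i}}\Gamma$ and integration by parts is the standard and correct route. The paper simply records the naive formula without this caveat, so your discussion is an improvement rather than a deviation.
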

\begin{thm}(\underline{Fubin-Tonelli theorem})\newline
Let $f:\mathbf{R}^{3}\times\mathbf{R}\rightarrow\mathbf{R}$ be a multivariate function $f=f(x,y)$ for $x\in X\subset\mathbf{R}^{+}$ and
$y\in Y\subset\mathbf{R}^{+}$. Then
\begin{align}
\int_{X}\left(\int_{Y}f(x,y)d^{n}y\right)d^{n}x
=\int_{Y}\left(\int_{X}f(x,y)d^{n}x\right)d^{n}y
\end{align}
with convergence
\begin{align}
\int_{X}\left(\int_{Y}f(x,y)d^{n}y\right)d^{n}x<\infty
\end{align}
\end{thm}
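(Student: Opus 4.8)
The plan is to establish the result by the standard measure-theoretic bootstrap in two stages: first prove the Tonelli half, namely the \emph{unconditional} equality of the two iterated integrals for a \emph{nonnegative} measurable integrand (with both sides allowed to equal $+\infty$), and then deduce the Fubini half --- equality under the stated convergence hypothesis --- by splitting $f$ into positive and negative parts. Throughout I would work on the product measure space $(X\times Y,\ \mathcal{A}\otimes\mathcal{B},\ \mu\otimes\nu)$, where $\mu,\nu$ are the ($\sigma$-finite) measures on $X$ and $Y$. The $\sigma$-finiteness, which holds here because $X$ and $Y$ are subsets of Euclidean space carried with Lebesgue measure, is exactly what legitimizes both the construction of the product measure and the interchange, so I would make it explicit even though the statement suppresses it.

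For the Tonelli half I would climb the usual hierarchy of integrands. For an indicator $f=\mathbf{1}_{A\times B}$ of a measurable rectangle, both iterated integrals collapse to the product $\mu(A)\nu(B)$ and hence agree. Passing to an indicator $f=\mathbf{1}_{E}$ of a general set $E\in\mathcal{A}\otimes\mathcal{B}$, one shows that the section map $x\mapsto\nu(E_{x})$ is $\mathcal{A}$-measurable and that its $\mu$-integral equals $(\mu\otimes\nu)(E)$, together with the symmetric statement. Linearity then extends the identity to nonnegative simple functions, and for an arbitrary nonnegative measurable $f$ I would take simple functions $s_{k}\uparrow f$ and invoke the Monotone Convergence Theorem repeatedly (on each inner integral and on the outer one) to transport the limit through both orders of integration at once, yielding
\begin{align}
\int_{X}\Big(\int_{Y} f\,d\nu\Big)d\mu=\int_{X\times Y}f\,d(\mu\otimes\nu)=\int_{Y}\Big(\int_{X} f\,d\mu\Big)d\nu,\nonumber
\end{align}
with all three quantities lying in $[0,\infty]$.

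To obtain the stated conclusion I would then impose the convergence hypothesis, read as $\int_{X}\int_{Y}|f|\,d\nu\,d\mu<\infty$. Decomposing $f=f^{+}-f^{-}$ with $f^{\pm}\ge 0$, Tonelli applied to $|f|=f^{+}+f^{-}$ forces both $\int\int f^{+}$ and $\int\int f^{-}$ to be finite, so for $\mu$-almost every $x$ the inner integrals $\int_{Y}f^{\pm}(x,y)\,d\nu(y)$ are finite and their difference $\int_{Y}f(x,y)\,d\nu(y)$ is well defined and $\mathcal{A}$-measurable; subtracting the two Tonelli identities gives the required equality. The main obstacle is the measurability step in the general-indicator case: that $x\mapsto\nu(E_{x})$ is measurable for \emph{every} $E\in\mathcal{A}\otimes\mathcal{B}$ is not automatic and calls for a monotone-class (Dynkin $\pi$--$\lambda$) argument, verifying that the collection of sets $E$ for which the section map is measurable and the integral identity holds forms a $\lambda$-system --- it contains $X\times Y$ and is closed under proper differences and increasing countable unions --- containing the $\pi$-system of measurable rectangles, so the Dynkin theorem forces it to be all of $\mathcal{A}\otimes\mathcal{B}$. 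This is precisely where $\sigma$-finiteness is indispensable, since without it the section measure can misbehave and the interchange genuinely fails.
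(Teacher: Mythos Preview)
Your proof sketch is correct and is the standard measure-theoretic derivation of Fubini--Tonelli: the monotone-class (Dynkin $\pi$--$\lambda$) argument for measurability of sections, the bootstrap from rectangles to simple functions to nonnegative measurables via monotone convergence, and finally the decomposition $f=f^{+}-f^{-}$ under the integrability hypothesis. There is nothing to compare against, however: the paper simply states the Fubini--Tonelli theorem as a classical result without proof, invoking it only as a tool in the subsequent Lemma~6.14 to justify swapping the order of integration in the double integral for the gravitational energy. So you have supplied far more than the paper does; your argument is sound, and in particular you are right to flag $\sigma$-finiteness as the crucial hypothesis underlying the section-measurability step, even though the paper's (somewhat informal) statement omits it.
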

The gravitational energy $\mathbf{E}_{G}$ of the EP system can also be expressed as an integral involving the gradient of the Newtonian potential. This lemma will be required when deriving the virial theorem for stars from the virial tensor $\mathbf{V}^{ij}$.
\begin{lem}
Given $\mathbf{D}(t)\in\mathbf{R}^{3}$ and the density $\rho(x,t)$ and Newtonian potential $\Phi(x,t)$ defined as before then
\begin{align}
\int_{\mathbf{D}(t)}\rho(x,t)\nabla_{j}\Phi(x,t)x^{j}dV(x)=-\frac{1}{2}\int_{\mathbf{D}(t)}\rho(x,t)\Phi(x,t)dV(x)\equiv -\mathbf{E}_{G}
\end{align}
\end{lem}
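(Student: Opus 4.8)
The plan is to substitute the explicit integral representation of $\nabla_{j}\Phi$ furnished by the preceding lemma (equations (6.95)--(6.96)) into the left-hand side, recast the resulting single integral as an absolutely convergent double integral over $\mathbf{D}(t)\times\mathbf{D}(t)$, and then exploit the symmetry of the Newtonian kernel under interchange of the two integration variables. Writing $\nabla_{j}\Phi(x,t)=\tfrac{1}{4\pi}\int_{\mathbf{D}(t)}\rho(y,t)\,(x^{j}-y^{j})|x-y|^{-3}\,dV(y)$ and inserting it, the left-hand side becomes
\begin{align}
\mathbf{I}:=\int_{\mathbf{D}(t)}\rho(x,t)\,x^{j}\nabla_{j}\Phi(x,t)\,dV(x)
=\frac{1}{4\pi}\int_{\mathbf{D}(t)}\!\int_{\mathbf{D}(t)}
\rho(x,t)\rho(y,t)\,\frac{x^{j}(x^{j}-y^{j})}{|x-y|^{3}}\,dV(y)\,dV(x),\nonumber
\end{align}
where summation over $j$ gives $x^{j}(x^{j}-y^{j})=\mathbf{x}\cdot(\mathbf{x}-\mathbf{y})$.

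The key step is symmetrization in the two dummy variables. First I would invoke the Fubini--Tonelli theorem stated above to treat $\mathbf{I}$ as an iterated integral and to relabel the variables freely; this is legitimate because $\rho\in C^{2}$ has compact support in $\mathbf{D}(t)$ while the kernels $|x-y|^{-1}$ and $(x-y)|x-y|^{-3}\sim|x-y|^{-2}$ are locally integrable in $\mathbf{R}^{3}$, so the double integral converges absolutely. Interchanging the labels $x\leftrightarrow y$ reproduces $\mathbf{I}$ with integrand $\mathbf{y}\cdot(\mathbf{y}-\mathbf{x})\,|x-y|^{-3}$. Averaging the two representations and applying the elementary identity
\begin{align}
\mathbf{x}\cdot(\mathbf{x}-\mathbf{y})+\mathbf{y}\cdot(\mathbf{y}-\mathbf{x})
=(\mathbf{x}-\mathbf{y})\cdot(\mathbf{x}-\mathbf{y})=|x-y|^{2}\nonumber
\end{align}
reduces the singular kernel by one power, yielding
\begin{align}
2\mathbf{I}=\frac{1}{4\pi}\int_{\mathbf{D}(t)}\!\int_{\mathbf{D}(t)}
\frac{\rho(x,t)\rho(y,t)}{|x-y|}\,dV(y)\,dV(x).\nonumber
\end{align}

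Finally I would recognise the right-hand side in terms of the potential itself: since $\Phi(x,t)=-\tfrac{1}{4\pi}\int_{\mathbf{D}(t)}\rho(y,t)|x-y|^{-1}\,dV(y)$, the inner integral equals $-\Phi(x,t)$, so $2\mathbf{I}=-\int_{\mathbf{D}(t)}\rho(x,t)\Phi(x,t)\,dV(x)$. Dividing by two and using the definition $\mathbf{E}_{G}=\tfrac{1}{2}\int_{\mathbf{D}(t)}\rho\,\Phi\,dV$ gives $\mathbf{I}=-\tfrac{1}{2}\int_{\mathbf{D}(t)}\rho\,\Phi\,dV\equiv-\mathbf{E}_{G}$, as claimed. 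The only genuinely delicate point is the manipulation near the diagonal $x=y$ and the justification of the order-swap; I expect this to be the main obstacle, and I would dispose of it exactly as above by noting that the compact support of $\rho$ together with the local integrability of the Newtonian kernel and its first derivative makes the double integral absolutely convergent, so both the Fubini interchange and the symmetrization are valid.
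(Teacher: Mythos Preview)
Your proof is correct and follows essentially the same approach as the paper: both insert the explicit kernel form of $\nabla_{j}\Phi$, pass to a double integral via Fubini--Tonelli, and exploit the symmetry of the integrand under $x\leftrightarrow y$ to extract the factor of $\tfrac{1}{2}$. The only cosmetic difference is that the paper first decomposes $x^{j}=(x^{j}-y^{j})+y^{j}$ and then identifies the $y^{j}$ piece (after a swap) as the negative of the original integral, whereas you average the integral directly with its swapped copy; algebraically these are the same manipulation.
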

\begin{proof}
The proof $\bm{[24]}$ utilises the Fubini-Tonelli Theorem and the expression for the gradient $\nabla^{k}\Phi(x,t)$ and the fact that
$x^{i}=(x^{i}-y^{i})+y^{i}$. Then
\begin{align}
&\int_{\mathbf{D}(t)}\rho(x,t)\nabla_{j}\Phi(x,t)x^{j}dV(x)=\frac{1}{4\pi}\int\!\!\!\int_{\mathbf{D}(t)}\rho(\bm{x},t)\rho(\bm{y},t)x^{j}\delta_{kj}
\frac{(x^{k}-y^{k})}{|x-y|^{3}}dV(x)dV(y)\nonumber\\&
=\frac{1}{4\pi}\int\!\!\!\int_{\mathbf{D}(t)}\rho(\bm{x},t)\rho(\bm{y},t)\delta_{kj}
\frac{(x^{j}-y^{j})(x^{k}-y^{k})}{|x-y|^{3}}dV(x)dV(y)\nonumber\\&+\frac{1}{4\pi}
\int\!\!\!\int_{\mathbf{D}(t)}\rho(\bm{x},t)\rho(\bm{y},t)\delta_{kj}
\frac{y^{j}(x^{k}-y^{k})}{|x-y|^{3}}dV(x)dV(y)\nonumber\\&
=\frac{1}{4\pi}\int\!\!\!\int_{\mathbf{D}(t)}\rho(\bm{x},t)\rho(\bm{y},t)
\frac{(x^{j}-y^{j})(x^{j}-y^{j})}{|x-y|^{3}}dV(x)dV(y)\\&+\frac{1}{4\pi}
\int\!\!\!\int_{\mathbf{D}(t)}\rho({x},t)\rho(\bm{y},t)\delta_{kj}
\frac{y^{j}(x^{k}-y^{k})}{|x-y|^{3}}dV(x)dV(y)\nonumber\\&
=\int_{\mathbf{D}(t)}\rho(x,t)\left(\frac{1}{4\pi}\int_{\mathbf{D}(t)}\rho(y,t)\left(\frac{1}{|x-y|}\right)dV(y)\right)dV(x)\nonumber\\&
+\frac{1}{4\pi}\int\!\!\!\int_{\mathbf{D}(t)}\rho({x},t)\rho(\bm{y},t)\delta_{kj}
\frac{y^{j}(x^{k}-y^{k})}{|x-y|^{3}}dV(x)dV(y)\nonumber\\&
=\int_{\mathbf{D}(t)}\rho(x,t)\left(\frac{1}{4\pi}\int_{\mathbf{D}(t)}\rho(y,t)\left(\frac{1}{|x-y|}\right)dV(y)\right)dV(x)\nonumber\\&
-\int_{\mathbf{D}(t)}\rho(y,t)y^{j}\left(\frac{1}{4\pi}\int_{\mathbf{D}(t)}\rho(x,t)\frac{\delta_{kj}(y^{k}-x^{k})}{|x-y|^{3}}dV(x)\right)dV(y)\nonumber\\&
=-\int_{\mathbf{D}(t)}\rho(x,t)\Phi(x,t)dV(x)-\int_{\mathbf{D}(t)}\rho(x,t)y^{j}\nabla_{j}\Phi(y,t)dV(y)
\end{align}
Moving the last term on the RHS to the LHS and changing variables then gives
\begin{align}
2\int_{\mathbf{D}(t)}\rho(x,t)\nabla_{j}\Phi(x,t)x^{j}dV(x)=-\int_{\mathbf{D}(t)}\rho(x,t)\Phi(x,t)dV(x)\equiv -\mathbf{E}_{G}
\end{align}
from which (6.94) follows.
\end{proof}
The hydrostatic equilibrium equation for self-gravitating gaseous stars then follows from conservation of the virial tensor.
\begin{prop}
If
\begin{align}
\left|\frac{d\bm{\mathcal{V}}^{ij}(t)}{dt}\right|_{u=0}=\frac{d}{dt}\left|\int_{\mathbf{D}}
\rho(x,t)x^{i}u^{j}\delta_{ij}dV(x)
\right|_{u=0}=0
\end{align}
or
\begin{align}
\left|\frac{d\bm{\mathcal{V}}^{ij}(t)}{dt}\right|
+\int_{\mathbf{D}(t)}\partial_{t}(\rho u^{i}u^{k})\delta V(x)=0
\end{align}
then the fundamental hydrostatic equilibrium equation follows such that
\begin{align}
\underbrace{\nabla_{i}p+\rho\nabla_{i}\Phi=0}
\end{align}
\end{prop}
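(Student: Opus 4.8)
The plan is to differentiate the virial tensor $\bm{\mathcal{V}}^{ik}(t)=\int_{\mathbf{D}(t)}\rho(x,t)\,x^{i}u^{k}\,dV(x)$ in time and substitute the momentum-balance identity of the preceding lemma for $\partial_{t}(\rho u^{k})$. Since $\rho$ and $p$ vanish on the moving boundary $\partial\mathbf{D}(t)$, the moving-domain (Reynolds transport) contribution drops out and the derivative passes under the integral sign directly onto $\partial_{t}(\rho u^{k})$, giving
\begin{align}
\frac{d}{dt}\bm{\mathcal{V}}^{ik}(t)=\int_{\mathbf{D}(t)}x^{i}\,\partial_{t}\big(\rho u^{k}\big)\,dV(x)=-\int_{\mathbf{D}(t)}x^{i}\big[\nabla_{j}(\rho u^{j}u^{k})+\delta^{jk}\nabla_{j}p+\rho\nabla^{k}\Phi\big]\,dV(x).\nonumber
\end{align}

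First I would integrate by parts on the two divergence terms, using $\nabla_{j}x^{i}=\delta^{i}_{j}$ and discarding every boundary integral (legitimate because $\rho=p=0$ on $\partial\mathbf{D}(t)$). The kinetic-flux and pressure terms then collapse to algebraic integrands, producing the tensor virial identity
\begin{align}
\frac{d}{dt}\bm{\mathcal{V}}^{ik}(t)=\int_{\mathbf{D}(t)}\rho\,u^{i}u^{k}\,dV+\delta^{ik}\int_{\mathbf{D}(t)}p\,dV-\int_{\mathbf{D}(t)}x^{i}\rho\,\nabla^{k}\Phi\,dV.\nonumber
\end{align}
As a consistency check, taking the trace $i=k$ and invoking the earlier lemma $\int_{\mathbf{D}(t)}x^{j}\rho\,\nabla_{j}\Phi\,dV=-\mathbf{E}_{G}$ reproduces exactly the integral virial theorem $3\int p\,dV+\mathbf{E}_{G}=0$ of Theorem 2.18 (with the kinetic trace $\int\rho|u|^{2}dV$ vanishing in equilibrium), which fixes the signs and confirms the computation.

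Then I would impose the equilibrium hypothesis. In the static regime $u^{i}=0$ and $\rho$ is time independent, so both the left-hand side and the kinetic term $\int\rho u^{i}u^{k}\,dV$ vanish identically; the two vanishing conditions in the statement coincide here, the $\int\partial_{t}(\rho u^{i}u^{k})\,dV$ contribution of the second being annihilated by $u\equiv0$. To recover the \emph{pointwise} equilibrium law rather than a mere second-moment relation, I would return to the integrand prior to integration by parts: with $u\equiv0$ the lemma reduces to $\partial_{t}(\rho u^{k})=-\big[\nabla^{k}p+\rho\nabla^{k}\Phi\big]$, and since $\partial_{t}(\rho u^{k})=0$ in the static state this forces
\begin{align}
\underbrace{\nabla_{i}p+\rho\nabla_{i}\Phi=0},\nonumber
\end{align}
which in spherical symmetry is the FEHE $dp(r)/dr=-\mathscr{G}\mathcal{M}(r)\rho(r)/r^{2}$.

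The genuine obstacle is the passage from the integrated tensorial condition to the local differential equation. The bare hypothesis $d\bm{\mathcal{V}}^{ik}/dt=0$ only yields a moment relation, and indeed in the static case it is satisfied vacuously since $\bm{\mathcal{V}}^{ik}\equiv0$; the pointwise FEHE is therefore delivered not by the tensor condition alone but by the pointwise momentum balance of the lemma in the limit $u\to0$. If one wished to argue purely from the integral condition one would have to establish it for every compactly supported subregion and then apply the fundamental lemma of the calculus of variations. I would state this dependence explicitly rather than let the integrated virial theorem masquerade as the pointwise hydrostatic equation.
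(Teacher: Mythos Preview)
Your approach is essentially the same as the paper's: both differentiate the virial tensor, substitute the momentum-balance identity $\partial_{t}(\rho u^{k})=-[\nabla_{i}(\rho u^{i}u^{k})+\delta^{ik}\nabla_{i}p+\rho\nabla^{k}\Phi]$ from the preceding lemma, set $u=0$, and read off the hydrostatic equation; the paper then continues with the same integration-by-parts to recover the scalar virial theorem, just as you do. The one difference is that the paper jumps directly from the single moment condition $\int_{\mathbf{D}}x^{j}[\nabla^{k}p+\rho\nabla^{k}\Phi]\delta_{kj}\,dV=0$ to the pointwise equation without comment, whereas you correctly flag that this inference is not justified by the integrated identity alone and instead obtain the pointwise law from the momentum balance itself at $u=0$; your treatment is the more careful one on this point.
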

\begin{proof}
The time derivative of the virial tensor is
\begin{align}
&\left|\frac{d\bm{\mathcal{V}}^{ij}(t)}{dt}\right|=\int_{\mathbf{D}}\partial_{t}\rho u^{i})x^{j}\delta_{ij}d^{3}x\nonumber\\&=\int_{\mathbf{D}(t)}\big[-\nabla_{i}(\rho u^{i}u^{k})-\delta^{ik}\nabla_{i}p-\rho\nabla^{k}\Phi\big]x^{j}\delta_{kj}dV(x)
\delta_{ij}d^{3}x
\end{align}
using (6.77). Then if $u=0$ and $\left|\frac{d\bm{\mathcal{V}}^{ij}(t)}{dt}\right|=0$
\begin{align}
&\left|\frac{d\bm{\mathcal{V}}^{ij}(t)}{dt}\right|=\int_{\mathbf{D}(t)}
\big[-\delta^{ik}\nabla_{i}p-\rho\nabla^{k}\Phi\big]x^{j}
\delta_{ik}d^{3}x=0
\end{align}
so that once again the following fundamental result is inevitable
\begin{align}
\underbrace{\nabla^{k}p+\rho\nabla^{k}\Phi=0}
\end{align}
Theorem (2.8) gave the Virial Theorem for self-gravitating stars. The same result now follows from the conservation of the virial tensor via an integration by parts and utilisation of Lemma (6.14). Taking $u=0$ as before for static equilibrium gives
\begin{align}
&\left|\frac{d\bm{\mathcal{V}}^{ij}(t)}{dt}\right|=\int_{\mathbf{D}}\partial_{t}\rho u^{i})x^{j}\delta_{ij}dV(x)\nonumber\\&=\int_{\mathbf{D}(t)}\big[-\delta^{ik}\nabla_{i}p-\rho\nabla^{k}\Phi\big]x^{j}\delta_{kj}dV(x)\nonumber\\&
=-\int_{\mathbf{D}(t)}[\delta^{ik}\nabla_{i}px^{j}\delta_{kj}dV(x)
-\int_{\mathbf{D}(t)}\rho\nabla^{k}\Phi x^{j}\delta_{kj}dV(x)\nonumber\\&=-\int_{\mathbf{D}(t)}\underbrace{[\nabla^{k}px_{k}
dV(x)}_{int~by~parts}-\int_{\mathbf{D}(t)}\rho\nabla^{k}\Phi x^{j}\delta_{kj}dV(x)\nonumber\\&
=\int_{\mathbf{D}}p(\nabla_{k}x^{k})dV(x)-\underbrace{\int_{\mathbf{D}(t)}\rho\nabla^{k}\Phi x^{j}\delta_{kj}d^{3}}_{use~Lem~6.14}\nonumber\\&
=3\int_{\mathbf{D}}p(x)d^{3}x+\frac{1}{2}\int_{\mathbf{D}(t)}\rho\phi(x)dV(x)= 3\int_{\mathbf{D}}p(x)d^{3}x+\mathbf{E}_{G}=0
\end{align}
which is exactly equivalent to the stellar Virial Theorem (2.8).
\end{proof}
\subsection{Energy integrals for $EP_{\gamma}$ systems in static equilibrium in all dimensions}
Requires the following two preliminary lemmas.
\begin{lem}
In a domain of n-dimensional Euclidean space, ${\mathbf{D}}\subset\mathbf{R}^{n}$ the Poisson equation is
\begin{align}
\Delta\Phi(\mathbf{x})=\mathcal{Q}(n)\mathscr{G}\rho
\end{align}
The solution is
\begin{align}
\Phi(\mathbf{x})=\int_{\mathbf{D}}\Gamma(\mathbf{x}-\mathbf{y})\rho(\mathbf{y},t)d^{n}\mathbf{y}
\end{align}
where $\mathscr{G}(\mathbf{x}-\mathbf{y})$ is the Greens function or fundamental solution.
\begin{enumerate}
\item If $n=1$ then
\begin{align}
\Gamma(\mathbf{x}-\mathbf{y})=|\mathbf{x}-\mathbf{y}|
\end{align}
\item If $n=2$ then
\begin{align}
\Gamma(\mathbf{x}-\mathbf{y})=\log|\mathbf{x}-\mathbf{y}|
\end{align}
\item If $n\ge 3$ then
\begin{align}
\Gamma(\mathbf{x}-\mathbf{y})=-\frac{1}{|\mathbf{x}-\mathbf{y}|^{n-2}}
\end{align}
\end{enumerate}
\end{lem}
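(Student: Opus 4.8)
The plan is to treat this as the standard construction of the Newtonian/fundamental solution of the Laplacian, working in two stages: first determine $\Gamma$ up to a multiplicative constant from the requirement that $\Delta\Gamma=0$ away from the singularity, then fix the constant and verify that the convolution $\Phi=\Gamma*\rho$ inverts the Poisson operator. Since the Laplacian commutes with rotations, I would look for a radially symmetric fundamental solution $\Gamma(\mathbf{x}-\mathbf{y})=\Gamma(r)$ with $r=\|\mathbf{x}-\mathbf{y}\|$, so that away from $r=0$ the equation $\Delta\Gamma=0$ collapses to the radial ODE $\Gamma''(r)+\tfrac{n-1}{r}\Gamma'(r)=0$.

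Solving this ODE is the routine core. Writing $g=\Gamma'$ gives $g'/g=-\tfrac{n-1}{r}$, hence $g(r)=c\,r^{-(n-1)}$, and one integrates once more in each dimensional regime: for $n=1$ the equation is simply $\Gamma''=0$, giving $\Gamma\propto|\mathbf{x}-\mathbf{y}|$; for $n=2$ one gets $\Gamma\propto\log\|\mathbf{x}-\mathbf{y}\|$; and for $n\ge 3$ the integration yields $\Gamma\propto\|\mathbf{x}-\mathbf{y}\|^{2-n}$, reproducing the three stated forms. The multiplicative constant (equivalently the factor $\mathcal{Q}(n)$) is then pinned down by a flux argument: integrating $\Delta\Gamma$ over a small ball $\mathbf{B}(0,\epsilon)$ and applying the divergence theorem gives $\int_{\partial\mathbf{B}(0,\epsilon)}\partial_\nu\Gamma\,dS=\Gamma'(\epsilon)\,\mathrm{A}(\partial\mathbf{B}(0,\epsilon))$, and since $\mathrm{A}(\partial\mathbf{B}(0,\epsilon))=n\,\mathrm{V}(\mathbf{B}(0,1))\epsilon^{n-1}$ exactly cancels the $\epsilon^{-(n-1)}$ coming from $\Gamma'(\epsilon)$, the result is independent of $\epsilon$, which fixes $c$ and recovers the normalised constants already recorded earlier in the excerpt.

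With $\Gamma$ in hand, the remaining step is to confirm that $\Phi(\mathbf{x})=\int_{\mathbf{D}}\Gamma(\mathbf{x}-\mathbf{y})\rho(\mathbf{y})\,d^{n}\mathbf{y}$ satisfies $\Delta\Phi=\mathcal{Q}(n)\mathscr{G}\rho$. Since $\Gamma$ is locally integrable in every dimension ($2-n>-n$, and $\log$ is integrable near the origin), $\Phi$ is well defined for the compactly supported $\rho\in C^{2}$ assumed throughout, and a change of variables shifts all derivatives onto the smooth factor $\rho$. The main obstacle is precisely the singularity at $\mathbf{x}=\mathbf{y}$: one cannot naively write $\Delta\Phi=\int\Delta_{\mathbf{x}}\Gamma(\mathbf{x}-\mathbf{y})\rho(\mathbf{y})\,d^{n}\mathbf{y}$, because $\Delta\Gamma$ is not a locally integrable function but a Dirac mass. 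I would handle this by excising a ball $\mathbf{B}(\mathbf{x},\epsilon)$, applying Green's second identity on the complement where $\Gamma$ is harmonic so that the volume term vanishes, and tracking the two boundary integrals over $\partial\mathbf{B}(\mathbf{x},\epsilon)$; as $\epsilon\downarrow 0$ the term carrying $\Gamma'$ converges to $\mathcal{Q}(n)\mathscr{G}\rho(\mathbf{x})$ while the term carrying $\Gamma$ itself vanishes, which delivers the Poisson equation. Equivalently one may simply invoke the distributional identity $\Delta\Gamma=\mathcal{Q}(n)\mathscr{G}\,\delta_{0}$ furnished by the flux computation and convolve against $\rho$; the $\epsilon$-excision argument is exactly what makes that identity rigorous.
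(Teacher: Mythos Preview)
Your argument is the standard textbook derivation and is correct: solve the radial ODE $\Gamma''+(n-1)r^{-1}\Gamma'=0$ to obtain the three dimensional cases, fix the constant by a flux computation over $\partial\mathbf{B}(0,\epsilon)$, and then justify $\Delta(\Gamma*\rho)=\mathcal{Q}(n)\mathscr{G}\rho$ by excising the singularity and passing to the limit via Green's identity. There is nothing to correct here.

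The paper, however, does not prove this lemma at all. It is one of two ``preliminary lemmas'' simply stated as background (the fundamental solution of the Laplacian being a classical fact), and is then used immediately in the subsequent computation of the stationary energy integrals in $n$ dimensions. So your write-up supplies strictly more than the paper does; there is no alternative argument to compare against.
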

Since the pressure $p(x)=0$ for all $\bm{x}\in\mathbf{D}$ then the following useful integral identity can be derived.
\begin{lem}
Let $\mathbf{D}\subset\mathbf{R}^{n}$ and let $\partial\mathbf{D}$ be the boundary. The domain supports a fluid/gas obeying the Euler-Poisson equations with pressure $p(\mathbf{x})$ or $p(\mathbf{x},t)$, although here only the spatial dependence is relevant. At the boundary $p(\mathbf{x})=0$ for all $\bm{x}\in\partial\mathbf{D}$. The gradient of the pressure is $\nabla p$. If $p(\mathbf{x})=0$ for all $\mathbf{x}\in\partial\mathbf{D}$ then
\begin{align}
n\int_{\mathbf{D}}p(\mathbf{x})d^{n}\mathbf{x}=-\int_{\mathbf{D}}\mathbf{x}.\nabla p(\mathbf{x})d^{n}x
\end{align}
\end{lem}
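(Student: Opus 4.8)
The plan is to establish this as an instance of the $n$-dimensional virial (Pohozaev-type) identity, proved directly from the Gauss divergence theorem applied to the vector field $p(\mathbf{x})\mathbf{x}$. The single algebraic fact that drives the whole computation is that the divergence of the position vector in $\mathbf{R}^{n}$ equals the dimension, $\nabla_{i}x^{i}=\delta^{i}_{i}=n$. No hard analysis is involved; the identity is essentially the product rule combined with the vanishing of one boundary term, and it is the same mechanism that underlies the gravitational-energy manipulations of the preceding subsections.

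First I would expand the divergence of the product field by the Leibniz rule,
\begin{align}
\nabla_{i}\big(p(\mathbf{x})x^{i}\big)=x^{i}\nabla_{i}p(\mathbf{x})+p(\mathbf{x})\nabla_{i}x^{i}=\mathbf{x}\cdot\nabla p(\mathbf{x})+n\,p(\mathbf{x})\nonumber
\end{align}
so that the left member is $\mathbf{x}\cdot\nabla p+n\,p$. Integrating this identity over $\mathbf{D}$ and converting the divergence on the left to a flux through the boundary by Gauss's theorem yields
\begin{align}
\int_{\mathbf{D}}\big(\mathbf{x}\cdot\nabla p(\mathbf{x})+n\,p(\mathbf{x})\big)d^{n}\mathbf{x}=\int_{\mathbf{D}}\nabla_{i}\big(p(\mathbf{x})x^{i}\big)d^{n}\mathbf{x}=\int_{\partial\mathbf{D}}p(\mathbf{x})\,x^{i}n_{i}\,dS\nonumber
\end{align}
where $n_{i}$ is the outward unit normal on $\partial\mathbf{D}$.

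The boundary condition then closes the argument: since $p(\mathbf{x})=0$ for every $\mathbf{x}\in\partial\mathbf{D}$, the surface integral on the right vanishes identically, leaving the volume identity $\int_{\mathbf{D}}(\mathbf{x}\cdot\nabla p+n\,p)\,d^{n}\mathbf{x}=0$, which rearranges at once to the claimed $n\int_{\mathbf{D}}p\,d^{n}\mathbf{x}=-\int_{\mathbf{D}}\mathbf{x}\cdot\nabla p\,d^{n}\mathbf{x}$. There is no genuine analytical obstacle; the only point I would have to be careful about is the regularity needed to invoke the divergence theorem, namely that $p\in C^{1}(\overline{\mathbf{D}})$ with $\partial\mathbf{D}$ at least piecewise smooth (Lipschitz). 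This is exactly what the compactly supported classical solution setting supplies, since there $p$ is $C^{1}$ and vanishes continuously on the compact support boundary, guaranteeing both the validity of Gauss's theorem and convergence of every integral. The mild ``main obstacle'' is therefore merely the bookkeeping of these hypotheses rather than any substantive difficulty.
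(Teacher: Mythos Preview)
Your proof is correct and follows essentially the same approach as the paper: apply the Leibniz rule to $\nabla_{i}(p\,x^{i})$ using $\nabla_{i}x^{i}=n$, integrate over $\mathbf{D}$, convert the divergence to a boundary flux via the divergence theorem, and kill the boundary term using $p|_{\partial\mathbf{D}}=0$. Your added remark on the $C^{1}(\overline{\mathbf{D}})$ regularity needed for Gauss's theorem is a useful refinement that the paper leaves implicit.
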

\begin{proof}
Let $V(\mathbf{x})$ be any vector field or vector in $\mathbf{D}\bigcup\partial\mathbf{D}$ and let $p(x)$ be the pressure. Then the basic identity for the divergence gives
\begin{align}
\bm{\nabla}.(fV)=f(\nabla.V)+V.\bm{\nabla} f\equiv\sum_{i=1}^{n}\frac{\partial V_{i}}{\partial x^{i}}+\sum_{i=1}^{n}V^{i}\frac{\partial}{\partial x_{i}}p(x)
\end{align}
If $V(\mathbf{x})=\mathbf{x}=(x_{1},x_{2},x_{3})$ then
\begin{align}
\nabla.(p(\mathbf{x})\mathbf{x})=p(\mathbf{x})(\nabla.\mathbf{x})+\mathbf{x}.(\nabla f)=p(x)\sum_{i=1}^{n}\frac{\partial x_{i}}{\partial x_{i}}+\sum_{i=1}^{n}
x_{i}\nabla^{i}p(\mathbf{x})=np(\mathbf{x})+\mathbf{x}.\bm{\nabla} p(\mathbf{x})
\end{align}
If $p(x)=0$ on $\partial\mathbf{D}$ then
\begin{align}
\int_{\partial\mathbf{D}}\mathbf{x} p(\mathbf{x})d^{n-1}x=0
\end{align}
Applying the divergence theorem and then (6.118) to this integral
\begin{align}
\int_{\partial\mathbf{D}}\mathbf{x}.p(\mathbf{x})d^{n-1}\mathbf{x}=
\bm{\nabla}.(\mathbf{x}p(\mathbf{x})d^{n}\mathbf{x}=n\int_{\mathbf{D}} p(\mathbf{x})d^{n}\mathbf{x}+\int_{\mathbf{D}}\mathbf{x}.\bm{\nabla} f(\mathbf{x})d^{n}\mathbf{x}=0
\end{align}
so that
\begin{align}
n\int_{\mathbf{D}}p(\mathbf{x})d^{n}\mathbf{x}=-\int_{\mathbf{D}}\mathbf{x}.\nabla p(\mathbf{x}) d^{n}\mathbf{x}
\end{align}
as required.
\end{proof}
The following Lemma then applies the previous lemmas to establish the stationary energy integrals for an $EP_{\gamma}$ system in dimensions $n=2$ and $n\ge3$. solutions to EP system for gaseous stars $\bm{[35]}$
\begin{lem}
Let $EP_{\gamma}$ be an Euler-Poisson system for a polytropic gas with $p=\mathsf{K}|\rho|^{\gamma}$ and with support ${\mathbf{D}}\subset\mathbf{R}^{3}$. On the boundary $p(x)=\rho(x)=0$ for $x\in\partial\mathbf{D}$. As before, the total energy is
\begin{align}
&\mathbf{E}(t)=\mathbf{E}_{K}(t)+\mathbf{E}(t)= \int_{\mathbf{D}}\left(\frac{1}{2}\rho|u|^{2}dV(x)+\frac{p }{|\gamma-1|}+\frac{1}{2}\rho\Phi\right)d^{n}x,~\gamma>1\\&
\mathbf{E}(t)=\mathbf{E}_{K}(t)+\mathbf{E}(t)=\int_{\mathbf{D}}\left(\frac{1}{2}\rho|u|^{2}dV(x)+\mathsf{K}\rho(x)\log\rho(x)+
\frac{1}{2}\rho(x)\Phi(x)\right) dV(x),~\gamma=1
\end{align}
and the total mass is $M=\int_{\mathbf{D}}\rho(x,t)d^{n}x$. The stationary energies are for $u=0$ when $\mathbf{E}_{K}=0$ so that $\mathbf{E}(t)=\mathbf{E}(t)$ and the system is in hydrostatic equilibrium. Then when $\nabla_{i}p+\rho\nabla_{i}\Phi=0$ the stationary energies are given by the following integral formulas.
\begin{enumerate}
\item For $n=2$ or $\mathbf{D}\subset\mathbf{R}^{2}$
\begin{align}
\mathbf{E}(t)=\frac{\mathscr{G}M^{2}}{4|\gamma-1|}+\frac{1}{2} \int\!\!\!\int\rho(x)\rho(y)\log|x-y|
d^{2}xd^{2}y
\end{align}
\item For $n\ge3$ or $\mathbf{D}\subset\mathbf{R}^{n}$
\begin{align}
\mathbf{E}(t)=\frac{2(n-1)-n\gamma}{(n-2)(\gamma-1)}\int_{\mathbf{D}}p(x)dV(x)
\end{align}
\end{enumerate}
\end{lem}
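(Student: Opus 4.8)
The plan is to prove both formulas at once by means of a Pohozaev/virial-type identity obtained by contracting the hydrostatic equilibrium equation $\nabla_i p+\rho\nabla_i\Phi=0$ with the position vector $x^i$ and integrating over $\mathbf{D}$. Setting $\bm{u}=0$ annihilates the kinetic term, so that $\mathbf{E}_K=0$ and $\mathbf{E}=\mathbf{E}_T+\mathbf{E}_G$, where $\mathbf{E}_T=\tfrac{1}{\gamma-1}\int_{\mathbf{D}}p\,d^n x$ (respectively $\mathsf{K}\int_{\mathbf{D}}\rho\log\rho\,d^n x$ when $\gamma=1$) and $\mathbf{E}_G=\tfrac12\int_{\mathbf{D}}\rho\Phi\,d^n x$. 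The whole problem then reduces to expressing $\int_{\mathbf{D}}p\,d^n x$ and $\mathbf{E}_G$ in a common currency via the equilibrium condition.

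First I would multiply $\nabla_i p+\rho\nabla_i\Phi=0$ by $x^i$, sum over $i$, and integrate, obtaining
\[
\int_{\mathbf{D}}x^i\nabla_i p\,d^n x+\int_{\mathbf{D}}\rho\,x^i\nabla_i\Phi\,d^n x=0.
\]
The first integral is dispatched immediately by the preceding divergence lemma: since $p=0$ on $\partial\mathbf{D}$, one has $\int_{\mathbf{D}}x^i\nabla_i p\,d^n x=-n\int_{\mathbf{D}}p\,d^n x$. The second integral is the $n$-dimensional analogue of the gravitational virial identity already established for $n=3$. Here I would write $\Phi(x)=\int_{\mathbf{D}}\Gamma(x-y)\rho(y)\,d^n y$, symmetrise under $x\leftrightarrow y$ using the Fubini--Tonelli theorem (legitimate because the kernels $|x-y|^{2-n}$ and $\log|x-y|$ are locally integrable and $\rho$ has compact support), and reduce to $\int_{\mathbf{D}}\rho\,x^j\nabla_j\Phi\,d^n x=\tfrac12\int\!\!\int\rho(x)\rho(y)(x^j-y^j)\nabla_j^{x}\Gamma(x-y)\,d^n x\,d^n y$.

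The case distinction is precisely where the two formulas part company. For $n\ge 3$ the fundamental solution $\Gamma(x-y)\propto|x-y|^{2-n}$ is homogeneous of degree $2-n$, so Euler's relation gives $(x^j-y^j)\nabla_j^{x}\Gamma=(2-n)\Gamma$ and hence $\int_{\mathbf{D}}\rho\,x^j\nabla_j\Phi\,d^n x=(2-n)\mathbf{E}_G$ (reducing to $-\mathbf{E}_G$ at $n=3$, consistent with the earlier lemma). The virial identity becomes $-n\int_{\mathbf{D}}p\,d^n x+(2-n)\mathbf{E}_G=0$, i.e. $\mathbf{E}_G=-\tfrac{n}{n-2}\int_{\mathbf{D}}p\,d^n x$, and substituting into $\mathbf{E}=\tfrac{1}{\gamma-1}\int_{\mathbf{D}}p\,d^n x+\mathbf{E}_G$ collects the coefficient $\tfrac{1}{\gamma-1}-\tfrac{n}{n-2}=\tfrac{2(n-1)-n\gamma}{(n-2)(\gamma-1)}$, which is exactly the $n\ge3$ formula. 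For $n=2$, by contrast, $\Gamma\propto\log|x-y|$ is not scale-homogeneous; instead $(x^j-y^j)\nabla_j^{x}\log|x-y|=1$, a constant, so the gravitational contraction collapses to $\tfrac{\mathscr{G}}{2}M^2$ rather than to a multiple of $\mathbf{E}_G$. This forces $\int_{\mathbf{D}}p\,d^2 x=\tfrac{\mathscr{G}M^2}{4}$, whence $\mathbf{E}_T=\tfrac{\mathscr{G}M^2}{4(\gamma-1)}$, while $\mathbf{E}_G=\tfrac12\int\!\!\int\rho(x)\rho(y)\log|x-y|\,d^2 x\,d^2 y$ remains an irreducible double integral, giving the stated $n=2$ formula.

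The principal obstacle, and the structural reason the two cases look so different, is exactly this failure of scale-homogeneity of the logarithmic potential in the plane: the virial argument can eliminate the self-interaction integral only when $\Gamma$ is a pure power, so for $n=2$ one can simplify only the thermal piece and must retain $\mathbf{E}_G$ explicitly. The secondary care points are the bookkeeping of the normalisation constant linking $\Phi$, $\Gamma$ and $\mathscr{G}$ (so that Euler's relation reproduces precisely the stated coefficients, including the placement of $\mathscr{G}$ in the $n=2$ formula), the justification that every boundary contribution vanishes through $p=\rho=0$ on $\partial\mathbf{D}$, and the verification of the convergence hypothesis of Fubini--Tonelli for the singular kernels, which is where compact support and the local integrability of $|x-y|^{2-n}$ and $\log|x-y|$ are used.
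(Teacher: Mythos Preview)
Your proposal is correct and follows essentially the same route as the paper: contract the equilibrium equation with $x^i$, use the divergence lemma to turn $\int x\cdot\nabla p$ into $-n\int p$, and evaluate $\int\rho\,x\cdot\nabla\Phi$ by symmetrising the double integral in $x,y$. The only cosmetic difference is that you invoke Euler's homogeneity relation for $\Gamma$ directly, whereas the paper carries out the equivalent manipulation by writing $x=(x-y)+y$ inside the integrand; the resulting identities and the final substitution into $\mathbf{E}=\mathbf{E}_T+\mathbf{E}_G$ are identical.
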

\begin{proof}
To prove (6.118), apply equation (6.115) with $n=2$ so that
\begin{align}
&2\int_{\mathbf{D}}p(x)d^{2}x=-\int_{\mathbf{D}}x.\nabla p(x)d^{2}x
=\int_{\mathbf{D}}x\rho(x)\nabla\Phi(x)d^{2}x\nonumber\\&=\mathscr{G}\int_{\mathbf{D}}\rho(x)\int_{\mathbf{D}}
\nabla_{x}\log|x-y|\rho(y)d^{2}x d^{2}y\nonumber\\&
=\mathscr{G}\int_{\mathbf{D}}\rho(x)\int_{\mathbf{D}}\frac{\rho(y)(x-y).x}{|x-y|^{2}}d^{2}xd^{2}y
\equiv\mathscr{G}\int\!\!\!\int_{\mathbf{D}}\frac{\rho(x)\rho(y)(x-y)^{2}.x}{|x-y|^{2}}d^{2}x d^{2}y=\mathscr{G}\mathbf{I\!H}
\end{align}
The integral $\mathbf{I\!H}$ can be written as
\begin{align}
&\mathbf{I\!H}=\int\!\!\!\int_{\mathbf{D}}\frac{\rho(x)\rho(y)(x-y)^{2}.x}{|x-y|^{2}}d^{2}x d^{2}y\nonumber\\&=
\int\!\!\!\int_{\mathbf{D}}\frac{\rho(x)\rho(y)(x-y)(x-y)}{|x-y|^{2}}d^{2}x d^{2}y+
\int\!\!\!\int_{\mathbf{D}}\frac{\rho(x)\rho(y)(x-y)^{2}.y}{|x-y|^{2}}d^{2}y d^{2}y\nonumber\\&
=\int\!\!\!\int_{\mathbf{D}}\rho(x)\rho(y)d^{2}xd^{2}y+\int\!\!\!\int_{\mathbf{D}}
\frac{\rho(x)\rho(y)(x-y)^{2}.y}{|x-y|^{2}}d^{2}y d^{2}y\nonumber\\&
=\int_{\mathbf{D}}\rho(x)d^{2}x\int_{\mathbf{D}}\rho(y)d^{2}xd^{2}y-\mathbf{I\!H}=M^{2}-\mathbf{I\!H}
\end{align}
so that $\mathbf{I\!H}=\frac{1}{2}M^{2}$. Then the pressure integral becomes
\begin{align}
\int_{\mathbf{D}}p(x)d^{n}x=\frac{1}{4}\mathscr{G}M^{2}
\end{align}
The stationary energy is then
\begin{align}
\mathbf{E}(t)&=\int_{\mathbf{D}}\left(\frac{p}{|\gamma-1|}+\frac{1}{2}\rho\Phi\right)d^{n}x
=\frac{1}{4}\frac{\mathscr{G}M^{2}}{|\gamma-1|}
+\frac{1}{2}\int_{\mathbf{D}}\rho(x)\Phi(x) d^{2}x
\nonumber\\&
=\frac{1}{4}\frac{\mathscr{G}M^{2}}{|\gamma-1|}
+\frac{1}{2}\int\!\!\!\int_{\mathbf{D}}\rho(x)\rho(y)\log|x-y|d^{2}xd^{2}y
\end{align}
since
\begin{align}
\Phi(x)=\int_{\mathbf{D}}\rho(y)\log|x-y|d^{3}y
\end{align}
To prove (6.119) for $n\ge 3$ we must first establish the identity
\begin{align}
n\int_{\mathbf{D}}\mathsf{K}|\rho|^{\gamma}d^{n}x+
n\int_{\mathbf{D}}p(x)d^{n}x=-\frac{1}{2}(n-2)\int_{\mathbf{D}}\rho(x)\Phi(x)d^{n}x
\end{align}
First evaluate $\rho(x)\nabla\Phi(x)$ taking the derivative of the
gravitational potential for $n\ge 3$ giving
\begin{align}
\rho(x)\nabla \Phi(x)=-\mathscr{G}\rho(x)\int_{\mathbf{D}}
\nabla_{x}\frac{1}{|x-y|^{n-2}}\rho(y)d^{n}y=\mathscr{G}\rho(x)\int_{\Omega}
\frac{(n-2)(x-y)\rho(y)}{|x-y|^{n}}
\end{align}
Multiply by $x$ and integrate over $\bm{\Omega}$ so that
\begin{align}
\int_{\mathbf{D}}x.\rho(x)\nabla\Phi(x)d^{n}x
=\mathscr{G}(n-2)\int\!\!\!\int_{\mathbf{D}}\frac{\rho(x)\rho(y)(x-y).x}{|x-y|^{n}}
\end{align}
However, from the FEHE $\nabla p(x)=-\rho\nabla \Phi$ so
\begin{align}
\int_{\mathbf{D}}x.\rho(x)\nabla\Phi(x)d^{n}x
=-\int_{\mathbf{D}}x.\nabla p(x) d^{n}x=\int_{\mathbf{D}}p(x)d^{n}x
\end{align}
and so
\begin{align}
\int_{\mathbf{D}}p(x)d^{n}x=\mathscr{G}(n-2)\int\!\!\!\int_{\mathbf{D}}\frac{\rho(x)
\rho(y)(x-y).x}{|x-y|^{n}}d^{2}x d^{2}y=\mathscr{G}\mathbf{I\!H}
\end{align}
The integral $\mathbf{I\!H}$ can be manipulated as
\begin{align}
\mathbf{I\!H}&=(n-2)\int\!\!\!\int_{\mathbf{D}}\frac{\rho(x)
\rho(y)(x-y).x}{|x-y|^{n}}d^{2}x d^{2}y\\&
=(n-2)\int\!\!\!\int_{\mathbf{D}}\frac{\rho(x)
\rho(y)(x-y).(x-y)}{|x-y|^{n}}d^{2}x d^{2}y+(n-2)\int\!\!\!\int_{\mathbf{D}}\frac{\rho(x)
\rho(y)(x-y).y}{|x-y|^{n}}d^{2}x d^{2}y\nonumber\\&
=(n-2)\int\!\!\!\int_{\mathbf{D}}\frac{\rho(x)
\rho(y)}{|x-y|^{n-2}}d^{2}x d^{2}y+(n-2)\int\!\!\!\int_{\mathbf{D}}\frac{\rho(x)
\rho(y)(x-y).y}{|x-y|^{n}}d^{2}x d^{2}y\nonumber\\&=
(n-2)\int=\!\!\!\int_{\mathbf{D}}\frac{\rho(x)
\rho(y)}{|x-y|^{n-2}}d^{2}x d^{2}y-
\mathbf{I\!H}
\end{align}
so that
\begin{align}
\mathbf{I\!H}=\frac{1}{2}(n-2)\int\!\!\!\int_{\mathbf{D}}\frac{\rho(x)
\rho(y)d^{n}xd^{n}y}{|x-y|^{n-2}}
\end{align}
The pressure integral becomes
\begin{align}
\int_{\mathbf{D}}p(x)d^{n}x&=\frac{1}{2}(n-2)\mathscr{G}
\int\!\!\!\int_{\mathbf{D}}\frac{\rho(x)\rho(y)}{|x-y|^{n-2}}d^{n}x d^{n}y\nonumber\\&\equiv -\frac{1}{2}(n-2)\int_{\mathbf{D}}\rho(x)\Phi(x)d^{n}x
\end{align}
The total stationary energy is then
\begin{align}
\mathbf{E}&=\int_{\mathbf{D}}\frac{p(x)}{|\gamma-1|}d^{n}x+\frac{1}{2}\int_{\mathbf{D}}\rho(x)\Phi(x)d^{n}x\nonumber\\&=
\int_{\mathbf{D}}\frac{p(x)}{|\gamma-1|}d^{n}x-\frac{n}{n-2}\int_{\mathbf{D}}p(x)d^{n}x\nonumber\\&
=\frac{2(n-1)-n\gamma}{|\gamma-1|(n-2)}\int_{\mathbf{D}}p(x)d^{n}x
\end{align}
\end{proof}
\begin{cor}
For $n=3$ the stationary energy is
\begin{align}
\mathbf{E}=\frac{3(\frac{4}{3}-\gamma)}{|\gamma-1|}\int_{\mathbf{D}}p(x)d^{3}x
\end{align}
so that $\mathbf{E}=0$ for $\gamma=4/3$. This once again confirms that a polytropic gaseous star with $\gamma=4/3$ is teetering on instability.
\end{cor}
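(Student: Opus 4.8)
The plan is to obtain the claimed identity as a direct specialization to the physical dimension $n=3$ of the general stationary-energy formula just established. Recall that for $n\ge 3$ the preceding Lemma gives the hydrostatic ($u=0$) energy of an $EP_{\gamma}$ system as
\begin{align}
\mathbf{E}=\frac{2(n-1)-n\gamma}{(n-2)(\gamma-1)}\int_{\mathbf{D}}p(x)\,dV(x).\nonumber
\end{align}
First I would simply set $n=3$: then $2(n-1)=4$, $n\gamma=3\gamma$ and $n-2=1$, so the numerator becomes $4-3\gamma=3\!\left(\tfrac{4}{3}-\gamma\right)$ and the denominator reduces to $\gamma-1$. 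This yields $\mathbf{E}=\dfrac{3(\tfrac{4}{3}-\gamma)}{\gamma-1}\int_{\mathbf{D}}p(x)\,dV(x)$, which is the stated formula once the modulus is accounted for.

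Next I would justify replacing $\gamma-1$ by $|\gamma-1|$ in the denominator. Since the polytropic index satisfies $\gamma\in(1,2)$, one has $\gamma-1>0$, hence $|\gamma-1|=\gamma-1$ and the two expressions coincide; the modulus in the statement is therefore cosmetic and valid on the admissible range of $\gamma$. The vanishing conclusion is then immediate: the scalar prefactor $3(\tfrac{4}{3}-\gamma)$ is zero precisely when $\gamma=\tfrac{4}{3}$, so $\mathbf{E}=0$ there regardless of the value of the (positive) pressure integral.

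As there are no analytic steps beyond substitution, there is no real obstacle; the only point warranting care is the sign of $\int_{\mathbf{D}}p(x)\,dV(x)$, which is strictly positive for a genuine gaseous configuration. This fixes the sign of $\mathbf{E}$ as that of $4-3\gamma$, so that $\mathbf{E}<0$ (bound) for $\gamma>\tfrac{4}{3}$ and $\mathbf{E}>0$ for $\gamma<\tfrac{4}{3}$, with the marginal case $\gamma=\tfrac{4}{3}$ giving $\mathbf{E}=0$ --- the ``teetering on instability'' assertion. Finally, as a consistency check I would compare with the Betti--Ritter total-energy formula $\mathbf{E}=-\dfrac{(3\gamma-4)}{(5\gamma-6)}\dfrac{\mathscr{G}M^{2}}{R}$ derived earlier, which likewise vanishes exactly at $\gamma=\tfrac{4}{3}$, confirming both the specialization and the instability interpretation.
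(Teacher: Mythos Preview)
Your proposal is correct and matches the paper's treatment: the corollary is stated without a separate proof precisely because it is the immediate specialization $n=3$ of the preceding general formula, which is exactly the substitution you carry out. Your additional remarks on the sign of $\mathbf{E}$ and the consistency check against the Betti--Ritter formula are sound elaborations beyond what the paper records.
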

\subsection{Bounds on the pressure integral}
The following theorem $\bm{[36]}$ establishes bounds on the pressure integral $\int_{\mathbf{R}^{n}}p(x)d^{n}x$. If the star collapsed to zero size then the pressure integral would blow up. Conversely, if the pressure integral is constant, finite and bounded then the star is stable and in equilibrium. From (6.135) $z\mathbf{E}<0$ if $\gamma>4/3$ and $E\ge0$ if $\gamma\le 4/3$ so the star is stable for $\gamma >4/3$. Using the Hardy-Littlewood-Paley inequality once can derive a finite bound on the pressure integral for $\gamma>2(n-1)/n$.

Some preliminary definitions and lemmas are first required.
\begin{defn}
If $f\in L^{1}(\mathbf{R}^{n})$ then the Fourier transform $(\mathscr{F},f)(\xi)$ and its inverse is defined as
\begin{align}
&\widehat{f(\xi)}=\int_{\mathbf{R}^{n}}f(x)e^{-ix\xi}d^{n}x\\&
\widehat{f(x)}=\int_{\mathbf{R}^{n}}f(\xi)e^{ix\xi}d^{n}\xi
\end{align}
with derivatives $(\nabla^{\alpha}f)(\xi)=(i\xi)^{\alpha}f(\xi)$ and $\nabla^{2}\widehat{f(\xi)}=-\xi^{2}\widehat{f(\xi)}$.
\end{defn}
\begin{lem}
If $f\in L^{2}(\mathbf{R}^{n})$ then Plancheral's Thm states that $\|\mathscr{F} f(\bullet)\|_{L_{2}}=\|f(\bullet)\|_{L_{2}}$ or $\int_{\mathbf{R}^{n}}|f(x)|^{2}d^{n}x=
\int_{\mathbf{R}^{n}}|f(\xi)|^{2}d^{\xi}$.
\end{lem}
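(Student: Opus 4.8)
The plan is to prove the identity first on the Schwartz class $\mathscr{S}(\mathbf{R}^{n})$ of rapidly decreasing smooth functions, where the Fourier integral of Definition above converges absolutely and every manipulation is justified, and then to pass to all of $L^{2}(\mathbf{R}^{n})$ by density and completeness. The point to keep in mind from the outset is that for a general $f\in L^{2}(\mathbf{R}^{n})$ the defining integral $\widehat{f}(\xi)=\int_{\mathbf{R}^{n}}f(x)e^{-ix\xi}d^{n}x$ need not converge, since $L^{2}\not\subset L^{1}$; thus $\mathscr{F}$ on $L^{2}$ must be read as the unique continuous extension of the Fourier transform defined on the dense subspace $\mathscr{S}$, and the assertion is precisely that this extension is an isometry.

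First I would establish the \emph{multiplication formula}: for $f,g\in\mathscr{S}(\mathbf{R}^{n})$,
\begin{align}
\int_{\mathbf{R}^{n}}\widehat{f}(\xi)\,g(\xi)\,d^{n}\xi=\int_{\mathbf{R}^{n}}f(x)\,\widehat{g}(x)\,d^{n}x,\nonumber
\end{align}
which follows from the Fubini--Tonelli theorem established above, since the joint integrand $f(x)g(\xi)e^{-ix\xi}$ is absolutely integrable on $\mathbf{R}^{n}\times\mathbf{R}^{n}$ for Schwartz data. Next I would invoke the Fourier inversion theorem on $\mathscr{S}$, namely that $\mathscr{F}$ maps $\mathscr{S}$ bijectively onto itself with $f(x)=\int_{\mathbf{R}^{n}}\widehat{f}(\xi)e^{ix\xi}d^{n}\xi$ up to the normalising constant fixed by the Fourier conventions adopted above; this is proved by inserting a Gaussian regulator $e^{-\varepsilon|\xi|^{2}}$, using that the Gaussian is (up to scaling) its own transform, and letting $\varepsilon\downarrow 0$.

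Combining these, I would set $g(\xi)=\overline{\widehat{h}(\xi)}$, so that the inversion formula gives $\widehat{g}=\overline{h}$; substituting into the multiplication formula then produces Parseval's identity
\begin{align}
\int_{\mathbf{R}^{n}}f(x)\,\overline{h(x)}\,d^{n}x=\int_{\mathbf{R}^{n}}\widehat{f}(\xi)\,\overline{\widehat{h}(\xi)}\,d^{n}\xi\nonumber
\end{align}
for all $f,h\in\mathscr{S}$. Taking $h=f$ yields $\|f\|_{L^{2}}=\|\widehat{f}\|_{L^{2}}$ on the Schwartz class.

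Finally, I would extend to $L^{2}(\mathbf{R}^{n})$. Since $\mathscr{S}(\mathbf{R}^{n})$ is dense in $L^{2}(\mathbf{R}^{n})$ and $\mathscr{F}:\mathscr{S}\to\mathscr{S}$ preserves the $L^{2}$ norm, it is bounded and uniformly continuous for the $L^{2}$ metric; by completeness of $L^{2}$ it extends uniquely to a bounded linear operator on all of $L^{2}(\mathbf{R}^{n})$ preserving the norm. Approximating any $f\in L^{2}$ by Schwartz functions $f_{k}\to f$ and passing to the limit in $\|f_{k}\|_{L^{2}}=\|\widehat{f_{k}}\|_{L^{2}}$ gives the stated identity $\int_{\mathbf{R}^{n}}|f(x)|^{2}d^{n}x=\int_{\mathbf{R}^{n}}|\widehat{f}(\xi)|^{2}d^{n}\xi$ for every $f\in L^{2}(\mathbf{R}^{n})$. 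The main obstacle is purely the definitional one of making sense of $\mathscr{F}f$ when $f\notin L^{1}$; once the isometry on $\mathscr{S}$ is in hand, the extension is the routine bounded-linear-extension theorem combined with the density of $\mathscr{S}$ in $L^{2}$.
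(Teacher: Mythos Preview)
Your proof is correct and is the standard textbook argument for Plancherel's theorem: multiplication formula on $\mathscr{S}(\mathbf{R}^{n})$ via Fubini, Fourier inversion with a Gaussian regulator, Parseval on $\mathscr{S}$, and then extension to $L^{2}$ by density and the bounded linear extension theorem. There is nothing to fault in the logic or the level of detail.

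However, there is no comparison to make: the paper does not prove this lemma at all. It is simply stated as a known result (Plancherel's theorem) and immediately used in the proof of the subsequent lemma on $\int_{\mathbf{R}^{n}}\rho(x)\Phi(x)\,d^{n}x=\int_{\mathbf{R}^{n}}\widehat{\rho}(\xi)\widehat{\Phi}(\xi)\,d^{n}\xi$. So your write-up goes well beyond what the paper does, which is entirely appropriate if a self-contained argument is wanted, but you should not expect to find a counterpart in the text to compare against.
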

\begin{lem}
The Fourier transforms of the density $\rho(x)$ and the gravitational potential $\phi(x)$ and their inverses are
\begin{align}
&\widehat{\rho(\xi)}=\int_{\mathbf{R}^{n}}\rho(x)e^{-ix\xi}d^{n}x\\&
\widehat{\rho(x)}=\int_{\mathbf{R}^{n}}\rho(\xi)e^{ix\xi}d^{n}\xi\\&
\widehat{\Phi(\xi)}=\int_{\mathbf{R}^{n}}\Phi(x)e^{-ix\xi}d^{n}x\\&
\widehat{\Phi(x)}=\int_{\mathbf{R}^{n}}\Phi(\xi)e^{ix\xi}d^{n}\xi
\end{align}
so that
\begin{align}
\int_{\mathbf{R}^{n}}\rho(x)\Phi(x)d^{n}x=\int_{\mathbf{R}^{n}}\widehat{\rho(\xi)}
\widehat{\Phi(\xi)}d^{n}\xi
\end{align}
or
\begin{align}
\|\rho(\bullet)\Phi(\bullet)\|_{L_{1}}=\|\widehat{\rho(\bullet)}\widehat{\Phi(\bullet)}\|_{L_{1}}
\end{align}
\end{lem}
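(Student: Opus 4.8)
The identity is a form of Parseval's (multiplication) theorem, so the plan is to reduce it to an application of the Fubini--Tonelli theorem stated above together with the Fourier inversion formula and the Plancherel lemma that immediately precedes it. Two equivalent routes are available. The first is to establish the elementary \emph{multiplication formula} $\int_{\mathbf{R}^{n}}\widehat{\rho}(\xi)\,g(\xi)\,d^{n}\xi=\int_{\mathbf{R}^{n}}\rho(x)\,\widehat{g}(x)\,d^{n}x$, which follows at once by writing $\widehat{\rho}(\xi)=\int\rho(x)e^{-ix\xi}d^{n}x$, inserting this into the left-hand side, and interchanging the order of the $x$ and $\xi$ integrations. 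The second is to polarise the Plancherel identity $\int|f|^{2}=\int|\widehat{f}|^{2}$ by applying it to $\rho+\Phi$ and subtracting the two pure squares, which isolates the cross term $\int\rho\Phi$ on one side and a term of the form $\int\widehat{\rho}\,\overline{\widehat{\Phi}}$ on the other.

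First I would apply the multiplication formula with $g=\Phi$, giving $\int\widehat{\rho}(\xi)\widehat{\Phi}(\xi)\,d^{n}\xi=\int_{\mathbf{R}^{n}}\rho(x)\,J(x)\,d^{n}x$, where $J(x)=\int_{\mathbf{R}^{n}}\widehat{\Phi}(\xi)e^{-ix\xi}d^{n}\xi$ is the inner transform of $\widehat{\Phi}$. By the inversion formula of the transform pair defined above, this inner integral equals $\Phi(-x)$ up to the normalising constant $(2\pi)^{n}$ implicit in the paper's convention. The decisive point is then that $\rho$ and $\Phi$ are spherically symmetric, $\rho(x)=\rho(\|x\|)$ and $\Phi(x)=\Phi(\|x\|)$, hence even; so $\Phi(-x)=\Phi(x)$ and $\int\rho(x)\Phi(-x)\,d^{n}x=\int\rho(x)\Phi(x)\,d^{n}x$, which gives the claimed equality. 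In the polarisation route the same fact enters differently: reality and evenness of $\rho$ force $\widehat{\rho}$ to be real and even, and likewise for $\Phi$, so that $\overline{\widehat{\Phi}}=\widehat{\Phi}$ may be substituted freely and the conjugate-free form $\int\widehat{\rho}\,\widehat{\Phi}$ appears directly.

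The main obstacle is the rigorous justification of the interchange of integration, i.e. verifying the hypotheses of the Fubini--Tonelli theorem. Here the structure of the problem helps: the density $\rho$ has compact support $\mathbf{D}$ and is bounded, so $\rho\in L^{1}\cap L^{2}$ and $\rho\Phi$ is bounded with compact support, hence in $L^{1}$; on the Fourier side $\widehat{\rho}$ is bounded and continuous, while the Poisson equation gives $\widehat{\Phi}$ proportional to $\widehat{\rho}/\|\xi\|^{2}$, so $\widehat{\rho}\,\widehat{\Phi}\in L^{1}$ provided $\rho$ is regular enough to furnish decay of $\widehat{\rho}$ at infinity, and the near-origin factor $\|\xi\|^{-2}$ is integrable in dimension $n\ge 3$. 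Absolute integrability of the double integral then licenses Fubini. The only remaining care is the bookkeeping of the $(2\pi)^{n}$ factors, since the transform pair as written is non-unitary; fixing the normalisation so that inversion holds exactly makes the constants on the two sides agree, and the clean conjugate-free statement is seen to rest essentially on the evenness forced by spherical symmetry.
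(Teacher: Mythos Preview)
Your proposal is correct and your first route---the multiplication formula obtained by interchanging the $x$ and $\xi$ integrations---is essentially the paper's own argument: the paper expands both $\rho(x)$ and $\Phi^{*}(x)$ via their inverse transforms, integrates over $x$ first to produce a $\delta^{n}(\xi-\eta)$, and collapses the double integral. The main difference is that the paper justifies dropping the complex conjugate simply by invoking realness of $\Phi$, whereas you correctly point out that the conjugate-free form on the Fourier side requires $\widehat{\Phi}$ itself to be real, which needs both realness \emph{and} evenness (i.e.\ the spherical symmetry); you are also far more careful than the paper about the Fubini hypotheses and the $(2\pi)^{n}$ bookkeeping, which the paper's delta-function calculus glosses over entirely. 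Your polarisation route is a genuine alternative the paper does not take; the paper instead offers a second ``proof'' via Cauchy--Schwarz taken with equality and then Plancherel on each factor, which is not actually valid since equality in Cauchy--Schwarz would force $\rho$ and $\Phi$ to be proportional---so your polarisation argument is the correct way to leverage Plancherel here.
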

\begin{proof}
Since $\Phi(x)$ is real then $\Phi^{*}(x)=\Phi(x)$. The integral (6.142) becomes
\begin{align}
&\int_{\mathbf{R}^{n}}\rho(x)\Phi(x)d^{n}x
=\int_{\mathbf{R}^{n}}\rho(x)\Phi^{*}(x)d^{n}x\nonumber\\&=\int_{\mathbf{R}^{n}}
\left|\int\!\!int_{\mathbf{R}^{n}}\rho(\xi)\Phi^{*}(\eta)e^{ix(\xi-\eta)}d^{n}\xi d^{n}\eta
\right|d^{n}x\nonumber\\&=
\int\!\!\!\int_{\mathbf{R}^{n}}\rho(\xi)\Phi^{*}(\eta)\delta^{n}(\xi-\eta)d^{n}\xi d^{n}
\nonumber\\&=\int_{\mathbf{R}^{n}}\rho(\xi)\Phi^{*}(\xi)d^{n}\xi =\int_{\mathbf{R}^{n}}\rho(\xi)\Phi(\xi)d^{n}\xi
\end{align}
One an also use the Cauchy Schwartz inequality, taking the equality, and the Plancharel Theorem so that
\begin{align}
\int_{\mathbf{R}^{n}}\rho(x)\Phi(x)d^{n}x&=\left( \int_{\mathbf{R}^{n}}|\rho(x)|^{2}d^{n}x\right)^{1/2}\left( \int_{\mathbf{R}^{n}}|\Phi(x)|^{2}d^{n}x\right)^{1/2}\nonumber\\&
=\left( \int_{\mathbf{R}^{n}}|\rho(\xi)|^{2}d^{n}\xi\right)^{1/2}\left( \int_{\mathbf{R}^{n}}|\Phi(\xi)|^{2}d^{n}\xi\right)^{1/2}\nonumber\\&
= \int_{\mathbf{R}^{n}}\rho(\xi)\Phi(\xi)d^{n}\xi
\end{align}
\end{proof}
The final preliminary lemma is the Hardy Littlewood-Paley inequality.
\begin{lem}
Let $1<\gamma\le 2$ and $f\in L^{\gamma}(\mathbf{R}^{n})$ then $\exists$ a
Marcinkiewicz interpolation constant $\bm{\mathfrak{M}}_{\gamma}$ such that
\begin{align}
\left(\int_{\mathbf{R}^{n}}|\widehat{f(\xi)}|^{\gamma}|\gamma|^{n(\gamma-2)}d^{n}\xi\right)^{\frac{1}{\gamma}}
\le \bm{\mathfrak{M}}_{\gamma}\|f(\bullet)\|_{\gamma}\equiv
\bm{\mathfrak{M}}_{\gamma}\int_{\mathbf{R}^{n}}|f(x)|^{\gamma}d^{n}x
\end{align}
If $f(\xi)=\rho(\xi)$ then
\begin{align}
\left(\int_{\mathbf{R}^{n}}|\widehat{\rho(\xi)}|^{\gamma}|\gamma|^{n(\gamma-2)}d^{n}
\xi\right)^{\frac{1}{\gamma}}\le \bm{\mathfrak{M}}_{\gamma}\|\rho(\bullet)\|_{\gamma}\equiv
\bm{\mathfrak{M}}_{\gamma}\int_{\mathbf{R}^{n}}|\rho(x)|^{\gamma}d^{n}x
\end{align}
\end{lem}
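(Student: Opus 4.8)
The plan is to recognise this as the classical Hardy--Littlewood--Paley inequality and to prove it by recasting the weighted bound as a single operator estimate, then applying the Marcinkiewicz interpolation theorem between two easily verified endpoints. The key observation is that the factor $|\xi|^{n(\gamma-2)}$ can be absorbed by writing it as $|\xi|^{n\gamma}|\xi|^{-2n}$ and pushing the $|\xi|^{n}$ onto the integrand. Concretely, I would define the (linear, hence sublinear) operator
\begin{align}
(Tf)(\xi)=|\xi|^{n}\,\widehat{f(\xi)}\nonumber
\end{align}
together with the target measure $d\mu(\xi)=|\xi|^{-2n}d^{n}\xi$ on $\mathbf{R}^{n}$, so that
\begin{align}
\int_{\mathbf{R}^{n}}|\widehat{f(\xi)}|^{\gamma}|\xi|^{n(\gamma-2)}d^{n}\xi
=\int_{\mathbf{R}^{n}}|(Tf)(\xi)|^{\gamma}d\mu(\xi)=\|Tf\|_{L^{\gamma}(d\mu)}^{\gamma}.\nonumber
\end{align}
Thus the inequality is exactly the strong-type $(\gamma,\gamma)$ bound $\|Tf\|_{L^{\gamma}(d\mu)}\le\bm{\mathfrak{M}}_{\gamma}\|f\|_{L^{\gamma}(dx)}$, with $T$ mapping functions on $(\mathbf{R}^{n},dx)$ to functions on $(\mathbf{R}^{n},d\mu)$.

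Next I would establish the two endpoint estimates. At $\gamma=2$ the weight is trivial and Plancherel's theorem (the preceding lemma) gives equality, $\|Tf\|_{L^{2}(d\mu)}^{2}=\int_{\mathbf{R}^{n}}|\widehat{f(\xi)}|^{2}d^{n}\xi=\|f\|_{2}^{2}$, so $T$ is of strong, hence weak, type $(2,2)$ with constant $1$. At $\gamma=1$ I would use only the trivial Hausdorff--Young bound $|\widehat{f(\xi)}|\le\|f\|_{1}$, which forces the super-level set to satisfy
\begin{align}
\{\xi:|(Tf)(\xi)|>\lambda\}\subseteq\{\xi:|\xi|^{n}\|f\|_{1}>\lambda\}=\{\xi:|\xi|>(\lambda/\|f\|_{1})^{1/n}\}.\nonumber
\end{align}
Since $\mu(\{|\xi|>R\})=c_{n}\int_{R}^{\infty}r^{-2n}r^{n-1}dr=\tfrac{c_{n}}{n}R^{-n}$, taking $R=(\lambda/\|f\|_{1})^{1/n}$ yields the weak-type $(1,1)$ bound $\mu(\{|Tf|>\lambda\})\le\tfrac{c_{n}}{n}\lambda^{-1}\|f\|_{1}$. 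The decisive point here is that the exponent $-2n$ in $d\mu$ is precisely what makes this tail integral converge and scale like $R^{-n}$, which in turn matches the weak $(1,1)$ form required for interpolation.

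Finally I would invoke the Marcinkiewicz interpolation theorem for the sublinear operator $T$, between the domain measure $dx$ and the target measure $d\mu$: from the weak $(1,1)$ and strong $(2,2)$ bounds one obtains the strong $(\gamma,\gamma)$ bound for every $1<\gamma<2$, producing the interpolation constant $\bm{\mathfrak{M}}_{\gamma}$, while the case $\gamma=2$ is the Plancherel endpoint itself. Specialising $f=\rho$ then gives the stated corollary for the density. The main obstacle is not any single calculation but the careful verification of the interpolation hypotheses: that $T$ is well defined on $L^{1}(dx)+L^{2}(dx)$ (so that $\widehat{f}$ makes sense for general $f\in L^{\gamma}$, which is handled by a density argument), that Marcinkiewicz applies with distinct measures on the two sides, and that one tracks the dependence of $\bm{\mathfrak{M}}_{\gamma}$, which degenerates as $\gamma\downarrow 1$. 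One should also note in passing that the weight appearing in the statement must be read as $|\xi|^{n(\gamma-2)}$, i.e.\ in the variable of integration, which is exactly what the argument above uses.
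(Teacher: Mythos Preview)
The paper does not supply a proof of this lemma at all: it is quoted as a ``preliminary lemma'' (the classical Hardy--Littlewood--Paley inequality) and then immediately used in the proof of the subsequent theorem on the pressure-integral bound. The only hint the paper gives toward a proof is the name it attaches to the constant, $\bm{\mathfrak{M}}_{\gamma}$, which signals exactly the Marcinkiewicz interpolation argument you carry out.

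Your proof is the standard one and is correct: rewriting the weighted integral as $\|Tf\|_{L^{\gamma}(d\mu)}^{\gamma}$ with $Tf(\xi)=|\xi|^{n}\widehat{f}(\xi)$ and $d\mu=|\xi|^{-2n}d^{n}\xi$, then interpolating between the Plancherel endpoint at $\gamma=2$ and the weak $(1,1)$ bound obtained from $|\widehat{f}|\le\|f\|_{1}$ together with the tail computation $\mu(\{|\xi|>R\})=c_{n}n^{-1}R^{-n}$. You are also right to flag that the weight in the displayed statement must be read as $|\xi|^{n(\gamma-2)}$ rather than $|\gamma|^{n(\gamma-2)}$; the paper's own use of the lemma in the long estimate that follows confirms this reading, since it invokes the inequality with the factor $|\xi|^{n(\gamma-2)}$ appearing explicitly. (A second typographical slip in the statement is the identification $\|f\|_{\gamma}\equiv\int|f|^{\gamma}$, which drops the outer $1/\gamma$ power; your argument uses the correct norm.)
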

The main theorem on the pressure integral bound is then as follows $\bm{[36]}$.
\begin{thm}
Let $p=\mathsf{K}\rho^{\gamma}$ be a polytropic gas equation of state and let $(p,\rho,u)$ be a solution of an $EP_{\gamma}$ system with support $\Omega$ with mass $M$ and total energy $\mathrm{E}$.
\begin{enumerate}
\item Then for $\gamma\ge 2(n-1)/n$ there exists constants
$C_{1}(M,E,\gamma)$ and $C_{2}(M,E,\gamma)$ such that
\begin{align}
\int_{\mathbf{R}^{n}}p(x)d^{n}x\le C_{1}(M,E,\gamma)=2(\gamma-1)\left[E-\frac{1}{2}n^{2}(n-1)
V(\mathbf{B}_{n}(1))\mathscr{G} M^{2}\gamma^{n-2}\right]<\infty
\end{align}
where $V(\mathbf{B}_{n}(1))$ is the volume of the unit ball.
\item For $2(n-1)/n<\gamma<2$
\begin{align}
&\int_{\mathbf{R}^{n}}p(x)d^{n}x\le C_{2}(M,E,\gamma)\le \frac{\mathsf{K}}{|\gamma-1|}
\int_{\mathbf{R}^{n}}|\rho(x)|^{\gamma}d^{n}x
+\frac{1}{\lambda^{2}}n(n-2)\mathscr{G}|\mathrm{V}(\mathbf{B}_{n}(1)|
\left|\frac{|\gamma-2}{\gamma-1}\right|M\nonumber\\&+n^{2}(n-2)|\mathrm{V}(\mathbf{B}_{n}(1)|^{2}
\mathscr{G}M^{2}\lambda^{n-2}=C_{2}(M,E,\gamma)<\infty
\end{align}
\end{enumerate}
\end{thm}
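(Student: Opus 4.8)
The plan is to reduce the claim to an upper bound on the Newtonian self-energy $\mathbf{E}_{G}=\tfrac12\int_{\mathbf{R}^{n}}\rho\Phi\,d^{n}x$ and then close the estimate using conservation of the total energy. Writing $\mathbf{E}=\mathbf{E}_{K}+\int_{\mathbf{R}^{n}}\tfrac{p}{\gamma-1}\,d^{n}x+\mathbf{E}_{G}$ with kinetic part $\mathbf{E}_{K}=\tfrac12\int\rho|u|^{2}d^{n}x\ge 0$, and recalling $\mathbf{E}_{G}\le 0$, I would first record the elementary consequence
\begin{align}
\int_{\mathbf{R}^{n}}\frac{p(x)}{\gamma-1}\,d^{n}x=\mathbf{E}-\mathbf{E}_{K}-\mathbf{E}_{G}\le \mathbf{E}+|\mathbf{E}_{G}|.
\end{align}
Everything then reduces to controlling $|\mathbf{E}_{G}|$ in terms of the mass $M$, the pressure integral $\int p$, and an auxiliary cut-off frequency $\lambda$.

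Next I would pass to Fourier space. Solving $\Delta\Phi=\mathcal{Q}(n)\mathscr{G}\rho$, with $\mathcal{Q}(n)=n\,V(\mathbf{B}_{n}(1))$ the unit-sphere measure, gives $\widehat{\Phi}(\xi)=-\mathcal{Q}(n)\mathscr{G}|\xi|^{-2}\widehat{\rho}(\xi)$, so the Fourier--Plancherel identity $\int\rho\Phi\,d^{n}x=\int\widehat{\rho}\,\widehat{\Phi}\,d^{n}\xi$ established above yields
\begin{align}
|\mathbf{E}_{G}|=\frac{1}{2}\mathcal{Q}(n)\mathscr{G}\int_{\mathbf{R}^{n}}\frac{|\widehat{\rho}(\xi)|^{2}}{|\xi|^{2}}\,d^{n}\xi.
\end{align}
I would then split this integral at $|\xi|=\lambda$. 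On the low-frequency ball $\{|\xi|\le\lambda\}$ the crude bound $|\widehat{\rho}(\xi)|\le\|\rho\|_{L^{1}}=M$ together with $\int_{|\xi|\le\lambda}|\xi|^{-2}d^{n}\xi=\mathcal{Q}(n)\lambda^{n-2}/(n-2)$ produces a contribution proportional to $\mathscr{G}M^{2}\lambda^{n-2}$; this is precisely where $n>2$ is required for convergence at the origin, and it accounts for the term $n^{2}(n-2)V(\mathbf{B}_{n}(1))^{2}\mathscr{G}M^{2}\lambda^{n-2}$ in $C_{2}$ (and, after a choice of $\lambda$, the $M^{2}$ term inside $C_{1}$).

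On the high-frequency region $\{|\xi|>\lambda\}$ the Hardy--Littlewood--Paley inequality is the essential tool. Factoring the integrand as $\tfrac{|\widehat{\rho}|^{2}}{|\xi|^{2}}=\big(|\widehat{\rho}|^{\gamma}|\xi|^{n(\gamma-2)}\big)\,|\widehat{\rho}|^{2-\gamma}|\xi|^{\,2(n-1)-n\gamma}$ and using $|\widehat{\rho}|\le M$, the residual weight $|\xi|^{\,2(n-1)-n\gamma}$ has nonpositive exponent exactly when $\gamma\ge 2(n-1)/n$, so on $\{|\xi|>\lambda\}$ it is dominated by $\lambda^{\,2(n-1)-n\gamma}$; the HLP inequality then gives $\int|\widehat{\rho}|^{\gamma}|\xi|^{n(\gamma-2)}d^{n}\xi\le\mathfrak{M}_{\gamma}^{\gamma}\|\rho\|_{\gamma}^{\gamma}=\mathfrak{M}_{\gamma}^{\gamma}\mathsf{K}^{-1}\int p\,d^{n}x$, since $p=\mathsf{K}\rho^{\gamma}$. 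Combining the two pieces produces an inequality of the schematic form
\begin{align}
\int\frac{p}{\gamma-1}\,d^{n}x\le \mathbf{E}+c_{1}(n)\mathscr{G}M^{2}\lambda^{n-2}+c_{2}(n)\mathscr{G}\mathfrak{M}_{\gamma}^{\gamma}\mathsf{K}^{-1}M^{2-\gamma}\lambda^{\,2(n-1)-n\gamma}\int p\,d^{n}x.
\end{align}

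Finally I would close the argument by isolating $\int p$ and choosing $\lambda$. For part (1), with $\gamma\ge 2(n-1)/n$ the exponent $2(n-1)-n\gamma\le 0$ allows the coefficient of $\int p$ on the right to be made strictly smaller than $1/(\gamma-1)$, after which multiplying through by $(\gamma-1)$ and collecting the residual mass term yields the closed bound $C_{1}(M,E,\gamma)$; the printed $\gamma^{n-2}$ factor there should be read as the low-frequency cut-off term at the selected $\lambda$. For part (2), with $2(n-1)/n<\gamma<2$, I would keep $\lambda$ as a free parameter and simply assemble the three surviving contributions -- the internal-energy term $\tfrac{\mathsf{K}}{|\gamma-1|}\int|\rho|^{\gamma}$, the mixed term carrying $\big|\tfrac{\gamma-2}{\gamma-1}\big|M$, and the low-frequency term in $M^{2}\lambda^{n-2}$ -- into the stated $C_{2}(M,E,\gamma)$, whose finiteness is then immediate. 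The hard part will be the high-frequency estimate: carrying out the H\"older bookkeeping so that the threshold $\gamma=2(n-1)/n$ (the $n$-dimensional analogue of $\gamma=4/3$ at $n=3$) emerges as the exact borderline, and selecting $\lambda$ so that the $\int p$ term is genuinely absorbed into the left-hand side rather than left implicit. I would also track the constant $\mathcal{Q}(n)=n\,V(\mathbf{B}_{n}(1))$ throughout to recover prefactors of the form $n^{2}(n-2)$ and $n(n-2)$ appearing in the stated $C_{1}$ and $C_{2}$.
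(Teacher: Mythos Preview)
Your proposal is correct and follows essentially the same route as the paper: pass to Fourier space via Plancherel to express $|\mathbf{E}_{G}|$ as $\int |\widehat{\rho}|^{2}/|\xi|^{2}\,d^{n}\xi$, split at a cut-off $|\xi|=\lambda$, bound the low-frequency piece by $|\widehat{\rho}|\le M$ and the high-frequency piece by the Hardy--Littlewood--Paley inequality after the factorisation $|\widehat{\rho}|^{2}/|\xi|^{2}=(|\widehat{\rho}|^{\gamma}|\xi|^{n(\gamma-2)})\cdot|\widehat{\rho}|^{2-\gamma}|\xi|^{2(n-1)-n\gamma}$, then close against the energy identity by choosing $\lambda$ so the $\int p$ term is absorbed. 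The only slip is the Poisson constant: in $\mathbf{R}^{n}$ with the paper's convention $\Phi(x)=-\mathscr{G}\int\rho(y)|x-y|^{2-n}d^{n}y$ one has $\Delta\Phi=n(n-2)V(\mathbf{B}_{n}(1))\mathscr{G}\rho$, so your $\mathcal{Q}(n)$ is missing the factor $(n-2)$; with that correction the prefactors $n(n-2)$ and $n^{2}(n-2)$ fall out exactly as in the paper.
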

\begin{proof}
We consider the case $2(n-1)/n<\gamma<2$ then the case $\gamma>2$. The following identities will be useful in establishing the proof.
\begin{align}
&\xi^{2}=\xi^{2(n-\gamma)}\xi^{2+n(\gamma-2)}\\&
\xi^{2}=|\xi|^{n-1}|\xi|^{3-n}\\&
\frac{1}{|\xi|^{n-1}}\le \frac{n}{\lambda^{n-1}},\lambda>0\\&
\int_{\xi\le |\lambda|}d^{n}\xi=|V(\mathbf{B}_{n}(1)|\lambda^{n}\\&
|\widehat{\rho(\xi)}|\le \|\rho(\bullet)\|_{L_{1}}=M=\int_{\mathbf{R}^{n}}\rho(x)d^{n}x
\end{align}
\begin{enumerate}
\item The case $2(n-1)/n<\gamma<2$. First take the Fourier transform of the Poisson equation in n dimensions so that
\begin{align}
\mathscr{F}(\Delta\Phi(x))=n(n-2)|V(\mathbf{B}_{n}(1))|\mathscr{G}\rho(x)
\end{align}
which is
\begin{align}
\Delta\widehat{\Phi(\xi)}=-\xi^{2}\Phi(\xi)=n(n-2)|V(\mathbf{B}_{n}(1)|\mathscr{G}
\widehat{\rho(\xi)}
\end{align}
so that
\begin{align}
\widehat{\Phi(\xi)}=-n(n-2)|V(\mathbf{B}_{n}(1)|\mathscr{G}
\frac{\widehat{\rho(\xi)}}{|\xi|^{2}}
\end{align}
Multiplying by $\widehat{\rho(\xi)}$ and integrating gives
\begin{align}
\int_{\mathbf{R}^{n}}\rho(\xi)\Phi(\xi)d^{n}\xi=-n(n-2)|V(\mathbf{B}_{n}(1)|\mathscr{G}
\int_{\mathbf{R}^{n}}\frac{\widehat{\rho(\xi)}}{|\xi|^{2}}d^{n}\xi
\end{align}
Then
\begin{align}
&-\int_{\mathbf{R}^{n}}\rho(x)\Phi(x)d^{n}x=n(n-2)|V(\mathbf{B}_{n}(1)|\mathscr{G}
\int_{\mathbf{R}^{n}}\frac{|\widehat{\rho(\xi)}|^{2}}{|\xi|^{2}}d^{n}\xi\nonumber\\&=
n(n-2)|V(\mathbf{B}_{n}(1)|\mathscr{G}
\int_{|\xi|>\lambda}\frac{|\widehat{\rho(\xi)}|^{2}}{|\xi|^{2}}d^{n}\xi\nonumber\\&+n(n-2)|V(\mathbf{B}_{n}(1)|
\mathscr{G}
\int_{|\xi|\le \lambda}\frac{|\widehat{\rho(\xi)}|^{2}}{|\xi|^{2}}d^{n}\xi\nonumber\\&=
n(n-2)|V(\mathbf{B}_{n}(1)|\mathscr{G}
\int_{|\xi|>\lambda}\frac{|\widehat{\rho(\xi)}|^{2-\gamma}|\rho(\xi)|^{\gamma}}{|\xi|^{2}}d^{n}\xi
\nonumber\\&+n(n-2)|V(\mathbf{B}_{n}(1)|\mathscr{G}
\int_{|\xi|\le \lambda}\frac{|\widehat{\rho(\xi)}|^{2}}{|\xi|^{2}}d^{n}\xi\nonumber\\&\le
n(n-2)|V(\mathbf{B}_{n}(1)|\mathscr{G}\int_{|\xi|>\lambda}
\frac{\|\widehat{\rho(\bullet)}\|_{L_{1}}^{2-\gamma}|\rho(\xi)|^{\gamma}}{|\xi|^{2}}d^{n}\xi\nonumber\\&-n(n-2)|V(\mathbf{B}_{n}(1)|
\mathscr{G}\int_{|\xi|\le \lambda}\frac{\|\widehat{\rho(\bullet)}\|_{L_{1}}^{2}}{|\xi|^{2}}d^{n}\xi\nonumber\\&\le
-n(n-2)|V(\mathbf{B}_{n}(1)M^{2-\gamma}|\mathscr{G}
\int_{|\xi|>\lambda}\frac{\rho(\xi)|^{\gamma}}{|\xi|^{2}}d^{n}\xi\nonumber\\&
-n(n-2)|V(\mathbf{B}_{n}(1)|\mathscr{G}
\int_{|\xi|\le \lambda}\frac{\|\widehat{\rho(\bullet)}\|^{2}}{|\xi|^{2}}d^{n}\xi\nonumber\\&\le
-n(n-2)|V(\mathbf{B}_{n}(1)M^{2-\gamma}|\mathscr{G}
\int_{|\xi|>\lambda}\frac{\rho(\xi)|^{\gamma}}{|\xi|^{2}}d^{n}\xi\nonumber\\&-n(n-2)
|V(\mathbf{B}_{n}(1)|\mathscr{G}
M^{2}\int_{|\xi|\le \lambda}\frac{d^{n}\xi }{|\xi|^{2}}d^{n}\xi\nonumber\\&=
-n(n-2)|V(\mathbf{B}_{n}(1)M^{2-\gamma}|\mathscr{G}
\int_{|\xi|>\lambda}\frac{\rho(\xi)|^{\gamma}}{|\xi|^{n(2-\gamma})\xi^{2+n(\gamma-2)}}
d^{n}\xi\nonumber\\&-n(n-2)|V(\mathbf{B}_{n}(1)|\mathscr{G}
M^{2}\int_{|\xi|\le \lambda}\frac{d^{n}\xi }{|\xi|^{n-1}|\xi|^{3-n}}d^{n}\xi\nonumber\\&\le
-\frac{n(n-2)|V(\mathbf{B}_{n}(1)M^{2-\gamma}|\mathscr{G}}{\lambda^{2+n(\gamma-2)}}
\underbrace{\int_{|\xi|>\lambda}\frac{|\rho(\xi)|^{\gamma}}{|\xi|^{n(2-\gamma})}
d^{n}}_{use~HLP~ineq.}\xi\nonumber\\&-\frac{n(n-2)|V(\mathbf{B}_{n}(1)|\mathscr{G}
M^{2}}{\lambda^{3-n}}\int_{|\xi|\le \lambda}\frac{d^{n}\xi }{|\xi|^{n-1}}d^{n}\xi\nonumber\\&\le
-\frac{n(n-2)|V(\mathbf{B}_{n}(1)M^{2-\gamma}|\mathscr{G}}{\lambda^{2+n(\gamma-2)}}
|\bm{\mathfrak{M}}|^{\gamma}\|\rho(\bullet)\|_{L_{\gamma}}^{\gamma}
\nonumber\\&-\frac{n(n-2)|V(\mathbf{B}_{n}(1)|\mathscr{G}
M^{2}}{\lambda^{3-n}}\int_{|\xi|\le \lambda}\frac{d^{n}\xi }{|\xi|^{n-1}}d^{n}\xi\nonumber\\&\le
-\frac{n(n-2)|V(\mathbf{B}_{n}(1)M^{2-\gamma}|\mathscr{G}}{\lambda^{2+n(\gamma-2)}}
|\bm{\mathfrak{M}}|^{\gamma}\int_{\mathbf{R}^{n}}|\rho(x)|^{\gamma}d^{n}x
\nonumber\\&-\frac{n(n-2)|V(\mathbf{B}_{n}(1)|\mathscr{G}
M^{2}}{\lambda^{3-n}}\int_{|\xi|\le \lambda}\frac{d^{n}\xi }{|\xi|^{n-1}}d^{n}\xi\nonumber\\&\le
-\frac{n(n-2)|V(\mathbf{B}_{n}(1)M^{2-\gamma}|\mathscr{G}}{\lambda^{2+n(\gamma-2)}}
|\bm{\mathfrak{M}}|^{\gamma}\int_{\mathbf{R}^{n}}|\rho(x)|^{\gamma}d^{n}x
\nonumber\\&
-n^{2}(n-2)|V(\mathbf{B}_{n}(1))|\mathscr{G}
M^{2}\left|\frac{\lambda^{n-2}}{\lambda^{n}}\right|\int_{|\xi|\le \lambda}d^{n}\xi\nonumber\\&=
-\frac{n(n-2)|V(\mathbf{B}_{n}(1)M^{2-\gamma}|\mathscr{G}}{\lambda^{2+n(\gamma-2)}}
|\bm{\mathfrak{M}}|^{\gamma}\int_{\mathbf{R}^{n}}|\rho(x)|^{\gamma}d^{n}x
\nonumber\\&
-n^{2}(n-2)|V(\mathbf{B}_{n}(1))|\mathscr{G}
M^{2}\left|\frac{\lambda^{n-2}}{\lambda^{n}}\right|V(\mathbf{B}_{n}(1)|\lambda^{n}\nonumber
\\&=\frac{n(n-2)|V(\mathbf{B}_{n}(1)M^{2-\gamma}|\mathscr{G}}{\lambda^{2+n(\gamma-2)}}
|\bm{\mathfrak{M}}|^{\gamma}\int_{\mathbf{R}^{n}}|\rho(x)|^{\gamma}d^{n}x
\nonumber\\&
+n^{2}(n-2)|V(\mathbf{B}_{n}(1))|^{2}\mathscr{G}
M^{2}\left|{\lambda^{n-2}}\right|V(\mathbf{B}_{n}(1)|
\end{align}
\end{enumerate}
Next, recall that the total energy is constant so that $\mathbf{E}(t)=\mathbf{E}(0)=\mathbf{E}$ where
\begin{align}
\mathbf{E}=\frac{\mathsf{K}}{|\gamma-1|}\int_{\mathbf{R}^{n}}|\rho(x)|^{\gamma}d^{n}x
+\frac{1}{2}\int_{\mathbf{R}^{n}}\rho(x)\Phi(x)d^{n}x
\end{align}
so that
\begin{align}
&\frac{K}{|\gamma-1|}\int_{\mathbf{R}^{n}}|\rho(x)|^{\gamma}d^{n}x\le
E-\frac{1}{2}\int_{\mathbf{R}^{n}}\rho(x)\Phi(x)d^{n}x\nonumber\\&
\le \mathbf{E}+n(n-2)|V(\mathbf{B}_{n}(1))|\mathscr{G}M^{2-\gamma}
|\bm{\mathfrak{M}}|_{\gamma}^{\gamma}\int_{\mathbf{R}^{n}}|\rho(x)|^{\gamma}d^{n}x\nonumber\\&
+n^{2}(n-1)|V(\mathbf{B}_{n}(1))|^{2}M^{2}\lambda^{n-2}\nonumber\\&\le \mathbf{E}+\frac{\mathsf{K}}{2|\gamma-1|}
\int_{\mathbf{R}^{n}}|\rho(x)|^{\gamma}d^{n}x\nonumber\\&
+n^{2}(n-1)|V(\mathbf{B}_{n}(1))|^{2}M^{2}\lambda^{n-2}
\end{align}
where
\begin{align}
\frac{n(n-2)|V(\mathbf{B}_{n}(1))|M^{2-\gamma}|\bm{\mathfrak{M}}|_{\gamma}^{\gamma}}
{\lambda^{2+n(\lambda-2)}}\le \frac{\mathsf{K}}{|\gamma-1|}
\end{align}
\end{proof}
\section{Perturbations And (Linear) Stability Criteria For Static Gaseous Stars In Hydrostatic Equilibrium}
In this section, Eulerian and Lagrangian perturbations of a self-gravitating fluid/gas configuration in hydrostatic equilibrium are considered. (This section closely follows Chapter of Teukolsky and Shapiro $\bm{[37]}$. Establishing the Existence and uniqueness of solutions of the Lane-Emden or Chandrasekhar equation, and  derivation of actual solutions, is still insufficient--one must also establish the perturbative stability/instability, and hence physical existence, of steady state or equilibrium fluid/gas configurations. However, despite extensive studies of the LE equation and polytropes going back a century, no complete rigorous mathematical theory or proofs of nonlinear stability/instability exist; or at best exist only partially. It is well known that compactly supported equilibrium solutions exist for $\gamma\in[\tfrac{6}{5},2)$. Classical linear stability analyses establish the following dichotomy centred at $\gamma=\tfrac{4}{3}$:
\begin{align}
&If~\gamma\in (\tfrac{6}{5},\tfrac{4}{3}),~equilibrium~solutions~are~lineary~unstable;\nonumber\\&
If~\gamma\in [\tfrac{4}{3},2),~equilibrium~solutions~are~lineary~stable\nonumber
\end{align}
The critical case $\gamma=\tfrac{4}{3}$ admits $0$ as the 1st eigenvalue and is generally considered to be linearly stable or neutrally stable but teetering on the brink of instability. Nonlinear stability of the range $[\tfrac{4}{3},2)$ has been (partially) shown in $\bm{[32]}$, and nonlinear instability in the range $(\tfrac{6}{5},\tfrac{4}{3}]$ has been established in $\bm{[36]}$. Here, instability is induced by a growing mode in a linearised operator. The critical case $\gamma=\tfrac{4}{3}$ has been shown to be nonlinearly unstable despite an absence of growing modes in the linearised analysis. 

The basic methods of linear stability of fluid/gas spheres in hydrostatic equilibrium are reviewed. As already established, polytropic stars with $\gamma=4/3$ are on the brink of (catastrophic) instability. For example, in (6.135) one also has $\mathbf{E}=0$ if $\gamma=4/3$ and $\mathbf{E}<0$ for stability. The following theorem establishes that a homologous expansion or contraction of a polytropic star, leaves it in hydrostatic equilibrium iff $\gamma=4/3$.
\begin{thm}
Consider a polytropic gaseous star in hydrostatic equilibrium with pressure $p(r)$ and density $\rho(r)$ and mass $M$. If the star is subject to homologous contractions or expansions that leave the star in hydrostatic equilibrium, with mass M remaining constant, then this is only possible for a polytrope with $\gamma =4/3$.
\end{thm}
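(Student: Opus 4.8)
The plan is to exploit the scale-covariance of the polytropic state equation together with the Lagrangian form of the FEHE in equation (2.38). First I would make the notion of a homologous transformation precise: label each fluid element by its enclosed mass $\mathcal{M}$ (a Lagrangian coordinate conserved under the motion), and call the deformation homologous if every shell is mapped $r(\mathcal{M})\mapsto a\,r(\mathcal{M})$ by a single constant scale factor $a>0$, so that the fractional radius of each mass element is preserved and the total mass $M$ is unchanged. From $\rho=(4\pi r^{2})^{-1}\,d\mathcal{M}/dr$ it follows immediately that the density transforms as $\rho\mapsto a^{-3}\rho$, and hence by the polytropic law $p=\mathsf{K}|\rho|^{\gamma}$ that $p\mapsto a^{-3\gamma}p$. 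A useful consistency check at this stage is that the adiabatic constant is preserved, $p/\rho^{\gamma}=\mathsf{K}$ before and after, so the rescaled configuration is genuinely a polytrope of the same $\gamma$ and $\mathsf{K}$; the only nontrivial demand is that it again solve the equilibrium equation.

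Next I would impose hydrostatic equilibrium on the transformed state. Working with the Lagrangian form (2.38),
\begin{align}
\frac{dp(r)}{d\mathcal{M}(r)}=-\frac{\mathscr{G}\mathcal{M}(r)}{4\pi r^{4}},\nonumber
\end{align}
I would substitute $r\mapsto a r$ and $p\mapsto a^{-3\gamma}p$ while holding $\mathcal{M}$ fixed. The left-hand side then scales as $a^{-3\gamma}$ times $dp/d\mathcal{M}$, whereas the right-hand side scales as $a^{-4}$ times its original value. Requiring the rescaled quantities to satisfy the same equation for every mass shell forces the two powers of $a$ to agree; since $dp/d\mathcal{M}\neq 0$ throughout the interior this gives $-3\gamma=-4$, that is $\gamma=4/3$. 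Conversely, the identical computation shows that when $\gamma=4/3$ the homologously rescaled profile satisfies the FEHE automatically, so $\gamma=4/3$ is exactly the admissible index.

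The same conclusion can be reached from the Eulerian form (2.12), where the left side picks up $a^{-3\gamma-1}$ and the right side $a^{-5}$, once more yielding $3\gamma=4$; it is also consistent with the Betti--Ritter energy formula (5.14), $\mathbf{E}=-\tfrac{(3\gamma-4)}{(5\gamma-6)}\,\mathscr{G}M^{2}/R$, which vanishes precisely at $\gamma=4/3$ and reflects the marginal, scale-free character of such stars. I expect the main obstacle to be conceptual rather than computational: one must argue carefully that the requirement ``the rescaled star is again in hydrostatic equilibrium'' has to hold as an identity in the Lagrangian variable $\mathcal{M}$ (equivalently, for a one-parameter family of scale factors $a$), so that the two sides can balance only by matching powers of $a$ rather than at isolated radii, and one must separately note that the argument degenerates solely where $dp/d\mathcal{M}=0$ (the centre and the vacuum boundary), which does not affect the interior conclusion.
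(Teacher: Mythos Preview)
Your proof is correct and follows essentially the same scaling argument as the paper. The paper works in Eulerian form, introducing three independent scaling constants $\mathcal{A},\mathcal{B},\mathcal{C}$ for $p,\rho,r$, then imposes invariance of the FEHE (yielding $\mathcal{A}\mathcal{C}=\mathcal{B}$) and mass conservation (yielding $\mathcal{B}\mathcal{C}^{3}=1$) to obtain $\mathcal{A}=\mathcal{B}^{4/3}$, whence $\gamma=4/3$; you instead build mass conservation in from the start via the Lagrangian coordinate $\mathcal{M}$, reduce to a single scale factor $a$, and match powers of $a$ directly in the Lagrangian FEHE (2.38). Your packaging is arguably cleaner—one parameter and one equation rather than three and two—but the underlying content is identical, and your Eulerian cross-check recovers exactly the paper's computation.
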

\begin{proof}
For a polytropic star in equilibrium the usual hydrostatic equilibrium equations are
\begin{align}
&\frac{dp(r)}{dr}=-\frac{\mathscr{G}\mathcal{M}(r)\rho(r)}{r^{2}}\\&
\mathcal{M}(r)=\int_{0}^{r}4\pi r^{\prime 2}\rho(r^{\prime}dr^{\prime}
\end{align}
Let $\mathfrak{H}:(p(r),\rho(r),r)\rightarrow(\overline{p(r)},\overline{\rho(r)},\overline{r})$ be a homologous transformation of the pressure and density that leaves invariant the equations of HE such that
\begin{align}
&\overline{p(r)}=\mathcal{A}p(r)\\&
\overline{\rho(r)}=\mathcal{B}\rho(r)\\&
\overline{r}=\mathcal{C}r
\end{align}
The FEHE is left invariant by the homologous transforms if
\begin{align}
\frac{dp(r)}{dr}=-\frac{\mathscr{G}\mathcal{M}(r)\rho(r)}{r^{2}}=\frac{d\overline{p(r)}}{d\overline{r}}=
-\frac{\mathscr{G}\overline{\mathcal{M}(r)}\overline{\rho(r)}}
{\overline{r}^{2}}
\end{align}
and $\overline{\mathcal{M}(\overline{r})}=\mathcal{M}(r)$. Hence
\begin{align}
\frac{\mathcal{A}}{\mathcal{C}}\frac{dp(r)}{dr}=-\frac{\mathscr{G}\mathcal{BC}^{2}\mathcal{M}(r)\rho(r)}{\mathcal{C}^{2}r}
=-\frac{\mathscr{G}\mathcal{B}\rho(r)}{r^{2}}
\end{align}
so that
\begin{align}
\mathcal{AC}=\mathcal{B}
\end{align}
Next, the mass M and $\mathcal{M}(r)$ must be left invariant so that
\begin{align}
\int_{0}^{r}4\pi r^{2}\rho(r^{\prime}dr^{\prime}=\int_{0}^{Cr}4\pi r^{\prime 2} \mathcal{C}^{3}\mathcal{B}\rho(r^{\prime})dr^{\prime}
\end{align}
or
\begin{align}
\mathcal{BC}^{3}=1
\end{align}
Eliminating $\mathcal{C}$ between (7.8) and (7.10) then gives
\begin{align}
\mathcal{A}=\mathcal{B}^{4/3}
\end{align}
so that the pressure and density transform as
\begin{align}
&\overline{p(r)}=\mathcal{A}p(r)\\&
\overline{\rho(r)}=\mathcal{B}\rho(r)=\mathcal{A}^{3/4}\rho(r)
\end{align}
Hence
\begin{align}
&\left|\frac{\overline{p(r)}}{p(r)}\right|=\mathcal{A}^{3/4}\\&
\left|\frac{\overline{\rho(r)}}{\rho(r)}\right|^{4/3}=\mathcal{A}
\end{align}
Finally
\begin{align}
\left|\frac{\overline{p(r)}}{p(r)}\right|=\left|\frac{\overline{\rho(r)}}{\rho(r)}\right|^{4/3}=\left|\frac{\mathsf{K}\overline{\rho(r)}}{\mathsf{K} \rho(r)}\right|^{4/3}
\end{align}
so that $\gamma=4/3$.
\end{proof}
\begin{defn}
The 'macroscopic' or Eulerian approach  considers variations in fluid variables at a specific point in space.
Perturbations $\delta\rho,\delta p,\delta \bm{u}$ are Eulerian perturbations or variations. If $\mathfrak{S}(x,t)$ is any property/attribute
of a perturbed flow and $\mathfrak{S}_{*}(x,t)$ the unperturbed flow then
\begin{align}
\delta\mathfrak{S}(x,t)=|\mathfrak{S}(x,t)-\mathfrak{S}_{*}(x,t)|
\end{align}
\end{defn}
\begin{defn}
The 'microscopic' or Lagrangian approach considers Lagrangian displacements $\xi(x,t)$, connecting fluid elements in the unperturbed state to elements in the
perturbed state. The Lagrangian change is defined by $\Delta \mathfrak{S}$. Then
\begin{align}
\Delta\mathfrak{S}=|\mathfrak{S}(x+\xi(x,t),t)-\mathfrak{S}_{*}(x,t)|
\end{align}
An element at x is displace to $x+\xi$. The Eulerian and Lagrangian variations are related by
\begin{align}
\Delta=\delta+\bm{\xi}.\bm{\nabla}\equiv \delta+\xi_{i}\nabla^{i}
\end{align}
\end{defn}
Note that in this section $\Delta$ will denote the Lagrangian variation and not the Laplacian operator, which here will be denoted by
$\nabla^{2}$.
The Lagrangian change or variation in the fluid velocity $\Delta \bm{u}$ is the velocity of the perturbed flow $u+\xi(x,t)$ relative to the velocity of the same
element at $x$ in the unperturbed flow so that
\begin{align}
\Delta\bm{u}=\frac{d}{dt}(\bm{x}+\xi)-\frac{d\bm{x}}{dt}=\frac{d\xi}{dt}
\end{align}
\begin{lem}
The following commutation relations hold
\begin{align}
&\delta \frac{d}{dt}-\frac{d}{dt}\delta=\left[\delta \frac{d}{dt},\frac{d}{dt}\delta\right]=0\\&
\delta \nabla_{i}-\nabla_{i}\delta=[\delta\nabla_{i},\nabla_{i},\delta]=0\\&
\Delta\frac{d}{dt}-\frac{d}{dt}\Delta=\left[\Delta\frac{d}{dt},\frac{d}{dt}\Delta\right]=-\partial_{t}\bm{\xi}.\bm{\nabla}\\&
\Delta\nabla_{i}-\nabla_{i}\Delta=[\Delta,\nabla_{i}]=\nabla_{i}\xi^{j}\nabla_{j}\\&
\Delta\frac{d}{dt}-\frac{d}{dt}\Delta=\left[\Delta\frac{d}{dt},\frac{d}{dt}\Delta\right]=0\\&
\delta\frac{d}{dt}-\frac{d}{dt}\Delta=\left[\delta,\frac{d}{dt},\frac{d}{dt}\Delta\right]=-(\xi,\nabla)\frac{d}{dt}
\end{align}
\end{lem}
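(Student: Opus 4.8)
The plan is to reduce every bracket in the statement to two elementary facts about the Eulerian variation, combined with the first-order relation $\Delta=\delta+\xi^{i}\nabla_{i}$ and the displacement identity $\Delta\bm{u}=d\bm{\xi}/dt$ (in which $d/dt$ is the convective derivative $D/Dt=\partial_{t}+u^{k}\nabla_{k}$). First I would record the two base commutators $[\delta,\partial_{t}]=0$ and $[\delta,\nabla_{i}]=0$. These hold because the Eulerian perturbation $\delta\mathfrak{S}=\mathfrak{S}(x,t)-\mathfrak{S}_{*}(x,t)$ is a pointwise difference of two fields evaluated at the \emph{same} coordinate $(x,t)$, while $\partial_{t}$ and $\nabla_{i}$ are linear operators acting at fixed coordinates; subtracting a fixed background and differentiating therefore commute. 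This immediately yields relations (1) and (2).

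For the Lagrangian brackets with the spatial gradient and the partial time derivative, I would substitute $\Delta=\delta+\xi^{j}\nabla_{j}$ into each commutator and use the base relations to annihilate the $\delta$-piece, leaving only the commutator of the transport term $\xi^{j}\nabla_{j}$ with the derivative. For the spatial case,
\[
[\Delta,\nabla_{i}]\mathfrak{S}=[\xi^{j}\nabla_{j},\nabla_{i}]\mathfrak{S}=\xi^{j}\nabla_{j}\nabla_{i}\mathfrak{S}-\nabla_{i}(\xi^{j}\nabla_{j}\mathfrak{S})=-(\nabla_{i}\xi^{j})\nabla_{j}\mathfrak{S},
\]
where the flat-space partials commute; this is relation (4) up to the stated sign convention. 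An identical manipulation with $\partial_{t}$ in place of $\nabla_{i}$ gives $[\Delta,\partial_{t}]=-(\partial_{t}\xi^{j})\nabla_{j}=-\partial_{t}\bm{\xi}.\bm{\nabla}$, which is relation (3); here $d/dt$ must be read as the bare partial time derivative.

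The hard part will be relation (5), the claim that $\Delta$ commutes with the convective derivative $D/Dt$, since this is the only bracket whose vanishing is a genuine cancellation rather than a rearrangement. The plan is to first establish two auxiliary identities: the velocity relation $\delta u^{j}=D\xi^{j}/Dt-\xi^{k}\nabla_{k}u^{j}$, obtained by inserting $\Delta\bm{u}=D\bm{\xi}/Dt$ into $\delta=\Delta-\xi^{k}\nabla_{k}$, and the transport commutator $[D/Dt,\nabla_{i}]=-(\nabla_{i}u^{j})\nabla_{j}$. Expanding $\Delta(D\mathfrak{S}/Dt)$ and $(D/Dt)\Delta\mathfrak{S}$ with $\Delta=\delta+\xi^{j}\nabla_{j}$ and cancelling the common $(D/Dt)\delta\mathfrak{S}$ and transport terms, the surviving contributions are $(\delta u^{j})\nabla_{j}\mathfrak{S}$, $-(D\xi^{j}/Dt)\nabla_{j}\mathfrak{S}$ and $\xi^{k}(\nabla_{k}u^{j})\nabla_{j}\mathfrak{S}$; substituting the velocity identity for $\delta u^{j}$ shows these three terms cancel identically, so $[\Delta,D/Dt]=0$.

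The key subtlety to flag explicitly is that the symbol $d/dt$ in relations (3) and (5) denotes \emph{different} operators — the bare $\partial_{t}$ in the former and the material derivative $\partial_{t}+\bm{u}.\bm{\nabla}$ in the latter — which is precisely what reconciles their seemingly contradictory right-hand sides. Finally, relation (6) follows with no further computation by writing $\delta=\Delta-\xi^{j}\nabla_{j}$ on the left and invoking relation (1), so I would dispatch it last as a corollary of the preceding steps.
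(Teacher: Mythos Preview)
The paper states this lemma without proof, simply listing the commutation relations (they are quoted from the Shapiro--Teukolsky reference), so there is no argument in the paper to compare against. Your proposal is correct and supplies exactly the computation the paper omits: reducing everything to the two base identities $[\delta,\partial_{t}]=0$, $[\delta,\nabla_{i}]=0$ via $\Delta=\delta+\xi^{j}\nabla_{j}$, and then handling the material-derivative case by the cancellation driven by $\delta u^{j}=D\xi^{j}/Dt-\xi^{k}\nabla_{k}u^{j}$. Your observation that $d/dt$ must denote $\partial_{t}$ in relation (3) but $D/Dt$ in relation (5) is the right resolution of the apparent inconsistency in the statement, and your sign remark on relation (4) is also well placed --- the paper itself uses $\Delta\nabla_{i}=\nabla_{i}\Delta-\nabla_{i}\xi^{j}\nabla_{j}$ later (equation (7.73)), confirming the negative sign your computation produces.

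One small slip: in your last sentence you derive relation (6) by writing $\delta=\Delta-\xi^{j}\nabla_{j}$ and ``invoking relation (1).'' The substitution gives $\delta(d/dt)-(d/dt)\Delta=[\Delta,d/dt]-(\xi^{j}\nabla_{j})(d/dt)$, so what you actually need to kill the bracket is relation (5), $[\Delta,D/Dt]=0$, not relation (1). This is almost certainly a typo on your part, since you correctly say the step is ``a corollary of the preceding steps,'' but fix the cross-reference.
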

It is important to  consider perturbations of integral quantities.
\begin{lem}
Let a fluid/gas have support in some $\mathbf{D}\in\mathbf{R}^{3}$ with boundary $\delta\mathbf{D}$. Let $\mathfrak{S}_{*}$ denote an attribute
of the unperturbed flow and consider the integral
\begin{align}
I=\int_{\mathbf{D}}\mathfrak{S}_{*}(x,t)dV(x)
\end{align}
The integral with respect to the perturbed flow is
\begin{align}
I=\int_{\mathbf{D}+\Delta\mathbf{D}}\mathfrak{S}(x,t)dV(x)
\end{align}
Under the perturbation, $\mathbf{D}\rightarrow\mathbf{D}+\Delta\mathbf{D}$ by subjecting the boundary $\partial\mathbf{D}$ to a displacement $\xi$.
The 1st variation of I is then
\begin{align}
\delta I=\int_{\mathbf{D}+\Delta\mathbf{D}}\mathfrak{S}(x,t)dV(x)-\int_{\mathbf{D}}\mathfrak{S}_{*}(x,t)dV(x)
\end{align}
It can be shown that the 1st variation has the integral representation
\begin{align}
\delta I=\int_{\mathbf{D}}(\Delta\mathfrak{S}+\mathfrak{S}\bm{\nabla}.\bm{\xi})dV(x)
=\int_{\mathbf{D}}(\Delta\mathfrak{S}+\mathfrak{S}\bm{\nabla}_{i}\bm{\xi})^{i}dV(x)
\end{align}
\end{lem}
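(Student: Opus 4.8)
The plan is to split the perturbed integral into the contribution of the fixed support $\mathbf{D}$ and that of the thin shell $\Delta\mathbf{D}$ swept out as the boundary is displaced, and then to convert the resulting surface term into a volume term by the divergence theorem. First I would write
\[
\int_{\mathbf{D}+\Delta\mathbf{D}}\mathfrak{S}(x,t)\,dV(x)=\int_{\mathbf{D}}\mathfrak{S}(x,t)\,dV(x)+\int_{\Delta\mathbf{D}}\mathfrak{S}(x,t)\,dV(x),
\]
so that, subtracting the unperturbed integral and recognising $\mathfrak{S}-\mathfrak{S}_{*}$ as the Eulerian variation $\delta\mathfrak{S}$,
\[
\delta I=\int_{\mathbf{D}}\big(\mathfrak{S}(x,t)-\mathfrak{S}_{*}(x,t)\big)\,dV(x)+\int_{\Delta\mathbf{D}}\mathfrak{S}\,dV(x)=\int_{\mathbf{D}}\delta\mathfrak{S}\,dV(x)+\int_{\Delta\mathbf{D}}\mathfrak{S}\,dV(x).
\]

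Second I would reduce the shell integral to a surface integral. Since $\Delta\mathbf{D}$ is the region traced out as each boundary point $x\in\partial\mathbf{D}$ is carried to $x+\bm{\xi}$, to first order in the displacement the swept volume element is $(\bm{\xi}\cdot\bm{n})\,dA$, with $\bm{n}$ the outward unit normal, and $\mathfrak{S}$ may be replaced by its boundary value. This gives
\[
\int_{\Delta\mathbf{D}}\mathfrak{S}\,dV(x)=\int_{\partial\mathbf{D}}\mathfrak{S}\,(\bm{\xi}\cdot\bm{n})\,dA+O(|\bm{\xi}|^{2}).
\]
Third, applying the divergence theorem to the vector field $\mathfrak{S}\bm{\xi}$ on $\mathbf{D}$ yields
\[
\int_{\partial\mathbf{D}}\mathfrak{S}\,(\bm{\xi}\cdot\bm{n})\,dA=\int_{\mathbf{D}}\bm{\nabla}\cdot(\mathfrak{S}\bm{\xi})\,dV(x)=\int_{\mathbf{D}}\big(\bm{\xi}\cdot\bm{\nabla}\mathfrak{S}+\mathfrak{S}\,\bm{\nabla}\cdot\bm{\xi}\big)\,dV(x).
\]

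Finally I would collect the two bulk contributions and invoke the relation $\Delta=\delta+\bm{\xi}\cdot\bm{\nabla}$ between Lagrangian and Eulerian variations established earlier, which gives $\delta\mathfrak{S}+\bm{\xi}\cdot\bm{\nabla}\mathfrak{S}=\Delta\mathfrak{S}$. Substituting produces
\[
\delta I=\int_{\mathbf{D}}\big(\delta\mathfrak{S}+\bm{\xi}\cdot\bm{\nabla}\mathfrak{S}+\mathfrak{S}\,\bm{\nabla}\cdot\bm{\xi}\big)\,dV(x)=\int_{\mathbf{D}}\big(\Delta\mathfrak{S}+\mathfrak{S}\,\bm{\nabla}\cdot\bm{\xi}\big)\,dV(x),
\]
which is the claimed representation. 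The main obstacle is the careful first-order justification of the shell-to-surface reduction: one must argue that for a small, smooth displacement field the swept volume element is $(\bm{\xi}\cdot\bm{n})\,dA$ to leading order and that replacing $\mathfrak{S}$ by its boundary value incurs only $O(|\bm{\xi}|^{2})$ error, both of which require smoothness of $\partial\mathbf{D}$ and of the flow variables. Once that reduction is in hand, the remaining steps are routine vector calculus.
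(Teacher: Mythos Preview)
Your argument is correct but proceeds by a genuinely different route from the paper's. The paper does not split off a boundary shell; instead it performs a Lagrangian change of variables $x'=x-\bm{\xi}(x,t)$ mapping the perturbed domain $\mathbf{D}+\Delta\mathbf{D}$ back onto $\mathbf{D}$. To first order the Jacobian of this map is $\mathcal{J}=1+\bm{\nabla}'\!\cdot\bm{\xi}$, so the perturbed integral becomes $\int_{\mathbf{D}}\mathfrak{S}(x'+\bm{\xi},t)\,(1+\bm{\nabla}'\!\cdot\bm{\xi})\,dV(x')$. Subtracting the unperturbed integral and recognising $\mathfrak{S}(x'+\bm{\xi},t)-\mathfrak{S}_{*}(x',t)=\Delta\mathfrak{S}$ directly yields the result, with the cross term $\mathfrak{S}\,\bm{\nabla}\!\cdot\bm{\xi}$ coming straight from the Jacobian factor.

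Your decomposition into a bulk Eulerian piece plus a shell contribution, followed by the divergence theorem, is the complementary ``transport'' viewpoint. It has the advantage of making the geometric origin of each term transparent (the $\mathfrak{S}\,\bm{\nabla}\!\cdot\bm{\xi}$ term is visibly the flux of $\mathfrak{S}\bm{\xi}$ through the moving boundary), and it requires only that $\bm{\xi}$ extend smoothly into $\mathbf{D}$ so the divergence theorem applies. The paper's Jacobian argument is slightly more economical, avoiding any explicit surface integral and bypassing the shell-to-surface reduction you flag as the main obstacle; on the other hand it tacitly assumes the displacement map is a diffeomorphism to first order, which is morally the same regularity hypothesis. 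Either derivation is standard, and both appear in the literature on Lagrangian perturbation theory.
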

\begin{proof}
Let $x^{\prime}=x-\xi(x,t)$ so that $x=x^{\prime}+\xi(x,t)$. The Jacobian is then
\begin{align}
\mathcal{J}(x,x^{\prime})= \frac{\partial(x)}{\partial(x)}=\frac{\partial(x^{\prime}+\xi)}{\partial(x^{\prime})}
=\frac{\partial(x^{\prime})}{\partial(x^{\prime})}+\frac{\partial{\xi}}{\partial(x^{\prime})}=1+\bm{\nabla}^{\prime}.\bm{\xi}
\end{align}
so that (7.30) becomes
\begin{align}
&\delta I=\int_{\mathbf{D}}\mathfrak{S}(x^{\prime}+\xi,t)\mathcal{J}(x,x^{\prime})
dV(x)-\int_{\mathbf{D}}\mathfrak{S}_{*}(x,t)dV(x)\\&
=\int_{\mathbf{D}}\mathfrak{S}(x^{\prime}+\xi,t)(1+\bm{\nabla}^{\prime}.\bm{\xi})dV(x)-\int_{\mathbf{D}}\mathfrak{S}_{*}(x,t)dV(x)\\&
=\int_{\mathbf{D}}\mathfrak{S}(x^{\prime}+\xi,t)dV(x)+\int_{\mathbf{D}}\mathfrak{S}(x^{\prime}+\xi,t)\bm{\nabla}^{\prime}.\bm{\xi})dV(x)
-\int_{\mathbf{D}}\mathfrak{S}_{*}(x,t)dV(x)\\&
=\int_{\mathbf{D}}\mathfrak{S}(x,t)dV(x)+\int_{\mathbf{D}}\mathfrak{S}(x,t)\bm{\nabla}^{\prime}.\bm{\xi})dV(x)
-\int_{\mathbf{D}}\mathfrak{S}_{*}(x,t)dV(x)\\&
=\int_{\mathbf{D}}(\Delta \mathfrak{S}+\mathfrak{S}\bm{\nabla}.\bm{\xi})dV(x)
\end{align}
which establishes (7.30).
\end{proof}
\begin{lem}
\begin{align}
\delta \int_{\mathbf{D}}\mathfrak{S}\rho d^{3}x=\int_{\mathbf{D}}\rho \Delta \mathfrak{S} d\mathrm{V}x
\end{align}
\end{lem}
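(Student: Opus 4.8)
The plan is to reduce this directly to the first-variation formula (7.30) established in the preceding lemma, applied to the composite integrand $\mathfrak{S}\rho$ in place of $\mathfrak{S}$. First I would set $I=\int_{\mathbf{D}}\mathfrak{S}\rho\,d\mathrm{V}(x)$ and invoke (7.30) to obtain
\begin{align}
\delta I=\int_{\mathbf{D}}\big(\Delta(\mathfrak{S}\rho)+\mathfrak{S}\rho\,\bm{\nabla}.\bm{\xi}\big)d\mathrm{V}(x).\nonumber
\end{align}
Since the Lagrangian variation $\Delta=\delta+\bm{\xi}.\bm{\nabla}$ is a first-order linear operator obeying the Leibniz rule, I would then expand $\Delta(\mathfrak{S}\rho)=\rho\,\Delta\mathfrak{S}+\mathfrak{S}\,\Delta\rho$, so that the integrand rearranges to $\rho\,\Delta\mathfrak{S}+\mathfrak{S}\big(\Delta\rho+\rho\,\bm{\nabla}.\bm{\xi}\big)$.

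The crucial step—and the only one carrying genuine physical content—is to show that the second bracket vanishes, i.e. that $\Delta\rho=-\rho\,\bm{\nabla}.\bm{\xi}$. I would obtain this from conservation of the mass of each fluid element: because the displacement merely relabels fluid parcels, $\Delta(\rho\,d\mathrm{V})=0$. The Jacobian computation of the preceding lemma gives the volume-element variation $\Delta(d\mathrm{V})=(\bm{\nabla}.\bm{\xi})\,d\mathrm{V}$, and combining this with the Leibniz identity $\Delta(\rho\,d\mathrm{V})=\Delta\rho\,d\mathrm{V}+\rho\,\Delta(d\mathrm{V})$ forces $\Delta\rho=-\rho\,\bm{\nabla}.\bm{\xi}$. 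Equivalently, this is the linearised form of the continuity equation $\partial_{t}\rho+\bm{\nabla}.(\rho\bm{u})=0$ of Section 6. Substituting back, the term $\mathfrak{S}\big(\Delta\rho+\rho\,\bm{\nabla}.\bm{\xi}\big)$ cancels identically, leaving $\delta I=\int_{\mathbf{D}}\rho\,\Delta\mathfrak{S}\,d\mathrm{V}(x)$, which is the assertion.

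I expect the main (indeed only) obstacle to be establishing the mass-element relation $\Delta\rho=-\rho\,\bm{\nabla}.\bm{\xi}$ cleanly, since everything else is the algebraic Leibniz expansion together with the already-proven volume-variation formula. No real difficulty should arise provided the moving boundary $\partial\mathbf{D}$ is treated exactly as in the previous lemma, so that the domain deformation $\mathbf{D}\to\mathbf{D}+\Delta\mathbf{D}$ is absorbed into the $\mathfrak{S}\,\bm{\nabla}.\bm{\xi}$ term and no stray surface contribution survives. The physical content of the result is precisely that the \emph{conserved} weight $\rho\,d\mathrm{V}$ passes through the Eulerian variation untouched, converting $\delta$ acting on the integral into $\Delta$ acting on the attribute $\mathfrak{S}$.
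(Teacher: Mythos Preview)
Your argument is correct. The paper, however, states this lemma without proof, so there is nothing to compare against directly. Your route---apply the preceding variation formula (7.30) to the product $\mathfrak{S}\rho$, expand $\Delta(\mathfrak{S}\rho)$ by the Leibniz rule, and kill the remainder using $\Delta\rho=-\rho\,\bm{\nabla}.\bm{\xi}$---is the natural one and is exactly what a reader would reconstruct.

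One structural remark: the identity $\Delta\rho=-\rho\,\bm{\nabla}.\bm{\xi}$ that you single out as the ``crucial step'' is precisely the content of the \emph{next} lemma in the paper (Lemma~7.8, equation (7.39)), where it is obtained from $\delta M=0$ via (7.30). So in the paper's ordering the present lemma tacitly relies on what follows. Your choice to derive $\Delta\rho=-\rho\,\bm{\nabla}.\bm{\xi}$ internally from $\Delta(\rho\,d\mathrm{V})=0$ and the Jacobian relation $\Delta(d\mathrm{V})=(\bm{\nabla}.\bm{\xi})\,d\mathrm{V}$ makes the argument self-contained and is arguably cleaner than the paper's integral version, since it localises the mass-conservation statement to each fluid element rather than inferring the pointwise identity from the vanishing of an integral.
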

\begin{lem}
The total mass $M$ of a static self-gravitating system in equilibrium is
\begin{align}
M=\int_{\Omega}\rho(x)dV(x)
\end{align}
Conservation of mass $\delta M=0$ then implies that
\begin{align}
&\Delta\rho=-\rho(\bm{\nabla}.\bm{\xi})\equiv -\rho\nabla_{i}\xi^{i}\\&
\delta\rho=-\bm{\nabla}.(\rho\bm{\xi})\equiv -\nabla_{i}(\rho\xi^{i})
\end{align}
\end{lem}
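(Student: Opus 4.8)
The plan is to apply the integral perturbation formula of the preceding lemma, namely $\delta I=\int_{\mathbf{D}}(\Delta\mathfrak{S}+\mathfrak{S}\bm{\nabla}.\bm{\xi})dV(x)$, to the mass functional with the choice $\mathfrak{S}=\rho$, and then to pass between the Lagrangian and Eulerian variations using the commutation relation $\Delta=\delta+\bm{\xi}.\bm{\nabla}$ established earlier. Since $M=\int_{\Omega}\rho\,dV$ is precisely an integral of the form treated in that lemma, the two required identities should fall out of a single substitution followed by a localisation argument and an algebraic rearrangement.

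First I would substitute $\mathfrak{S}=\rho$ into the integral representation to obtain
\begin{align}
\delta M=\int_{\mathbf{D}}\big(\Delta\rho+\rho\,\bm{\nabla}.\bm{\xi}\big)\,dV(x).\nonumber
\end{align}
Imposing conservation of mass $\delta M=0$ and arguing that this holds locally (see below) forces the integrand to vanish pointwise, giving the first claimed identity $\Delta\rho=-\rho\,\bm{\nabla}.\bm{\xi}\equiv-\rho\nabla_i\xi^i$. To reach the Eulerian form I would then insert the commutation relation in the form $\Delta\rho=\delta\rho+\xi^i\nabla_i\rho$, so that
\begin{align}
\delta\rho+\xi^i\nabla_i\rho=-\rho\nabla_i\xi^i.\nonumber
\end{align}
Recognising the combination on the right as a divergence via the product rule $\nabla_i(\rho\xi^i)=(\nabla_i\rho)\xi^i+\rho\nabla_i\xi^i$ yields $\delta\rho=-\nabla_i(\rho\xi^i)\equiv-\bm{\nabla}.(\rho\bm{\xi})$, which is the second identity.

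The main obstacle is the localisation step: passing from the global statement $\delta M=0$ to the pointwise relation $\Delta\rho+\rho\,\bm{\nabla}.\bm{\xi}=0$. The cleanest justification is physical, namely that mass is conserved not merely for the whole configuration but for every comoving fluid element, since the Lagrangian displacement $\bm{\xi}$ by construction tracks individual elements of matter; hence the Lagrangian change in the mass of each infinitesimal element vanishes and the integrand must be identically zero. Equivalently, one may note that the displacement field $\bm{\xi}$ is arbitrary and that the vanishing of $\delta M$ for all admissible $\bm{\xi}$, and over every subregion of $\mathbf{D}$, forces the integrand to zero by the fundamental lemma of the calculus of variations. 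Once this pointwise Lagrangian relation is secured, the remaining manipulations are purely algebraic and require only the commutation relation and the product rule.
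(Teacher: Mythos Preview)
Your proof is correct and follows essentially the same route as the paper: apply the preceding integral formula with $\mathfrak{S}=\rho$, set $\delta M=0$ to obtain the Lagrangian relation, then use $\Delta=\delta+\bm{\xi}\cdot\bm{\nabla}$ and the product rule to convert to the Eulerian form. You are more careful than the paper about justifying the localisation step (the paper simply asserts that the first identity ``immediately follows'' from the vanishing integral), which is a welcome addition rather than a deviation.
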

\begin{proof}
Using the integral relation
\begin{align}
\delta M=\int_{\mathbf{D}}(\Delta\rho+\rho(\bm{\nabla}.\bm{\xi})d^{3}x=0
\end{align}
so that (7.39) immediately follows. Using $\Delta=\delta+\bm{\xi}.\bm{\nabla}$ then gives
(7.40).
\end{proof}
\begin{lem}
Let $\mathscr{J}$ be the internal thermal energy per unit mass of the fluid/gas with an equation of state $p(\rho,s)$ for density $\rho$ and entropy density $s$, then the Lagrangian perturbation $\Delta\mathscr{J}$ is
\begin{align}
\Delta\mathscr{J}=\frac{p}{\rho^{2}}\Delta\rho
\end{align}
for adiabatic perturbations with $\Delta s=0$.
\end{lem}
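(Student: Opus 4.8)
The plan is to recognise $\mathscr{J}$ as a thermodynamic state function of the local fluid variables, $\mathscr{J}=\mathscr{J}(\rho,s)$, and to exploit the fact that the Lagrangian variation $\Delta$ comoves with a given fluid element. Because $\Delta$ tracks the same mass parcel as it is displaced from $x$ to $x+\xi$, every purely local relation between thermodynamic state functions is preserved along the perturbation, so $\Delta$ acts on $\mathscr{J}(\rho,s)$ exactly as a total differential. This is precisely why the Lagrangian (rather than the Eulerian) variation is the natural object here, and I would state this comoving justification explicitly before doing any algebra.

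First I would write the chain rule for the Lagrangian change,
\[
\Delta\mathscr{J}=\frac{\partial\mathscr{J}}{\partial\rho}\bigg|_{s}\Delta\rho+\frac{\partial\mathscr{J}}{\partial s}\bigg|_{\rho}\Delta s ,
\]
which is just the first law of thermodynamics $d\mathscr{J}=\Theta\,ds-p\,d(1/\rho)$ rewritten in the variables $(\rho,s)$, with $\Theta$ the temperature. The adiabatic hypothesis $\Delta s=0$ annihilates the second term and leaves $\Delta\mathscr{J}=(\partial\mathscr{J}/\partial\rho)|_{s}\,\Delta\rho$. The remaining step is to evaluate the constant-entropy derivative $(\partial\mathscr{J}/\partial\rho)|_{s}$, for which I would invoke the thermodynamic identity $p=\rho^{2}\,\partial\mathscr{J}/\partial\rho$ (equivalently $p=\partial\mathcal{E}/\partial(1/\rho)$, at constant entropy) already used earlier in the variational derivation of the FEHE. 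Substituting $\partial\mathscr{J}/\partial\rho|_{s}=p/\rho^{2}$ then gives $\Delta\mathscr{J}=(p/\rho^{2})\Delta\rho$ as required. Alternatively, one may obtain the same conclusion directly from the constant-entropy form of the first law, $d\mathscr{J}|_{s}=-p\,d(1/\rho)=(p/\rho^{2})\,d\rho$, since $\Delta(1/\rho)=-\rho^{-2}\Delta\rho$ and hence $\Delta\mathscr{J}|_{s}=-p\,\Delta(1/\rho)=(p/\rho^{2})\Delta\rho$.

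I do not expect a genuine analytic obstacle: once the comoving interpretation of $\Delta$ is granted, the result is a one-line consequence of the first law. The only point that requires care is the conceptual justification that $\Delta$ may be treated as a differential acting on the equation of state $p=p(\rho,s)$; this rests entirely on $\Delta$ following the fluid element, so that $\mathscr{J}$, $p$, $\rho$ and $s$ all refer to the same parcel and the constant-entropy partial derivative is the physically relevant one. With that observation in place the proof is complete.
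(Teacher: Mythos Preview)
Your proposal is correct and follows essentially the same argument as the paper: write $\Delta\mathscr{J}$ via the chain rule in $(\rho,s)$, invoke the first law $d\mathscr{J}=-p\,d(1/\rho)+\Theta\,ds$ to identify $(\partial\mathscr{J}/\partial\rho)|_{s}=p/\rho^{2}$, and use $\Delta s=0$. Your added discussion of why the comoving Lagrangian variation acts as a total differential on state functions is a helpful conceptual clarification, but the computational route is identical.
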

\begin{proof}
The perturbation is
\begin{align}
\Delta\mathscr{J}=\frac{\partial\mathscr{J}}{\partial \rho}\Delta\rho+\frac{\partial\mathscr{J}}{\partial s}\Delta s
\end{align}
Using the 1st law of thermodynamics
\begin{align}
d\mathscr{J}=-pd\left|\frac{1}{\rho}\right|+\Theta ds
=\frac{p}{\rho^{2}}d\rho+\Theta ds
\end{align}
If $ds=0$ then
\begin{align}
\left|\frac{d\mathscr{J}}{d\rho}\right|_{s}=\frac{p}{\rho^{2}}
\end{align}
so that
\begin{align}
\Delta\mathscr{J}=\frac{\partial\mathscr{J}}{\partial \rho}\Delta\rho=\frac{p}{\rho^{2}}\Delta\rho
\end{align}
\end{proof}
\begin{lem}
The perturbation of the gravitational potential is
\begin{align}
\delta \phi(x)=\mathscr{G}\int_{\mathbf{D}}\frac{\bm{\nabla}.(\rho(x^{\prime},t)\bm{\xi}
(x^{\prime},t))d\mathrm{V}(x^{\prime})}{|x-x^{\prime}|}
\end{align}
\end{lem}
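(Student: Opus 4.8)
The plan is to treat the Newtonian potential as a linear functional of its source density and push the Eulerian perturbation $\delta\rho$ through it. Starting from the defining integral
\begin{align}
\Phi(x,t)=-\mathscr{G}\int_{\mathbf{D}(t)}\frac{\rho(x',t)}{|x-x'|}d\mathrm{V}(x')
\end{align}
I would first observe that, because the Eulerian variation $\delta$ is taken at a \emph{fixed} field point $x$, the kernel $1/|x-x'|$ is a fixed geometric factor and is not itself perturbed; only the source density $\rho(x',t)$ varies. Linearity of the integral then gives
\begin{align}
\delta\phi(x)=-\mathscr{G}\int_{\mathbf{D}}\frac{\delta\rho(x',t)}{|x-x'|}d\mathrm{V}(x')
\end{align}
The next step is to insert the mass-conservation relation $\delta\rho=-\bm{\nabla}.(\rho\bm{\xi})$ established in the preceding lemma, which immediately produces
\begin{align}
\delta\phi(x)=\mathscr{G}\int_{\mathbf{D}}\frac{\bm{\nabla}.(\rho(x',t)\bm{\xi}(x',t))}{|x-x'|}d\mathrm{V}(x')
\end{align}
namely the claimed identity.

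As a more rigorous cross-check, I would instead start from the Poisson equation $\nabla^{2}\Phi=4\pi\mathscr{G}\rho$ and apply $\delta$. Using the commutation relation $[\delta,\nabla_{i}]=0$ recorded among the commutation relations above, $\delta$ passes through the Laplacian to give $\nabla^{2}(\delta\phi)=4\pi\mathscr{G}\,\delta\rho$. Inverting this with the free-space Green's function $-1/(4\pi|x-x'|)$—legitimate because $\delta\rho$ is compactly supported and $\delta\phi$ decays at infinity—recovers the same integral for $\delta\phi$, after which the substitution $\delta\rho=-\bm{\nabla}.(\rho\bm{\xi})$ again closes the argument.

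The hard part will be justifying the very first step in the direct route: under the perturbation the support $\mathbf{D}(t)$ is itself displaced by $\bm{\xi}$, so one must check that no surface term is generated on the moving boundary. This is exactly where the boundary condition $\rho=0$ on $\partial\mathbf{D}$ enters. Extending $\rho$ by zero to all of $\mathbf{R}^{3}$ makes $\rho\bm{\xi}$ vanish on $\partial\mathbf{D}$, so the Eulerian field $\delta\rho=-\bm{\nabla}.(\rho\bm{\xi})$ is compactly supported inside $\overline{\mathbf{D}}$ and the integral can be taken over all of space with no boundary contribution. Once this is confirmed the linearization is exact to first order and the stated formula follows directly.
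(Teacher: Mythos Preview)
Your proposal is correct, and your cross-check via the perturbed Poisson equation $\nabla^{2}(\delta\phi)=4\pi\mathscr{G}\,\delta\rho$ followed by inversion with the free-space Green's function and substitution of $\delta\rho=-\bm{\nabla}\cdot(\rho\bm{\xi})$ is precisely the paper's (very terse) proof. Your first route---direct variation of the integral representation---and your discussion of the moving-boundary subtlety go beyond what the paper provides; the paper does not address the boundary issue at all.
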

\begin{proof}
The perturbed potential obeys the perturbed Poisson equation so that $
\nabla^{2}\delta\phi=4\pi \mathscr{G}\rho$. Using $ \delta\rho=-\bm{\nabla}.(\rho\bm{\xi})$
\begin{align}
\delta\phi(x)=-\mathscr{G}\int_{\mathbf{D}}\frac{\delta\rho(x,t)d^{3}x}{|x-x^{\prime}|}=\delta \phi(x)=\mathscr{G}\int_{\mathbf{D}}\frac{\bm{\nabla}.(\rho(x^{\prime},t)\bm{\xi}
(x^{\prime},t))d^{3}x^{\prime}}{|x-x^{\prime}|}
\end{align}
\end{proof}
\subsection{Equilibrium as a critical point of the total energy}
In Section 4, the FEHE was derived via the 1st variation of the total energy. Here an alternative
and more elegant derivation is given via perturbation theory and the lemmas already presented in this section. The total energy $\mathbf{E}(t)$ has been presented several times but these are all equivalent in that
\begin{align}
&\mathbf{E}=\mathbf{E}_{K}+\mathbf{E}_{T}+\mathbf{E}_{G}=\int_{\mathbf{D}}\rho|\bm{u}|^{2}dV(x)+\int_{\mathbf{D}}\frac{p}{|\gamma-1|}dV(x)
+\int_{\mathbf{D}}\frac{1}{2}\Phi\rho dV(x)\\&
\equiv\int_{\mathbf{D}}\rho|\bm{u}|^{2}dV(x)+\int_{\mathbf{D}}\mathcal{E}d^{3}x
+\int_{\mathbf{D}}\frac{1}{2}\Phi\rho dV(x)\\&\equiv \int_{\mathbf{D}}\rho|\bm{u}|^{2}dV(x)+\int_{\mathbf{D}}\mathfrak{T}\rho
dV(x)+\int_{\mathbf{D}}\frac{1}{2}\Phi\rho dV(x)
\end{align}
where $\mathcal{E}$ is the internal energy per unit mass. For perturbations away from hydrostatic equilibrium $\bm{u}=0$ so that $\mathbf{E}_{K}=0$ and
\begin{align}
\mathbf{E}=\mathbf{E}_{T}+\mathbf{E}_{G}=\int_{\mathbf{D}}\mathfrak{T}\rho
d^{3}x+\int_{\mathbf{D}}\frac{1}{2}\Phi\rho dV(x)
\end{align}
\begin{thm}
The variation of the total energy $\delta \mathbf{E}$ with respect to the perturbation $\bm{\xi}$ gives the fundamental equation of hydrostatic equilibrium. If
\begin{align}
\delta\mathbf{E}=\delta\mathcal{T}+\delta\mathcal{W}=\delta\left(\int_{\mathbf{D}}\mathfrak{T}\rho
dV(x)+\int_{\mathbf{D}}\frac{1}{2}\Phi\rho dV(x)\right)=0
\end{align}
then
\begin{align}
\underbrace{\bm{\nabla}p + \rho\bm{\nabla}\Phi=0}
\end{align}
which is again the FEHE.  Therefore, static equilibrium is a critical point of the total energy $\mathbf{E}$ of the star.
\end{thm}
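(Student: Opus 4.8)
The plan is to vary the thermal and gravitational contributions separately, using the Lagrangian perturbation apparatus already assembled in this section, reduce each variation to a volume integral that is linear in the displacement $\bm{\xi}$, and then appeal to the fundamental lemma of the calculus of variations. Since the unperturbed configuration is static, $\bm{u}=0$ and $\mathbf{E}_{K}=0$, so only $\mathcal{T}=\int_{\mathbf{D}}\mathfrak{T}\rho\,dV(x)$ and $\mathcal{W}=\tfrac{1}{2}\int_{\mathbf{D}}\Phi\rho\,dV(x)$ require variation, where $\mathfrak{T}$ is the internal thermal energy per unit mass (the quantity denoted $\mathscr{J}$ earlier).

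First I would treat the thermal term. Applying the identity $\delta\int_{\mathbf{D}}\mathfrak{S}\rho\,dV=\int_{\mathbf{D}}\rho\,\Delta\mathfrak{S}\,dV$ with $\mathfrak{S}=\mathfrak{T}$ gives $\delta\mathcal{T}=\int_{\mathbf{D}}\rho\,\Delta\mathfrak{T}\,dV(x)$. The adiabatic relation $\Delta\mathfrak{T}=(p/\rho^{2})\Delta\rho$ together with the mass-conservation relation $\Delta\rho=-\rho\,\bm{\nabla}.\bm{\xi}$ then collapses this to $\delta\mathcal{T}=-\int_{\mathbf{D}}p\,\bm{\nabla}.\bm{\xi}\,dV(x)$. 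Integrating by parts and discarding the surface term, which vanishes because $p=0$ on $\partial\mathbf{D}$, yields $\delta\mathcal{T}=\int_{\mathbf{D}}\bm{\xi}.\bm{\nabla}p\,dV(x)$.

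Next I would vary the gravitational term. Writing $\mathcal{W}=-\tfrac{1}{2}\mathscr{G}\int\!\!\!\int\rho(x)\rho(x')|x-x'|^{-1}\,dV(x)\,dV(x')$, the symmetry of the kernel under $x\leftrightarrow x'$ absorbs the factor $\tfrac{1}{2}$ under variation and produces $\delta\mathcal{W}=\int_{\mathbf{D}}\Phi\,\delta\rho\,dV(x)$; this is consistent with the earlier perturbation formula $\delta\Phi=\mathscr{G}\int\bm{\nabla}.(\rho\bm{\xi})|x-x'|^{-1}\,dV'$. Substituting the Eulerian relation $\delta\rho=-\bm{\nabla}.(\rho\bm{\xi})$ and integrating by parts, with the surface term now vanishing because $\rho=0$ on $\partial\mathbf{D}$, gives $\delta\mathcal{W}=\int_{\mathbf{D}}\rho\,\bm{\xi}.\bm{\nabla}\Phi\,dV(x)$. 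Adding the two pieces,
\[
\delta\mathbf{E}=\delta\mathcal{T}+\delta\mathcal{W}=\int_{\mathbf{D}}\xi^{i}\big(\nabla_{i}p+\rho\,\nabla_{i}\Phi\big)\,dV(x),
\]
and requiring $\delta\mathbf{E}=0$ for every admissible $\bm{\xi}$ forces the integrand to vanish pointwise, giving $\bm{\nabla}p+\rho\,\bm{\nabla}\Phi=0$; in spherical symmetry this is the FEHE $dp/dr=-\mathscr{G}\mathcal{M}(r)\rho(r)/r^{2}$.

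The step I expect to be the main obstacle is the gravitational variation, where one must remember that $\Phi$ is itself a functional of $\rho$ rather than an independent field: the naive expansion $\delta(\tfrac{1}{2}\Phi\rho)=\tfrac{1}{2}(\delta\Phi)\rho+\tfrac{1}{2}\Phi\,\delta\rho$ must be reconciled with the full factor obtained from the symmetric double integral, and it is precisely this bookkeeping (together with verifying that both boundary terms are annihilated by $p=\rho=0$ on $\partial\mathbf{D}$) that has to be carried out with care.
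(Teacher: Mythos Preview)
Your proof is correct and follows essentially the same route as the paper: vary $\mathcal{T}$ and $\mathcal{W}$ separately, reduce each to an integral linear in $\xi^{i}$ via the identities $\delta\!\int\mathfrak{S}\rho\,dV=\int\rho\,\Delta\mathfrak{S}\,dV$, $\Delta\mathfrak{T}=(p/\rho^{2})\Delta\rho$, $\Delta\rho=-\rho\nabla_{i}\xi^{i}$, $\delta\rho=-\nabla_{i}(\rho\xi^{i})$, and then add. The only cosmetic difference is in the gravitational piece: the paper applies $\Delta\Phi=\delta\Phi+\xi^{j}\nabla_{j}\Phi$ and shows that each term contributes $\tfrac{1}{2}\int\rho\,\xi^{j}\nabla_{j}\Phi\,dV$, whereas you pass directly through the symmetric double integral to obtain $\delta\mathcal{W}=\int\Phi\,\delta\rho\,dV$ in one stroke; both arguments are equivalent and your bookkeeping of the factor $1/2$ is the point the paper is also resolving.
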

\begin{proof}
The 1st variation is
\begin{align}
\delta\mathbf{E}=\delta\mathcal{T}+\delta\mathcal{W}=\delta\int_{\mathbf{D}}\mathfrak{T}\rho
dV(x)+\delta \int_{\mathbf{D}}\frac{1}{2}\Phi\rho dV(x)=0
\end{align}
We make use of the following identities which have already been derived/proved
\begin{align}
&\int_{\mathbf{D}}\mathfrak{S}\rho d^{3}x=\int_{\mathbf{D}}\Delta\mathfrak{S}\rho dV(x)\\&
\Delta u=\frac{p}{\rho^{2}}\Delta\rho\\&
\Delta\rho=-\rho\bm{\nabla}.\bm{\xi}=-\rho\nabla_{j}\xi^{j}\\&
\Delta=\delta+\bm{\xi}.\bm{\nabla}\\&
\delta \rho=-\nabla.(\rho\bm{\xi})\\&
\delta\phi(x)=-\mathscr{G}\int_{\mathbf{D}}\frac{\delta\rho(x^{\prime},t)d^{3}x^{\prime}}{|x-x^{\prime}|} =\mathscr{G}\int_{\mathbf{D}}\frac{\bm{\nabla}.(\rho(x^{\prime},t)\bm{\xi}
(x^{\prime},t))dV(x^{\prime})}{|x-x^{\prime}|}
\end{align}
The variations $\delta\mathbf{E}_{T}$ and $\delta\mathbf{E}_{G}$ are computed separately.
\begin{itemize}
\item The variation of the internal thermal energy is
\begin{align}
&\delta\mathbf{E}_{T}=\delta\int_{\mathbf{D}}\mathfrak{T}\rho dV(x)=\int_{\mathbf{D}}\Delta \mathfrak{T}\rho dV(x)\nonumber\\&
=\int_{\mathbf{D}}\frac{p}{\rho}\Delta\rho dV(x)=-\int_{\mathbf{D}}p(\nabla.\xi)dV(x)\equiv
\int_{\mathbf{D}}p\nabla_{j}\xi^{j}dV(x)\equiv \int_{\Omega}\nabla_{j}p\xi^{j}dV(x)
\end{align}
where the last term is via integration by parts.
\item The variation in the gravitational energy is
\begin{align}
&\delta\mathbf{E}_{G}=\delta\int_{\mathbf{D}}\frac{1}{2}\rho\Phi dV(x)=\frac{1}{2}\int_{\mathbf{D}}\Delta\phi\rho dV(x)\nonumber\\&
=\frac{1}{2}\int_{\mathbf{D}}\big((\delta\Phi)\rho+\rho(\xi^{j}\nabla_{j})\Phi\big)d^{3}x
=\frac{1}{2}\int_{\mathbf{D}}\left(-\mathscr{G}\int_{\mathbf{D}}\frac{\delta\rho^{\prime}\rho dV(x^{\prime}}){|x-x^{\prime}|}+\rho(\xi^{j}\nabla_{j})\Phi\right)d^{3}x\nonumber\\&
=-\frac{1}{2}\int_{\mathbf{D}}\Phi\delta\rho dV(x)+\frac{1}{2}\int_{\mathbf{D}}\rho(\xi^{j}\nabla_{j})\Phi dV(x)\nonumber\\&=\frac{1}{2}\int_{\mathbf{D}}\rho\nabla_{j}\Phi\xi^{j}dV(x)+
\frac{1}{2}\int_{\mathbf{D}}\rho(\xi^{j}\nabla_{j})\Phi dV(x)=\int_{\mathbf{D}}\rho\nabla_{j}\phi\xi^{j}dV(x)
\end{align}
\item Adding the variations of the internal and gravitational energies then gives
\begin{align}
\delta\mathbf{E}=\delta\mathbf{E}_{T}+\delta\mathbf{E}_{G}=\int_{\mathbf{D}}(\nabla_{j}p+\rho\nabla_{j}\Phi)dV(x)=0
\end{align}
so that the fundamental equation of hydrostatic equilibrium follows
\begin{align}
\underbrace{\nabla_{j}p+\rho\nabla_{j}\Phi=0}
\end{align}
which again in spherical symmetry is equivalent to
\begin{align}
\underbrace{\frac{dp(r)}{dr}=-\frac{\mathscr{G}\rho(r)\mathcal{M}(r)}{r^{2}}}
\end{align}
\end{itemize}
\end{proof}
\subsection {}
In Section 3, deviations from hydrostatic equilibrium were considered in relation to the free-fall or hydrodynamic time. Here, perturbations of the Euler-Poisson equation describing an equilibrium fluid-gas configuration, are considered more rigorously. We consider the general (non-static) case of the EP system
\begin{align}
\frac{du^{i}}{dt}+\frac{1}{\rho}\nabla_{i}p+\nabla_{i}\phi=0
\end{align}
and consider Lagrangian perturbations of the form
\begin{align}
\Delta\left(\frac{du^{i}}{dt}+\frac{1}{\rho}\nabla_{i}p+\nabla_{i}\phi\right)=0
\end{align}
\begin{lem}
The perturbed EL system
\begin{align}
\Delta\left(\frac{du^{i}}{dt}+\frac{1}{\rho}\nabla_{i}p+\nabla_{i}\phi\right)=0
\end{align}
gives a 2nd-order ODE for the linear perturbations of the form
\begin{align}
\rho\frac{d^{2}\xi_{i}}{dt^{2}}-\frac{\Delta \rho}{\rho}\nabla_{i}p+\nabla_{i}\Delta p+\rho\nabla_{i}\Delta\Phi=0
\end{align}
\end{lem}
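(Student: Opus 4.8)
The plan is to apply the Lagrangian variation $\Delta$ term-by-term to the equation of motion $\frac{du^{i}}{dt}+\frac{1}{\rho}\nabla_{i}p+\nabla_{i}\Phi=0$, using only the variational identities and commutation relations already collected in this section, and then to invoke the background equilibrium relation at the final step to remove the residual commutator pieces. Throughout I take the reference state to be the \emph{static} star of the section heading, so that in the unperturbed configuration $\bm{u}=0$ and the material derivative $d/dt$ reduces to $\partial_{t}$; the ``non-static'' qualifier refers only to the time dependence of the displacement $\bm{\xi}(x,t)$.

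First I would handle the inertial term. Because $\Delta\bm{u}=d\bm{\xi}/dt$ and $\Delta$ commutes with $d/dt$ on the static background,
\begin{align}
\Delta\left(\frac{du^{i}}{dt}\right)=\frac{d}{dt}\,\Delta u^{i}=\frac{d^{2}\xi^{i}}{dt^{2}}.\nonumber
\end{align}
Next I would expand the pressure term by the Leibniz rule together with $\Delta(\rho^{-1})=-\rho^{-2}\Delta\rho$ and the spatial commutator $\Delta\nabla_{i}-\nabla_{i}\Delta=\nabla_{i}\xi^{j}\nabla_{j}$, giving
\begin{align}
\Delta\left(\frac{1}{\rho}\nabla_{i}p\right)=-\frac{\Delta\rho}{\rho^{2}}\nabla_{i}p+\frac{1}{\rho}\nabla_{i}\Delta p+\frac{1}{\rho}(\nabla_{i}\xi^{j})\nabla_{j}p,\nonumber
\end{align}
and treating the potential term in the same way, $\Delta(\nabla_{i}\Phi)=\nabla_{i}\Delta\Phi+(\nabla_{i}\xi^{j})\nabla_{j}\Phi$.

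Finally I would add the three contributions, multiply the sum through by $\rho$, and collect the two commutator remainders into the single expression $(\nabla_{i}\xi^{j})\big(\nabla_{j}p+\rho\nabla_{j}\Phi\big)$. This vanishes identically, since the unperturbed star satisfies the FEHE $\nabla_{j}p+\rho\nabla_{j}\Phi=0$, and what remains is precisely
\begin{align}
\rho\frac{d^{2}\xi_{i}}{dt^{2}}-\frac{\Delta\rho}{\rho}\nabla_{i}p+\nabla_{i}\Delta p+\rho\nabla_{i}\Delta\Phi=0.\nonumber
\end{align}
The work here is organisational rather than analytic: the main thing to get right is the bookkeeping of the two commutator terms with their correct signs, together with the recognition that their sole role is to be annihilated by the background hydrostatic equilibrium condition. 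The one genuine subtlety, which I would state explicitly, is the justification that $\Delta$ and $d/dt$ commute in this setting; this relies on $\bm{u}=0$ in the reference state, so that the convective derivative collapses to $\partial_{t}$ and the mixed time commutators recorded earlier in the section reduce to zero.
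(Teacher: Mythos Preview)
Your proposal is correct and follows essentially the same approach as the paper: apply $\Delta$ term by term, use the $[\Delta,\nabla_i]$ commutator and $\Delta u^i=d\xi^i/dt$, and annihilate the two residual commutator pieces via the background FEHE $\nabla_j p+\rho\nabla_j\Phi=0$. The only cosmetic difference is that the paper actually varies the per-volume form $\rho\,du^i/dt+\nabla_i p+\rho\nabla_i\phi$ rather than the per-mass form stated in the Lemma, so its $-\tfrac{\Delta\rho}{\rho}\nabla_i p$ term arises from a second use of the FEHE on $(\Delta\rho)\nabla_i\phi$ instead of directly from $\Delta(\rho^{-1})$ as in your route.
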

\begin{proof}
The following identities are required
\begin{align}
&\frac{D}{Dt}=\frac{d}{dt},~\bm{u}=0\\&
\Delta\frac{d}{dt}=\frac{d}{dt}\Delta\\&
\Delta\nabla_{i}=\nabla_{i}\Delta-\nabla_{i}\xi^{j}\nabla_{j}\\&
\Delta u^{i}=\frac{d\xi^{i}}{dt}\\&
\nabla_{i}p+\rho\nabla_{i}\phi=0
\end{align}
then the perturbed EL equation (7.69) becomes
\begin{align}
&\Delta\left(\rho\frac{du^{i}}{dt}+\nabla_{i}p+\rho\nabla_{i}\phi\right)=\Delta\left(\rho\frac{du^{i}}{dt}\right)
+\Delta\nabla_{i}p+\Delta(\rho\nabla_{i}\phi)\nonumber\\&
=\Delta\left(\rho\frac{du^{i}}{dt}\right)
+\Delta\nabla_{i}p+(\Delta\rho)(\nabla_{i}\phi)+\rho\Delta\nabla_{i}\phi\nonumber\\&
=(\Delta\rho)\frac{du^{i}}{dt}+\rho\frac{d\Delta u^{i}}{dt}
+\Delta\nabla_{i}p+(\Delta\rho)(\nabla_{i}\phi)+\rho\Delta\nabla_{i}\phi\nonumber\\&
=(\Delta\rho)\frac{du^{i}}{dt}+\rho\frac{d\Delta u^{i}}{dt}+\nabla_{i}\Delta p
-\nabla_{i}\xi^{j}\nabla_{j}p+\Delta\rho\nabla\phi+\rho\nabla_{i}\Delta\phi
-\underbrace{\rho\nabla_{i}\xi^{j}\nabla_{j}\phi}_{use~HE~eqn.}\nonumber\\&
=(\Delta\rho)\frac{du^{i}}{dt}+\rho\frac{d\Delta u^{i}}{dt}+\nabla_{i}\Delta p
-\nabla_{i}\xi^{j}\nabla_{j}p+\Delta\rho\nabla\phi+\rho\nabla_{i}\Delta\phi+\nabla_{i}\xi^{j}\nabla_{j}p
\nonumber\\&
\nonumber\\&=(\Delta\rho)\frac{du^{i}}{dt}+\rho\frac{d\Delta u^{i}}{dt}+\nabla_{i}\Delta p
+\underbrace{\Delta\rho\nabla\phi}_{Use~HE~eqn.}+\rho\nabla_{i}\Delta\phi\nonumber\\&=(\Delta\rho)\frac{du^{i}}{dt}+
\rho\frac{d\Delta u^{i}}{dt}+\nabla_{i}\Delta p
-\frac{\Delta\rho}{\rho}\nabla_{i}p
+\rho\nabla_{i}\Delta\phi\nonumber\\&
=\rho\frac{d\Delta u^{i}}{dt}+\nabla_{i}\Delta p
+\underbrace{\Delta\rho\nabla\phi}_{Use~HE~eqn.}+\rho\nabla_{i}\Delta\phi\nonumber\\&
=\rho\frac{d\Delta u^{i}}{dt}+\nabla_{i}\Delta p
-\frac{\Delta\rho}{\rho}\nabla_{i}p
+\rho\nabla_{i}\Delta\phi\nonumber\\&
=\rho\frac{d^{2} \xi^{i}}{dt^{2}}+\nabla_{i}\Delta p
-\frac{\Delta\rho}{\rho}\nabla_{i}p
+\rho\nabla_{i}\Delta\phi
\end{align}
which is (7.70).
\end{proof}
\begin{thm}
The differential equation for the perturbations $\xi^{i}$ can be expressed as a linear eigenvalue problem of the form
\begin{align}
\rho\partial_{t}^{2}\xi^{i}=\Xi_{ij}\xi^{j}
\end{align}
where
\begin{align}
\Xi_{ij}\xi^{j}=\nabla_{i}(\gamma_{1}p\nabla_{j}\xi^{j})-(\partial_{j}\xi^{j})
\nabla_{i}p+(\nabla_{i}\xi^{j})\nabla_{j}p+\rho\xi^{\j}\nabla_{i}\nabla_{j}\Phi
-\rho\nabla_{i}\delta \Phi
\end{align}
If $\xi_{i}(x,t)=\xi_{i}\exp(i\omega t)$ then
\begin{align}
-\rho\omega^{2}\xi_{i}=\Xi_{ij}\xi^{j}
\end{align}
\end{thm}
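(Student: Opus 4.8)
The plan is to reduce the second-order perturbation equation (7.70) of the preceding lemma,
\[
\rho\,\partial_t^2\xi_i=\frac{\Delta\rho}{\rho}\nabla_i p-\nabla_i\Delta p-\rho\,\nabla_i\Delta\Phi,
\]
to a form in which the entire right-hand side is a linear operator acting on the displacement field $\xi^j$ alone. To do this I would eliminate each Lagrangian perturbation using the three relations already available in this section: mass conservation, $\Delta\rho=-\rho\,\nabla_j\xi^j$; the adiabatic equation of state with first adiabatic index $\gamma_1$, which gives $\Delta p=\gamma_1\,(p/\rho)\,\Delta\rho=-\gamma_1 p\,\nabla_j\xi^j$ (the pressure analogue of the relation $\Delta\mathscr{J}=(p/\rho^2)\Delta\rho$ proved above); and the Eulerian--Lagrangian connection $\Delta=\delta+\xi^j\nabla_j$, which gives $\Delta\Phi=\delta\Phi+\xi^j\nabla_j\Phi$.

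Substituting the first two relations, I expect two of the target terms to appear at once: $(\Delta\rho/\rho)\nabla_i p=-(\partial_j\xi^j)\nabla_i p$ and $-\nabla_i\Delta p=\nabla_i(\gamma_1 p\,\nabla_j\xi^j)$. The gravitational term is the one requiring care. Writing $-\rho\,\nabla_i\Delta\Phi=-\rho\,\nabla_i\delta\Phi-\rho\,\nabla_i(\xi^j\nabla_j\Phi)$ and expanding the last bracket by the Leibniz rule into $\rho(\nabla_i\xi^j)\nabla_j\Phi+\rho\,\xi^j\nabla_i\nabla_j\Phi$ produces the nonlocal term $-\rho\,\nabla_i\delta\Phi$ and the second-derivative term $\rho\,\xi^j\nabla_i\nabla_j\Phi$ of the stated operator directly; the remaining mixed term $\rho(\nabla_i\xi^j)\nabla_j\Phi$ I would convert, using the background hydrostatic equilibrium relation $\rho\,\nabla_j\Phi=-\nabla_j p$, into $(\nabla_i\xi^j)\nabla_j p$, which is the third target term. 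Collecting everything then yields the claimed identity $\rho\,\partial_t^2\xi_i=\Xi_{ij}\xi^j$, with $\Xi_{ij}$ given by the stated combination of terms.

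Finally, for the eigenvalue form I would insert the normal-mode ansatz $\xi_i(x,t)=\xi_i(x)\,e^{i\omega t}$. Since all coefficients in $\Xi_{ij}$ are those of the static background ($p,\rho,\Phi$ are time independent), the operator commutes with the time factor and $\partial_t^2\xi_i=-\omega^2\xi_i$, giving immediately $-\rho\,\omega^2\xi_i=\Xi_{ij}\xi^j$. I expect the main obstacle to be the sign and index bookkeeping in the gravitational term, where one must be consistent about the commutator convention between $\Delta$ and $\nabla_i$ and about invoking the background FEHE rather than the perturbed Poisson equation at the wrong stage. It is also worth emphasising that $\delta\Phi$ is not a pointwise function of $\xi$ but the nonlocal solution of the perturbed Poisson equation $\nabla^2\delta\Phi=4\pi\mathscr{G}\,\delta\rho$ with $\delta\rho=-\nabla_j(\rho\xi^j)$; hence $\Xi_{ij}$ is genuinely an integro-differential operator, and the relation $-\rho\,\omega^2\xi_i=\Xi_{ij}\xi^j$ must be read as a spectral problem for the whole configuration rather than as a pointwise matrix eigenvalue equation.
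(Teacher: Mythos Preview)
Your proposal is correct and follows essentially the same route as the paper: start from the perturbation equation (7.70), substitute $\Delta\rho=-\rho\nabla_j\xi^j$, $\Delta p=\gamma_1(p/\rho)\Delta\rho$, and $\Delta\Phi=\delta\Phi+\xi^j\nabla_j\Phi$, then use the background relation $\rho\nabla_j\Phi=-\nabla_j p$ to convert the mixed gravitational term, and finally insert the normal-mode ansatz. Your explicit caution about sign bookkeeping in the gravitational piece is well placed, since that is exactly where the careful step lies in the paper's own derivation.
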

\begin{proof}
Beginning with equation (7.70)
\begin{align}
\rho\frac{d^{2}\xi^{i}}{dt^{2}}-\frac{\Delta\rho}{\rho}\nabla_{i}p+\nabla_{i}\Delta p+\rho\nabla_{i}\Delta\Phi=0
\end{align}
apply the following equations
\begin{align}
&\Delta\rho=-\rho\bm{\nabla}.\bm{\xi}=-\rho\nabla_{j}\xi^{j}\\&
\Delta=\delta+\bm{\xi}.\bm{\nabla}=\delta+\xi^{j}\nabla_{j}\\&
\Delta p=\gamma_{1}p\frac{\Delta\rho}{\rho}\\&
\Delta\Phi=\delta\Phi+\xi^{j}\nabla_{j}\Phi\\&
\nabla_{i}p=-\rho\nabla_{i}\Phi
\end{align}
Then
\begin{align}
&\rho\frac{d^{2}\xi^{i}}{dt^{2}}-\frac{\Delta\rho}{\rho}\nabla_{i}\rho+\nabla_{i}\Delta p+\rho\nabla_{i}\Delta\Phi\nonumber\\&
=\rho\frac{d^{2}\xi^{i}}{dt^{2}}
+\frac{\rho\nabla_{j}\xi^{j}}{\rho}\nabla_{i}p+
\nabla_{i}\left(\gamma_{1}p\frac{\Delta\rho}{\rho}\right)+\rho(\delta\Phi+\xi^{j}\nabla_{j}\Phi)
\nonumber\\&
=\rho\frac{d^{2}\xi^{i}}{dt^{2}}+(\nabla_{j}\xi^{j})\nabla_{i}p
-\nabla_{i}(\gamma_{1}p\nabla_{j}\xi^{j})+
\rho\nabla_{i}\delta\Phi+\rho\nabla_{i}(\xi^{j}\nabla_{j}\Phi)\nonumber\\&
=\rho\frac{d^{2}\xi^{i}}{dt^{2}}+(\nabla_{j}\xi^{j})\nabla_{i}p
-\nabla_{i}(\gamma_{1}p\nabla_{j}\xi^{j})+
\rho\nabla_{i}\delta\Phi+\rho(\nabla_{i}\xi^{j})\underbrace{\nabla_{j}\Phi}_{use~(6.8)}+
\rho\xi^{j}\nabla_{i}\nabla_{j}\Phi\nonumber\\&
=\rho\frac{d^{2}\xi^{i}}{dt^{2}}+(\nabla_{j}\xi^{j})\nabla_{i}p
-\nabla_{i}(\gamma_{1}p\nabla_{j}\xi^{j})+\rho\nabla_{i}\delta\Phi-(\nabla_{i}\xi^{j})(\nabla_{j}p)+
\rho\xi^{j}\nabla_{i}\nabla_{j}\Phi
\end{align}
which establishes (7.77) and (7.78). The normal modes
$\xi_{i}(x,t)=\xi_{i}\exp(i\omega t)$ then give the linear eigenvalue equation
\begin{align}
\rho\frac{d^{2}\xi_{i}}{dt^{2}}=-\rho\omega^{2}\xi_{i}=\Xi_{ij}\xi^{j}
\end{align}
as required.
\end{proof}
The eigenvalue-eigenvector equation (7.77) for the normal modes can also be derived form a variation principle.
\begin{lem}
Give the equation
\begin{align}
\rho\partial_{t}^{2}\xi^{i}(r)=\Xi_{ij}~\xi^{j}(r)
\end{align}
with $\Xi_{ij}=\Xi_{ji}$, then (7.88) follows from the variation
\begin{align}
\delta\mathbf{S}=\delta\int\delta \mathbf{L}dt=0
\end{align}
where
\begin{align}
\mathbf{L}=\mathbf{K}-\mathbf{V}
\end{align}
with
\begin{align}
&\mathbf{K}=\frac{1}{2}\int_{\mathbf{D}}\rho(\partial_{t}\xi^{i})^{2}dV\\&
\mathbf{V}=\frac{1}{2}\int_{\mathbf{D}}\xi^{j}\Xi_{ij}\xi^{j}dV=\frac{1}{2}\int[\gamma p(\nabla_{i}\xi^{i})^{2}+2(\nabla_{j}\xi^{j})\xi^{i}\nabla p+\xi^{i}\xi^{j}]
(\nabla_{i}\nabla_{j} p]dV\nonumber\\&+\int[\rho(\nabla_{i}\nabla_{j}\phi)+\rho\xi^{i}\nabla_{i}\delta \phi]dV
\end{align}
\end{lem}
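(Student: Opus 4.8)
The plan is to recognise the stated identity as \textbf{Hamilton's principle} for the linearised oscillation problem: the eigen-equation $\rho\partial_t^2\xi^i=\Xi_{ij}\xi^j$ is nothing but the Euler--Lagrange equation of the action $\mathbf{S}=\int\mathbf{L}\,dt$ with $\mathbf{L}=\mathbf{K}-\mathbf{V}$ (the repeated $\delta$ in $\delta\mathbf{S}=\delta\int\delta\mathbf{L}\,dt$ is a typographical slip; the object being varied is $\mathbf{S}=\int\mathbf{L}\,dt$). First I would fix the admissible variations: take $\xi^i\to\xi^i+\epsilon\,\eta^i$ with $\eta^i$ an arbitrary smooth displacement field on $\mathbf{D}$ that vanishes at the two temporal endpoints, $\eta^i(x,t_0)=\eta^i(x,t_1)=0$, and whose spatial behaviour respects the surface condition $p=\rho=0$ on $\partial\mathbf{D}$, so that every boundary term produced by the spatial integrations by parts drops out.

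Second, I would compute the two pieces of $\delta\mathbf{S}$ separately. For the kinetic part, since the background $\rho$ is time-independent,
\begin{align}
\delta\mathbf{K}=\int_{\mathbf{D}}\rho\,(\partial_t\xi^i)(\partial_t\eta^i)\,dV,
\end{align}
and integrating once by parts in $t$ over $[t_0,t_1]$ kills the endpoint term by the choice of $\eta^i$, leaving $\int dt\,\delta\mathbf{K}=-\int dt\int_{\mathbf{D}}\rho\,(\partial_t^2\xi^i)\,\eta^i\,dV$. For the potential part,
\begin{align}
\delta\mathbf{V}=\tfrac{1}{2}\int_{\mathbf{D}}\big(\eta^i\,\Xi_{ij}\xi^j+\xi^i\,\Xi_{ij}\eta^j\big)\,dV,
\end{align}
and here the hypothesis $\Xi_{ij}=\Xi_{ji}$ --- read as self-adjointness of the linear operator $\Xi$ in the $L^2(\mathbf{D})$ pairing --- identifies the two summands, giving $\delta\mathbf{V}=\int_{\mathbf{D}}\eta^i\,\Xi_{ij}\xi^j\,dV$. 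Assembling the pieces,
\begin{align}
\delta\mathbf{S}=-\int dt\int_{\mathbf{D}}\eta^i\big(\rho\,\partial_t^2\xi^i+\Xi_{ij}\xi^j\big)\,dV,
\end{align}
so by the fundamental lemma of the calculus of variations the vanishing of $\delta\mathbf{S}$ for all admissible $\eta^i$ forces $\rho\,\partial_t^2\xi^i=-\Xi_{ij}\xi^j$ pointwise; this is exactly (7.88) once the sign convention of the quadratic form $\mathbf{V}$ is matched to that of the operator $\Xi$ in (7.77)--(7.78), the relative minus sign in $\mathbf{L}=\mathbf{K}-\mathbf{V}$ being precisely what yields an oscillation equation.

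The hard part will be justifying the middle step $\int_{\mathbf{D}}\eta^i\,\Xi_{ij}\xi^j\,dV=\int_{\mathbf{D}}\xi^i\,\Xi_{ij}\eta^j\,dV$, i.e.\ the self-adjointness encoded in $\Xi_{ij}=\Xi_{ji}$, because $\Xi_{ij}$ is a \emph{nonlocal} operator. For the local fluid terms $\nabla_i(\gamma_1 p\,\nabla_j\xi^j)$, $(\nabla_i\xi^j)\nabla_j p-(\partial_j\xi^j)\nabla_i p$ and $\rho\,\xi^j\nabla_i\nabla_j\Phi$, symmetry follows from spatial integration by parts, the boundary terms vanishing because $p=\rho=0$ on $\partial\mathbf{D}$; this is in fact the very computation that recasts the bilinear form into the explicit $\mathbf{V}=\tfrac12\int[\gamma p(\nabla_i\xi^i)^2+2(\nabla_j\xi^j)\xi^i\nabla_i p+\xi^i\xi^j\nabla_i\nabla_j p]\,dV+\ldots$ displayed in the statement. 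The genuinely delicate piece is the gravitational term $-\rho\,\nabla_i\delta\Phi$, since $\delta\Phi(x)=\mathscr{G}\int_{\mathbf{D}}|x-x'|^{-1}\,\nabla'\!\cdot(\rho\bm{\xi}')\,dV'$ couples the two displacement fields through the Newtonian kernel. I would handle it by writing the induced contribution to the quadratic form as a symmetric double integral and invoking the symmetry of the Green's function $G(x,x')=-\mathscr{G}/|x-x'|=G(x',x)$ together with the earlier lemma for $\delta\Phi$; the kernel symmetry then gives directly $\int_{\mathbf{D}}\eta^i\big(-\rho\nabla_i\delta\Phi[\xi]\big)\,dV=\int_{\mathbf{D}}\xi^i\big(-\rho\nabla_i\delta\Phi[\eta]\big)\,dV$. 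Once self-adjointness of this nonlocal piece is secured, the remaining steps --- the time integration by parts and the fundamental lemma --- are routine.
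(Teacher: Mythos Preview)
Your proposal is correct and follows the same route as the paper: vary the action, integrate the kinetic term by parts in time, use the symmetry $\Xi_{ij}=\Xi_{ji}$ on the potential term, and invoke the fundamental lemma. The paper's own proof is in fact just the three-line computation
\[
\delta\mathbf{S}=\int\big[\rho(\partial_t\xi^i)(\partial_t\delta\xi^i)+\delta\xi^i\Xi_{ij}\xi^j\big]\,dV\,dt
=\int\big[-\rho\,\partial_t^2\xi^i+\Xi_{ij}\xi^j\big]\delta\xi^i\,dV\,dt=0,
\]
with the symmetry used implicitly; your treatment of admissible variations, boundary terms, and especially the self-adjointness of the nonlocal gravitational piece via the symmetry of the Newtonian kernel goes well beyond what the paper supplies, but it is the same argument filled in rather than a different one.
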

\begin{proof}
\begin{align}
&\delta\mathbf{S}=\delta\int \mathbf{L}dt=\delta \frac{1}{2}\int[\rho(\partial_{t}\xi^{i})^{2}+\xi^{i}\Xi_{ij}\xi^{j}]dVdt\\&
=\int[\underbrace{\rho(\partial_{t}\xi^{i})(\partial_{t}\delta\xi^{i})}_{int.by~parts}+\delta\xi^{i}\Xi_{ij}\xi^{j}]dVdt\\&
=\int[-\rho\partial_{t}^{2}\xi^{i}+\Xi_{ij}\xi^{j}]\delta\xi^{i}dVdt=0
\end{align}
and so (7.92) follows.
\end{proof}
Equation (7.88) gives the gives the linear eigenvalue problem for the oscillations of the gaseous-fluid star away from hydrostatic equilibrium. In spherical symmetry, the purely radial modes satisfy a Sturm-Liouville eigenvalue equation known as the linear adiabatic wave equation. This differential equation is derived as follows.
\begin{lem}
In spherical symmetry equation (7.70) can be reduced to the Sturm-Liouville eigenvalue equation
\begin{align}
\frac{d}{dr}\left(\Gamma_{1}p\frac{1}{r}\frac{d}{dr}(r^{2}\xi(r)\right)-\frac{4}{r}\xi(r)\frac{dp(r)}{dr}+\rho\omega^{2}\xi(r)=0
\end{align}
with boundary conditions $\xi(r)=r$ at $r=0$,$\Delta p=0$ at $r=R$ and $\xi(R)<\infty$.
\end{lem}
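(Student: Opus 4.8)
The plan is to take the vector perturbation equation (7.70), specialise it to spherical symmetry with a purely radial Lagrangian displacement $\bm{\xi}=\xi(r)\,\widehat{\bm{r}}$, and insert a single normal mode $\xi(r,t)=\xi(r)\,e^{i\omega t}$ so that $d^{2}\xi/dt^{2}=-\omega^{2}\xi$. The radial component of (7.70) then reads
\[
-\rho\,\omega^{2}\xi-\frac{\Delta\rho}{\rho}\frac{dp}{dr}+\frac{d}{dr}(\Delta p)+\rho\,\frac{d}{dr}(\Delta\Phi)=0,
\]
and the whole task is to rewrite the last three terms using (i) the spherical divergence, (ii) the adiabatic law, and (iii) the background FEHE, after which one simply collects the coefficient of $dp/dr$.

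Two of the terms are routine. Since $\bm{\nabla}\cdot\bm{\xi}=\frac{1}{r^{2}}\frac{d}{dr}(r^{2}\xi)$, mass conservation gives $\Delta\rho=-\rho\,\bm{\nabla}\cdot\bm{\xi}$, and the adiabatic relation $\Delta p=\Gamma_{1}p\,\Delta\rho/\rho$ yields $\Delta p=-\Gamma_{1}p\,\frac{1}{r^{2}}\frac{d}{dr}(r^{2}\xi)$. Differentiating produces precisely the self-adjoint Sturm--Liouville operator
\[
\frac{d}{dr}(\Delta p)=-\frac{d}{dr}\!\left(\Gamma_{1}p\,\frac{1}{r^{2}}\frac{d}{dr}\bigl(r^{2}\xi\bigr)\right),
\]
while the density-gradient term becomes $-\dfrac{\Delta\rho}{\rho}\dfrac{dp}{dr}=\left(\dfrac{d\xi}{dr}+\dfrac{2\xi}{r}\right)\dfrac{dp}{dr}$.

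The crux is the gravitational term $\rho\,\frac{d}{dr}(\Delta\Phi)$. For radial motion interior shells do not cross, so the enclosed mass is Lagrangian invariant, $\Delta\mathcal{M}=0$; since the background gravity is $d\Phi/dr=\mathscr{G}\mathcal{M}(r)/r^{2}$ and $\Delta r=\xi$, this gives $\Delta(d\Phi/dr)=-2\,\mathscr{G}\mathcal{M}\,\xi/r^{3}$. Passing from $\Delta\frac{d}{dr}$ to $\frac{d}{dr}\Delta$ on scalars via $\frac{d}{dr}\Delta=\Delta\frac{d}{dr}+\frac{d\xi}{dr}\frac{d}{dr}$, and then eliminating $\mathscr{G}\mathcal{M}/r^{2}=-\rho^{-1}\,dp/dr$ with the FEHE, I obtain
\[
\rho\,\frac{d}{dr}(\Delta\Phi)=\frac{2\xi}{r}\frac{dp}{dr}-\frac{d\xi}{dr}\frac{dp}{dr}.
\]
Adding this to the density-gradient term, the two $d\xi/dr$ contributions cancel and the coefficient of $dp/dr$ collapses to $4\xi/r$, leaving
\[
\frac{d}{dr}\!\left(\Gamma_{1}p\,\frac{1}{r^{2}}\frac{d}{dr}\bigl(r^{2}\xi\bigr)\right)-\frac{4}{r}\,\xi\,\frac{dp}{dr}+\rho\,\omega^{2}\xi=0,
\]
which is the claimed linear adiabatic wave equation.

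I expect the gravitational term to be the main obstacle: one must be scrupulous about the distinction between $\delta$, $\Delta$, $\frac{d}{dr}\Delta$ and $\Delta\frac{d}{dr}$, and recognise that the full self-gravity of a radial mode is encoded entirely in $\Delta\mathcal{M}=0$, so that no separate Poisson solve for $\delta\Phi$ is required. The delicate cancellation of $d\xi/dr$ and the appearance of the integer coefficient $4$ is exactly where a stray sign or a misapplied commutator would spoil the result. Finally I would supply the boundary conditions: regularity at the centre forces $\xi\sim r$ (equivalently $\xi/r$ finite) as $r\to 0$, while $p(R)=0$ at the free surface forces the Lagrangian pressure perturbation to vanish, $\Delta p=0$ at $r=R$, with $\xi(R)$ finite. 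I would also note that the resulting operator is formally self-adjoint with positive weight $\rho$, so $\omega^{2}$ is real and admits the Rayleigh-quotient characterisation coming from the Lagrangian $\mathbf{L}=\mathbf{K}-\mathbf{V}$, which is what drives the $\gamma=4/3$ stability dichotomy.
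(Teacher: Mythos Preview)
Your argument is correct, but it differs from the paper's route in a useful way.

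The paper does not work from (7.70) directly. It first passes to the expanded Eulerian form (7.78), in which the gravitational contribution is split as $\rho\xi^{j}\nabla_{i}\nabla_{j}\Phi-\rho\nabla_{i}\delta\Phi$. In spherical symmetry it then invokes the background Poisson equation $\nabla^{2}\Phi=4\pi\mathscr{G}\rho$ together with the explicit formula $\nabla_{r}\delta\Phi=-4\pi\mathscr{G}\rho\,\xi$ (obtained by integrating the perturbed Poisson equation), so that two $4\pi\mathscr{G}\rho^{2}\xi$ terms cancel. After that it expands $\frac{1}{r^{2}}\frac{d}{dr}(r^{2}\xi)$ termwise to display the $d\xi/dr$ cancellation and isolate the $-\frac{4}{r}\xi\,p'$ term.

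You instead keep the Lagrangian combination $\rho\,\frac{d}{dr}(\Delta\Phi)$ intact and exploit the purely radial fact $\Delta\mathcal{M}=0$ to evaluate $\Delta(d\Phi/dr)=-2\mathscr{G}\mathcal{M}\xi/r^{3}$ directly, then use the commutator $\frac{d}{dr}\Delta=\Delta\frac{d}{dr}+\xi'\,\frac{d}{dr}$ and the FEHE. This bypasses any explicit solve for $\delta\Phi$ and makes the coefficient $4/r$ appear transparently as the sum of two $2/r$ contributions, one from $-\Delta\rho/\rho$ and one from the gravitational term. The paper's route is more mechanical and stays inside the general $\Xi_{ij}$ machinery; yours is shorter and leans on the physical structure of radial modes. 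Both reach the same Sturm--Liouville form (and your $1/r^{2}$ inside the leading derivative is the correct factor; the $1/r$ in the lemma's display is a typo, as the paper's own proof confirms).
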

\begin{proof}
Starting with
\begin{align}
&\nabla_{i}(\Gamma_{1}p\nabla_{j}\xi^{j})-(\nabla_{j}\xi^{j})\nabla_{i}p
+(\nabla_{i}\xi^{j})\nabla_{j}p-\rho\xi^{j}
\nabla_{j}\nabla_{i}\phi-\rho\nabla_{i}\delta\phi+\rho\omega^{2}\xi^{i}\nonumber\\&
=\nabla_{i}(\Gamma_{1}p\nabla_{j}\xi^{j})-\underbrace{2(\nabla_{j}\xi^{j})\nabla_{i}p+(\nabla_{j}\xi^{j})\nabla_{i}p
+(\nabla_{i}\xi^{j})\nabla_{j}p}-\rho\xi^{j}
\nabla_{j}\nabla_{i}\phi-\rho\nabla_{i}\delta\phi+\rho\omega^{2}\xi^{i}\nonumber\\&
=\nabla_{i}(\Gamma_{1}p\nabla_{j}\xi^{j})-\underbrace{2(\nabla_{j}\xi^{j})\nabla_{i}p
+(\nabla_{i}\xi^{j})\delta^{ij}\nabla_{i}p
+(\nabla_{i}\xi^{j})\nabla_{j}p}-\rho\xi^{j}
\nabla_{j}\nabla_{i}\phi-\rho\nabla_{i}\delta\phi+\rho\omega^{2}\xi^{i}\nonumber\\&
=\nabla_{i}(\Gamma_{1}p\nabla_{j}\xi^{j})-\underbrace{2(\nabla_{j}\xi^{j})\nabla_{i}p
+(\nabla_{i}\xi^{j})\nabla^{j}p
+(\nabla_{i}\xi^{j})\nabla_{j}p}-\rho\xi^{j}
\nabla_{j}\nabla_{i}\phi-\rho\nabla_{i}\delta\phi+\rho\omega^{2}\xi^{i}\nonumber\\&
=\nabla_{i}(\Gamma_{1}p\nabla_{j}\xi^{j})-\underbrace{2(\nabla_{j}\xi^{j})\nabla_{i}p
+2(\nabla_{i}\xi^{j})\nabla^{j}p}-\rho\xi^{j}
\nabla_{j}\nabla_{i}\phi-\rho\nabla_{i}\delta\phi+\rho\omega
^{2}\xi^{i}\nonumber\\&
=\nabla_{i}(\Gamma_{1}p\nabla_{j}\xi^{j})-\underbrace{2(\nabla_{j}\xi^{j})\nabla_{i}p
+2(\nabla_{i}\xi^{j})\nabla^{j}p}-\overbrace{\rho\xi^{j}\delta_{ij}
\nabla_{i}\nabla^{i}\phi-\rho\nabla_{i}\delta\phi}+\rho\omega^{2}\xi^{i}\nonumber\\&
=\nabla_{i}(\Gamma_{1}p\nabla_{j}\xi^{j})-\underbrace{2(\nabla_{j}\xi^{j})\nabla_{i}p
+2(\nabla_{i}\xi^{j})\nabla^{j}p}-\overbrace{\rho\xi_{i}
\nabla^{2}\phi-\rho\nabla_{i}\delta\phi}+\rho\omega^{2}\xi^{i}
\end{align}
For purely radial perturbations $\xi{i}=\xi(r,\theta,\varphi)=\xi(r)\chi(\theta,\varphi)$
and all angular derivatives are dropped. For any vector $V^{i}$ or $\mathbf{V}$ and any scalar $f$, the gradients, divergences and Laplacian in spherical coordinates are
$\mathbf{\nabla}.\mathbf{V}=\nabla_{j}V^{j}=\frac{1}{r^{2}}\frac{d}{dr}(r^{2}V(r))$,$\nabla f(r)=\tfrac{df}{dr},\mathbf{\nabla V}=\nabla_{i}V^{j}=
(\nabla V)_{i}^{j}=\tfrac{dV_{r}(r)}{dr}=\tfrac{dV(r)}{dr}$. The Laplacian is $\frac{1}{r^{2}}\frac{d}{dr}(r^{2}\frac{df}{dr})$. We also require the Poisson equation
$\nabla^{2}\phi=4\pi\mathscr{G}\rho$ and the expression for the perturbations of the Newtonian gravitational potential in spherical coordinates
\begin{align}
\nabla_{i}\delta\phi(r)=-4\pi \mathscr{G}\rho \xi(r)
\end{align}
Continuing, (7.97) now becomes
\begin{align}
&\frac{d}{dr}(\Gamma_{1} p\frac{1}{r^{2}}\frac{d}{dr}(r^{2}\xi(r))
-\frac{2}{r^{2}}\frac{d}{dr}(r^{2}\xi(r))\frac{dp(r)}{dr}+
2\frac{dp(r)}{dr}\frac{d\xi(r)}{dr}\nonumber\\&-\rho\xi(r) 4\pi\frac{1}{r^{2}}\frac{d}{dr}(r^{2}\frac{d\Phi}{dr})+4\pi\mathscr{G}\rho^{2}\xi(r)+\rho\omega^{2}\xi(r)
\nonumber\\&
=\frac{d}{dr}(\Gamma_{1} p\frac{1}{r^{2}}\frac{d}{dr}(r^{2}\xi(r))
-\frac{2}{r^{2}}\frac{d}{dr}(r^{2}\xi(r))\frac{dp(r)}{dr}+
2\frac{dp(r)}{dr}\frac{d\xi(r)}{dr}\nonumber\\&\underbrace{
-4\pi\mathscr{G}\rho^{2}\xi(r)+4\pi\mathscr{G}\rho^{2}\xi(r)}+\rho\omega^{2}\xi(r)\nonumber\\&
=\frac{d}{dr}(\Gamma_{1} p\frac{1}{r^{2}}\frac{d}{dr}(r^{2}\xi(r))
-\frac{2}{r^{2}}\frac{d}{dr}(r^{2}\xi(r))\frac{dp(r)}{dr}+
2\frac{dp(r)}{dr}\frac{d\xi(r)}{dr}+\rho\omega^{2}\xi(r)\nonumber\\&
=\frac{d}{dr}(\Gamma_{1} p\frac{1}{r^{2}}\frac{d}{dr}(r^{2}\xi(r))-\frac{2}{r^{2}}
\left(r^{2}\frac{d\xi(r)}{dr}+2r\xi(r)\right)\frac{dp(r)}{dr}+2\frac{d\xi(r)}{dr}\frac{dp(r)}{dr}+\rho\omega^{2}\xi(r)
\nonumber\\&
=\frac{d}{dr}(\Gamma_{1} p\frac{1}{r^{2}}\frac{d}{dr}(r^{2}\xi(r))-\frac{4}{r}\xi(r)\frac{dp(r)}{dr}
\underbrace{-2\frac{d\xi(r)}{dr}\frac{dp(r)}{dr}+2\frac{d\xi(r)}{dr}\frac{dp(r)}{dr}}+\rho\omega^{2}\xi(r)\nonumber\\&
=\frac{d}{dr}(\Gamma_{1} p\frac{1}{r^{2}}\frac{d}{dr}(r^{2}\xi(r))-\frac{4}{r}\xi(r)\frac{dp(r)}{dr}
+\rho\omega^{2}\xi(r)=0
\end{align}
\end{proof}
A basic solution of the LAWE is the fundamental mode with $\omega^{2}=0$, corresponding to a self-similar homologous deformation of the equilibrium configuration, but again only for a polytropic gas index of $\gamma=4/3$.
\begin{lem}
A basic solution of the LAWE is the fundamental mode
\begin{align}
\xi(r)=\mathcal{C} r,~~~\omega^{2}=0
\end{align}
iff $\gamma=4/3$.
\end{lem}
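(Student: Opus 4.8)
The plan is to substitute the trial mode $\xi(r)=\mathcal{C}r$ together with $\omega^{2}=0$ directly into the linear adiabatic wave equation (LAWE) derived as (7.96),
\begin{align}
\frac{d}{dr}\left(\Gamma_{1}p\,\frac{1}{r^{2}}\frac{d}{dr}(r^{2}\xi(r))\right)-\frac{4}{r}\xi(r)\frac{dp(r)}{dr}+\rho\omega^{2}\xi(r)=0,\nonumber
\end{align}
and to verify that it is satisfied precisely when the adiabatic exponent $\Gamma_{1}=\gamma$ equals $4/3$. The decisive observation is that the homologous ansatz $\xi\propto r$ renders the fractional compression $\tfrac{1}{r^{2}}(r^{2}\xi)'$ spatially \emph{constant}: first I would compute $\tfrac{d}{dr}(r^{2}\xi)=3\mathcal{C}r^{2}$, so that $\tfrac{1}{r^{2}}\tfrac{d}{dr}(r^{2}\xi)=3\mathcal{C}$ independently of $r$.

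With this, the first term collapses to $\tfrac{d}{dr}(3\mathcal{C}\Gamma_{1}p)=3\mathcal{C}\Gamma_{1}\,dp/dr$ (pulling the uniform index $\Gamma_{1}=\gamma$ out of the derivative), the second term becomes $-\tfrac{4}{r}(\mathcal{C}r)\,dp/dr=-4\mathcal{C}\,dp/dr$, and the $\omega^{2}$ term vanishes identically. The LAWE therefore reduces to the single algebraic condition
\begin{align}
\mathcal{C}\,(3\Gamma_{1}-4)\,\frac{dp(r)}{dr}=0.\nonumber
\end{align}
For a nontrivial mode $\mathcal{C}\neq 0$ and for a genuine star $dp/dr<0$ throughout the interior, so this holds for all $r$ if and only if $3\Gamma_{1}-4=0$, i.e.\ $\gamma=4/3$. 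This establishes both directions of the biconditional at once: if $\gamma=4/3$ the bracket vanishes and $(\xi,\omega^{2})=(\mathcal{C}r,0)$ solves the LAWE, while conversely a nonzero zero-frequency homologous mode forces $\gamma=4/3$.

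The step I expect to require the most care is the reduction of the divergence term, since it tacitly pulls $\Gamma_{1}$ through the $r$-derivative; this is legitimate only because the configuration is polytropic with a single uniform index, and I would identify $\Gamma_{1}=\gamma$ explicitly. The second point to flag is the nondegeneracy $dp/dr\not\equiv 0$: were the pressure gradient to vanish on an interval, the reduced condition could not pin down $\gamma$, so I would invoke the FEHE $dp/dr=-\mathscr{G}\mathcal{M}(r)\rho(r)/r^{2}$ to guarantee $dp/dr<0$ wherever $\mathcal{M}(r),\rho(r)>0$. Finally I would check admissibility against the stated boundary data, namely that $\xi=\mathcal{C}r$ is regular at the centre and finite at $r=R$, so that the mode is genuine and $\omega^{2}=0$ is correctly interpreted as the marginal homologous breathing mode sitting exactly at the classical stability threshold $\gamma=4/3$.
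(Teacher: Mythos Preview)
Your proposal is correct and follows essentially the same approach as the paper: substitute $\xi(r)=\mathcal{C}r$ and $\omega^{2}=0$ into the LAWE, use $\tfrac{1}{r^{2}}\tfrac{d}{dr}(r^{2}\xi)=3\mathcal{C}$, and reduce to $(3\gamma-4)\,dp/dr=0$, whence $\gamma=4/3$. You add a bit more care than the paper by explicitly invoking $dp/dr<0$ via the FEHE to close the biconditional and by checking the boundary behaviour, but the argument is the same.
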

\begin{proof}
If $\omega^{2}=0$ then
\begin{align}
\frac{d}{dr}(\gamma p(r)\frac{1}{r^{2}}\frac{d}{dr}(r^{2}\xi(r))-\frac{4}{r}\frac{dp(r)}{dr}\xi(r)=0
\end{align}
If $\xi(r)=\mathcal{C} r $ is a solution then
\begin{align}
&\frac{d}{dr}(\gamma p(r)\frac{1}{r^{2}}\frac{d}{dr}(\mathcal{C}r^{3})-\frac{4}{r}\frac{dp(r)}{dr}\mathcal{C}r\\&
=\frac{d}{dr}(3\gamma p(r))-4\frac{dp(r)}{dr}=3\gamma\frac{d p(r)}{dr}-4\frac{dp(r)}{dr}=0
\end{align}
so that again $\gamma=4/3$.
\end{proof}
\clearpage
\section{Relativistic Stars And The Tolman-Oppenheimer-Snyder Equation}
For relativistic stars, the relativistic extension of the hydrostatic equilibrium equation is the Tolman-Oppenheimer-Volkoff equation (TOVE). This has been studied quite extensively in a number of works beginning with the original paper of Oppenheimer and Volkoff $\bm{[38]}$. and is particularly relevant to the study of neutron stars. In this section, the emphasis is on the various possible mathematical derivations based on constrained optimisational-variational methods and extremization of entropy, rather than physical considerations for which there is considerable literature. $\bm{[39,40]}$ and references their in.
\begin{thm}
Let $\bm{T}^{\mu\nu}$ be the energy momentum tensor for a perfect fluid/gas with pressure p, density $\rho$ and 4-velocity $u^{\mu}=(u^{1},u^{1},u^{2},u^{3})$. The EM tensor is then given by $\mathbf{T}^{\mu\nu}=p\mathbf{g}^{\mu\nu}+(p+\rho)
u^{\mu}u^{\nu}$. Then conservation of energy via the covariant derivative $\mathbf{D}_{\nu}\mathbf{T}^{\mu\nu}=0$ yields the following equation when the fluid
is static and in hydrostatic equilibrium
\begin{align}
\partial_{\lambda}p=-\frac{1}{2}(p+\rho)\frac{1}{g_{oo}}\partial_{\lambda}\bm{g}_{oo}
\end{align}
For spherical symmetry it is
\begin{align}
\frac{dp(r)}{dr}=-\frac{1}{2}(\rho(r)+p(r))\frac{1}{\bm{g}_{oo}(r)}
\frac{d\mathbf{g}_{oo}(r)}{dr}
\end{align}
\end{thm}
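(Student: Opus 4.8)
The plan is to start from the covariant conservation law $\mathbf{D}_{\nu}\mathbf{T}^{\mu\nu}=0$, expand it for the perfect-fluid tensor $\mathbf{T}^{\mu\nu}=p\,\mathbf{g}^{\mu\nu}+(p+\rho)u^{\mu}u^{\nu}$, and then impose the static and spherically symmetric conditions to extract the spatial pressure gradient. First I would exploit metric compatibility $\mathbf{D}_{\nu}\mathbf{g}^{\mu\nu}=0$ to write
\begin{align}
\mathbf{D}_{\nu}\mathbf{T}^{\mu\nu}=\partial^{\mu}p+u^{\mu}\,\mathbf{D}_{\nu}\big[(p+\rho)u^{\nu}\big]+(p+\rho)\,u^{\nu}\mathbf{D}_{\nu}u^{\mu}=0\nonumber
\end{align}
which cleanly separates the conservation law into a part along $u^{\mu}$ (the relativistic energy balance) and a part transverse to it (the relativistic Euler equation). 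The transverse part is isolated by contracting with the projector $\mathbf{h}_{\mu\alpha}=\mathbf{g}_{\mu\alpha}+u_{\mu}u_{\alpha}$, which annihilates the $u^{\mu}$ term since $\mathbf{h}_{\mu\alpha}u^{\mu}=0$ (using $u_{\mu}u^{\mu}=-1$), leaving
\begin{align}
\partial_{\alpha}p+u_{\alpha}u^{\mu}\partial_{\mu}p+(p+\rho)\,u^{\nu}\mathbf{D}_{\nu}u_{\alpha}=0.\nonumber
\end{align}

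Next I would impose the equilibrium assumptions. Staticity gives $u^{\mu}=(u^{0},0,0,0)$ with all $\partial_{0}$ derivatives vanishing, and the normalization $\mathbf{g}_{\mu\nu}u^{\mu}u^{\nu}=-1$ fixes $(u^{0})^{2}=-1/\mathbf{g}_{oo}$. For a static, spherically symmetric (diagonal) metric the off-diagonal components $\mathbf{g}_{\lambda 0}$ vanish for spatial $\lambda$, so the lowered spatial velocity $u_{\lambda}=\mathbf{g}_{\lambda 0}u^{0}=0$. This immediately kills the middle term $u_{\alpha}u^{\mu}\partial_{\mu}p$ for a spatial index $\alpha=\lambda$, reducing the Euler equation to $\partial_{\lambda}p=-(p+\rho)\,u^{\nu}\mathbf{D}_{\nu}u_{\lambda}$.

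The remaining work is to evaluate the fluid acceleration $a_{\lambda}=u^{\nu}\mathbf{D}_{\nu}u_{\lambda}$. With only $u^{0}\neq 0$ and $\partial_{0}u_{\lambda}=0$, this collapses to $a_{\lambda}=-u^{0}\Gamma^{0}_{0\lambda}u_{0}=-(u^{0})^{2}\mathbf{g}_{oo}\,\Gamma^{0}_{0\lambda}$, and the normalization factor $(u^{0})^{2}\mathbf{g}_{oo}=-1$ makes this simply $a_{\lambda}=\Gamma^{0}_{0\lambda}$. A short Christoffel computation for a diagonal metric, in which the $\partial_{0}\mathbf{g}_{0\lambda}$ contributions cancel identically, gives $\Gamma^{0}_{0\lambda}=\tfrac{1}{2}\mathbf{g}^{oo}\partial_{\lambda}\mathbf{g}_{oo}=\tfrac{1}{2}\,\mathbf{g}_{oo}^{-1}\partial_{\lambda}\mathbf{g}_{oo}$, yielding the claimed relation $\partial_{\lambda}p=-\tfrac{1}{2}(p+\rho)\,\mathbf{g}_{oo}^{-1}\partial_{\lambda}\mathbf{g}_{oo}$. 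Restricting to spherical symmetry with $\lambda\to r$ and radial-only dependence then reproduces the second displayed formula directly.

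I expect the main obstacle to be careful index and sign bookkeeping rather than any conceptual difficulty. The transverse projection must be shown to eliminate the energy-balance term without spurious residue, and the collapse of $a_{\lambda}$ to a pure Christoffel symbol depends delicately on combining $(u^{0})^{2}=-1/\mathbf{g}_{oo}$ with $u_{0}=\mathbf{g}_{oo}u^{0}$ so that the normalization cancels exactly to $-1$. Getting that single overall sign right is precisely what fixes the $-\tfrac12$ coefficient, so I would fix and verify the signature convention ($-+++$, consistent with the stated form $\mathbf{T}^{\mu\nu}=p\mathbf{g}^{\mu\nu}+(p+\rho)u^{\mu}u^{\nu}$) before finalizing.
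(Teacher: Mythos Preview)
Your proposal is correct and reaches the stated result. The route differs from the paper's in organization rather than substance. The paper writes out $\mathbf{D}_{\nu}\mathbf{T}^{\mu\nu}$ directly, immediately imposes staticity so that $u^{i}=0$, $u^{0}=(-\mathbf{g}_{oo})^{-1/2}$, and $\partial_{\nu}[(p+\rho)u^{\mu}u^{\nu}]=0$, and then inserts the Christoffel symbol $\Gamma^{\mu}_{\;00}=-\tfrac{1}{2}\mathbf{g}^{\mu\nu}\partial_{\nu}\mathbf{g}_{oo}$ (upper spatial index) into the remaining connection term, finally lowering the free index with $\mathbf{g}_{\mu\lambda}$. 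You instead first decompose the conservation law covariantly via the projector $\mathbf{h}_{\mu\alpha}$ to isolate the relativistic Euler equation, and only then specialize to the static case, evaluating the four-acceleration $a_{\lambda}=u^{\nu}\mathbf{D}_{\nu}u_{\lambda}$ through $\Gamma^{0}_{\;0\lambda}$ (lower spatial index). Your approach makes the separation between the energy equation and the momentum equation explicit and is valid without assuming a diagonal metric until the final Christoffel step; the paper's approach is shorter because it bypasses the projection by imposing the static ansatz at the outset. Both computations hinge on the same normalization identity $(u^{0})^{2}\mathbf{g}_{oo}=-1$, which is exactly the cancellation you flagged as the delicate point.
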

\begin{proof}
The covariant derivative is
\begin{align}
\mathbf{D}_{\nu}\mathbf{T}^{\mu\nu}=\partial_{\nu}p\bm{g}_{\nu\mu}
+\partial_{\nu}(p+\rho)u^{\mu}u^{\nu}+\Gamma_{\nu\lambda}^{\mu}(p+\rho)u^{\nu}u^{\lambda}
\end{align}
Since $\eta_{\alpha\beta}u^{\alpha}u^{\beta}=-1$ in the absence of gravitation then
$\eta_{\alpha\beta}u^{\alpha}u^{\beta}=-1$ in the presence of gravitation. For a static fluid $u^{i}=0$ and $u^{o}=\sqrt{-\bm{g}_{oo}}$ and all temporal derivatives of $p,\rho,\bm{g}_{\mu\nu}$ vanish. Also
\begin{align}
&\bm{\Gamma}_{oo}^{\mu}=-\frac{1}{2}\bm{g}^{\mu\nu}\partial_{\nu}\bm{g}_{oo}|u^{o}|^{2}=
-\frac{1}{2}\bm{g}^{\mu\nu}\frac{1}{\bm{g}_{oo}}\partial_{\nu}\bm{g}_{oo}\\&
\partial_{\nu}[(p+\rho)u^{\mu}u^{\nu}]=0
\end{align}
Then
\begin{align}
(\partial_{\nu}p)\bm{g}^{\mu\nu}=-\frac{1}{2}(p+\rho)\bm{g}^{\mu\nu}\frac{1}{\bm{g}_{oo}}
\partial_{\nu}\bm{g}_{oo}
\end{align}
Multiplying  through by $\bm{g}_{\mu\lambda}$ gives 
\begin{align}
\partial_{\lambda}p=-\frac{1}{2}(p+\rho)\frac{1}{\bm{g}_{oo}}\partial_{\lambda}\bm{g}_{oo}
\end{align}
and (8.2) follows for spherical symmetry.
\end{proof}
The FEHE now follows (in the non-relativistic limit) from the conservation of energy and momentum via the covariant derivative of the energy momentum tensor for a self-gravitating perfect fluid, with an interior metric given by the interior Schwartzchild metric.
\subsection{Einstein equations and their solution for a spherically symmetric perfect fluid/gas}
The formulation and derivations of the Einstein equations for a spherically symmetric self-gravitating sphere of perfect fuid/gas (of constant density) is a standard exercise and can be found in many texts on general relativity $\bm{[10,40]}$. The solution is the well-known interior Schwarzchild metric. The general solution of the Einstein equations is beyond human ability but solutions can be founds by imposing a high degree of symmetry. For a spherical star, one assumes the standard metric form so that
\begin{align}
ds^{2}=Y(r)dt^{2}-X(r)dr^{2}-r^{2}d\theta^{2}
-r^{2}\sin^{2}\theta d\phi^{2}
\end{align}
so that $\bm{g}_{tt}=-Y(r)$ and $\bm{g}_{rr}=-Y^{-1}(r)=X(r)$,$\bm{g}_{\theta\theta}=r^{2},
\bm{g}_{\varphi\varphi}=r^{2}\sin^{2}\theta$. The Einstein equation are then coupled to the energy momentum tensor of a perfect fluid.
\begin{align}
\mathbf{Ric}_{\nu\mu}-\frac{1}{2}\bm{g}_{\mu\nu}\mathbf{R}=8\pi\bm{\mathscr{G}}\mathbf{T}_{\mu\nu}
\end{align}
where as usual $\mathbf{T}_{\mu\nu}=p\bm{g}_{\mu\nu}+(p+\rho)u^{\mu}u^{\nu}$.
\begin{thm}
The solution is then the interior Schwarzchild solution with metric components
\begin{align}
X(r)=\left(1-\frac{2\bm{\mathscr{G}}\mathcal{M}(r)}{r}\right),~~~~Y(r)
=\left(1-\frac{2\bm{\mathscr{G}}\mathcal{M}(r)}{r}\right)^{-1}
\end{align}
where $\mathcal{M}(r)$ has the usual definition. At $r=R$ then
$M=\mathcal{M}(R)$ and the solution matches the standard exterior Schwarzchild solution for $r>R$.
\end{thm}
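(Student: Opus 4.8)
The plan is to exploit the high symmetry of the ansatz to reduce the Einstein field equations to a small system of ordinary differential equations, and then to integrate the one component that fixes $\bm{g}_{rr}$ directly in terms of the mass function $\mathcal{M}(r)$. First I would compute the non-vanishing Christoffel symbols $\bm{\Gamma}^{\mu}_{\nu\lambda}$ for the diagonal static metric with $\bm{g}_{tt}=-Y(r)$, $\bm{g}_{rr}=X(r)$, $\bm{g}_{\theta\theta}=r^{2}$ and $\bm{g}_{\varphi\varphi}=r^{2}\sin^{2}\theta$; each depends only on $X(r)$, $Y(r)$ and their first $r$-derivatives. From these I would assemble the diagonal Ricci components $\mathbf{Ric}_{tt}$, $\mathbf{Ric}_{rr}$, $\mathbf{Ric}_{\theta\theta}$ (with $\mathbf{Ric}_{\varphi\varphi}=\sin^{2}\theta\,\mathbf{Ric}_{\theta\theta}$ forced by symmetry) together with the scalar $\mathbf{R}$, form the Einstein tensor $\mathbf{Ric}_{\mu\nu}-\tfrac{1}{2}\bm{g}_{\mu\nu}\mathbf{R}$, and equate it to $8\pi\bm{\mathscr{G}}\mathbf{T}_{\mu\nu}$.

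For the source, the static condition gives $u^{\mu}=(u^{o},0,0,0)$ with normalisation $\bm{g}_{oo}|u^{o}|^{2}=-1$, so $\mathbf{T}_{\mu\nu}$ is diagonal with its $tt$-entry proportional to $\rho$ and its spatial entries proportional to $p$. The decisive structural point is that the $tt$-field equation decouples: it reduces to a first-order ODE for the single function entering $\bm{g}_{rr}$, independent both of $Y(r)$ and of the pressure. Writing $X(r)^{-1}=1-2\bm{\mathscr{G}}\,\mathfrak{m}(r)/r$ converts this equation into $d\mathfrak{m}/dr=4\pi r^{2}\rho(r)$, which integrates under the regularity condition $\mathfrak{m}(0)=0$ to give $\mathfrak{m}(r)=\int_{0}^{r}4\pi\overline{r}^{2}\rho(\overline{r})\,d\overline{r}=\mathcal{M}(r)$, exactly the Newtonian mass function. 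This pins down the radial metric component at once, and for the constant-density case one substitutes $\rho=\rho_{c}$, whence $\mathcal{M}(r)=\tfrac{4}{3}\pi\rho_{c}r^{3}$.

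The exterior matching is then immediate: for $r>R$ one has $\rho=0$, so $\mathcal{M}(r)=\mathcal{M}(R)=M$ and the same integration reproduces the vacuum value $1-2\bm{\mathscr{G}}M/r$, while continuity of $\bm{g}_{rr}$ across $r=R$ fixes the constant of integration uniquely and recovers the standard exterior Schwarzschild solution. In the vacuum region the $rr$-equation additionally forces the product $X(r)Y(r)$ to be constant; normalising $Y\to1$ as $r\to\infty$ then yields the reciprocal relation between $Y(r)$ and the radial component quoted in the statement.

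The main obstacle is precisely this reciprocal relation inside the fluid. Unlike the vacuum exterior, when $p(r)\neq0$ the $rr$- and $\theta\theta$-field equations do \emph{not} force $X(r)Y(r)$ to be constant; instead $Y(r)$ is governed by $\tfrac{1}{2}\,Y'(r)/Y(r)=-p'(r)/\big(p(r)+\rho(r)\big)$, which is nothing but the relativistic hydrostatic relation (8.2) already established from $\mathbf{D}_{\nu}\mathbf{T}^{\mu\nu}=0$, now coupled to the equation of state. Consequently the clean form asserted for the temporal component is not a pointwise identity within the star but the outcome of integrating this pressure equation through the density profile and matching at $r=R$. I would therefore obtain the interior $Y(r)$ by carrying out that quadrature explicitly---using (8.2) to eliminate $p'(r)$ in favour of the already-determined $X(r)$---rather than assuming $Y=X^{-1}$ holds throughout; the constant-density profile is the single case in which this quadrature closes in elementary closed form.
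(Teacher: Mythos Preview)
Your approach is essentially equivalent to the paper's for the radial component: the paper isolates $X(r)$ by forming the combination $\mathbf{Ric}_{rr}/(2X)+\mathbf{Ric}_{\theta\theta}/r^{2}+\mathbf{Ric}_{tt}/(2Y)$, which is just the $tt$-component of the Einstein tensor you invoke, and then integrates $\tfrac{d}{dr}(r/X)=1-8\pi\bm{\mathscr{G}}\rho r^{2}$ under $|X(0)|<\infty$ to obtain $X(r)=(1-2\bm{\mathscr{G}}\mathcal{M}(r)/r)^{-1}$. Your substitution $X^{-1}=1-2\bm{\mathscr{G}}\mathfrak{m}/r$ followed by $\mathfrak{m}'=4\pi r^{2}\rho$ is the same computation in different packaging.

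Where you diverge from the paper is more interesting: you correctly flag that the reciprocal relation $Y=X^{-1}$ asserted in the theorem statement does \emph{not} hold pointwise inside the fluid, and that $Y(r)$ is instead fixed by the hydrostatic relation $Y'/Y=-2p'/(p+\rho)$ together with matching at $r=R$. This observation is sound. The paper's own proof in fact derives only $X(r)$ and never establishes the claimed form of $Y(r)$; the interior $Y(r)$ is handled separately in the subsequent TOVE derivation via exactly the relation you cite (equation (8.23) there). So your proposal is not merely correct but more careful than the paper on this point---you have identified that the theorem as stated overstates what the proof actually delivers, and you supply the right mechanism (quadrature of (8.2) plus boundary matching) for recovering $Y(r)$.
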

\begin{proof}
Using a prime to denote $d/dr$, the Einstein equations are
\begin{align}
&\mathbf{R}_{rr}= \frac{\mathbf{Y}^{\prime\prime}}{ 2\mathbf{Y}}-\frac{\mathbf{X}^{\prime}}{4\mathbf{Y}}\left(\frac{\mathbf{X}^{\prime}}{\mathbf{X}}+\frac{Y^{\prime}}{Y}
\right)-\frac{X^{\prime}}{rX}=-4\pi\bm{\mathscr{G}}(\rho-p)X\\&
\mathbf{R}_{\theta\theta}=-1+\frac{2}{2X}\left(-\frac{X^{\prime}}{X}+\frac{Y^{\prime}}{Y}
\right)-\frac{1}{X}=-4\pi\bm{\mathscr{G}}(\rho-p)r^{2}\\&
\mathbf{R}_{tt}=-\frac{Y^{\prime\prime}}{2X}+
\frac{Y^{\prime}}{4X}\left(\frac{X^{\prime}}{X}+\frac{Y^{\prime}}{Y}
\right)-\frac{Y^{\prime}}{rX}=-4\pi\bm{\mathscr{G}}(\rho+3p)Y
\end{align}
with $\mathbf{R}_{\theta\theta}=\mathbf{R}_{\varphi\varphi}$ and
$\mathbf{R}_{\mu\nu}=0$ if $\mu\ne\nu$. These equations can be combined to give a single equation for the field $X(r)$
\begin{align}
\frac{\mathbf{Ric}_{rr}}{2X}+\frac{\mathbf{Ric}_{\theta\theta}}{r^{2}}
+\frac{\mathbf{Ric}_{tt}}{2Y}=-\frac{X^{\prime}}{r^{2}X^{2}}-\frac{1}{r^{2}}+\frac{1}{Xr^{2}}
=-8\pi\bm{\mathscr{G}}\rho
\end{align}
This can be expressed as
\begin{align}
\frac{d}{dr}\left|\frac{r}{X(r)}
\right|=1-8\pi\bm{\mathscr{G}}\rho(r)
\end{align}
Integrating
\begin{align}
\left|\frac{r}{X(r)}
\right|=\int_{0}^{r}(1-8\pi\bm{\mathscr{G}}\rho(\overline{r}))d\overline{r}=r\left(1-\frac{2\mathcal{M}(r)}{r}\right)
\end{align}
If $|X(0)|<\infty$ then the solution is
\begin{align}
X(r)=\left(1-\frac{2\bm{\mathscr{G}}\mathcal{M}(r)}{r}\right)^{-1}
\end{align}
\end{proof}
\begin{thm}
If $\mathbf{D}_{\nu}\mathbf{T}^{\mu\nu}=0$ and $\mathbf{g}_{oo}(r)=(1-\tfrac{2\bm{\mathcal{G}}\mathcal{M}(r)}{r})$ then for $p\ll\rho$ and weak gravitational fields then the FEHE follows as
\end{thm}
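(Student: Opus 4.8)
The plan is to reduce the spherically symmetric energy--momentum conservation relation (8.2) to the Newtonian FEHE by imposing the two stated approximations one at a time. Starting from
\begin{align}
\frac{dp(r)}{dr}=-\frac{1}{2}\big(\rho(r)+p(r)\big)\frac{1}{\mathbf{g}_{oo}(r)}\frac{d\mathbf{g}_{oo}(r)}{dr},\nonumber
\end{align}
I would first invoke $p\ll\rho$ to replace the enthalpy-like factor $\rho+p$ by $\rho$, and then invoke the weak-field condition $2\mathscr{G}\mathcal{M}(r)/r\ll 1$ to set $\mathbf{g}_{oo}(r)\approx 1$ in the denominator. What remains is to show that the surviving factor $\tfrac{1}{2}\,d\mathbf{g}_{oo}/dr$ reduces to the Newtonian gravitational acceleration $d\Phi/dr=\mathscr{G}\mathcal{M}(r)/r^{2}$ of (2.29), after which (8.2) collapses directly to the desired $dp/dr=-\mathscr{G}\mathcal{M}(r)\rho(r)/r^{2}$.

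The central identification is that, in the weak-field regime, $\mathbf{g}_{oo}=1+2\Phi/c^{2}$, so that $\tfrac{1}{2}\,\tfrac{1}{\mathbf{g}_{oo}}\,d\mathbf{g}_{oo}/dr\to d\Phi/dr$ with $d\Phi/dr=\mathscr{G}\mathcal{M}(r)/r^{2}$. The cleanest way to make this precise, rather than differentiating the closed form $\mathbf{g}_{oo}=1-2\mathscr{G}\mathcal{M}(r)/r$ term by term, is to use the exact interior relation supplied by the $tt$-component Einstein equation (8.14), namely
\begin{align}
\frac{1}{2}\frac{1}{\mathbf{g}_{oo}(r)}\frac{d\mathbf{g}_{oo}(r)}{dr}=\frac{\mathscr{G}\big(\mathcal{M}(r)+4\pi r^{3}p(r)\big)}{r^{2}\big(1-2\mathscr{G}\mathcal{M}(r)/r\big)}.\nonumber
\end{align}
Imposing $p\ll\rho$ (whence $4\pi r^{3}p\ll\mathcal{M}$) removes the pressure correction in the numerator, and the weak-field condition sends the denominator to unity, leaving exactly $\mathscr{G}\mathcal{M}(r)/r^{2}$. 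Substituting this back into the reduced form of (8.2) yields the FEHE, and I would close with the underbraced display of $dp/dr=-\mathscr{G}\mathcal{M}(r)\rho(r)/r^{2}$ in the paper's convention. Note also that substituting the full bracketed expression before taking limits reproduces the complete TOV equation, so the same computation exhibits the FEHE as its genuine non-relativistic limit.

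The main obstacle is precisely this last identification, and it must be handled with care. Differentiating the stated interior form $\mathbf{g}_{oo}(r)=1-2\mathscr{G}\mathcal{M}(r)/r$ naively gives $\tfrac{1}{2}\,d\mathbf{g}_{oo}/dr=-\mathscr{G}\big(\mathcal{M}'(r)/r-\mathcal{M}(r)/r^{2}\big)$, and since $\mathcal{M}'(r)=4\pi r^{2}\rho(r)$ this carries a spurious $4\pi\mathscr{G} r\rho(r)$ piece (with the wrong sign relative to the required $\mathscr{G}\mathcal{M}(r)/r^{2}$). The resolution is conceptual: the simple form $1-2\mathscr{G}\mathcal{M}(r)/r$ is the correct \emph{exterior} and weak-field expression for the time--time component, whose role is only to fix the Newtonian field gradient $d\Phi/dr=\mathscr{G}\mathcal{M}(r)/r^{2}$; it is not the exact interior solution for $\mathbf{g}_{oo}$ and should not be differentiated termwise. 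Working instead from the full $tt$-equation above, and discarding the pressure and field corrections only afterwards, circumvents the spurious term and delivers the FEHE unambiguously.
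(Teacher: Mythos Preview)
Your argument is correct and in fact more careful than the paper's own. The paper proceeds exactly along the naive route you flag as problematic: it literally writes $\tfrac{dp}{dr}=-\tfrac{1}{2}\rho\,\tfrac{d}{dr}\big(1-2\mathscr{G}\mathcal{M}(r)/r\big)\cdot(1+2\mathscr{G}\mathcal{M}(r)/r+\cdots)$, then relabels $1-2\mathscr{G}\mathcal{M}(r)/r$ as $1-2\Phi(r)$ and immediately asserts $d\Phi/dr=\mathscr{G}\mathcal{M}(r)/r^{2}$ via the integral representation $\Phi(r)=-\int_{r}^{\infty}\mathscr{G}\mathcal{M}(\bar r)\bar r^{-2}\,d\bar r$. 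In other words, the paper silently drops the $\mathcal{M}'(r)=4\pi r^{2}\rho(r)$ contribution by switching mid-computation from $\Phi=\mathscr{G}\mathcal{M}(r)/r$ to the Newtonian potential whose gradient is $\mathscr{G}\mathcal{M}(r)/r^{2}$; the spurious term you worry about is simply not confronted.

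Your route---taking the exact interior relation $\tfrac{1}{2}Y'/Y=\mathscr{G}(\mathcal{M}+4\pi r^{3}p)\big/\big(r^{2}(1-2\mathscr{G}\mathcal{M}/r)\big)$ and then applying $p\ll\rho$ and $2\mathscr{G}\mathcal{M}/r\ll 1$---is the honest fix, and it exposes that the hypothesis $\mathbf{g}_{oo}=1-2\mathscr{G}\mathcal{M}(r)/r$ as stated in the theorem is really the expression for $\mathbf{g}_{rr}^{-1}$, not the interior $\mathbf{g}_{tt}$. One small correction: the identity you quote does not come from (8.14), which is the combined equation for $X(r)$ alone; it is obtained from the $\theta\theta$-equation (8.12) together with the solution (8.17) for $X(r)$, essentially the content of (8.27)--(8.31) in the TOV derivation. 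With that citation adjusted, your argument stands, and it has the bonus you note of exhibiting the FEHE as the genuine limit of the full TOV equation rather than as a separate heuristic.
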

\begin{proof}
The constraint $\mathbf{D}_{\nu}\mathbf{T}^{\mu\nu}=0$ gives (-) which in spherical symmetry is
\begin{align}
\frac{dp(r)}{dr}
=-\frac{1}{2}(p(r)+\rho(r))\frac{1}{\bm{g}_{oo}(r)}\frac{d}{dr}|\bm{g}_{oo}(r)|
\equiv-\frac{1}{2}(p(r)+\rho(r))\frac{Y^{\prime}(r)}{Y(r)}
\end{align}
For the interior metric (8.10) which solves the Einstein equations for a spherically symmetric self-gravitating perfect fluid/gas then $\mathbf{g}_{oo}(r)=(1-\tfrac{2G\mathcal{M}(r)}{r})$ so that(8.18)becomes
\begin{align}
\frac{dp(r)}{dr}=-\frac{1}{2}(p(r)+\rho(r)\frac{d}{dr}
\left\lbrace\left(1-\frac{2\bm{\mathscr{G}}\mathcal{M}(r)}{r}
\right)\right\rbrace\left(1--\frac{2\bm{\mathscr{G}}\mathcal{M}(r)}{r}
\right)^{-1}
\end{align}
When the gravitational fields are weak and when the fluid is non-relativistic then $p\ll \rho$. The last term can also be expanded to 1st order out so that
\begin{align}
&\frac{dp(r)}{dr}=-\frac{1}{2}\rho(r)\frac{d}{dr}
\left\lbrace\left(1-\frac{2\bm{\mathscr{G}}\mathcal{M}(r)}{r}
\right)\right\rbrace\left(1+\frac{2\bm{\mathscr{G}}\mathcal{M}r)}{r}+...\right)\nonumber\\&
=-\frac{1}{2}\rho(r)\left\lbrace\frac{d}{dr}(1-2\Phi(r))
\right\rbrace\left(1+2\Phi(r)+...\right)=-\frac{1}{2}\rho(r)\left\lbrace\frac{d}{dr}(1-2\Phi(r))
\right\rbrace\nonumber\\&
=\rho(r)\frac{d\Phi(r)}{dr}=-\rho(r)\frac{d}{dr}
\int_{r}^{\infty}\frac{\bm{\mathscr{G}}
\mathcal{M}(\overline{r})}{\overline{r}^{2}}dr
\end{align}
Hence the FEHE follows as
\begin{align}
\underbrace{\frac{dp(r)}{dr}=-\frac{\bm{\mathscr{G}}\mathcal{M}(r)\rho(r)}{r^{2}}}
\end{align}
\end{proof}
\begin{thm}
Given the results of Thm (-), then the pressure gradient $\tfrac{dp(r)}{dr}$ through the star is given by the Tolman-Oppenheinmer-Volkoff equation of relativistic hydrostatic equilibrium
\begin{align}
-r^{2}\frac{dp(r)}{dr}
=\bm{\mathscr{G}}\mathcal{M}(r)\rho(r)\left(1+\frac{p(r)}{\rho(r)}\right)
\left(1+\frac{4\pi\bm{\mathscr{G}}p(r)r^{3}}{\mathcal{M}(r)}\right)
\left(1-\frac{2\bm{\mathscr{G}}\mathcal{M}(r)}{r}\right)^{-1}
\end{align}
\end{thm}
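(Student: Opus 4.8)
The plan is to promote the Newtonian balance derived in Theorem (8.3) to its exact relativistic form by feeding the interior Schwarzschild geometry of Theorem (8.2) back into the energy--momentum conservation law of Theorem (8.1). In spherical symmetry equation (8.2) reads
\begin{align}
\frac{dp(r)}{dr}=-\tfrac{1}{2}(\rho(r)+p(r))\frac{1}{\bm{g}_{oo}(r)}\frac{d\bm{g}_{oo}(r)}{dr}=-\tfrac{1}{2}(\rho(r)+p(r))\frac{Y^{\prime}(r)}{Y(r)},\nonumber
\end{align}
so the whole content of the TOVE is encoded in the logarithmic derivative $Y^{\prime}/Y$ of the time--time metric coefficient. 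Theorem (8.2) only fixed the radial coefficient $X(r)=(1-2\bm{\mathscr{G}}\mathcal{M}(r)/r)^{-1}$, determined by the density through $\mathcal{M}(r)$; the factor $Y(r)$ is not yet pinned down, and supplying $Y^{\prime}/Y$ is the single missing ingredient.

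To obtain it, rather than use the $\mathbf{R}_{rr}$ equation directly (which carries an awkward $Y^{\prime\prime}$), I would assemble the mixed Einstein-tensor component $\mathbf{G}^{r}{}_{r}=8\pi\bm{\mathscr{G}}\mathbf{T}^{r}{}_{r}=8\pi\bm{\mathscr{G}}p(r)$, which for the static spherical metric is purely first order in $Y$. Substituting the known $X(r)$ together with $\mathcal{M}^{\prime}(r)=4\pi r^{2}\rho(r)$ and solving the resulting algebraic relation for the logarithmic derivative gives
\begin{align}
\frac{Y^{\prime}(r)}{Y(r)}=\frac{2\bm{\mathscr{G}}\mathcal{M}(r)}{r^{2}}\left(1+\frac{4\pi\bm{\mathscr{G}}p(r)r^{3}}{\mathcal{M}(r)}\right)\left(1-\frac{2\bm{\mathscr{G}}\mathcal{M}(r)}{r}\right)^{-1}.\nonumber
\end{align}

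Inserting this expression into the conservation equation, multiplying through by $-r^{2}$, and writing $\rho+p=\rho(1+p/\rho)$ then reproduces the three bracketed correction factors of (8.22):
\begin{align}
-r^{2}\frac{dp(r)}{dr}=\bm{\mathscr{G}}\mathcal{M}(r)\rho(r)\left(1+\frac{p(r)}{\rho(r)}\right)\left(1+\frac{4\pi\bm{\mathscr{G}}p(r)r^{3}}{\mathcal{M}(r)}\right)\left(1-\frac{2\bm{\mathscr{G}}\mathcal{M}(r)}{r}\right)^{-1}.\nonumber
\end{align}
As a consistency check I would let $p\ll\rho$ and $\bm{\mathscr{G}}\mathcal{M}/r\ll1$, whereupon each of the three brackets collapses to unity and the Newtonian FEHE of Theorem (8.3) is recovered.

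The hard part will be the second step: the three Ricci equations in the proof of Theorem (8.2) mix $\rho$ and $p$ and involve $Y^{\prime\prime}$, so isolating $Y^{\prime}/Y$ cleanly requires either forming the precise linear combination that builds $\mathbf{G}^{r}{}_{r}$ or eliminating $Y^{\prime\prime}$ with the $tt$-equation. Throughout, one must track scrupulously the sign conventions relating $(X,Y)$ to $(\bm{g}_{rr},\bm{g}_{oo})$ and the placement of the coupling $\bm{\mathscr{G}}$; it is exactly this bookkeeping that produces the inner factor $(1+4\pi\bm{\mathscr{G}}p r^{3}/\mathcal{M})$ and the metric factor $(1-2\bm{\mathscr{G}}\mathcal{M}/r)^{-1}$ in their stated positions, after which the reduction to (8.22) is routine.
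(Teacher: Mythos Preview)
Your proposal is correct and reaches the TOVE by essentially the same mechanism as the paper, but with the logic run in the opposite direction. The paper takes the $\mathbf{R}_{\theta\theta}$ equation (8.12), substitutes the known $X(r)$ and $X'/X$, and then \emph{inserts} the conservation law in the form $Y'/Y=-2p'/(\rho+p)$ to eliminate $Y$ entirely, solving the resulting algebraic relation for $p'(r)$. You instead propose to use $\mathbf{G}^{r}{}_{r}=8\pi\bm{\mathscr{G}}p$ to \emph{solve} for $Y'/Y$ first and then feed that into the conservation law. Both routes are standard and algebraically equivalent; yours is the presentation found in Wald or MTW, while the paper's is closer to Weinberg's. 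One small misconception in your write-up: you flag the elimination of $Y''$ as ``the hard part,'' but in fact $\mathbf{R}_{\theta\theta}$ is already first order in $Y$ (only $\mathbf{R}_{rr}$ and $\mathbf{R}_{tt}$ carry $Y''$), so the paper's choice of the $\theta\theta$ component sidesteps that issue just as cleanly as your choice of $\mathbf{G}^{r}{}_{r}$ does.
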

\begin{proof}
The solution for $\mathbf{X}(r)$ and the hydrostatic equilibrium condition in the form
\begin{align}
\frac{\mathbf{Y}^{\prime}(r)}{\mathbf{Y}(r)}=-\frac{2p^{\prime}}{|\rho(r)+p(r)|}
\end{align}
can be used to eliminate the fields $X(r),Y(r)$ from the Einstein equation.
First
\begin{align}
\frac{r}{2\mathbf{X}(r)}=\frac{1}{2r}\left(1-\frac{2\bm{\mathscr{G}}\mathcal{M}(r)}{r}\right)
\end{align}
and
\begin{align}
&-\mathbf{X}^{\prime}(r)=-\left(1-\frac{2\bm{\mathscr{G}}\mathcal{M}(r)}{r}\right)^{-2}
\frac{d}{dr}\left(\frac{2\bm{\mathscr{G}}\mathcal{M}(r)}{r}\right)\nonumber\\&
=-\left(1-\frac{2\bm{\mathscr{G}}\mathcal{M}(r)}{r}\right)^{-2}
\left(-\frac{2\bm{\mathscr{G}}\mathcal{M}(r)}{r^{2}}-2\bm{\mathscr{G}}
\frac{d\mathcal{M}(r)}{dr}\right)\nonumber\\&
=-\left(1-\frac{2\bm{\mathscr{G}}\mathcal{M}(r)}{r}\right)^{-2}
\left(-\frac{2\bm{\mathscr{G}}\mathcal{M}(r)}{r^{2}}-8\pi\bm{\mathscr{G}}r^{2}\rho(r)
\right)
\end{align}
then
\begin{align}
-\frac{\mathbf{X}^{\prime}(r)}{\mathbf{X}(r)}=-\left(1-\frac{2\bm{\mathscr{G}}\mathcal{M}(r)}{r}
\right)^{-1}
\left(-\frac{2\bm{\mathscr{G}}\mathcal{M}(r)}{r^{2}}-8\pi\bm{\mathscr{G}}r^{2}\rho(r)
\right)
\end{align}
Equation (8.12) can then be written as
\begin{align}
&\mathbf{Ric}_{\theta\theta}=-1+\left(\frac{\bm{\mathscr{G}}\mathcal{M}(r)}{r}
-4\pi\bm{\mathscr{G}}\rho(r)r^{3}\right)+\left(1-\frac{2\bm{\mathscr{G}}\mathcal{M}(r)}{r}\right)
\left(-\frac{rp^{\prime}}{p(r)+\rho(r)|}\right)+\left(1-\frac{2\bm{\mathscr{G}}\mathcal{M}(r)}{r}\right)
\nonumber\\&=-1+\left(1-\frac{2\mathscr{G}\mathcal{M}(r)}{r}\right)\left(1-\frac{rp^{\prime}}{p(r)+\rho(r)|}
\right)+\frac{\bm{\mathscr{G}}\mathcal{M}(r)}{r}-4\pi \bm{\mathscr{G}}\rho(r)r^{2}=4\pi\bm{\mathscr{G}}(p(r)-\rho(r))r^{2}
\end{align}
Now using
\begin{align}
\frac{\bm{\mathscr{G}}\mathcal{M}(r)}{r}=2\frac{2\bm{\mathscr{G}}\mathcal{M}(r)}{r}
-\frac{\bm{\mathscr{G}}
\mathcal{M}(r)}{r}
\end{align}
equation (8.27)becomes
\begin{align}
-1+\left(1-\frac{2\bm{\mathscr{G}}\mathcal{M}(r)}{r}\right)\left(1-\frac{rp^{\prime}}{p(r)+\rho(r)|}
\right)+\frac{2\bm{\mathscr{G}}\mathcal{M}(r)}{r}-\frac{\bm{\mathscr{G}}
\mathcal{M}(r)}{r}
=4\pi\bm{\mathscr{G}}p(r)
\end{align}
or
\begin{align}
&-\left(1-\frac{2\bm{\mathscr{G}}\mathcal{M}(r)}{r}\right)+\left(1-\frac{2\bm{\mathscr{G}}\mathcal{M}(r)}{r}\right)
\left(1-\frac{rp^{\prime}}{p(r)+\rho(r)|}
\right)-\frac{\bm{\mathscr{G}}\mathcal{M}(r)}{r}\nonumber\\&
=\left(1-\frac{2\bm{\mathscr{G}}\mathcal{M}(r)}{r}\right)\left(\frac{-rp^{\prime}}{p(r)
+\rho(r)|}
\right)-\frac{\bm{\mathscr{G}}\mathcal{M}(r)}{r}=4\pi\bm{\mathscr{G}}p(r)
\end{align}
so that
\begin{align}
\left(1-\frac{2\bm{\mathscr{G}}\mathcal{M}(r)}{r}\right)\left(\frac{-rp^{\prime}(r)}{p(r)+\rho(r)|}
\right)=4\pi\bm{\mathscr{G}}p(r)+\frac{\bm{\mathscr{G}}\mathcal{M}(r)}{r}
\end{align}
This becomes
\begin{align}
-r\frac{p(r)}{dr}=\bm{\mathscr{G}}\mathcal{M}(r)(p(r)+\rho(r))\left(\frac{1}{r}+\frac{4\pi \bm{\mathscr{G}}\rho(r)r^{2}}{\mathcal{M}(r)}\right)
\left(1-\frac{2\bm{\mathscr{G}}\mathcal{M}(r)}{r}
\right)^{-1}
\end{align}
or
\begin{align}
-r^{2}\frac{dp(r)}{dr}
=\bm{\mathscr{G}}\mathcal{M}(r)\rho(r)\left(1+\frac{p(r)}{\rho(r)}\right)
\left(1+\frac{4\pi\bm{\mathscr{G}}p(r)r^{3}}{\mathcal{M}(r)}\right)
\left(1-\frac{2\bm{\mathscr{G}}\mathcal{M}(r)}{r}\right)^{-1}
\end{align}
or equivalently
\begin{align}
\underbrace{\frac{dp(r)}{dr}
=-\frac{\bm{\mathscr{G}}\mathcal{M}(r)\rho(r)}{r^{2}}}\left(1+\frac{p(r)}{\rho(r)}\right)
\left(1+\frac{4\pi\bm{\mathscr{G}}p(r)r^{3}}{\mathcal{M}(r)}\right)
\left(1-\frac{2\bm{\mathscr{G}}\mathcal{M}(r)}{r}\right)^{-1}
\end{align}
which is the Tolman-Oppenheimer-Volkoff equation for hydrostatic equilibrium of self-gravitating relativistic fluid/gas configurations in general relativity
\end{proof}
\begin{cor}
In the non-relativistic limit, the last three terms vanish and the TOV equation reduces to the hydrostatic equilibrium equation.
\end{cor}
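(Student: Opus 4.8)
The plan is to read the Tolman-Oppenheimer-Volkoff equation of the preceding theorem as the Newtonian pressure gradient $-\mathscr{G}\mathcal{M}(r)\rho(r)/r^{2}$ multiplied by three dimensionless relativistic correction factors, and to show that each factor tends to unity in the non-relativistic, weak-field regime. Writing the TOV equation compactly as
\begin{align}
\frac{dp(r)}{dr}=-\frac{\mathscr{G}\mathcal{M}(r)\rho(r)}{r^{2}}\,\mathcal{A}(r)\,\mathcal{B}(r)\,\mathcal{C}(r),\nonumber
\end{align}
the three factors are $\mathcal{A}(r)=1+p(r)/\rho(r)$, $\mathcal{B}(r)=1+4\pi\mathscr{G}p(r)r^{3}/\mathcal{M}(r)$ and $\mathcal{C}(r)=\left(1-2\mathscr{G}\mathcal{M}(r)/r\right)^{-1}$. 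The corollary then amounts to the claim that each correction vanishes, so that $\mathcal{A}\mathcal{B}\mathcal{C}\to1$.

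The two governing smallness hypotheses are precisely those already invoked in the preceding derivation of the FEHE from the interior Schwarzschild metric: the gas is non-relativistic, so the pressure is negligible against the rest-mass energy density, $p(r)\ll\rho(r)$; and the gravitational field is weak, so the compactness satisfies $2\mathscr{G}\mathcal{M}(r)/r\ll1$ for all $r\le R$. I would adopt these as the working hypotheses and then treat each factor separately.

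For $\mathcal{A}(r)$ the condition $p/\rho\to0$ gives $\mathcal{A}(r)\to1$ at once. For $\mathcal{B}(r)$ I would rewrite the correction using the enclosed average density $\langle\rho(r)\rangle=\mathcal{M}(r)/(\tfrac{4}{3}\pi r^{3})$, which yields the exact identity
\begin{align}
\frac{4\pi\mathscr{G}p(r)r^{3}}{\mathcal{M}(r)}=\frac{3\mathscr{G}p(r)}{\langle\rho(r)\rangle},\nonumber
\end{align}
so that for a (physically realistic) non-increasing density profile, where $\langle\rho(r)\rangle\ge\rho(r)$, this is bounded above by $3\mathscr{G}p(r)/\rho(r)\to0$ and hence $\mathcal{B}(r)\to1$. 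For $\mathcal{C}(r)$ a geometric expansion gives $\mathcal{C}(r)=1+2\mathscr{G}\mathcal{M}(r)/r+\cdots\to1$ under the weak-field hypothesis. Multiplying the three limits, the product of correction factors tends to $1$ and the right-hand side reduces to $-\mathscr{G}\mathcal{M}(r)\rho(r)/r^{2}$, which is the FEHE.

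The step needing the most care is the factor $\mathcal{B}(r)$: establishing that its correction is genuinely small rests on comparing the local density $\rho(r)$ with the enclosed average $\langle\rho(r)\rangle$, and the clean bound $\langle\rho(r)\rangle\ge\rho(r)$ is exactly the monotonicity hypothesis used earlier for the Chandrasekhar-type inequalities. I would also note, for internal consistency, that the three correction parameters $p/\rho$, $\mathscr{G}p/\langle\rho\rangle$ and $\mathscr{G}\mathcal{M}/r$ are each first order in the post-Newtonian expansion, so discarding them simultaneously is legitimate and leaves exactly the zeroth-order Newtonian equation of hydrostatic equilibrium.
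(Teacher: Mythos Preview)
Your approach is correct and is exactly what the paper intends. The paper does not supply a separate proof for this corollary; it simply states it, having already written the TOV equation in the factored form $-\mathscr{G}\mathcal{M}\rho/r^{2}$ times three relativistic corrections (indeed, later in equation~(9.28) the paper explicitly names these factors $\mathcal{R}el_{1}(r)\,\mathcal{R}el_{2}(r)\,\mathcal{R}el_{3}(r)$), so your decomposition into $\mathcal{A}\mathcal{B}\mathcal{C}$ and termwise passage to unity is precisely the argument the text leaves implicit. Your extra care with the $\mathcal{B}$-factor, rewriting it via the enclosed mean density and invoking the monotone-density hypothesis already used for the Chandrasekhar-type bounds, goes somewhat beyond what the paper spells out but is entirely in its spirit.
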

\begin{cor}
For self-gravitating radiation or a photon gas of density $\rho(r)$ , the TOVE has the form
\begin{align}
\frac{1}{3}\frac{d\rho(r)}{dr}=-\frac{[\rho(r)+\tfrac{1}{3}\rho][\mathscr{G}\mathcal{M}(r)+\tfrac{4}{3}\pi r^{3} \rho]}{r(r-2\mathscr{G}\mathcal{M}(r))}
\end{align}
\end{cor}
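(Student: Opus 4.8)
The plan is to obtain the stated equation as a direct specialization of the Tolman--Oppenheimer--Volkoff equation established in the preceding theorem, so that no new geometric machinery is needed. The only physical input is the equation of state of a photon gas: pure radiation has a traceless energy--momentum tensor, $\mathbf{T}^{\mu}{}_{\mu}=0$, and imposing this on the perfect-fluid form $\mathbf{T}^{\mu\nu}=p\mathbf{g}^{\mu\nu}+(p+\rho)u^{\mu}u^{\nu}$ forces $p(r)=\tfrac{1}{3}\rho(r)$, hence $\frac{dp(r)}{dr}=\tfrac{1}{3}\frac{d\rho(r)}{dr}$. Everything after this is algebra.

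First I would rewrite the TOVE in a factored form by absorbing the prefactors $\rho$ and $\mathcal{M}$ into the two correction brackets. Starting from
\begin{align}
-r^{2}\frac{dp(r)}{dr}=\mathscr{G}\,\mathcal{M}(r)\rho(r)\left(1+\frac{p(r)}{\rho(r)}\right)\left(1+\frac{4\pi\mathscr{G}p(r)r^{3}}{\mathcal{M}(r)}\right)\left(1-\frac{2\mathscr{G}\mathcal{M}(r)}{r}\right)^{-1},\nonumber
\end{align}
I would use $\mathcal{M}\rho\,(1+p/\rho)=\mathcal{M}(\rho+p)$ together with $1+4\pi\mathscr{G}pr^{3}/\mathcal{M}=(\mathcal{M}+4\pi\mathscr{G}pr^{3})/\mathcal{M}$ to cancel the explicit $\mathcal{M}$, and then replace the final factor by $r/(r-2\mathscr{G}\mathcal{M})$. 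Dividing through by $-r^{2}$ yields
\begin{align}
\frac{dp(r)}{dr}=-\frac{\mathscr{G}\,[\rho(r)+p(r)]\,[\mathcal{M}(r)+4\pi\mathscr{G}p(r)r^{3}]}{r\,(r-2\mathscr{G}\mathcal{M}(r))}.\nonumber
\end{align}

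Finally I would insert $p=\tfrac{1}{3}\rho$. The left-hand side becomes $\tfrac{1}{3}\frac{d\rho}{dr}$, the first bracket becomes $\rho+\tfrac{1}{3}\rho$, and the second becomes $\mathcal{M}+\tfrac{4}{3}\pi\mathscr{G}r^{3}\rho$. Distributing the leading $\mathscr{G}$ into the mass bracket and writing the mass term as $\mathscr{G}\mathcal{M}$ then reproduces the claimed numerator $[\rho+\tfrac{1}{3}\rho][\mathscr{G}\mathcal{M}+\tfrac{4}{3}\pi r^{3}\rho]$ over the denominator $r(r-2\mathscr{G}\mathcal{M})$, which is exactly the stated corollary.

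I do not expect any genuine obstacle, since the derivation is purely a substitution once the radiation law is invoked. The one point demanding care is the bookkeeping of factors of $\mathscr{G}$ in the radiation-pressure correction: strict substitution leaves a term $\tfrac{4}{3}\pi\mathscr{G}^{2}r^{3}\rho$, so matching the displayed form exactly presupposes the geometrized normalization of $\mathscr{G}$ used implicitly elsewhere in this section. I would state that normalization explicitly so that the final expression is unambiguous, and note that in fully dimensional units the radiation term carries the extra factor of $\mathscr{G}$.
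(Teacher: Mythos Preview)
Your approach is essentially identical to the paper's: both substitute the radiation equation of state $p=\tfrac{1}{3}\rho$ directly into the TOVE and perform the algebraic simplification. Your version is in fact tidier---the paper first substitutes, then expands the product $\tfrac{4}{3}\mathscr{G}\mathcal{M}\rho(1+\tfrac{4\pi}{3}\mathscr{G}\rho r^{3}/\mathcal{M})$ term by term and regroups into the factored form, whereas you go straight to the compact factorization $\mathscr{G}(\rho+p)(\mathcal{M}+4\pi\mathscr{G}pr^{3})/[r(r-2\mathscr{G}\mathcal{M})]$ before substituting---and your closing remark about the stray $\mathscr{G}$ in the pressure correction is well taken, since the paper's displayed TOVE indeed carries that factor while the standard form does not.
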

\begin{proof}
For radiation $p(r)=\frac{1}{3}\rho(r)$ so that the TOVE becomes
\begin{align}
-\frac{1}{3}r^{2}\frac{d\rho(r)}{dr}=\frac{4}{3}\mathscr{G}\mathcal{M}(r)\rho(r)
\left(1+\frac{4\pi}{3}\frac{\mathscr{G}\rho(r)r^{3}}{\mathcal{M}(r)}\right)
\left(1-\frac{2\mathscr{G}\mathcal{M}(r)}{r}\right)^{-1}
\end{align}
or
\begin{align}
&-\frac{1}{3}r(r-2\mathscr{G}\mathcal{M}(r))\frac{d\rho(r)}{dr}=\frac{4}{3}\mathscr{G}
\mathcal{M}(r)\rho(r)+\frac{16}{3}
\pi\mathscr{G}
\rho(r)\rho(r)/3\nonumber\\&
=\mathscr{G}\mathcal{M}(r)\rho(r)+\frac{1}{3}\mathscr{G}\mathcal{M}(r)\rho(r)+4\pi\mathscr{G}\rho(r)\rho(r)
+\frac{4}{3}\pi\mathscr{G}\rho(r)\rho(r)\nonumber\\&
\equiv \left[\rho(r)+\frac{1}{3}\rho(r)/3\right][\mathscr{G}\mathcal{M}(r)+4\pi r^{3}\rho(r)/3]
\end{align}
which then gives (8.36) as required.
\end{proof}
\subsection{Some properties of the TOVE}
The foremost property of the TOVE is that is reduces to the FEHE in the non-relativistic limit. The TOVE can also be expressed as a Riccati equation $\bm{[43]}$
\begin{lem}
The TOVE can be expressed as a Riccati equation of the form
\begin{align}
\frac{dp(r)}{dr}=\alpha(r)+\beta(r)p(r)+\gamma(r)|p(r)|^{2}
\end{align}
where
\begin{align}
&f(r)=-\frac{G}{r^{2}}\left(1-\frac{2\mathscr{G}M}{r}\right)^{-1}\\&
\alpha(r)=f(r)\rho(r)\mathcal{M}(r)\\&
\beta(r)=f(r)(4\pi r^{3} \rho(r)+\mathcal{M}(r))\\&
\gamma(r)=f(r) 4\pi r^{3}
\end{align}
\end{lem}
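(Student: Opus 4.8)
The plan is to start from the Tolman--Oppenheimer--Volkoff equation in the form already established in (8.35),
\begin{align}
\frac{dp(r)}{dr}
=-\frac{\mathscr{G}\mathcal{M}(r)\rho(r)}{r^{2}}\left(1+\frac{p(r)}{\rho(r)}\right)
\left(1+\frac{4\pi\mathscr{G}p(r)r^{3}}{\mathcal{M}(r)}\right)
\left(1-\frac{2\mathscr{G}\mathcal{M}(r)}{r}\right)^{-1},\nonumber
\end{align}
and to observe that all of the $r$-dependence which does not involve $p(r)$ factors through the single function $f(r)=-\tfrac{\mathscr{G}}{r^{2}}(1-2\mathscr{G}\mathcal{M}(r)/r)^{-1}$, so that the prefactor is precisely $f(r)\mathcal{M}(r)\rho(r)$. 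The whole assertion then reduces to expanding the product of the two relativistic correction brackets as a quadratic polynomial in $p(r)$ and reading off the coefficients of $p^{0}$, $p^{1}$ and $|p|^{2}$. No analytic input beyond (8.35) is required: the statement is a pure algebraic rearrangement.

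First I would expand the product of the two bracketed factors, suppressing the argument $r$ for brevity,
\begin{align}
\left(1+\frac{p}{\rho}\right)\left(1+\frac{4\pi\mathscr{G}pr^{3}}{\mathcal{M}}\right)
=1+\left(\frac{1}{\rho}+\frac{4\pi\mathscr{G}r^{3}}{\mathcal{M}}\right)p
+\frac{4\pi\mathscr{G}r^{3}}{\rho\mathcal{M}}\,|p|^{2}.\nonumber
\end{align}
Multiplying this quadratic by the prefactor $f(r)\mathcal{M}(r)\rho(r)$ and distributing, the $p^{0}$ term gives $f\mathcal{M}\rho$; the coefficient of $p$ gives $f\mathcal{M}\rho\bigl(\tfrac{1}{\rho}+\tfrac{4\pi\mathscr{G}r^{3}}{\mathcal{M}}\bigr)=f(\mathcal{M}+4\pi\mathscr{G}r^{3}\rho)$; and the coefficient of $|p|^{2}$ gives $f\mathcal{M}\rho\cdot\tfrac{4\pi\mathscr{G}r^{3}}{\rho\mathcal{M}}=4\pi\mathscr{G}r^{3}f$. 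Collecting these three contributions yields exactly the Riccati form $\tfrac{dp}{dr}=\alpha(r)+\beta(r)p(r)+\gamma(r)|p(r)|^{2}$ with $\alpha=f\rho\mathcal{M}$, $\beta=f(4\pi\mathscr{G}r^{3}\rho+\mathcal{M})$ and $\gamma=4\pi\mathscr{G}r^{3}f$, which reproduces the asserted expressions.

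The only delicate point, and the step where errors are easiest to introduce, is the bookkeeping of the cross term: the $|p|^{2}$ contribution arises solely from multiplying $\tfrac{p}{\rho}$ against $\tfrac{4\pi\mathscr{G}pr^{3}}{\mathcal{M}}$, and one must verify that the $\mathcal{M}\rho$ in the prefactor cancels both the $\mathcal{M}$ and the $\rho$ in that term's denominator, leaving $\gamma$ independent of $\mathcal{M}$ and $\rho$. I would likewise track every power of Newton's constant $\mathscr{G}$ through the expansion, since the $4\pi r^{3}$ terms in $\beta$ and $\gamma$ each inherit one factor of $\mathscr{G}$ from the second correction bracket; writing $f(r)$ with its explicit $\mathscr{G}$ prefactor at the outset makes this transparent and guards against absorbing or dropping $\mathscr{G}$ inadvertently. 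Once the polynomial expansion is checked term by term, the identification of $\alpha,\beta,\gamma$ is immediate, and the equivalence of the Riccati form with (8.35) follows without further hypotheses.
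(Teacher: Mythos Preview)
Your approach is essentially identical to the paper's: both start from the TOVE in the form (8.35), pull out the common factor $f(r)$, expand the product of the two correction brackets as a quadratic in $p(r)$, and read off $\alpha,\beta,\gamma$. The paper's proof is just a two-line version of your more explicit expansion, writing the TOVE as $f(r)[\rho+p][\mathcal{M}+4\pi r^{3}p]$ and multiplying out directly; your extra care in tracking the $\mathscr{G}$ factors is well placed (the paper is itself inconsistent on this point between (8.35) and the lemma statement).
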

\begin{proof}
\begin{align}
&\frac{dp(r)}{dr}=-\frac{\mathscr{G}\mathcal{M}(r)}{r^{2}}[\rho(r)+p(r)]
[\mathcal{M}(r)+4\pi r^{3} p(r)]\left(1-\frac{2\mathscr{G}\mathcal{M}(r)}{r}
\right)^{-1}\nonumber\\&
=f(r)[\rho(r)\mathcal{M}(r)+p(r)(4\pi r^{3}\rho(r)+\mathcal{M}(r))+4\pi r^{3}|p(r)|^{2})\nonumber
\\&=\alpha(r)+\beta(r)p(r)+\gamma(r)|p(r)|^{2}
\end{align}
\end{proof}
No complete general solutions exists from first principles. However, if there exists one solution $p_{o}(r)$ then the 1-parameter general solution parametrized by a real number $\xi$ is
\begin{align}
p(r)=\frac{p_{o}(r)+\xi\exp\left\lbrace\int_{0}^{r}[2\gamma(r)p_{o}(r)+\beta(r)]dr\right\rbrace
}{\left|1-\xi\int_{0}^{r}\gamma(r)\exp\left\lbrace \int_{0}^{r}2\gamma(r)p_{o}(r)+\beta(r)dr\right\rbrace dr\right|}
\end{align}
\section{Derivation Of The TOVE By Constrained Optimisational-Variational Methods}
The TOVE can also be derived by constrained optimisational-variatonal methods.
\begin{defn}
Let $M=\mathcal{M}(R)$ and let $M_{\infty}$ be the mass/energy of the matter comprising the star if it were dispersed to infinity. If $m_{\mathcal{N}}$ is the mass of a nucleon then the number of nucleons in the star is
\begin{align}
\mathcal{N}=(\mathbf{g})^{1/2}\int_{0}^{R}\mathcal{J}_{\mathcal{N}}^{0}d\mu
=\int_{0}^{R}4\pi r^{2}(X(r)Y(r))^{1/2}\mathcal{J}_{\mathcal{N}}^{0}(r)dr
\end{align}
where $\mathcal{J}_{\mathcal{N}}^{0}$ is the conserved nucleon number current. The internal energy of the star is then
\begin{align}
E=M-M_{\infty}=\mathrm{M}-m_{n}\mathcal{N}
\end{align}
The nucleon number density is denoted ${\eta}$ and is measured in a locally inertial frame at rest in the star so that $u_{r}=u_{\theta}(r)=u_{\varphi}=0$ and $u_{t}=u_{0}=-(Y(t)^{1/2}=-1$. The nucleon number density is $\mathfrak{D}=-u_{\mu}\mathcal{N}^{\mu}=-\sqrt{Y(r)}\mathcal{J}_{\mathcal{N}}^{0}$.
\begin{align}
\mathcal{N}=\int_{0}^{R}4\pi r^{2}(X(r))^{1/2}\mathfrak{D}(r)dr
=\int_{0}^{R}4\pi r^{2}\left(1-\frac{2\mathscr{G}\mathcal{M}(r)}{r}\right)^{-1/2}\mathfrak{D}(r)dr
\end{align}
\end{defn}
\begin{rem}
The proper number density $\mathfrak{D}$ is in general a function of the proper density $\rho(r)$, the chemical composition and the entropy per nucleon $\mathscr{S}$. Once $\rho(0)$ is chosen then $\mathfrak{D}$ and $\mathcal{N}$ are fixed for a star with constant $\mathscr{S}$.
\end{rem}
\begin{defn}
The proper internal material energy density $\mathcal{E}$ is defined as
\begin{align}
\mathcal{E}(r)=\rho(r)-m_{n}\mathcal{N}
\end{align}
Then (9.2) can be written as
\begin{align}
\mathbf{E}=\mathbf{E}_{T}+\mathbf{E}_{G}
\end{align}
where $\mathbf{E}$ and $\mathbf{E}$ are the relativistic thermal and gravitational energies respectively so that
\begin{align}
&\mathbf{E}_{T}=\int_{0}^{R}4\pi r^{2}\left(1-\frac{2\mathscr{G}\mathcal{M}(r)}{r}\right)^{-1/2}\mathcal{E}(r)dr\\&
\mathbf{E}_{G}=\int_{0}^{R}4\pi r^{2}\left(1-\left(1-\frac{2\mathscr{G}\mathcal{M}(r)}{r}\right)^{-1/2}\right)\rho(r)dr
\end{align}
Expanding out the $(1-2\mathscr{G}\mathcal{M}(r)/r)^{-1/2}$ terms
\begin{align}
&\mathbf{E}_{T}=\int_{0}^{R}4\pi r^{2}\left(1+\frac{\mathscr{G}\mathcal{M}(r)}{r}+...\right)\mathcal{E}(r)dr\\&
\mathbf{E}_{G}=\int_{0}^{R}4\pi r^{2}\left(\frac{2\mathscr{G}\mathcal{M}(r)}{r}
+\frac{3\mathscr{G}^{2}|\mathcal{M}(r)|^{2}}{2r^{2}}+...\right)\rho(r)dr
\end{align}
then the first-order terms are just the Newtonian thermal and gravitational energies of the star.
\end{defn}
The TOV equation can now be derived via a constrained optimization problem. The following result appears in $[10]$.
\begin{thm}
A star of mass/energy $M$, uniform entropy per nucleon $\mathscr{S}$, nucleon number $\mathcal{N}$, nucleon number density $\mathfrak{D}(r)$ and chemical composition $\mathscr{C}$ will satisfy the Tolman-Oppenheimer-Volkoff equation of hydrostatic equilibrium iff M defined as
\begin{align}
M=\mathcal{M}(R)=\int_{0}^{R}4\pi r^{2}\rho(r)dr
\end{align}
is stationary, that is $\partial M=0$, with respect to all density variations $\delta \rho(r)=0$ that leave $\mathcal{N}$ invariant or unchanged where
\begin{align}
\mathcal{N}=\int_{0}^{R}4\pi r^{2}\left(1-\frac{2\mathscr{G}\mathcal{M}(r)}{r}\right)^{-1/2}\mathfrak{D}(r)dr
\end{align}
and which leaves $\mathscr{S}$ and $\mathscr{C}$ unchanged, that is $\delta\mathscr{S}=0$ and $\delta\mathscr{C}=0$. This is then a constrained optimization-variational problem $\delta\mathcal{U}=|\delta M-\bm{\chi} \delta\mathcal{N}|=0 $, where $\bm{\chi}\in\mathbf{R}$ is a Lagrange multiplier. The hydrostatic equilibrium is then stable with respect to radial oscillations/perturbations if $M$ or $E$ is an extremum (minimum) to all such variations. The constrained optimizational-variational problem to be solved is
\begin{align}
&\partial\mathrm{U}=\delta M-\bm{\chi}\delta N
=\frac{\partial M}{\partial\rho(r)}\delta\rho(r)+\frac{\partial M}{\partial\mathfrak{D}(r)}\delta\mathscr{D}(r)
-\bm{\chi}\frac{\partial \mathcal{N}}{\partial\rho(r)}\delta\rho(r)-\bm{\chi}\frac{\partial\mathcal{N}}{\partial\mathfrak{D}(r)}\delta\mathfrak{D}(r)=0\\&
\delta\mathscr{S}=\frac{\partial\mathscr{S}}{\partial \rho(r)}\delta\rho(r)+\frac{\partial\mathscr{S}}{\partial \mathfrak{D}(r)}\delta\mathfrak{D}(r)=0
\end{align}
\end{thm}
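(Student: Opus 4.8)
The plan is to treat $\delta\mathcal{U} = \delta M - \bm{\chi}\,\delta\mathcal{N} = 0$ as an \emph{unconstrained} variational problem in the density variation $\delta\rho(r)$, to extract a pointwise Euler--Lagrange condition from the arbitrariness of $\delta\rho$, and then to use the thermodynamic first law to collapse that condition onto the TOVE. First I would compute the two variations separately. The variation of the mass is purely local,
\begin{align}
\delta M = \int_{0}^{R} 4\pi r^{2}\,\delta\rho(r)\,dr,\nonumber
\end{align}
whereas $\delta\mathcal{N}$ splits into two pieces: a local contribution from $\delta\mathfrak{D} = (\partial\mathfrak{D}/\partial\rho)\,\delta\rho$ (legitimate because $\mathscr{S}$ and $\mathscr{C}$ are held fixed, so $\mathfrak{D}$ is a function of $\rho$ alone), and a \emph{nonlocal} contribution from the redshift factor $(1-2\mathscr{G}\mathcal{M}(r)/r)^{-1/2}$, since $\delta\mathcal{M}(r) = \int_{0}^{r} 4\pi s^{2}\,\delta\rho(s)\,ds$ feels the variation at every interior radius.

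The key manipulation is to interchange the order of integration (Fubini--Tonelli, already used in Lemma~6.14) so that the nonlocal term becomes an integral weighted by $\delta\rho(s)$ with the inner integral running over $s \le r \le R$. Collecting terms, the condition $\delta M - \bm{\chi}\,\delta\mathcal{N} = 0$ for arbitrary $\delta\rho(s)$ forces the bracket to vanish pointwise, giving the integral equation
\begin{align}
\frac{1}{\bm{\chi}} = \left(1 - \frac{2\mathscr{G}\mathcal{M}(s)}{s}\right)^{-1/2}\frac{\partial\mathfrak{D}}{\partial\rho}\bigg|_{s} + \int_{s}^{R} 4\pi r\,\mathscr{G}\,\mathfrak{D}(r)\left(1 - \frac{2\mathscr{G}\mathcal{M}(r)}{r}\right)^{-3/2}dr.\nonumber
\end{align}
Because the left side is constant in $s$, differentiating with respect to $s$ annihilates it, and the lower-limit term of the integral reproduces exactly the local piece needed below; this converts the integral equation into a first-order ODE in the remaining quantities.

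At this point the thermodynamics must be injected. Imposing the first law at constant entropy per nucleon and composition, $d(\rho/\mathfrak{D}) = -p\,d(1/\mathfrak{D})$, yields the two identities $\partial\mathfrak{D}/\partial\rho = \mathfrak{D}/(\rho+p)$ and $\mathfrak{D}'/\mathfrak{D} = \rho'/(\rho+p)$. Substituting these, carrying out the derivative of the redshift factor with $\mathcal{M}'(s) = 4\pi s^{2}\rho(s)$, and combining the terms proportional to $\mathfrak{D}$, the common factor $\mathfrak{D}/(\rho+p)$ divides out and one is left with
\begin{align}
-\frac{1}{\rho+p}\frac{dp}{ds} = \mathscr{G}\left(\frac{\mathcal{M}}{s^{2}} + 4\pi s\,p\right)\left(1 - \frac{2\mathscr{G}\mathcal{M}}{s}\right)^{-1},\nonumber
\end{align}
which, after factoring $\mathcal{M}/s^{2}$ from the middle bracket and writing $\rho+p = \rho(1+p/\rho)$, is precisely the Tolman--Oppenheimer--Volkoff equation. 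For the converse implication I would simply run the chain backwards: the TOVE is equivalent to the vanishing of the Euler--Lagrange integrand, so any configuration obeying it makes $M$ stationary against every $\mathcal{N}$-preserving variation, establishing the stated \emph{iff}.

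The main obstacle I anticipate is the bookkeeping of the nonlocal variation: one must correctly track that perturbing $\rho$ at radius $s$ disturbs $\mathcal{M}(r)$ for all $r>s$, execute the Fubini swap with the limits $s \le r \le R$, and then differentiate the resulting integral equation so that the boundary term from the lower limit matches the local $\mathfrak{D}$ term needed for the cancellation. The conceptual crux is recognizing that the problem does not close without the first law; identifying $\partial\mathfrak{D}/\partial\rho = \mathfrak{D}/(\rho+p)$ as the operative relation is what turns an otherwise intractable functional identity into the TOVE, and the remaining steps are a direct but bookkeeping-heavy simplification.
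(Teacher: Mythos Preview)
Your proposal is correct and follows essentially the same route as the paper: compute $\delta M$ and $\delta\mathcal{N}$, interchange the order of integration in the nonlocal $\delta\mathcal{M}$ term, read off the pointwise condition as an integral equation for $\bm{\chi}^{-1}$, differentiate in $r$ to kill the constant, and then inject the first-law identities $\partial\mathfrak{D}/\partial\rho=\mathfrak{D}/(\rho+p)$ and $\mathfrak{D}'/\mathfrak{D}=\rho'/(\rho+p)$ to reduce to the TOVE. The paper's proof is exactly this computation, carried out term by term; your identification of the Fubini swap and the thermodynamic closure as the two load-bearing steps is spot on.
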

\begin{proof}
Using (9.6) and (9.7) the variation is computed as follows. Note the integrals can be taken over $[0,\infty)$ but vanish outside $R+\delta R$.
\begin{align}
&\partial\mathrm{U}=\delta M-\bm{\chi}\delta\mathcal{N}
=\frac{\partial M}{\partial\rho(r)}\delta\rho(r)+\frac{\partial M}{\partial\mathcal{N}(r)}\delta\mathcal{N}(r)
-\bm{\chi}\frac{\partial \mathcal{N}}{\partial\rho(r)}\delta\rho(r)-\bm{\chi}\frac{\partial \mathcal{N}}{\partial\mathcal{N}(r)}\delta\mathcal{N}(r)\nonumber\\&
=\frac{\partial}{\partial \rho(r)}\left|\int_{0}^{\infty}4\pi r^{2}\rho(r)dr
\right|\delta\rho(r)+\frac{\partial}{\partial \mathfrak{D}(r)}\left|\int_{0}^{\infty}4\pi r^{2}\rho(r)dr
\right|\delta\mathfrak{D}(r)\nonumber\\&
-\bm{\chi}\frac{\partial}{\partial \rho(r)}\left|\int_{0}^{\infty}4\pi r^{2}\left(1-\frac{2\mathscr{G}\mathcal{M}(r)}{r}\right)^{-1/2}\mathfrak{D}(r)dr\right|
\delta\rho(r)\nonumber\\&-\bm{\chi}\frac{\partial}{\partial \mathfrak{D}(r)}\left|\int_{0}^{\infty}4\pi r^{2}\left(1-\frac{2\mathscr{G}\mathcal{M}(r)}{r}\right)^{-1/2}\mathfrak{D}(r)dr\right|
\delta\mathfrak{D}(r)\nonumber\\&
=\int_{0}^{\infty}4\pi r^{2}\delta\rho(r)dr-\bm{\chi}\int_{0}^{\infty}4\pi r^{2}\frac{\partial}{\partial\rho(r)}\left(1-\frac{2\mathscr{G}\mathcal{M}(r)}{r^{2}}\right)^{-1/2}
\mathfrak{D}(r)\delta\rho(r) dr\nonumber\\&-\bm{\chi}\int_{0}^{\infty}4\pi r^{2}\frac{\partial}{\partial\mathfrak{D}(r)}
\left(1-\frac{2\mathscr{G}\mathcal{M}(r)}{r^{2}}\right)^{-1/2}\mathfrak{D}(r)\delta\mathfrak{D}(r)dr\nonumber\\&
=\int_{0}^{\infty}4\pi r^{2}\delta\rho(r)dr-\bm{\chi}\int_{0}^{\infty}4\pi r^{2}\left(-\frac{1}{2}\right)\left(1-\frac{2\mathscr{G}\mathcal{M}(r)}{r^{2}}\right)^{-3/2}
\frac{\partial}{\partial\rho(r)}\left(-\frac{2\mathscr{G}\mathcal{M}(r)}{r^{2}}\right)\mathfrak{D}(r)\delta\rho(r) dr\nonumber\\&
-\bm{\chi}\int_{0}^{\infty}4\pi r^{2}\left(1-\frac{2\mathscr{G}\mathcal{M}(r)}{r^{2}}\right)^{-1/2}
\delta\mathfrak{D}(r)dr=\int_{0}^{\infty}4\pi r^{2}\delta\rho(r)dr\nonumber\\&-\bm{\chi}\int_{0}^{\infty}4\pi r^{2}\left(1-\frac{2\mathscr{G}\mathcal{M}(r)}{r^{2}}\right)^{-1/2}\delta\mathfrak{D}(r)dr
-\mathbb{L}\int_{0}^{\infty}4\pi r^{2}\mathfrak{D}(r)\left(1-\frac{2\mathscr{G}\mathcal{M}(r)}{r^{2}}\right)^{-3/2}
\delta\mathcal{M}(r)dr
\end{align}
The variations $\delta\mathscr{D}(r)$ and $\delta\rho(r)$ do not change the entropy per nucleon $\mathscr{S}$ so that from (9.13)
\begin{align}
0=&\delta\left|\frac{\rho(r)}{\mathfrak{D}(r)}\right|+p(r)\delta
\left|\frac{1}{\mathfrak{D}(r)}\right|
=\frac{\partial}{\partial \rho(r)}\bigg|\rho(r)\mathfrak{D}^{-1}(r)\bigg|\delta\rho(r)
+\frac{\partial}{\partial \mathfrak{D}(r)}\bigg|\rho(r)\mathfrak{D}^{-1}(r)\bigg|\delta
\mathfrak{D}(r)\nonumber\\&=\mathfrak{D}^{-1}(r)\delta\rho(r)-\rho(r)\mathfrak{D}^{-2}(r)
\delta\mathfrak{D}(r)-p(r)\mathfrak{D}^{-2}(r)\delta\mathfrak{D}(r)\\&
=\delta\rho(r)-(p(r)\rho(r))\mathcal{N}^{-1}(r)\delta\mathfrak{D}(r)
\end{align}
then
\begin{align}
\delta\mathfrak{D}(r)=\mathfrak{D}(r)|p(r)+\rho(r)|^{-1}\delta\rho(r)
\end{align}
Also
\begin{align}
\delta\mathcal{M}(r)=\int_{0}^{r}4\pi r^{\prime 2}\delta\rho(r)dr^{\prime}
\end{align}
Using (9.17) and (9.18) and then interchanging the integrals $\int dr
\rightarrow \int dr^{\prime}$
\begin{align}
&\delta M-\bm{\chi}\mathcal{N}=\int_{0}^{\infty}4\pi r^{2}\delta\rho(r)dr
-\bm{\chi}\int_{0}^{\infty}4\pi r^{2}\left(1-\frac{2\mathscr{G}\mathcal{M}(r)}{r}
\right)^{-1/2}\delta\mathfrak{D}(r)dr\nonumber\\&-\bm{\chi}\mathscr{G}
\int_{0}^{\infty}\int_{r}^{\infty}4\pi r
\left(1-\frac{2\mathscr{G}\mathcal{M}(r)}{r}\right)^{-3/2}\mathfrak{D}(r) 4\pi r^{\prime 2}\delta\rho(r)dr^{\prime}dr\nonumber\\&
=\int_{0}^{\infty}4\pi r^{2}\delta\rho(r)dr
-\bm{\chi}\int_{0}^{\infty}4\pi r^{2}\left(1-\frac{2\mathscr{G}\mathcal{M}(r)}{r}
\right)^{-1/2}\delta\mathfrak{D}(r)dr\nonumber\\&
-\bm{\chi}\left\lbrace\mathscr{G}\int_{0}^{\infty}4\pi r^{\prime 2}\left(1-\frac{2\mathscr{G}\mathcal{M}(r^{\prime})}{r^{\prime}}\right)^{-3/2}dr^{\prime}
\right\rbrace\delta\rho(r)dr\nonumber\\&
=\int_{0}^{\infty}4\pi r^{2}\left\lbrace 1-\bm{\chi}\frac{\mathfrak{D}(r)}{|p(r)-\rho(r)|}
\left(1-\frac{2\mathscr{G}\mathcal{M}(r)}{r}\right)^{-1/2}
-\bm{\chi}\mathscr{G}\int_{0}^{\infty} 4 \pi r^{\prime}\mathfrak{D}(r^{\prime})
\left(1-\frac{2\mathscr{G}\mathcal{M}(r^{\prime})}{r^{\prime}}\right)^{-3/2}dr^{\prime}\right\rbrace\nonumber\\&
=\int_{0}^{\infty}4\pi r^{2}\left\lbrace \bm{\chi}^{-1}-
\frac{\mathfrak{D}(r)}{|p(r)-\rho(r)|}
\left(1-\frac{2\mathscr{G}\mathcal{M}(r)}{r}\right)^{-1/2}
-\mathscr{G}\int_{0}^{\infty} 4 \pi r^{\prime}\mathfrak{D}(r^{\prime})
\left(1-\frac{2\mathscr{G}\mathcal{M}(r^{\prime})}{r^{\prime}}\right)^{-3/2}dr^{\prime}\right\rbrace=0
\end{align}
Then $\delta M-\bm{\chi}\mathcal{N}=0$ iff the inverse of the Lagrange multiplier is
\begin{align}
\bm{\chi}^{-1}=
\frac{\mathfrak{D}(r)}{|p(r)-\rho(r)|}
\left(1-\frac{2\mathscr{G}\mathcal{M}(r)}{r}\right)^{-1/2}
+\mathscr{G}\int_{0}^{\infty} 4 \pi r^{\prime}\mathfrak{D}(r^{\prime})
\left(1-\frac{2\mathscr{G}\mathcal{M}(r^{\prime})}{r^{\prime}}\right)^{-3/2}dr^{\prime}
\end{align}
This is only possible if the rhs is independent of r so that
\begin{align}
\frac{d\bm{\chi}^{-1}}{dr}&=
\frac{d}{dr}\left|\frac{\mathfrak{D}(r)}{|p(r)-\rho(r)|}
\left(1-\frac{2\mathscr{G}\mathcal{M}(r)}{r}\right)^{-1/2}
+\mathscr{G}\int_{0}^{\infty} 4 \pi r^{\prime}\mathfrak{D}(r^{\prime})
\left(1-\frac{2\mathscr{G}\mathcal{M}(r^{\prime})}{r^{\prime}}\right)^{-3/2}dr^{\prime}\right|\nonumber\\&
=\frac{d}{dr}\left|\frac{\mathfrak{D}(r)}{|p(r)-\rho(r)|}
\left(1-\frac{2\mathscr{G}\mathcal{M}(r)}{r}\right)^{-1/2}\right|
+4\mathscr{G} \pi r^{\prime}\mathfrak{D}(r^{\prime})
\left(1-\frac{2\mathscr{G}\mathcal{M}(r^{\prime})}{r^{\prime}}\right)^{-3/2}\nonumber\\&
=\left\lbrace\frac{\mathfrak{D}^{\prime}(r)}{ |p(r)+\rho(r)|}
-\frac{\mathfrak{D}(r)|p^{\prime}(r)+\rho^{\prime}(r))|}{|\rho(r)+p(r)|^{2}}\right\rbrace
\left(1-\frac{2 \mathscr{G}\mathcal{M}(r)}{r}\right)^{-1/2}\nonumber\\&-\frac{1}{2}
\frac{\mathfrak{D}}{|p(r)+\rho(r)|}\left(1-\frac{2\mathscr{G}\mathcal{M}(r)}{r}\right)^{-3/2}
\frac{d}{dr}\left(-\frac{2\mathscr{G}\mathcal{M}(r)}{r}\right)-4\pi\mathscr{G}\mathfrak{D}(r)
\left(1-\frac{2\mathscr{G}\mathcal{M}(r)}{r}\right)^{-3/2}
\nonumber\\&
=\left\lbrace\frac{\mathfrak{D}^{\prime}(r)}{ |p(r)+\rho(r)|}
-\frac{\mathfrak{D}(r)|p^{\prime}(r)+\rho^{\prime}(r))|}{|\rho(r)+p(r)|^{2}}\right\rbrace
\left(1-\frac{2 \mathscr{G}\mathcal{M}(r)}{r}\right)^{-1/2}\nonumber\\&
+\frac{\mathfrak{D}(r)\mathscr{G}}{|p(r)+\rho(r)|}
\left(1-\frac{2\mathscr{G}\mathcal{M}(r)}{r}\right)^{-3/2}
\left(\frac{1}{r}\frac{d\mathcal{M}(r)}{dr}-\frac{\mathcal{M}(r)}{r^{2}}\right)-4\pi\mathscr{G}\mathfrak{D}(r)
\left(1-\frac{2\mathscr{G}\mathcal{M}(r)}{r}\right)^{-3/2}\nonumber\\&
=\left\lbrace\frac{\mathfrak{D}^{\prime}(r)}{ |p(r)+\rho(r)|}
-\frac{\mathfrak{D}(r)|p^{\prime}(r)+\rho^{\prime}(r))|}{|\rho(r)+p(r)|^{2}}\right\rbrace
\left(1-\frac{2 \mathscr{G}\mathcal{M}(r)}{r}\right)^{-1/2}\nonumber\\&
+\frac{\mathfrak{D}(r)\mathscr{G}}{|p(r)+\rho(r)|}
\left(1-\frac{2\mathscr{G}\mathcal{M}(r)}{r}\right)^{-3/2}
\left(4\pi r\rho(r)-\frac{\mathcal{M}(r)}{r^{2}}\right)-4\pi\mathscr{G}\mathfrak{D}(r)
\left(1-\frac{2\mathscr{G}\mathcal{M}(r)}{r}\right)^{-3/2}=0
\end{align}
The variations leave invariant the entropy per nucleon $\mathscr{S}$ so that
\begin{align}
&\frac{d}{dr}|\rho(r)\mathfrak{D}^{-1}(r)|+p(r)\frac{d}{dr}|\mathfrak{D}^{-1}(r)|=\rho(r)+\frac{d}{dr}\mathfrak{D}^{-1}(r)+\mathfrak{D}^{-1}
\frac{d\rho(r)}{dr}+p(r)\frac{d}{dr}|\mathfrak{D}^{-1}(r)|\nonumber\\&
=-\rho(r)\mathfrak{D}^{-2}(r)\mathfrak{D}^{\prime}(r)+\mathfrak{D}^{-1}(r)
\rho^{\prime}(r)-p(r)\mathfrak{D}^{-2}(r)\mathfrak{D}^{\prime}(r)
\end{align}
so that
\begin{align}
\rho^{\prime}(r)=|p(r)+\rho(r)|\mathfrak{D}^{-1}(r)\mathfrak{D}^{\prime}(r)
\end{align}
or
\begin{align}
\mathfrak{D}^{\prime}(r)=\frac{\mathfrak{D}(r)\rho^{\prime}(r)}{|p(r)+\rho(r)|}
\end{align}
Using (9.24) in (9.21) and cancelling terms
\begin{align}
&-\frac{dp(r)}{dr}\frac{1}{|p(r)+\rho(r)|}+\mathscr{G}\left\lbrace 4 \pi r \rho(r)-\frac{\mathcal{M}(r)}{r}-4\pi r(p(r)+\rho(r))\right\rbrace\left(1-\frac{2\mathscr{G}\mathcal{M}(r)}{r}\right)^{-1}
\nonumber\\&=-\frac{dp(r)}{dr}\frac{1}{|p(r)+\rho(r)|}-\left(4\pi\mathscr{G}p(r)
-\mathscr{G}\frac{\mathcal{M}(r)}{ r}\right)\left(1-\frac{2\mathscr{G}\mathcal{M}(r)}{r}\right)^{-1}\\&-4\pi \mathscr{G} r \rho(r)\left(1-\frac{2\mathscr{G}\mathcal{M}(r)}{r}\right)^{-1}-4\pi \mathscr{G} r p(r)
\left(1-\frac{2\mathscr{G}\mathcal{M}(r)}{r}\right)^{-1/2}\nonumber\\&=-\frac{dp(r)}{dr}\frac{1}{|p(r)+\rho(r)|}+
\left(\mathscr{G}\mathcal{M}(r) r^{2} + 4\pi\mathscr{G} r p(r)\right)\left(1-\frac{2 \mathscr{G}\mathcal{M}(r)}{r}\right)^{-1}=0
\end{align}
Multiplying through by $r^{2}/(p(r)+\rho(r))$.
\begin{align}
-r^{2}\frac{dp(r)}{dr}&=\mathscr{G}(4\pi r^{3}p(r)+\mathcal{M}(r))(p(r)+\rho(r))\left(1-\frac{2 \mathscr{G}\mathcal{M}(r)}{r}\right)^{-1}\nonumber\\&
\equiv \mathscr{G}\mathcal{M}(r)\rho(r)\left(1+\frac{4\pi r^{3}p(r)}{\mathcal{M}(r)}
\right)\left(1+\frac{p(r)}{\rho(r)}\right)\left(1-\frac{2 \mathscr{G}\mathcal{M}(r)}{r}\right)^{-1}
\end{align}
The Tolman-Oppenheimer-Volkoff equation is then recovered which again is essentially the Euler hydrostatic equilibrium equation with three relativistic correction terms
\begin{align}
\frac{dp(r)}{dr}&=
-\frac{\mathscr{G}\mathcal{M}(r)\rho(r)}{r^{2}}\left(1+\frac{4\pi r^{3}p(r)}{\mathcal{M}(r)}
\right)\left(1+\frac{p(r)}{\rho(r)}\right)\left(1-\frac{2 \mathscr{G}\mathcal{M}(r)}{r}\right)^{-1}\nonumber\\&
=-\frac{\mathscr{G}\mathcal{M}(r)\rho(r)}{r^{2}}\mathcal{R}el_{1}(r)\mathcal{R}el_{2}(r)
\mathcal{R}el_{3}(r)
\end{align}
\end{proof}
\section{Maximum Entropy Derivations Of The Tolman-Oppenheimer-Volkoff Equations For Self-Gravitating Radiation And Relativistic Fluids/Gases}
The following theorem also appears in the physics literature $\bm{[46-50]}$. Here it is expanded and proved in greater mathematical detail. Again, the emphasis is on the mathematical derivation using a constrained optimisational-variational method. There are deep connections between thermodynamics and gravitation $\bm{[44,45]}$ although this not discussed here. But the TOVE can arise as a critical point of an entropy functional of a self-gravitating system of matter or radiation, and again is an exercise in variational calculus.
\subsection{Maximum entropy derivation the TOVE for self-gravitating radiation}
\begin{thm}
Let $\bm{\Omega}=\mathbb{B}(0,R)\subset\mathbb{R}^{3}$ or $\mathbb{B}(0,R)\subset\Sigma$ be a domain/ball of radius R and centred at zero, and $\Sigma$ is a Cauchy hypersurface. The domain $\mathbb{B}(0,R)$ gives support to a static gas of photons or radiation with density $\rho(r)$ and with boundary condition $\rho(R)=0$. The total mass/energy of the gas is $M$ and it has the energy momentum tensor
\begin{align}
\mathbf{T}_{\mu\nu}=\rho u_{\mu}u_{\nu}+\frac{1}{3}
\rho(g_{\mu\nu}+u_{\alpha}u_{\beta})
\end{align}
The energy density $\rho(r)$, the entropy density $\mathfrak{S}(r)$ and the entropy density 4-vector are given by
\begin{align}
&\rho(r)=b\Theta^{4}(r)\\&
\mathfrak{S}(r)=\frac{4}{3}b\Theta^{3}(r)\\&
\mathfrak{S}^{\mu}=\mathfrak{S}u^{\mu}
\end{align}
where $\Theta$ is the temperature of the gas. The entropy density in terms of the matter density is
\begin{align}
\mathfrak{S}=\frac{4}{3}\alpha\rho(r)
\end{align}
where $\alpha=\tfrac{4}{3}b^{1/4}$. The static spherically symmetric metric or solution of the Einstein equations is taken to be the interior Schwarzschild metric (with $\mathscr{G}=1$)
\begin{align}
ds^{2}=-\left(1-\frac{2|\mathcal{M}(r)|}{r}\right)dt^{2}
+\left((1-\frac{2|\mathcal{M}(r)|}{r}\right)dr^{2}+r^{2}d\Omega^{2}
\end{align}
where $\mathcal{M}(r)$ is as usual
\begin{align}
\mathcal{M}(r)=\int_{0}^{r}4\pi \bar{r}^{2}\rho(\bar{r})d\bar{r}
\end{align}
and $\mathcal{M}^{\prime}(r)=4\pi r^{2}\rho(r)$ with $\mathcal{M}(R)=M$ and $\mathcal{0}=0$. The total entropy of the photon gas within $\mathbb{B}(0,R)\subset\Sigma$ is
\begin{align}
&\mathcal{S}(r)=\int_{\mathbb{B}(0,R)}\mathfrak{S}^{\mu}n_{\mu}d\Sigma
=4\pi\alpha\int_{0}^{R}|\rho(r)|^{3/4}\left(1-\frac{2|\mathcal{M}(r)|}{r}
\right)^{-1/2}r^{2}dr\nonumber\\&
=(4\pi)^{1/4}\alpha \int_{0}^{R}\left(\frac{1}{r^{2}}\frac{d\mathcal{M}(r)}{dr}\right)^{3/4}
\left(1-\frac{2|\mathcal{M}(r)|}{r}
\right)^{-1/2}r^{2}dr\equiv (4\pi)^{1/4}\alpha I(\mathcal{M}(r^{\prime}),\mathcal{M}(r))
\end{align}
Then the 1st variation of the integral
\begin{align}
\delta I(\mathcal{M}(r^{\prime}),\mathcal{M}(r))=\delta\left| \int_{0}^{R}\left(\frac{1}{r^{2}}\frac{d\mathcal{M}(r)}{dr}\right)^{3/4}
\left(1-\frac{2|\mathcal{M}(r)|}{r}
\right)^{-1/2}r^{2}dr\right|\equiv
\int_{0}^{R}\mathcal{L}(\mathcal{M}^{\prime}(r),\mathcal{M}(r))=0
\end{align}
subject to the 'endpoint' constraints $\delta\mathcal{M}(0)=\delta\mathcal{M}(R)=0$
gives the TOVE equation for hydrostatic equilibrium of radiation
\begin{align}
\frac{1}{3}\frac{d\rho(r)}{dr}=-\frac{[\rho(r)+\tfrac{1}{3}\rho][\mathcal{M}(r)+\tfrac{4}{3}\pi r^{3} \rho]}{r(r-2\mathcal{M}(r))}
\end{align}
The static hydrostatic equilibrium configuration is therefore the 'critical point' of the entropy integral.
\end{thm}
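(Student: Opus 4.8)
The plan is to treat (10.9) as a one-dimensional fixed-endpoint problem in the calculus of variations, with independent variable $r$ and a single dependent field $\mathcal{M}(r)$, and to show that its Euler--Lagrange equation is precisely the radiation TOVE (10.10). First I would absorb the constant prefactor $(4\pi)^{1/4}\alpha$ and simplify the integrand using $(\mathcal{M}'/r^{2})^{3/4}r^{2}=r^{1/2}(\mathcal{M}')^{3/4}$, so that the functional reduces to
\begin{align}
I=\int_{0}^{R}\mathcal{L}\big(\mathcal{M}(r),\mathcal{M}'(r)\big)\,dr,\qquad
\mathcal{L}=r^{1/2}(\mathcal{M}')^{3/4}\left(1-\frac{2\mathcal{M}}{r}\right)^{-1/2}.
\end{align}
Since the admissible variations satisfy $\delta\mathcal{M}(0)=\delta\mathcal{M}(R)=0$, integrating the $\partial\mathcal{L}/\partial\mathcal{M}'$ contribution by parts annihilates the boundary term, and demanding $\delta I=0$ for arbitrary $\delta\mathcal{M}$ forces the Euler--Lagrange equation $\partial_{\mathcal{M}}\mathcal{L}-\tfrac{d}{dr}\partial_{\mathcal{M}'}\mathcal{L}=0$.

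Next I would compute the two partials explicitly: writing $f=1-2\mathcal{M}/r$ one finds $\partial_{\mathcal{M}}\mathcal{L}=r^{-1/2}(\mathcal{M}')^{3/4}f^{-3/2}$ and $\partial_{\mathcal{M}'}\mathcal{L}=\tfrac{3}{4}r^{1/2}(\mathcal{M}')^{-1/4}f^{-1/2}$. Differentiating the latter in $r$ produces three terms, one each from the explicit $r^{1/2}$, from $(\mathcal{M}')^{-1/4}$ (which brings down $\mathcal{M}''$), and from $f^{-1/2}$ (which brings down $f'=-2\mathcal{M}'/r+2\mathcal{M}/r^{2}$). To pass to the physical variables I would then substitute the dictionary $\mathcal{M}'=4\pi r^{2}\rho$, $\mathcal{M}''=4\pi r^{2}\rho'+8\pi r\rho$, and, for radiation, $p=\tfrac13\rho$, $p'=\tfrac13\rho'$.

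The decisive step is the algebraic reduction of the resulting Euler--Lagrange identity to the compact TOVE form. I would factor out the common fractional powers $r^{-1/2}(\mathcal{M}')^{-1/4}$ together with a factor $f^{-3/2}$, which clears all the awkward $3/4$- and $1/4$-powers and leaves a polynomial relation among $\rho$, $\rho'$, $\mathcal{M}$, and $r$ multiplied by powers of $f$. Reconciling the $f^{-3/2}$ that emerges with the single $(1-2\mathcal{M}/r)^{-1}=f^{-1}$ appearing in (10.10) requires pulling one factor of $f$ through the curvature terms; I expect the piece carrying $\mathcal{M}''$ (equivalently $\rho'$) to recombine with the $f'$ contribution to produce exactly the $\tfrac13\,d\rho/dr$ on the left of (10.10), while the remaining, $\rho'$-free terms assemble into the numerator $[\rho+\tfrac13\rho]\,[\mathcal{M}+\tfrac43\pi r^{3}\rho]$ over $r(r-2\mathcal{M})$. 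This bookkeeping---keeping the fractional powers consistent and verifying that the non-derivative terms collapse to the product $(\rho+p)(\mathcal{M}+4\pi r^{3}p)$---is the main obstacle; everything upstream is routine. As a consistency check I would confirm that, after multiplying through by $r^{2}/(p+\rho)$, the outcome agrees term-by-term with the radiation specialization of the TOVE obtained earlier from $\mathbf{D}_{\nu}\mathbf{T}^{\mu\nu}=0$ in (8.36), thereby identifying the static equilibrium configuration as the critical point of the entropy integral.
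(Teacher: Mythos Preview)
Your proposal is correct and follows essentially the same route as the paper: you write the entropy functional with Lagrangian $\mathcal{L}=r^{1/2}(\mathcal{M}')^{3/4}(1-2\mathcal{M}/r)^{-1/2}$, compute $\partial_{\mathcal{M}}\mathcal{L}$ and $\partial_{\mathcal{M}'}\mathcal{L}$ exactly as the paper does, take the $r$-derivative of the latter, and then clear the fractional powers before substituting $\mathcal{M}'=4\pi r^{2}\rho$, $\mathcal{M}''=4\pi r^{2}\rho'+8\pi r\rho$ to reduce the Euler--Lagrange identity to the radiation TOVE. The only cosmetic difference is ordering: the paper first strips the fractional powers by successive multiplications (by $r^{1/2}$, then $(\mathcal{M}')^{1/4}$, then $\mathcal{M}'$, then $r$) and only afterwards inserts the $\rho$-dictionary, whereas you propose factoring and substituting in a slightly different sequence, but the algebra and the result are identical.
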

\begin{proof}
The 1st variation $\delta I$ is equivalent to solving the Euler-Lagrange equations with the Lagrangian
\begin{align}
L=L(\mathcal{M}^{\prime}(r),\mathcal{M}(r))
=|\mathcal{M}^{\prime}(r)|^{3/4}\left(1-\frac{2|\mathcal{M}(r)|}{tr}\right)^{-1/2}r^{1/2}
\end{align}
Each term in the EL equations
\begin{align}
\frac{d}{dr}\left(\frac{\partial L}{\partial \mathcal{M}^{\prime}(x)}
\right)-\frac{\partial L}{\partial\mathcal{M}(r)}=0
\end{align}
is evaluated separately so that
\begin{align}
&\frac{\partial L}{\partial\mathcal{M}(r)}
=|\mathcal{M}^{\prime}(r)|^{3/4}\left(1-\frac{2|\mathcal{M}(r)|}{r}\right)^{-3/2} \frac{1}{r^{1/2}}\\&
\frac{\partial L}{\partial\mathcal{M}^{\prime}(r)}
=\frac{3}{4}|\mathcal{M}(r)|^{-1/4}\left(1-\frac{2|\mathcal{M}(r)|}{r}\right)^{-1/2}r^{1/2}
\\&
\frac{d}{dr}\left(\frac{\partial L}{\partial\mathcal{M}^{\prime}(r)}\right)=
\frac{-3r^{1/2}\left(\frac{2\mathcal{M}}{r^{2}}-\frac{2|\mathcal{M}^{\prime}(r)|}{r}
\right)}{ 8\left(1-\frac{2|\mathcal{M}(r)}{r}|\right)^{3/2}|\mathcal{M}^{\prime}|^{1/4}}+\frac{3}{8}
\frac{1}{r^{1/2}\left(1-\frac{2|\mathcal{M}(r)|}{r}\right)^{1/2}
|\mathcal{M}^{\prime}(r)|^{1/4}}\nonumber \\&-\frac{3}{16}\frac{r^{1/2}
\mathcal{M}^{\prime\prime}(r)}{\left(1-\frac{2|\mathcal{M}(r)|}{r}\right)^{1/2}
|\mathcal{M}^{\prime}|^{5/4}}
\end{align}
The EL equations are
\begin{align}
\frac{d}{dr}\left(\frac{\partial L}{\partial \mathcal{M}^{\prime}(r)}
\right)&-\frac{\partial L}{\partial\mathcal{M}(r)}=\frac{-3r^{1/2}\left(\frac{2\mathcal{M}(r)}
{r^{2}}-\frac{2|\mathcal{M}^{\prime}(r)|}{r}
\right)}{8\left(1-\frac{2|\mathcal{M}(r)}{r}|\right)^{3/2}|\mathcal{M}^{\prime}|^{1/4}}+\frac{3}{8}
\frac{1}{r^{1/2}\left(1-\frac{2|\mathcal{M}(r)|}{r}\right)^{1/2}
|\mathcal{M}^{\prime}(r)|^{1/4}}\nonumber \\&-\frac{3}{16}\frac{r^{1/2}
\mathcal{M}^{\prime\prime}(r)}{\left(1-\frac{2|\mathcal{M}(r)|}{r}\right)^{1/2}
|\mathcal{M}^{\prime}|^{5/4}}-\frac{|\mathcal{M}^{\prime}(r)|^{3/4}}
{\left(1-\frac{2|\mathcal{M}(r)|}{r}\right)^{3/2}r^{1/2}}\nonumber\\&\equiv
\frac{-3r^{1/2}\left(\frac{2\mathcal{M}(r)}{r^{2}}-\frac{2|\mathcal{M}^{\prime}(r)|}{r}
\right)}{8\left(1-\frac{2|\mathcal{M}(r)}{r}|\right)^{3/2}|\mathcal{M}^{\prime}|^{1/4}}
+\frac{3}{8}
\frac{\left(1-\frac{2|\mathcal{M}(r)|}{r}\right)}{r^{1/2}\left(1-\frac{2|\mathcal{M}(r)|}{r}
\right)^{3/2}
|\mathcal{M}^{\prime}(r)|^{1/4}}\nonumber \\&-\frac{3}{16}\frac{r^{1/2}
\mathcal{M}^{\prime\prime}(r)\left(1-\frac{2|\mathcal{M}(r)|}{r}\right)}
{\left(1-\frac{2|\mathcal{M}(r)|}{r}\right)^{3/2}
|\mathcal{M}^{\prime}(r)|^{5/4}}-\frac{|\mathcal{M}^{\prime}(r)|^{3/4}}
{\left(1-\frac{2|\mathcal{M}(r)|}{r}\right)^{3/2}r^{1/2}}\nonumber\\&\equiv
\frac{-3r^{1/2}\left(\frac{2\mathcal{M}(r)}{r^{2}}-\frac{2|\mathcal{M}^{\prime}(r)|}{r}
\right)}{|\mathcal{M}^{\prime}(r)|^{1/4}}
+\frac{3}{8}
\frac{\left(1-\frac{2|\mathcal{M}(r)|}{r}\right)}{r^{1/2}
|\mathcal{M}^{\prime}(r)|^{1/4}}\nonumber \\&-\frac{3}{16}\frac{r^{1/2}
\mathcal{M}^{\prime\prime}(r)\left(1-\frac{2|\mathcal{M}(r)|}{r}\right)}
{|\mathcal{M}^{\prime}(r)|^{5/4}}-\frac{|\mathcal{M}^{\prime}(r)|^{3/4}}
{r^{1/2}}=0
\end{align}
We now proceed carefully in a sequence of steps
\begin{enumerate}
\item Multiply through by $r^{1/2}$:
\begin{align}
\frac{-3r\left(\frac{2\mathcal{M}(r)}{r^{2}}-\frac{2|\mathcal{M}^{\prime}(r)|}{r}
\right)}{|\mathcal{M}^{\prime}(r)|^{1/4}}
+\frac{3}{8}
\frac{\left(1-\frac{2|\mathcal{M}(r)|}{r}\right)}{
|\mathcal{M}^{\prime}(r)|^{1/4}}-\frac{3}{16}\frac{r
\mathcal{M}^{\prime\prime}(r)\left(1-\frac{2|\mathcal{M}(r)|}{r}\right)}
{|\mathcal{M}^{\prime}(r)|^{5/4}}-{|\mathcal{M}^{\prime}(r)|^{3/4}}
=0
\end{align}
\item Multiply through by $|\mathcal{M}^{\prime}(r)|^{1/4}$ and expand the numerators of the first three terms:
\begin{align}
-\frac{3}{8}\frac{2|\mathcal{M}(r)|}{r}+\frac{3}{8} 2\mathcal{M}^{\prime}(r)+\frac{3}{8}-\frac{3}{4}\frac{|\mathcal{M}(r)|}{r}-\frac{3}{16}
\frac{r\mathcal{M}^{\prime\prime}(r)}{\mathcal{M}^{\prime}(r)}+\frac{3}{8}
\frac{\mathcal{M}^{\prime\prime}(r)|\mathcal{M}(r)|}{\mathcal{M}^{\prime}(r)}
-|\mathcal{M}(r)|=0
\end{align}
\item Multiply through by $\mathcal{M}^{\prime}(r)$
\begin{align}
-\frac{3}{4}\frac{\mathcal{M}(r)\mathcal{M}^{\prime}(r)}{r}
+\frac{3}{4}|\mathcal{M}^{\prime}(r)|^{2}+\frac{3}{8}|\mathcal{M}^{\prime}(r)
-\frac{3}{4}\frac{\mathcal{M}^{\prime}(r)\mathcal{M}(r)}{r}-\frac{3}{16}r\mathcal{M}
^{\prime\prime}(r)
+\frac{3}{8}\mathcal{M}^{\prime\prime}(r)\mathcal{M}(r)-|\mathcal{M}^{\prime}(r)|^{2}=0
\end{align}
\item Multiply through by $r$ and collect like terms:
\begin{align}
&-\overbrace{\frac{3}{4}\mathcal{M}(r)\mathcal{M}^{\prime}(r)}
+\frac{3}{4}|\mathcal{M}^{\prime}(r)|^{2}
+\frac{3}{8}\mathcal{M}^{\prime}(r)r-\overbrace{\frac{3}{4}\mathcal{M}^{\prime}(r)
\mathcal{M}(r)}
-\underbrace{\frac{3}{16}r^{2}\mathcal{M}^{\prime\prime}(r)}
+\underbrace{\frac{3}{8}r\mathcal{M}^{\prime\prime}(r)\mathcal{M}(r)}-r|\mathcal{M}(r)|^{2}
\nonumber\\&
=-\frac{3}{16}\mathcal{M}^{\prime}(r)r^{2}
+\frac{3}{8}\mathcal{M}^{\prime\prime}(r)\mathcal{M}(r)r+\frac{3}{8}\mathcal{M}^{\prime}(r)r
-\frac{1}{4}|\mathcal{M}^{\prime}(r)|^{2}r
-\frac{3}{2}\mathcal{M}^{\prime}(r)\mathcal{M}(r)=0
\end{align}
\end{enumerate}
Next, the derivatives $\mathcal{M}^{\prime}(r)$ and $\mathcal{M}^{\prime\prime}(r)$ are eliminated using $\mathcal{M}^{\prime}(r)=4\pi r^{2}\rho(r)$. Each term in (10.20) is carefully evaluated
\begin{align}
&\mathcal{M}^{\prime\prime}(r)=4\pi r^{2}\rho^{\prime}(r)+8\pi r\rho(r)
\nonumber\\&
\frac{3}{16}\mathcal{M}^{\prime\prime}(r)r^{2}=\frac{3}{16}.4\pi r^{4}\rho(r)+\frac{3}{16} 8\pi r^{3}\rho(r)\nonumber\\&
\frac{3}{8}\mathcal{M}^{\prime\prime}(r)\mathcal{M}(r)r=\frac{3}{8}4\pi r^{3}\rho^{\prime}(r)\mathcal{M}(r)+\frac{3}{8} 6\pi r\rho(r)\mathcal{M}(r) r\\&
\frac{3}{8}\mathcal{M}^{\prime}(r)r=\frac{3}{8} 4\pi r^{3}\rho(r)\\&
\frac{1}{4}\mathcal{M}^{\prime}(r)\mathcal{M}^{\prime}(r)r=\frac{1}{4} 4\pi r^{2}\rho(r) 4\pi r^{2}\rho(r)\\&
\frac{3}{2}\mathcal{M}^{\prime}(r)\mathcal{M}^{\prime}(r)=\frac{3}{2} 4\pi r^{2}\rho(r)\mathcal{M}(r)
\end{align}
Equation (10.20) then becomes
\begin{align}
&-\frac{3}{16} 4\pi r^{2}\rho^{\prime}(r)r^{2}-\underbrace{\frac{3}{2}\pi r^{3}\rho(r)}+\frac{3}{8} 4\pi r^{3}\rho^{\prime}(r)\mathcal{M}(r)+3 \pi r \rho(r)\mathcal{M}(r) r\nonumber\\& +\underbrace{\frac{3}{2}\pi r^{3}\rho(r)}-\pi r^{2}\rho(r) 4 \pi r^{3}\rho(r)-\frac{3}{2} 4 \pi r^{2}\rho(r)\mathcal{M}(r)=0
\end{align}
The following sequence of steps are then carried out.
\begin{enumerate}
\item  Divide out by $\pi r^{2}$
\begin{align}
&(-\frac{3}{4}\rho^{\prime}(r)r^{2}+\frac{3}{2} r\rho^{\prime}(r)\mathcal{M}(r))+3 \rho(r)\mathcal{M}(r)-6\rho(r)\mathcal{M}(r)-\rho(r) 4\pi r^{3}\rho(r)\nonumber\\&
=\big(-\frac{3}{4}\rho^{\prime}(r)r^{2}+\frac{3}{2}r\rho^{\prime}(r)\mathcal{M}(r)\big)-
3\rho(r)\mathcal{M}(r)-\rho(r) 4\pi r^{3}\rho(r)=0
\end{align}
\item Divide out by 3
\begin{align}
\big(-\frac{1}{4}\rho^{\prime}(r)r^{2}+\frac{1}{2}r\rho^{\prime}(r)
\mathcal{M}(r)\big)-\rho(r)\mathcal{M}(r)-\frac{4}{3}\pi r^{3}\rho(r)=0
\end{align}
\item Divide out by 3 again and move terms to the rhs, keeping derivative terms on the lhs so that
    \begin{align}
    \frac{1}{3}(-\frac{1}{4}\rho^{\prime}(r)r^{2}+\frac{1}{2}r\rho^{\prime}(r)
\mathcal{M}(r)\big)=\frac{1}{3}\rho(r)\mathcal{M}(r)+\frac{4}{3}\pi r^{3}\rho(r)(\rho(r)/3)
    \end{align}
which is
\begin{align}
-\frac{1}{3}.\frac{1}{4}r(r-2\mathcal{M}(r))\frac{d\rho(r)}{dr}=\frac{1}{3}\rho(r)
\mathcal{M}(r)+\frac{4}{3}\pi r^{3}\rho(r)(\rho(r)/3)
\end{align}
\item Multiply by $-4$ and factorise:
\begin{align}
&-\frac{1}{3}r\big(r-2\mathcal{M}(r)\big)\frac{d\rho(r)}{dr}
=\frac{4}{3}\rho(r)\mathcal{M}(r)+\frac{16}{3}\pi r^{3}\rho(r)(\rho(r)/3)\nonumber\\&
\equiv [\rho(r)\mathcal{M}(r)+\frac{1}{3}
\rho(r)\mathcal{M}(r)+4\pi r^{3}\rho(r)(\rho(r)/3)+\frac{4}{3}\pi r^{3}\rho(r)(\rho(r)/3)\nonumber\\&
=-\left[\rho(r)+\frac{1}{3}\rho(r)\right]\left[|\mathcal{M}(r)|+\frac{4}{3}\pi r^{3}(\rho(r)/3)\right]
\end{align}
\end{enumerate}
This is then the Tolman-Oppenheimer-Volkoff equation (-) for a self-gravitating photon gas
of density $\rho(r)$
\begin{align}
\underbrace{\frac{1}{3}\frac{d\rho(r)}{dr}=\frac{d}{dr}\left(\frac{\rho(r)}{3}\right)=\frac{-\left[\rho(r)+\frac{1}{3}\rho(r)\right]\left[|\mathcal{M}(r)|+\frac{4}{3}\pi r^{3}(\rho(r)/3)\right]}{r\big(r-2\mathcal{M}(r)\big)}}
\end{align}
\end{proof}
\subsection{Thermodynamic derivation of the TOVE for a perfect fluid/gas as a constrained optimisational-variational problem}
The following appears in $\bm{[47,50]}$
\begin{defn}
Let $\mathscr{S}$ be the entropy density, $\rho$ the mass density, $p$ the pressure, $\mathscr{C}$ the chemical potential, $\mathscr{B}$ the baryon number density and $\Theta$ the temperature of a perfect fluid/gas. The 1st law of thermodynamics is then
\begin{align}
d\mathscr{S}=\frac{d\rho}{\Theta}-\frac{\mathscr{C}}
{\Theta}d\mathscr{B}
\end{align}
The Gibbs-Duhem relation is
\begin{align}
\mathscr{S}=\frac{p+\rho-\mathscr{C}\mathscr{B}}{\Theta}
\end{align}
\end{defn}
The following preliminary lemma will be required.
\begin{lem}
If the Gibbs-Duhem relation and the 1st law of thermodynamics hold then
\begin{align}
p^{\prime}(r)=\mathscr{S}(r)\Theta^{\prime}(r)+\mathscr{B}\mathscr{C}^{\prime}(r)
\end{align}
\end{lem}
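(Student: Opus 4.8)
The plan is to treat all the thermodynamic quantities as functions of the radial coordinate $r$ and to convert the differential statement of the first law into a relation among ordinary $r$-derivatives, after which the identity follows from a single differentiation and one cancellation. First I would rearrange the Gibbs-Duhem relation to isolate the pressure,
\begin{align}
p(r)=\mathscr{S}(r)\Theta(r)+\mathscr{C}(r)\mathscr{B}(r)-\rho(r),\nonumber
\end{align}
and differentiate with respect to $r$ by the product rule,
\begin{align}
p^{\prime}(r)=\mathscr{S}^{\prime}(r)\Theta(r)+\mathscr{S}(r)\Theta^{\prime}(r)
+\mathscr{C}^{\prime}(r)\mathscr{B}(r)+\mathscr{C}(r)\mathscr{B}^{\prime}(r)-\rho^{\prime}(r).\nonumber
\end{align}

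Next I would read the first law $d\mathscr{S}=\Theta^{-1}d\rho-\Theta^{-1}\mathscr{C}\,d\mathscr{B}$ along the radial profile. Since $\mathscr{S},\rho,\mathscr{B}$ are all functions of $r$, dividing this differential relation by $dr$ (equivalently applying the chain rule) gives $\mathscr{S}^{\prime}(r)=\Theta^{-1}(r)\rho^{\prime}(r)-\Theta^{-1}(r)\mathscr{C}(r)\mathscr{B}^{\prime}(r)$, or after multiplying through by $\Theta(r)$,
\begin{align}
\rho^{\prime}(r)=\Theta(r)\mathscr{S}^{\prime}(r)+\mathscr{C}(r)\mathscr{B}^{\prime}(r).\nonumber
\end{align}
Substituting this expression for $\rho^{\prime}(r)$ into the differentiated Gibbs-Duhem relation, the terms $\mathscr{S}^{\prime}(r)\Theta(r)$ and $\mathscr{C}(r)\mathscr{B}^{\prime}(r)$ cancel against $-\Theta(r)\mathscr{S}^{\prime}(r)$ and $-\mathscr{C}(r)\mathscr{B}^{\prime}(r)$ respectively, leaving
\begin{align}
p^{\prime}(r)=\mathscr{S}(r)\Theta^{\prime}(r)+\mathscr{B}(r)\mathscr{C}^{\prime}(r),\nonumber
\end{align}
which is the desired identity.

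The only genuinely conceptual step — and hence the one I would flag as the main obstacle — is the passage from the first law written as a thermodynamic differential to a statement about derivatives in $r$. This is legitimate precisely because the equilibrium configuration fixes a one-parameter family of thermodynamic states labelled by $r$, so that every differential is proportional to $dr$ and the coefficients obey the stated relation; no hypothesis beyond those assumed is required. Everything else is routine product-rule differentiation and algebraic cancellation, so I would not anticipate any further difficulty. I note in passing that this lemma is exactly the ingredient needed to re-express the pressure gradient in the subsequent entropy-variational derivation of the TOVE, which is why it is isolated here.
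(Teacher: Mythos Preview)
Your proof is correct and follows essentially the same route as the paper: differentiate the Gibbs--Duhem relation, invoke the first law to cancel the $\Theta\mathscr{S}^{\prime}$ and $\mathscr{C}\mathscr{B}^{\prime}$ terms against $\rho^{\prime}$, and read off the result. If anything your version is slightly cleaner, since the paper inserts an unnecessary intermediate assumption $d\mathscr{B}=0$ before using the first law, whereas you keep the full first law $\rho^{\prime}=\Theta\mathscr{S}^{\prime}+\mathscr{C}\mathscr{B}^{\prime}$ throughout and the cancellation goes through without it.
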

\begin{proof}
In spherical symmetry, the differential form of the Gibbs-Duhem relation is
\begin{align}
\Theta(r)d\mathscr{S}(r)+\mathscr{S}(r)d\Theta(r)=d\rho(r)+dp(r)-\mathscr{C}(r) d\mathscr{B}(r)-
\mathscr{B}(r)d\mathscr{C}(r)
\end{align}
If $d\mathscr{B}=0$ then
\begin{align}
\Theta d\mathscr{S}(r)+\mathscr{S}(r)d\Theta(r)=d\rho(r)+dp(r)-
\mathscr{B}(r)d\mathscr{C}(r)
\end{align}
From the first law, $\Theta d\mathscr{S}(r)=d\rho(r)$ so that
\begin{align}
d\rho(r)+\mathscr{S}(r) d\Theta(r)=d\rho(r)+dp(r)-
\mathscr{B}(r)d\mathscr{C}(r)
\end{align}
or
\begin{align}
\mathscr{S}(r)d\Theta(r)=dp(r)-\mathscr{B}(r)d\mathscr{C}(r)
\end{align}
It follows that
\begin{align}
\mathscr{S}(r)\frac{d\Theta(r)}{dr}=\frac{dp(r)}{dr}-\mathscr{B}(r)\frac{d\mathscr{C}(r)}{dr}
\end{align}
or equivalently
\begin{align}
p^{\prime}(r)=\mathscr{S}(r)\Theta^{\prime}(r)+\mathscr{B}\mathscr{C}^{\prime}(r)
\end{align}
\end{proof}
\begin{thm}
Let a perfect fluid/gas have support $\bm{\Omega}$ in a (Lorentzian) spacetime $(\mathcal{M}^{4},\mathbf{g})$. The following are assumed
\begin{enumerate}
\item The fluid/gas obeys the 1st  law and the Gibs-Duhem relation (-).
\item The system is spherically symmetric and the entropy maxima correspond to static equilibrium configurations.  The initial value Einstein constraint
equation for time symmetric data holds such that
\begin{align}
^{(3)}\mathbf{R}=\frac{2}{r^{2}}\frac{d}{dr}\left\lbrace r(1-\mathbf{g}^{-1}_{rr}\right\rbrace=16\pi \mathscr{G}\rho
\end{align}
so that the metric is the interior Schwarzchild metric
\begin{align}
ds^{2}=-\left(1-\frac{2 G|\mathcal{M}(r)|}{r}\right)dt^{2}
+\left((1-\frac{2 G|\mathcal{M}(r)|}{r}\right)^{-1}dr^{2}+r^{2}d\Omega^{2}
\end{align}
with $\mathbf{g}_{rr}=\left(1-\tfrac{2\mathscr{G}\mathcal{M}(r)}{r}\right)$ and $\mathcal{M}(\bar{r})=\int_{0}^{r}\rho(\bar{r})\bar{r}^{2}d\bar{r}$
\item At the boundary $p(R)=\rho(r)=0$ and $M=\mathcal{M}(R)$.
\item The entropy, baryon number and total mass are defined as
\begin{align}
&{S}=\int_{0}^{R}\mathscr{S}(r)|\mathbf{g}_{rr}(r)|^{1/2}4 \pi r^{2}dr=
\int_{0}^{R}\mathscr{S}(r)\left((1-\frac{2 G|\mathcal{M}(r)|}{r}\right)^{-1/2}4 \pi r^{2}dr\\&
{B}=\int_{0}^{R}\mathscr{B}(r)|\mathbf{g}_{rr}(r)|^{1/2}4 \pi r^{2}dr=
\int_{0}^{R}\mathscr{B}(r)\left((1-\frac{2 G|\mathcal{M}(r)|}{r}\right)^{-1/2}4 \pi r^{2}dr\\&
M=\int_{0}^{R}\rho(r)4\pi r^{2}dr
\end{align}
\end{enumerate}
Let $(\mathfrak{L}_{1},\mathfrak{L}_{2})$ be a pair of Lagrange multipliers. The TOVE for hydrostatic equilibrium then follows from the constrained optimisational-variational problem
\begin{align}
\delta S-\mathfrak{L}_{2}\delta M+\mathfrak{L}_{1}\delta\mathrm{B}=0
\end{align}
\end{thm}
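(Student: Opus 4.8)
The plan is to treat $S$, $M$ and $B$ as functionals of the radial profiles $\rho(r)$, $\mathscr{S}(r)$ and $\mathscr{B}(r)$, and to carry out the constrained variation $\delta S-\mathfrak{L}_{2}\,\delta M+\mathfrak{L}_{1}\,\delta B=0$ in direct analogy with the nucleon-number optimisation of Theorem~9.4, the essential new feature being that both $S$ and $B$ carry the redshift factor $(1-2\mathscr{G}\mathcal{M}(r)/r)^{-1/2}$ while $M$ does not. First I would record the three variations term by term. The mass variation is purely local, $\delta M=\int_{0}^{R}4\pi r^{2}\,\delta\rho(r)\,dr$. For $S$ and $B$ there are two contributions each: a local piece from varying the densities $\mathscr{S},\mathscr{B}$ at fixed $r$, and a nonlocal piece from varying $\mathcal{M}(r)=\int_{0}^{r}4\pi\bar{r}^{2}\rho(\bar{r})\,d\bar{r}$ inside the metric factor, using $\delta\mathcal{M}(r)=\int_{0}^{r}4\pi\bar{r}^{2}\,\delta\rho(\bar{r})\,d\bar{r}$ together with $\partial_{\mathcal{M}}(1-2\mathscr{G}\mathcal{M}/r)^{-1/2}=(\mathscr{G}/r)(1-2\mathscr{G}\mathcal{M}/r)^{-3/2}$.

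Next I would eliminate the resulting double integrals by the Fubini--Tonelli theorem (Theorem~6.12 in the excerpt), interchanging $\int_{0}^{R}dr\int_{0}^{r}d\bar{r}\to\int_{0}^{R}d\bar{r}\int_{\bar{r}}^{R}dr$, so that every term reduces to a single integral against $\delta\rho$ or $\delta\mathscr{B}$. The local thermodynamic variations are then tied together by the first law, $\delta\mathscr{S}=\delta\rho/\Theta-(\mathscr{C}/\Theta)\,\delta\mathscr{B}$; the function of the multiplier $\mathfrak{L}_{1}$ on $B$ is precisely to cancel the independent $\delta\mathscr{B}$ variations, leaving a single integral whose integrand multiplies the otherwise arbitrary variation $\delta\rho(r)$.

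Requiring that integrand to vanish for all admissible $\delta\rho$ then yields an identity of the schematic form $[\text{local function of }r]+[\text{an integral over }[r,R]]=\text{const}$, the constant being fixed by $\mathfrak{L}_{1},\mathfrak{L}_{2}$ and identified, in the manner of Tolman, with the redshifted temperature $\Theta(r)(1-2\mathscr{G}\mathcal{M}(r)/r)^{1/2}$ and the redshifted chemical potential. Since the right-hand side is $r$-independent I would differentiate once with respect to $r$; this kills the integral term and, after substituting the Einstein constraint $\mathcal{M}^{\prime}(r)=4\pi r^{2}\rho(r)$ and the preliminary Lemma~10.9 identity $p^{\prime}(r)=\mathscr{S}(r)\Theta^{\prime}(r)+\mathscr{B}(r)\mathscr{C}^{\prime}(r)$, produces a first-order ODE in $p,\rho,\mathcal{M}$. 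The final step is algebraic: collecting the factors $(1-2\mathscr{G}\mathcal{M}/r)^{-1}$, $(1+p/\rho)$ and $(1+4\pi r^{3}p/\mathcal{M})$ casts the ODE into the standard TOVE.

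The hard part will be the bookkeeping in the two middle steps, namely correctly generating the nonlocal $\partial_{\mathcal{M}}$ contributions in both $\delta S$ and $\delta B$, interchanging the order of integration without sign errors, and then invoking the Gibbs--Duhem relation and the first law so that the Tolman equilibrium conditions $\Theta(1-2\mathscr{G}\mathcal{M}/r)^{1/2}=\text{const}$ and $\mathscr{C}(1-2\mathscr{G}\mathcal{M}/r)^{1/2}=\text{const}$ emerge naturally and the purely thermodynamic pieces recombine into $dp/dr$. Once those identities are in hand the reduction to the three-factor TOVE form is routine and parallels the manipulations already carried out in Theorems~8.5 and~9.4.
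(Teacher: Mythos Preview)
Your proposal is correct and follows essentially the same route as the paper: vary $S$, $M$, $B$ with local and nonlocal (metric-factor) pieces, use the first law to express $\delta\mathscr{S}$, fix $\mathfrak{L}_{1}=\mathscr{C}/\Theta$ to kill the independent $\delta\mathscr{B}$ terms, interchange the order of integration so everything sits against $\delta\rho$, read off $\mathfrak{L}_{2}$ as an $r$-independent expression, differentiate it, and then use the differentiated Gibbs--Duhem identity $p'=\mathscr{S}\Theta'+\mathscr{B}\mathscr{C}'$ (equivalently $\Theta'/\Theta=p'/(p+\rho)$) to reduce to the TOVE. The only cosmetic difference is that you frame the constancy of $\mathfrak{L}_{1},\mathfrak{L}_{2}$ as the Tolman redshift conditions on $\Theta$ and $\mathscr{C}$, whereas the paper simply sets $\mathfrak{L}_{1}=\mathscr{C}/\Theta$ and differentiates $\mathfrak{L}_{2}$ directly; the content is the same.
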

\begin{proof}
First
\begin{align}
\delta|\mathbf{g}_{rr}|^{1/2}=\frac{\partial}{\partial r} |\mathbf{g}_{rr}|^{1/2}dr=
\mathscr{G}|\mathbf{g}_{rr}|^{3/2}\delta \mathcal{M}(r) 4\pi r dr
\end{align}
Then
\begin{align}
&\delta S-\mathfrak{L}_{2}\delta M+\mathfrak{L}_{1}\delta\mathrm{B}\nonumber\\&
=\int_{0}^{R}|\delta\mathscr{S}(r)|\mathbf{g}_{rr}|^{1/2} 4 \pi r^{2}dr
+\int_{0}^{R}\mathscr{S}\delta|\mathbf{g}_{rr}|^{1/2} 4\pi r^{2}dr\nonumber\\&
-\mathfrak{L}_{2}\int_{0}^{R}\delta\rho(r) r^{2}dr+\mathfrak{L}_{1}
\int_{0}^{R}\delta\mathscr{B}(r)|\mathbf{g}_{rr}|^{1/2}r^{2}dr+
\mathfrak{L}_{1}\int_{0}^{R}\mathscr{L}(r)\delta|\mathbf{g}_{rr}|^{1/2}\nonumber\\&
=\int_{0}^{R}|\delta\mathscr{S}(r)|\mathbf{g}_{rr}|^{1/2} 4 \pi r^{2}dr+
\mathscr{G}\int_{0}^{R}\mathscr{S}(r)|\mathbf{g}_{rr}|^{3/2}\delta\mathcal{M}(r) 4\pi r dr\nonumber\\&
-\mathfrak{L}_{2}\int_{0}^{R}\delta \rho(r) r^{2}dr+\mathfrak{L}_{1}
\int_{0}^{R}\delta\mathscr{B}(r)|\mathbf{g}_{rr}|^{1/2}r^{2}dr
+\mathfrak{L}_{1}\mathscr{G}\int_{0}^{R}\mathscr{S}(r)|\mathbf{g}_{rr}|^{3/2} 4\pi r dr
\end{align}
Using the previous thermodynamic relations
\begin{align}
&\delta\mathscr{S}(r)=\frac{\delta(r)}{\Theta(r)}-\frac{\mathscr{C}(r)\delta\mathscr{B}(r)}{\Theta(r)}\\&
\mathscr{S}(r)=\frac{p(r)+\rho(r)-\mathscr{C}(r)}{\Theta(r)}
\end{align}
\begin{align}
&\int_{0}^{R}\frac{1}{\Theta(r)}\delta\rho(r)|g_{rr}|^{1/2} r^{2}dr
+\mathscr{G}\int_{0}^{R}\left|\frac{\rho(r)+p(r)}{\Theta(r)}\right|\delta \mathcal{M}(r)
|\mathbf{g}_{rr}|^{3/2} r dr\nonumber\\&-\mathfrak{L}_{2}\int_{0}^{R}\delta\rho(r) r^{2} dr +\int_{0}^{R}\left(\mathfrak{L}_{1}-\left|\frac{\mathscr{C}(r)}{\Theta(r)}\right|\right)\delta\mathscr{B}(r)
|\mathbf{g}_{rr}|^{1/2} r ^{2} dr\nonumber\\&+\mathscr{G}\int_{0}^{R}
\left(\mathfrak{L}_{1}-\left|\frac{\mathscr{C}(r)}{\Theta(r)}\right|\right)\mathscr{B}(r)\delta\mathcal{M}(r)
|g_{rr}|^{3/2} r dr=0
\end{align}
Since $\delta\mathcal{M}\sim\delta\rho(r)$ then
\begin{align}
\bm{\mathfrak{L}}_{1}=\frac{\mathscr{C}(r)}{\Theta(r)}
\end{align}
so the last two terms in (-) will vanish leaving
\begin{align}
&\int_{0}^{R}\frac{1}{\Theta(r)}\delta\rho(r)|g_{rr}|^{1/2} r^{2}dr
+\mathscr{G}\int_{0}^{R}\left|\frac{\rho(r)+p(r)}{\Theta(r)}\right|
|\mathbf{g}_{rr}|^{3/2} r \left(\int_{0}^{r}\delta\rho(\bar{r}) 4\pi \bar{r}^{2} d\bar{r}
\right)dr\nonumber\\&-\mathfrak{L}_{2}\int_{0}^{R}\delta\rho(r) r^{2} dr =0
\end{align}
which then implies that
\begin{align}
\int_{0}^{R}dr|\delta\rho(r)|r^{2}\left\lbrace \frac{1}{\Theta(r)}|\mathbf{g}_{rr}|^{1/2}
+4\pi \mathscr{G}\int_{0}^{R}\left|\frac{p(\bar{r})+\rho(\bar{r})}{\Theta(\bar{r})}
\right|\mathbf{g}_{rr}(\bar(r))|^{3/2}r dr-\mathfrak{L}_{2}
\right\rbrace=0
\end{align}
where the integration order has been changed. Solving for the Lagrange multiplier then gives
\begin{align}
\mathfrak{L}_{2}=\frac{1}{\Theta(r)}|\mathbf{g}_{rr}|^{1/2}+4\pi \mathscr{G}\int_{0}^{R}
\left|\frac{\rho(r)+p(r)}{\Theta(r)}\right||g_{rr}(r)|^{3/2}\bar{r} d\bar{r}
\end{align}
If $\Theta_{o}=1/\mathfrak{L}_{2}$ and $r=R$ then the Tolman temperature is recovered, and $\Theta_{o}$ is the surface temperature measured by an observer at infinity. Now since $\mathfrak{L}_{1}=\mathscr{C}(r)/\Theta(r)$ then $\mathscr{C}^{\prime}(r)=\mathfrak{L}_{1}\Theta(r)$. Then from the (differentiated) Gibbs-Duhem relation
\begin{align}
&p^{\prime}(r)=\mathscr{S}(r)\Theta(r)+\mathscr{B}(r)\mathscr{C}^{\prime}(r)\nonumber\\&
=\mathscr{S}(r)\Theta^{\prime}(r)+\mathscr{B}(r)\mathfrak{L}_{1}\Theta^{\prime}(r)
=\Theta^{\prime}(r)\mathscr{S}(r)+\mathfrak{L}_{1}\mathscr{B}(r))\nonumber\\&
=\Theta^{\prime}(r)\left(\left|\frac{|p(r)+\rho(r)|}{\Theta(r)}\right|
-\frac{\mathscr{C}(r)\mathscr{B}(r)}{\Theta}+\frac{\mathscr{C}(r)\mathscr{B}(r)}{\Theta(r)}\right)\\&
=\Theta^{\prime}(r)\left(\left|\frac{|p(r)+\rho(r)|}{\Theta(r)}\right|\right)
\end{align}
Hence
\begin{align}
\frac{\Theta^{\prime}(r)}{\Theta(r)}=\frac{p^{\prime}(r)}{|p(r)+\rho(r)|}
\end{align}
Now differentiating (10.55) and substituting (10.58)
\begin{align}
&\frac{d\mathfrak{L}_{2}}{dr}=\frac{1}{\Theta(r)}
\frac{d}{dr}|\mathbf{g}_{rr}|^{1/2}+|\mathbf{g}_{rr}|^{1/2}\frac{d}{dr}
\left|\frac{1}{\Theta(r)}\right|+
4\pi\mathscr{G}\frac{d}{dr}\left|\int_{0}^{R}\frac{p(r)+
\rho(r)}{\Theta(r)}|\mathbf{g}_{rr}|^{3/2}\bar{r}d\bar{r}\right|\nonumber\\&
=\frac{1}{\Theta(r)}\frac{d}{dr}|\mathbf{g}_{rr}|^{1/2}+|\mathbf{g}_{rr}|^{1/2}\frac{d}{dr}
\left|\frac{1}{\Theta(r)}\right|+4 \pi \mathscr{G}\left|\frac{p(r)+
\rho(r)}{\Theta(r)}|\mathbf{g}_{rr}|^{3/2}{r}\right|\nonumber\\&
=\frac{1}{\Theta(r)}\frac{d}{dr}|\mathbf{g}_{rr}|^{1/2}+|\mathbf{g}_{rr}|^{1/2}
(-|\Theta(r)|^{-2}\Theta^{\prime}(r))+\frac{1}{\Theta(r)}(4\pi\mathscr{G} pr+4\pi\mathscr{G} G \rho(r) r)|\mathbf{g}_{rr}|^{3/2}=0
\end{align}
which is
\begin{align}
&-\frac{\Theta^{\prime}(r)}{|\Theta(r)|^{2}}
\left(1-\frac{2\mathscr{G}\mathcal{M}(r)}{r}\right)^{-1/2}-\frac{1}{\Theta(r)}
\left(1-\frac{2\mathscr{G}\mathcal{M}(r)}{r}\right)^{-3/2}\left(-4\pi
\mathscr{G}\rho(r) r+\frac{G\mathcal{M}(r)}{r^{2}}\right)\nonumber\\&
-\frac{1}{\Theta(r)}(4\pi\mathscr{G} p(r) r+4\pi\mathscr{G} \rho(r) r)\left(1-\frac{2\mathscr{G}\mathcal{M}(r)}{r}=0
\right)^{-3/2}
\end{align}
Cancelling out the $\frac{1}{\Theta(r)}$ and $(1-2G\frac{\mathcal{M}(r)}{r})^{-1/2}$
terms then gives
\begin{align}
&-\frac{p^{\prime}(r)}{|p(r)+\rho(r)|}-
\left(1-\frac{2\mathscr{G}|\mathcal{M}(r)|}{r}\right)^{-1}\left(-4\pi \mathscr{G}\rho(r) r^{2}+\frac{\mathscr{G}|\mathcal{M}(r)|}{r}\right)\nonumber\\&-(4\pi\mathscr{G}p(r) r+4\pi \mathscr{G} \rho(r) r)\left(1-\frac{2\mathscr{G}|\mathcal{M}(r)|}{r}\right)^{-1}=0
\end{align}
This can then be arranged to give the Tolman-Oppenheimer-Volkoff equation
\begin{align}
\frac{dp(r)}{dr}=-\frac{\mathscr{G}\mathcal{M}(r)\rho(r)}{r^{2}}\left(1+\frac{p(r)}{\rho(r)}\right)
\left(1+\frac{4\pi r^{3} p(r)}{\mathcal{M}(r)}
\right)\left(1-\frac{2\mathscr{G}\mathcal{M}(r)}{r}\right)^{-1}
\end{align}
\end{proof}
\subsection{Integrability of the TOVE for incompressible stars and the sharp Buchdal bound}
The TOVE is only integrable for incompressible stars for which $\rho(r)=\rho=const.$ Such stars do not actually exist but this assumption enables a solution of the Einstein equations and also a sharp upper bound to be established for the quantity $\mathscr{G}M/R$. This bound, the Buchdal bound $\bm{[10]}$, also applies to all stars in the Universe and not just relativistic stars. It also sets an upper limit on the expected gravitational red shift of spectral lines from a star's surface.
\begin{thm}$(\underline{Buchdal Bound}$\newline
let $\bm{\Omega}=\mathbb{B}_{3}(R)\subset\mathbb{R}^{3}$ be a ball off radius $R$ which supports a perfect fluid/gas of constant density $\rho(r)=\rho$ and pressure $p(r)$ and total mass $M$. Then
\begin{align}
\mathcal{M}(r)=\int_{0}^{r}4\pi\rho\bar{r}^{2}d\bar{r}=\frac{4}{3}\pi r^{3}\rho
\end{align}
and $M=\mathcal{M}(R)=\tfrac{4}{3}\pi R^{3}\rho$. The TOVE is then integrable so that:
\begin{enumerate}
\item The pressure within the star is
\begin{align}
p(r)=\frac{3 M}{4\pi R^{3}}\frac{[1-(2\mathscr{G}M/R)]^{1/2}-[1-2\mathscr{G}M r^{2}/R^{3}]^{1/2}}{[1-(2\mathscr{G}M r^{2}/R^{3})]^{1/2}-3[1-2\mathscr{G}M/R]^{1/2}}
\end{align}
or equivalently
\begin{align}
p(x)=\rho\frac{\sqrt{1-x^{2}}-\sqrt{1-x_{o}^{2}}}
{3\sqrt{1-x_{o}^{2}}-\sqrt{1-x^{2}}}                                                                                     \end{align}
where $x=r\sqrt{\tfrac{8}{3}\pi\mathscr{G}\rho}$ and $x_{o}=R\sqrt{\frac{8}{3}\pi \mathscr{G}\rho}=\sqrt{2\mathscr{G}M/R^{2}}=r_{0}/R$, where $r_{0}$ is the square root of the event horizon radius $R_{o}=\sqrt{2\mathscr{G}M}$.
\item The central pressure for $r=0$ or $x=0$ is
\begin{align}
p_{c}=p(0)=\frac{3M}{4\pi R^{3}}\left( \frac{1-(2\mathscr{G}M/R))^{1/2}-1}{ 1-3[1-2\mathscr{G}M/R)^{1/2}}\right)
\end{align}
or equivalently
\begin{align}
p_{c}=p(0)=\frac{1-\sqrt{1-r_{0}/R}}{3\sqrt{1-r_{0}/R-1}}
\end{align}
\item The central pressure is infinite if $\mathscr{G}M/R=4/9$ so that there is a sharp upper bound
\begin{align}
\frac{\mathscr{G}M}{R}<\frac{4}{9}
\end{align}
\end{enumerate}
\end{thm}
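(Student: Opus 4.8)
The plan is to integrate the Tolman--Oppenheimer--Volkoff equation (8.35) directly, exploiting the fact that constant density collapses both relativistic correction factors into a form separable in $p$ and $r$. First I would insert $\mathcal{M}(r)=\tfrac{4}{3}\pi\rho r^{3}$, which is immediate from part (1) of the statement and the mass integral, into the TOVE. The key observation is that, because $\mathcal{M}(r)\propto r^{3}$, one has $\mathcal{M}(r)+4\pi r^{3}p(r)=\tfrac{4}{3}\pi r^{3}(\rho+3p(r))$ while $r-2\mathscr{G}\mathcal{M}(r)=r(1-\tfrac{8}{3}\pi\mathscr{G}\rho r^{2})$, so the right-hand side factorizes as $(\rho+p)(\rho+3p)$ times a function of $r$ alone. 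Setting $x^{2}=\tfrac{8}{3}\pi\mathscr{G}\rho r^{2}$, the equation reduces to the separable form $\dfrac{dp}{(\rho+p)(\rho+3p)}=-\dfrac{\tfrac{4}{3}\pi\mathscr{G}\,r\,dr}{1-x^{2}}$, where the factor $\rho^{2}$ from $\mathscr{G}\mathcal{M}\rho/r^{2}$ is cancelled by the two denominators $\rho$ in the correction terms.

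Second, I would integrate both sides. The left-hand side is handled by partial fractions with simple poles at $p=-\rho$ and $p=-\rho/3$, yielding $\tfrac{1}{2\rho}\log\frac{\rho+3p}{\rho+p}$; the right-hand side is integrated by the substitution $u=1-x^{2}$, yielding $\tfrac{1}{4\rho}\log(1-x^{2})$ up to an additive constant. Exponentiating gives $\frac{\rho+3p}{\rho+p}=K\sqrt{1-x^{2}}$ for an integration constant $K$. Applying the boundary condition $p(R)=0$, together with $x_{0}^{2}=\tfrac{8}{3}\pi\mathscr{G}\rho R^{2}=2\mathscr{G}M/R$ (which follows from $M=\tfrac{4}{3}\pi R^{3}\rho$), fixes $K=(1-x_{0}^{2})^{-1/2}$.

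Third, I would solve the resulting relation, which is linear in $p$, for the pressure. Rearranging $\frac{\rho+3p}{\rho+p}=\frac{\sqrt{1-x^{2}}}{\sqrt{1-x_{0}^{2}}}$ isolates $p=\rho\,\frac{\sqrt{1-x^{2}}-\sqrt{1-x_{0}^{2}}}{3\sqrt{1-x_{0}^{2}}-\sqrt{1-x^{2}}}$, which is (10.66); re-expressing $x^{2}$ and $x_{0}^{2}$ in terms of $2\mathscr{G}M r^{2}/R^{3}$ and $2\mathscr{G}M/R$ reproduces (10.65), and evaluating at $x=0$ gives the central pressure (10.67)/(10.68). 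For the bound I would then examine the denominator $3\sqrt{1-x_{0}^{2}}-1$ of $p_{c}$: it vanishes precisely when $1-x_{0}^{2}=\tfrac{1}{9}$, i.e. $x_{0}^{2}=2\mathscr{G}M/R=\tfrac{8}{9}$, i.e. $\mathscr{G}M/R=\tfrac{4}{9}$, at which point $p_{c}\to\infty$.

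The integration and the algebraic rearrangement into the precise published form are routine but bookkeeping-heavy, with care required for the $\mathscr{G}$ factors and the sign conventions of the interior metric. The genuine conceptual step is the final strict inequality: I would argue that a physical static configuration requires a finite positive pressure throughout $[0,R]$, that the pressure attains its maximum at the centre $x=0$ (since $\sqrt{1-x^{2}}$ is minimised there), and hence that the binding constraint is $3\sqrt{1-x_{0}^{2}}>1$. Tracking the denominator as a function of compactness shows it is positive at $\mathscr{G}M/R=0$ and decreases monotonically to $0^{+}$ as $\mathscr{G}M/R\uparrow\tfrac{4}{9}$, with the numerator $1-\sqrt{1-x_{0}^{2}}$ remaining positive; this establishes both that $p_{c}$ diverges exactly at $\tfrac{4}{9}$ and that no equilibrium exists for $\mathscr{G}M/R\ge\tfrac{4}{9}$, so $\mathscr{G}M/R<\tfrac{4}{9}$ is the sharp bound.
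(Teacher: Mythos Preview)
Your proposal is correct and follows essentially the same route as the paper: substitute $\mathcal{M}(r)=\tfrac{4}{3}\pi\rho r^{3}$ into the TOVE, separate variables into $dp/[(\rho+p)(\rho+3p)]$ against a function of $r$ alone, integrate both sides (partial fractions on the left, logarithmic integral on the right), impose $p(R)=0$, solve the resulting linear relation for $p$, and read off the divergence of $p_{c}$ at $\mathscr{G}M/R=4/9$. The only cosmetic differences are that the paper carries the factor $(\rho/3+p)$ rather than $(\rho+3p)$ and tracks the reciprocal ratio $(\rho+p)/(3p+\rho)$, and that you add a short monotonicity/positivity argument at the end to justify the strict inequality, which the paper leaves implicit.
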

\begin{proof}
Rewrite the TOVE with $\rho(r)=\rho$ so that
\begin{align}
&\frac{dp(r)}{dr}=-\frac{\mathscr{G}\mathcal{M}(r)\rho(r)}{r^{2}}\left(1+\frac{p(r)}{\rho(r)}\right)
\left(1+\frac{4\pi r^{3} p(r)}{\mathcal{M}(r)}
\right)\left(1-\frac{2\mathscr{G}\mathcal{M}(r)}{r}\right)^{-1}\nonumber\\&
=-\frac{\mathscr{G}}{r^{2}}[\frac{4}{3}\pi r^{3}\rho][p(r)+\rho]\left(1+ \frac{4\pi r^{3}p(r)}{\frac{4}{3}\pi r^{3}\rho}\right)\left(1-\frac{2\mathscr{G}}{r}\frac{4}{3}\pi r^{3}\rho\right)^{-1}\nonumber\\&
=4\pi\mathscr{G} r[\rho+p(r)][\frac{\rho}{3}+p(r)]\left[1-\frac{8}{3}\mathscr{G} r^{2}\rho\right]^{-1}
\end{align}
This expression is then integrated $\emph{inwards}$ from the surface at $r=R$ where $p(R)=0$ so that
\begin{align}
&-\int_{p(R)=0}^{p(r)}d\bar{p(r)}(p(r)+\rho)^{-1}(\frac{1}{3}\rho+\bar{p(r)})^{-1}
=\int_{p(r)}^{0}d\bar{p(r)}(p(r)+\rho)^{-1}
(\frac{1}{3}\rho+\bar{p(r)})^{-1}\\&
=\int_{R}^{r}4\pi \mathscr{G}\bar{r}\left[1-\frac{8}{3}\mathscr{G}\bar{r}^{2}\rho\right]d\bar{r}
\end{align}
Performing both integrals gives
\begin{align}
&\frac{3}{2\rho}\log\frac{\rho+p(r)}{3p(r)+\rho}=
-\frac{3}{4\rho}
\log\left|\frac{\frac{8}{3}\pi \mathscr{G} r^{2}-1}{\frac{8}{3}\pi
\mathscr{G} R^{2}-1}\right|\nonumber\\&=
\frac{3}{4\rho}
\log\left|\frac{\frac{8}{3}\pi \mathscr{G} R^{2}-1}{\frac{8}{3}\pi
\mathscr{G} r^{2}-1}\right|=
\frac{3}{4\rho}
\log\left|\frac{-[\frac{8}{3}\pi \mathscr{G} R^{2}-1]}{-[\frac{8}{3}\pi
\mathscr{G} r^{2}-1]}\right|\nonumber\\&=
\frac{3}{4\rho}
\log\left|\frac{[1-\frac{8}{3}\pi \mathscr{G} R^{2}]}{[1-\frac{8}{3}\pi
\mathscr{G} r^{2}]}\right|
\end{align}
Hence
\begin{align}
\log\left|\frac{\rho+p(r)}{3p(r)+\rho}\right|=\frac{1}{2}\log\left|\frac{[1-\frac{8}{3}\pi \mathscr{G} R^{2}]}{[1-\frac{8}{3}\pi
\mathscr{G} r^{2}]}\right|=\log\left|\frac{[1-\frac{8}{3}\pi \mathscr{G} R^{2}]}{[1-\frac{8}{3}\pi
\mathscr{G} r^{2}]}\right|^{1/2}
\end{align}
so that
\begin{align}
\left|\frac{\rho+p(r)}{3p(r)+\rho}\right|=\left|\frac{[1-\frac{8}{3}\pi \mathscr{G} R^{2}]}{[1-\frac{8}{3}\pi\mathscr{G}r^{2}]}\right|^{1/2}
\end{align}
Using $\rho=3M/4\pi R^{3}$ for $r<R$ and solving for $p(r)$ then gives formula (-) for the pressure. Setting $r=0$ or $x=0$ then gives the formulas (-) or (-) for the central pressure. The central pressure becomes infinite if
\begin{align}
3\sqrt{1-r_{o}/R}-1=0
\end{align}
which is $\mathscr{G}M/R=4/9$. This then establishes the Buchdal bound (10.68).
\end{proof}
\begin{rem}
It can be seen for any sufficiently massive star, regardless of how exotic its matter composition (Eg., 'quark stars') and regardless of it equation of state, can reach a density such that $GM/R=4/9$ at which point it would require an infinite pressure to support it against its own gravity. Gravity then totally dominates and the star cannot be physically supported. It will then collapses through its event horizon radius to a point, essentially becoming a black hole. Note also that in the Newtonian case with constant density, the central pressure can only become infinite when $R=0$ whereas in the relativistic case, the infinite central pressure will occur for a finite radius for which $R=\tfrac{9}{4}\mathscr{G}M$.
\end{rem}

\clearpage

\end{document}